\newcommand{\ud}{\mathrm{d}}
\newcommand{\xx}{\mathbf{x}}
\newcommand{\yy}{\mathbf{y}}
\newcommand{\zz}{\mathbf{z}}
\newcommand{\pp}{\mathbf{p}}
\newcommand{\qq}{\mathbf{q}}
\newcommand{\ii}{\mathrm{i}}
\newcommand{\cH}{\mathcal{H}}
\newcommand{\op}{
  \mathop{
    \vphantom{\bigoplus} 
    \mathchoice
      {\vcenter{\hbox{\resizebox{\widthof{$\displaystyle\bigoplus$}}{!}{$\boxplus$}}}}
      {\vcenter{\hbox{\resizebox{\widthof{$\bigoplus$}}{!}{$\boxplus$}}}}
      {\vcenter{\hbox{\resizebox{\widthof{$\scriptstyle\oplus$}}{!}{$\boxplus$}}}}
      {\vcenter{\hbox{\resizebox{\widthof{$\scriptscriptstyle\oplus$}}{!}{$\boxplus$}}}}
  }\displaylimits 
}
\theoremstyle{plain}
\newtheorem{theorem}{Theorem}[section]
\newtheorem{lemma}[theorem]{Lemma}
\newtheorem{corollary}[theorem]{Corollary}
\newtheorem{proposition}[theorem]{Proposition}
\theoremstyle{definition}
\newtheorem{remark}[theorem]{Remark}
\newtheorem*{remark*}{Remark}
\numberwithin{equation}{section}
\begin{document}

\title[Zero-range interaction for the bosonic trimer at unitarity]
{Models of zero-range interaction \\ for the bosonic trimer at unitarity}

\author[A.~Michelangeli]{Alessandro Michelangeli}
\address[Alessandro Michelangeli]{Institute for Applied Mathematics and Hausdorff Center of Mathematics, University of Bonn \\ Endenicher Allee 60 \\ 
D-53115 Bonn (Germany).}
\email{michelangeli@iam.uni-bonn.de}


\begin{abstract}
We present the mathematical construction of the physically relevant quantum Hamiltonians for a three-body systems consisting of identical bosons mutually coupled by a two-body interaction of zero range. For a large part of the presentation, infinite scattering length will be considered (the unitarity regime). The subject has several precursors in the mathematical literature. We proceed through an operator-theoretic construction of the self-adjoint extensions of the minimal operator obtained by restricting the free Hamiltonian to wave-functions that vanish in the vicinity of the coincidence hyperplanes: all extensions thus model an interaction precisely supported at the spatial configurations where particles come on top of each other. Among them, we select the physically relevant ones, by implementing in the operator construction the presence of the specific short-scale structure suggested by formal physical arguments that are ubiquitous in the physical literature on zero-range methods. This is done by applying at different stages the self-adjoint extension schemes a la Kre{\u\i}n-Vi\v{s}ik-Birman and a la von Neumann. We produce a class of canonical models for which we also analyse the structure of the negative bound states. Bosonicity and zero range combined together make such canonical models display the typical Thomas and Efimov spectra, i.e., sequence of energy eigenvalues accumulating to both minus infinity and zero. We also discuss a type of regularisation that prevents such spectral instability while retaining an effective short-scale pattern. Beside the operator qualification, we also present the associated energy quadratic forms. We structured our analysis so as to clarify certain steps of the operator-theoretic construction that are notoriously subtle for the correct identification of a domain of self-adjointness.
\end{abstract}

\date{\today}

\subjclass[2020]{45C05, 45H05, 46F10, 46N20, 46N50, 47B25, 47F10, 47N50, 70F07, 81Q10}


\keywords{Quantum three-body problem. Zero-range interaction. Bethe-Peierls contact condition. Ter-Martirosyan Skorniakov asymptotics. Scattering length. von Neumann's self-adjoint extension theory. Kre{\u\i}n-Vi\v{s}ik-Birman self-adjoint extension theory. Quadratic forms. Mellin transform. Unitary gases.}

\thanks{The author is grateful to S.~Albeverio, A.~Ottolini, and A.~Trombettoni for fruitful and instructive discussions on this subject, as well as to the Italian National Institute for Higher Mathematics (INdAM) that funded the third meeting \emph{Mathematical Challenges of Zero-Range Physics: rigorous results and open problems}, co-organised by the author, where many enlightening exchanges with the distinguished participants inspired the early phase of this project. This work is partially supported by the Alexander von Humboldt foundation.}

\maketitle


\newpage

\tableofcontents

\section{Introduction. A plurality of approaches and models.}
\label{sec:intro}

We are concerned in this work with a class of models for a three-dimensional quantum system of this kind: three non-relativistic, identical bosons are coupled among themselves by means of an isotropic two-body interaction of zero spatial range and, for the main part of our analysis, with infinite scattering length. The interaction does not couple the spins.

We shall discuss in particular which models are mathematically well-posed, besides being physically meaningful, which leads to an amount of very instructive subtleties.

It is fair to say that the system under consideration has undergone various phases of interests over the decades, both in the physical and in the mathematical literature, until the present days. Originally, and also without imposing the bosonic symmetry, it emerged as the typical picture for interacting nucleons in early nuclear physics, at the scale of which the inter-particle interaction may well be considered of zero range as compared to the atomic scales. Instead, in more recent times it has been a system of interest in cold atom physics, given the modern experimental advances in inducing effective zero-range interactions in a Bose gas or in heteronuclear gaseous mixtures by means of sophisticated Feschbach-resonance techniques.

As our perspective here is mainly mathematical, even if driven by strong physical inspiration, it is worth stressing an important and long lasting differences of the approaches.

\emph{Physical} investigations of the quantum three-body problem with zero-range interaction have always had as primary interest the characterisation of the \emph{bound states} of the system. To this aim, at least in the more modern literature (given its vastness, we refer to the recent reviews \cite{Braaten-Hammer-2006,Naidon-Endo-Review_Efimov_Physics-2017}), the eigenvalue problem is invariably set up in terms of the free Hamiltonian (all in all particles subject to a zero-range interaction are meant to move as free bodies except when they come on top of each other), with the constraint that the three-body eigenfunction must display the `physical' short-range asymptotics 
\begin{equation}\label{eq:preBP}
 \psi(\xx_1,\xx_2,\xx_3)\;\sim\;\frac{1}{|\xx_i-\xx_j|}-\frac{1}{a}\qquad \textrm{as }|\xx_i-\xx_j|\to 0\,,
\end{equation}
where $a$ is the $s$-wave scattering length in each two-body channel. The behaviour \eqref{eq:preBP} was identified by Bethe and Peierls in 1935 \cite{Bethe_Peierls-1935} as the actual leading behaviour of eigenfunctions with `contact' interaction. Next, solutions are obtained, with an ad hoc analysis applicable to the eigenvalue problem only, and not to the generality of states in the domain of the underlying Hamiltonian, by reducing the three-body \emph{eigenfunction} to a convenient triple of two-body channel `Faddeev components', in a combination of which that encodes the possible bosonic or fermionic symmetry, where each Faddeev component is a function of one pair of internal Jacobi coordinates. In the case of three identical bosons,
\begin{equation}
 \psi(\xx_1,\xx_2,\xx_3)\;=\;\chi(\xx_{12},\xx_{12,3})+\chi(\xx_{23},\xx_{23,1})+\chi(\xx_{31},\xx_{31,2})\,,
\end{equation}
where
\begin{equation}
 \xx_{ij}\;=\;\xx_j-\xx_i\,,\qquad \xx_{ij,k}\;=\;\xx_k-\frac{\xx_i+\xx_j}{2}\,.
\end{equation}
Based on the Faddeev equations formalism for the three-body system \cite{Faddeev-1963-eng-1965-3body,Fedorov-Jensen-1993}, the original problem is thus boiled down to a single Faddeev component $\chi$. At this level the problem is conveniently separable upon switching from Jacobi to hyper-radial coordinates and expanding $\chi$ into definite angular momentum terms, and in each sector of definite angular symmetry the problem becomes tractable analytically and numerically.

The above line of reasoning, in fact encompassing a multitude of similar variants, is most presumably due to an original idea of Landau, elaborated in the mid 1950's by Skornyakov and Ter-Martirosyan \cite{TMS-1956} in a famous study of the three-body quantum system with \emph{zero-range} interactions. (Actually, \cite{TMS-1956} predates by a couple of years Faddeev's first work \cite{Faddeev-scattering-1960} on the three-body scattering theory, and makes use of Green's function methods. Then in \cite{Faddeev-scattering-1960} Faddeev showed that the equation identified by Skornyakov and Ter-Martirosyan for solving the three-body eigenvalue problem could be recovered in the formal limit of zero interaction range from the ordinary scheme of Faddeev equations.)

In atomic physics the approach sketched above is the basis of what one has customarily referred to since then as `zero-range methods' \cite{Demkov-Ostrovskii-book}. The same approach resurfaced in the early 1970's by Efimov \cite{Efimov-1971,Efimov-1973} in his famous work on quantum three-body systems with \emph{finite-range} two-body interactions (with important precursors such as Macek \cite{Macek-1968} in the usage of hyper-radial equations for three-body energy levels). Efimov's analysis established a reference for the subsequent literature on cold-atom few-body systems.

The catch here is that such a physical scheme is solid when the inter-particle interactions are realised, say, by potentials $V_{ij}$ that are sufficiently regular and have short range, thereby making the underlying three-body Hamiltonian 
\begin{equation}
 -\frac{1}{2m_1}\Delta_{\xx_1}-\frac{1}{2m_2}\Delta_{\xx_2}-\frac{1}{2m_3}\Delta_{\xx_3}+V_{12}(\xx_1-\xx_2)+V_{23}(\xx_2-\xx_3)+V_{13}(\xx_1-\xx_3)
\end{equation}
(in units $\hbar=1$) unambiguously realised as a self-adjoint operator on the three-body Hilbert space, and thus giving rise to a well-posed set of Faddeev equations. At zero range, instead, the model is formally thought of as 
\begin{equation}\label{eq:fistFormalHamilt}
 -\frac{1}{2m_1}\Delta_{\xx_1}-\frac{1}{2m_2}\Delta_{\xx_2}-\frac{1}{2m_3}\Delta_{\xx_3}+\mu_{12}\delta(\xx_1-\xx_2)+\mu_{23}\delta(\xx_2-\xx_3)+\mu_{13}\delta(\xx_1-\xx_3)
\end{equation}
(for some coupling constants $\mu_{ij}$): as \eqref{eq:fistFormalHamilt} is not an ordinary Schr\"{o}dinger operator, for it Faddeev components of the three-body eigenfunctions and the corresponding Faddeev equations do not make sense strictly speaking, but for a formal limit of zero interaction range.

In short, physical zero-range methods determine eigenfunctions and eigenvalues of a \emph{formal Hamiltonian that otherwise remains unspecified}.

The signature of a possible remaining ambiguity of the physical approach is the emergence of an unphysical continuum of eigenvalues, an occurrence that depends on the masses, the attractive or repulsive nature of the interaction, and the bosonic or fermionic exchange symmetry in \eqref{eq:fistFormalHamilt}. When this happens, an (infinite) discrete set of bound states is selected by imposing an additional restriction to the admissible eigenfunctions. Such restriction may be suitably interpreted as a three-body short-range boundary condition. This occurrence was initially observed by Skornyakov \cite{Skornyakov-1959} right after his joint work \cite{TMS-1956} with Ter-Martirosyan, and was analysed by Danilov \cite{Danilov-1961} who selected the admissible solutions in the spirit of the additional experimental three-body parameter proposed at the same time by Gribov \cite{Gribov-1959}. That choice was soon after justified on more rigorous operator-theoretic grounds by Faddeev and Minlos \cite{Minlos-Faddeev-1961-1,Minlos-Faddeev-1961-2}. (It is actually remarkable that such  Russian key contributions all span a fistful of years, from the work \cite{TMS-1956} by Skornyakov and Ter-Martirosyan in 1956 to the period 1959-1961 with the works by Skornyakov \cite{Skornyakov-1959}, Gribov \cite{Gribov-1959}, Danilov \cite{Danilov-1961}, Faddeev \cite{Faddeev-scattering-1960}, and Faddeev and Minlos \cite{Minlos-Faddeev-1961-1,Minlos-Faddeev-1961-2}.) The possible necessity of an additional three-body parameter and its physical interpretation have become by now a standard picture in the physical literature of cold atoms in the zero-range regime \cite[Sect.~4]{Naidon-Endo-Review_Efimov_Physics-2017}.

\emph{Mathematical} investigations of the quantum three-body problem with zero-range interaction, on the other hand, have pursued over the decades a different programme: to \emph{qualify first the Hamiltonian} of the system, as an explicitly declared self-adjoint operator on Hilbert space, through its operator or form domain of self-adjointness and its action on each function of the domain, \emph{and only after to analyse the spectral properties}.

This conceptual scheme was brought up first in the already mentioned seminal works by Faddeev and Minlos \cite{Minlos-Faddeev-1961-1,Minlos-Faddeev-1961-2}, which were deeply mathematical in nature. There, rigorous Hamiltonians of contact interaction were proposed as suitable self-adjoint extensions of the symmetric operator
\begin{equation}\label{eq:freeHamiltRestricted}
 \Big(-\frac{1}{m_1}\Delta_{\xx_1}-\frac{1}{m_2}\Delta_{\xx_2}-\frac{1}{m_2}\Delta_{\xx_2}\Big)\Big|_{C^\infty_0((\mathbb{R}^3_{\xx_1}\times\mathbb{R}^3_{\xx_2}\times\mathbb{R}^3_{\xx_3})\setminus\Gamma)}\,,
\end{equation}
namely the free three-body Hamiltonian restricted on smooth functions that are compactly supported away from the `coincidence manifold'
\begin{equation}
 \Gamma\;:=\;\bigcup_{i,j}\Gamma_{ij}\,,\qquad \Gamma_{ij}\;:=\;\{(\xx_1,\xx_2,\xx_3)\,|\,\xx_i=\xx_j\}\,.
\end{equation}
The motivation is that any such extension encodes by construction a singular interaction only `supported' at the points of $\Gamma$. (Such a scheme lied on the very same footing as the analogous rigorous construction of \emph{two-body} zero-range interaction Hamiltonians, initially proposed in 1960 by Berezin and Faddeev \cite{Berezin-Faddeev-1961}.) In order for the analysis to produce physically meaningful results, the actual extensions of \eqref{eq:freeHamiltRestricted} to be considered are only those defined on domains of self-adjointness consisting of wave-functions that display the Bethe-Peierls short-range asymptotics \eqref{eq:preBP}.

All this has been then specialised among various lines, among which:
\begin{itemize}
 \item a more operator-theoretic line in the Faddeev-Minlos spirit, developed from the mid 1980's to the recent years by Minlos (also in collaboration with Menlikov, Mogilner, and Shermatov) \cite{Minlos-1987,Minlos-Shermatov-1989,mogilner-shermatov-PLA-1990,Menlikov-Minlos-1991,Menlikov-Minlos-1991-bis,Minlos-TS-1994,Shermatov-2003,Minlos-2011-preprint_May_2010,Minlos-2010-bis,Minlos-2012-preprint_30sett2011,Minlos-2014-I_RusMathSurv,Minlos-2014-II_preprint-2012}, with also recent contributions by Yoshitomi \cite{Yoshitomi_MathSlov2017}, and by the present author in collaboration with Ottolini \cite{MO-2016,MO-2017}, and with Becker and Ottolini \cite{BMO-2017};
 \item a line exploiting quadratic forms methods, initiated at the end of the 1980's by Dell'Antonio, Figari, and Teta and mainly developed in the following decades by an Italian community \cite{Teta-1989,dft-Nparticles-delta,DFT-proc1995,Finco-Teta-2012,CDFMT-2012,michelangeli-schmidbauer-2013,Correggi-Finco-Teta-2015_N+1,CDFMT-2015,Basti-Teta-2015,MP-2015-2p2,Basti-Figari-Teta-Rendiconti2018} (the works \cite{CDFMT-2012,michelangeli-schmidbauer-2013,CDFMT-2015} being co-authored by the present author), with also recent contributions by Moser and Seiringer \cite{Moser-Seiringer-2017,Moser-Seiringer-2018-2p2};
 \item a side line by Pavlov and his school \cite{Kuperin-Makarov-Merk-Motovilov-Pavlov-1989-JMP1990,Makarov-Melezhik-Motovilov-1995}, retaining the same ideas, but aimed at rigorously constructing variants of the formal Hamiltonian \eqref{eq:fistFormalHamilt} for particles with spin, and a spin-spin contact interaction;
 \item an extremely interesting, not-much-developed-yet line of constructing (three-dimensional) three-body Hamiltonians with zero-range interactions as rigorous limits, in the resolvent sense, of ordinary Schr\"{o}dinger operators with potentials that scale up to a delta profile -- an idea discussed first by Albeverio, H\o{}egh-Krohn, and Wu \cite{Albe-HK-Wu-1981} in the early 1980's (one-dimensional counterpart results have been recently established in \cite{BastiEtAl2018-1d-resLim,Griesemer-Hofacker-Linden-2019}). 
\end{itemize}

For what exposed so far, it is clear that the physical and the mathematical branches of the literature on the quantum three-body problem on point interaction, albeit very deeply cross-intersecting, are not immediately transparent to each other. The rigorous definition of the self-adjoint Hamiltonian is much more laborious than the formal diagonalisation made by physicists, and unavoidably requires the analysis of technical features of the Hamiltonian other than the `observable energy levels'. Besides, the Hamiltonians of interest not having the form of a Schr\"{o}dinger operator, the mathematical analysis faces the lack of various powerful tools from Schr\"{o}dinger operator theory.

Furthermore, the implementation of the Bethe-Peierls asymptotics \eqref{eq:preBP}, a crucial step of the mathematical modelling, yields various technical difficulties.

First, \eqref{eq:preBP} is a \emph{point-wise} asymptotics and need be understood as an expansion in a precise \emph{functional} sense in order to be meaningfully implemented in the operator-theoretic construction of the Hamiltonian.

Next, there is an arbitrariness in the modelling as to prescribing the Bethe-Peierls condition for \emph{all} the functions of the desired domain of self-adjointness, or possibly just for a meaningful \emph{subspace}, e.g., the eigenfunctions only.

In addition, once a realisation of the minimal operator \eqref{eq:freeHamiltRestricted} is found that fulfills the Bethe-Peierls condition, a possibility that one encounters is that this is only a symmetric operator with a variety of self-adjoint extensions, so that another parameter must be introduced to label each extension beside the given scattering length $a$, in complete analogy to the three-body parameter of the physicists.

Another possibility is that after implementing the Bethe-Peierls asymptotics, the resulting candidate Hamiltonian, be it already self-adjoint or not, is unbounded from below (beside being obviously unbounded above, as is the initial operator \eqref{eq:freeHamiltRestricted}). That multi-particle quantum models of zero-range interaction may be such is known since when Thomas in 1935 \cite{Thomas1935}, modelling the tritium as if the range of the interaction was exactly zero, showed that the scattering of the proton over the two neutrons would result in an infinity of bound states accumulating at minus infinity (`Thomas collapse', in the sense of `fall of the particles to the centre'), and this is well familiar in modern cold atom theoretical physics. Yet, this complicates the mathematical treatment, for instance making the quadratic form approach unsuited.

Related to that, one is then also concerned with producing meaningful regularisations of those models obtained along the conceptual path described above, where the spectral instability is removed and yet certain relevant features of the effective Hamiltonian are retained.

With this work we provide a comprehensive and up-to-date overview of all such instances, and in particular a systematic discussion of the technical procedures for the rigorous construction of self-adjoint Hamiltonians of physical relevance. This also allows us to clarify certain steps of the operator-theoretic construction that are notoriously subtle for the correct identification of a domain of self-adjointness.

Our main results, Theorems \ref{thm:generalclassification}, \ref{thm:globalTMSext}, \ref{thm:H0beta}, \ref{thm:spectralanalysis}, and \ref{thm:regularised-models}, present respectively:
\begin{itemize}
 \item the general classification of all self-adjoint realisations of the minimal operator \eqref{eq:freeHamiltRestricted} (a vast class that of course includes also physically non-relevant operators, i.e., realisations characterised by non-local boundary conditions), 
 \item the characterisation of all those extensions displaying the physical short-scale structure for the functions of their domains,
 \item the rigorous construction of a class of canonical models with the physical short-scale structure, and their spectral analysis,
 \item the counterpart for a class of regularised models where the instability is cured at an effective level.
\end{itemize}

The bosonic trimer with zero-range interaction has a natural parameter to be declared in the first place, the scattering length $a$ of the two-body interaction. It is the above-mentioned parameter governing the short-scale asymptotics \eqref{eq:preBP}. Whereas throughout our general discussion on physically relevant extensions we shall keep $a$ generic, for a sharper presentation the final construction of the canonical models is done in the regime $a=\infty$. In physics this is referred to as the `unitary regime', and many-body systems with two-body interaction of infinite scattering length are customarily called `unitary gases' \cite{Castin-Werner-2011_-_review} (for the connection with the optical theorem in which the choice $a=\infty$ maximises the scattering amplitude, and the fact that in turn the optical theorem is a consequence of the unitarity of the quantum evolution). The unitary regime is surely the physically most relevant one, for its applications in cold atom physics and its universality properties: we shall then stay in this regime for a large part of our analysis.

On a more technical level, appropriate self-adjoint extension schemes are needed along the discussion. As the minimal operator \eqref{eq:freeHamiltRestricted} is non-negative, it is natural to apply to it the Kre{\u\i}n-Vi\v{s}ik-Birman extension scheme for semi-bounded symmetric operators \cite{GMO-KVB2017}, and in fact for the specific problem under consideration this turns out to be more informative than the (equivalent) extension scheme a la von Neumann \cite[Sect.~X.1]{rs2}. Yet, at a later stage, when the implementation of the physical short-scale structure only produces symmetric extensions, their self-adjoint realisations, namely the final Hamiltonians of interest, are to be found via von Neumann's theory, because already the symmetric operator one starts from is unbounded from below, hence the Kre{\u\i}n-Vi\v{s}ik-Birman is not applicable.

Once mathematically well-posed (i.e., self-adjoint) and physically meaningful Hamiltonians are constructed, it is fairly manageable to express their quadratic forms, as we do in the sequel. Of course, as in several precursors of the present work, one can revert the order and study first a given quadratic form, typically selected by a physically grounded educated guess, proving that it actually represent a self-adjoint operator. What escapes such approach is the systematic classification of all extensions of interest: the standard classification theorems, indeed, are essentially formulated as operator classifications.

To conclude, there are surely various interesting directions along which it would be desirable to continue this study. To mention some of the most attractive ones, a more explicit theoretic dictionary between this mathematical approach and the physical zero-range methods, the characterisation of the quantum dynamics under the considered Hamiltonians, and an extension of such models to many-body systems with zero-range interaction.

\bigskip

\textbf{Notation.}  Beside an amount of fairly standard notation, as well as further convenient shorthand that will be introduced in due time, we shall adopt the following conventions throughout.

\hspace{-0.7cm}\begin{tabular}{ ccl } 
 $\pp$ & & three-dimensional variable (bold face) \\
 $p$ & & one-dimensional variable (italics) \\
 $\overline{z}$ & & complex conjugate of $z\in\mathbb{C}$ \\
 $\langle\cdot,\cdot\rangle$ & & Hilbert scalar product, or pairing $\int_{\mathbb{R}^d}\overline{f}g$, anti-linear in the first entry \\
 $\|v\|$ & & Hilbert space norm of the vector $v$ \\
 $H^s(\mathbb{R}^d)$ & & Sobolev space of order $s\in\mathbb{R}$ \\
 $C^\infty_0(\Omega)$ & & space of smooth functions with compact support inside the open $\Omega\subset\mathbb{R}^d$ \\
 $\mathcal{D}(S)$ & & operator domain of the operator $S$ \\
 $\mathcal{D}[S]$ & & quadratic form domain of the operator $S$ \\
 $\mathcal{D}[q]$ & & domain of the quadratic form $q$ \\
 $S[f]$ & & evaluation of the quadratic form of $S$ on the element $f\in\mathcal{D}[S]$ \\
 $\overline{S}$ & & operator closure of the operator $S$ \\
 $S^*$, $S^\star$ & & adjoint of $S$ (different notation depending on the reference Hilbert space) \\
 $\mathfrak{m}(S)$ & & bottom of the symmetric operator $S$: $\mathfrak{m}(S)=\inf_{f\in\mathcal{D}(S)}\,\langle f,Sf\rangle/\|f\|^2$ \\
 $\mathbbm{1}$ & & identity operator, acting on the space that is clear from the context \\
 $\mathbbm{O}$ & & zero operator, acting on the space that is clear from the context \\
 $\mathbf{1}_K$ & & characteristic function of the set $K$ \\
 $\delta(\xx)$ & & Dirac delta distribution centred at $\xx=\mathbf{0}$ \\
 $\widehat{f}$ & & Fourier transform of $f$, with convention $\widehat{f}(k)=(2\pi)^{-\frac{d}{2}}\!\int_{\mathbb{R}^d}e^{-\ii k \omega}f(\omega)\ud \omega$ \\
 $f^\vee$ & & inverse Fourier transform of $f$ \\
 $f\approx g$ & & $c^{-1}|g(\omega)|\leqslant|f(\omega)|\leqslant c|g(\omega)|$ for some $c>0$ and all admissible $\omega$ \\
 $\dotplus$ & & direct sum between vector spaces \\
 $\oplus$ & & (if referred to operators) reduced direct sum of operators \\
 $\oplus$ & & (if referred to vector spaces) Hilbert orthogonal direct sum \\
 $\boxplus$ & & Hilbert orthogonal direct sum of non-closed subspaces \\
\end{tabular}

Unless when it becomes relevant to emphasize that, we shall tacitly understand all identities $f=g$ between measurable functions in the sense of almost everywhere identities.

\section{General extension scheme and admissible Hamiltonians}\label{sec:generalextscheme}

\subsection{The minimal operator}~

In order to discuss realisations of the formal Hamiltonian \eqref{eq:fistFormalHamilt} as self-adjoint extensions of \eqref{eq:freeHamiltRestricted}, one factors out the translation invariance by introducing the centre of mass and the internal coordinates
\begin{equation}
 \yy_{\mathrm{c.m.}}\;:=\;\frac{\xx_1+\xx_2+\xx_3}{3}\,,\qquad \yy_1\;:=\;\xx_1-\xx_3\,,\qquad \yy_2\;:=\;\xx_2-\xx_3\,,
\end{equation}
and by re-writing
\begin{equation}
  -\frac{1}{2m}\Delta_{\xx_1}-\frac{1}{2m}\Delta_{\xx_2}-\frac{1}{2m}\Delta_{\xx_3}\;=\;-\frac{1}{6m}\Delta_{\yy_{\mathrm{c.m.}}}+\frac{1}{m}\mathring{H}\,,
\end{equation}
where
\begin{equation}\label{eq:Hring-initial}
 \mathring{H}\;:=\;-\Delta_{\yy_1}-\Delta_{\yy_2}-\nabla_{\yy_1}\cdot\nabla_{\yy_2}\,.
\end{equation}

In absolute coordinates, three-body wave-functions $\Psi(\xx_1,\xx_2,\xx_3)$ are bosonic, namely invariant under exchange of any pair of variables, hence under the corresponding transformation of the internal coordinates, according to the following scheme:
\begin{equation}\label{eq:bosonic_transformations}
 \begin{array}{lcl}
  \begin{cases}
   \xx_1\leftrightarrow\xx_2 \\
   \xx_3\textrm{ fixed}
  \end{cases} & \Leftrightarrow\;\;\; &
  \begin{cases}
   \yy_1\leftrightarrow\yy_2 \\
   \yy_2-\yy_1\leftrightarrow-(\yy_2-\yy_1)
  \end{cases} \\ \\
   \begin{cases}
   \xx_1\leftrightarrow\xx_3 \\
   \xx_2\textrm{ fixed}
  \end{cases} & \Leftrightarrow\;\;\; &
  \begin{cases}
   \yy_1\leftrightarrow -\yy_1 \\
   \yy_2\leftrightarrow \yy_2-\yy_1 
  \end{cases} \\ \\
   \begin{cases}
   \xx_2\leftrightarrow\xx_3 \\
   \xx_1\textrm{ fixed}
  \end{cases} & \Leftrightarrow\;\;\; &
  \begin{cases}
   \yy_1\leftrightarrow -(\yy_2-\yy_1) \\
   \yy_2\leftrightarrow -\yy_2\,. 
  \end{cases}
 \end{array}
\end{equation}
Transformations \eqref{eq:bosonic_transformations} clearly preserve the centre-of-mass variable.
Therefore, bosonic symmetry selects, within the Hilbert space of the internal coordinates
\begin{equation}
 \cH\;:=\;L^2(\mathbb{R}^3\times\mathbb{R}^3,\ud\yy_1\ud\yy_2)\,,
\end{equation}
the `bosonic sector', namely the Hilbert subspace
\begin{equation}\label{eq:Hbosonic}
 \begin{split}
 \cH_\mathrm{b}\;\equiv\;\;& L^2_\mathrm{b}(\mathbb{R}^3\times\mathbb{R}^3,\ud\yy_1\ud\yy_2) \\
  :=\;&
 \left\{\!\!\!
 \begin{array}{c}
  \psi\in L^2(\mathbb{R}^3\times\mathbb{R}^3,\ud\yy_1\ud\yy_2)\,\textrm{ such that} \\
  \psi(\yy_1,\yy_2)=\psi(\yy_2,\yy_1)=\psi(-\yy_1,\yy_2-\yy_1)
 \end{array}
 \!\!\!\right\}
 \end{split}
\end{equation}
in the sense of almost-everywhere identities between square-integrable functions.

The meaningful problem is then to characterise the self-adjoint extensions, with respect to $\cH_\mathrm{b}$ of the densely defined, closed, and symmetric operator
\begin{equation}\label{eq:domHring-initial}
\begin{split} 
 \mathcal{D}(\mathring{H})\;&:=\;\cH_\mathrm{b}\:\cap\: H^2_0\big( (\mathbb{R}^3_{\yy_1}\times\mathbb{R}^3_{\yy_2})\setminus\Gamma\big) \\
 \mathring{H}\;&:=\;-\Delta_{\yy_1}-\Delta_{\yy_2}-\nabla_{\yy_1}\cdot\nabla_{\yy_2}\,,
\end{split}
\end{equation}
where
\begin{equation}
\Gamma\;:=\;\bigcup_{j=1}^3\Gamma_j\qquad \textrm{with}\qquad
\begin{cases}\label{eq:hyperplanes}
 \Gamma_1\;:=\;\{\yy_2=\mathbf{0}\} \\
 \Gamma_2\;:=\;\{\yy_1=\mathbf{0}\} \\
 \Gamma_3\;:=\;\{\yy_1=\yy_2\} 
\end{cases}
\end{equation}
in the sense of hyperplanes in $\mathbb{R}^3\times\mathbb{R}^3$,
and 
\begin{equation}
 H^2_0\big( (\mathbb{R}^3_{\yy_1}\times\mathbb{R}^3_{\yy_2})\setminus\Gamma\big)\;:=\;\overline{C^\infty_0\big( (\mathbb{R}^3_{\yy_1}\times\mathbb{R}^3_{\yy_2})\setminus\Gamma\big)}^{\|\,\|_{H^2}}\,.
\end{equation}
In the notation \eqref{eq:hyperplanes}, the hyperplane $\Gamma_j$ is the set of configurations where the two particles different than the $j$-th one coincide.

In short, $\mathring{H}$ is the operator closure of $-\Delta_{\yy_1}-\Delta_{\yy_2}-\nabla_{\yy_1}\nabla_{\yy_2}$ initially defined on the bosonic smooth functions on $\mathbb{R}^3\times\mathbb{R}^3$ which are compactly supported away from the coincidence manifold $\Gamma$.

As the reasonings that will follow are somewhat more informative in the momentum representation, we shall often switch to the  variables $\pp_1,\pp_2$ that are Fourier conjugate to $\yy_1,\yy_2$ (yet, all our considerations can be straightforwardly re-phrased in position coordinates). One deduces from \eqref{eq:Hbosonic} that, for any $\psi\in\cH$,
\begin{equation}\label{eq:bosonicmomentum}
 \psi\in\cH_{\mathrm{b}}\qquad\Leftrightarrow\qquad\widehat{\psi}(\pp_1,\pp_2)\;=\;\widehat{\psi}(\pp_2,\pp_1)\;=\;\widehat{\psi}(\pp_1,-\pp_1-\pp_2)
\end{equation}
for almost every $\pp_1,\pp_2$.
Moreover, 
\begin{equation}\label{eq:psiy0}
 \psi|_{\Gamma_1}(\yy_1)\;=\;\psi(\yy_1,\mathbf{0})\;=\;\frac{1}{\;(2\pi)^{3}}\iint_{\mathbb{R}^3\times\mathbb{R}^3}\ud\pp_1\ud\pp_2\,e^{\ii \yy_1 \cdot\pp_1}\widehat{\psi}(\pp_1,\pp_2)\,,
\end{equation}
from which, using the identity
\[
 \frac{1}{\;(2\pi)^{3}}\int_{\mathbb{R}^3}\ud\yy_1\,e^{\ii\yy_1\cdot(\qq_1-\pp_1)}=\delta(\qq_1-\pp_1)\,,
\]
one deduces
\begin{equation}\label{eq:tracemomentum}
 \widehat{\psi|_{\Gamma_1}}(\pp_1)\;=\;\frac{1}{\;(2\pi)^{\frac{3}{2}}}\int_{\mathbb{R}^3}\widehat{\psi}(\pp_1,\pp_2)\,\ud\pp_2\,,
\end{equation}
and analogous expressions for $\widehat{\psi|_{\Gamma_2}}$ and $\widehat{\psi|_{\Gamma_3}}$.
This includes also the possibility that the evaluation of $\psi$ at a coincidence hyperplane makes \eqref{eq:psiy0}-\eqref{eq:tracemomentum} infinite for (almost) every value of the remaining variable.

By a standard trace theorem (see, e.g.,  \cite[Lemma 16.1]{Trtar-SobSpaces_interp}), if $\psi\in H^2(\mathbb{R}^3\times\mathbb{R}^3)$, then its evaluation $\psi|_{\Gamma_j}$ at the $j$-th coincidence hyperplane is a function in $H^{\frac{1}{2}}(\mathbb{R}^3)$, hence not necessarily continuous. Thus, when $f\in\mathcal{D}(\mathring{H})$ the vanishing ``$f|_{\Gamma_j}=0$'' in $H^{\frac{1}{2}}(\mathbb{R}^3)$ is to be understood by duality as 
\begin{equation}\label{eq:vanishingduality}
 0\;=\;\langle \eta, f|_{\Gamma_j}\rangle_{H^{-\frac{1}{2}},H^{\frac{1}{2}}}\;=\;\int_{\mathbb{R}^3}\overline{\widehat{\eta}(\pp)}\;\widehat{f|_{\Gamma_j}}(\pp)\,\ud\pp\qquad\forall\eta\in H^{-\frac{1}{2}}(\mathbb{R}^3)\,.
\end{equation}
This means that for any $f\in\mathcal{D}(\mathring{H})$ the vanishing at $\Gamma_1$, $\Gamma_2$, or $\Gamma_3$ corresponds, respectively, to
\begin{equation}\label{eq:triplevanishing}
 \begin{split}
  \iint_{\mathbb{R}^3\times\mathbb{R}^3}\widehat{f}(\pp_1,\pp_2)\,\widehat{\eta}(\pp_1)\,\ud\pp_1\ud\pp_2\;&=\;0\,,
 \\
 \iint_{\mathbb{R}^3\times\mathbb{R}^3}\widehat{f}(\pp_1,\pp_2)\,\widehat{\eta}(\pp_2)\,\ud\pp_1\ud\pp_2\;&=\;0\,, \\
 \iint_{\mathbb{R}^3\times\mathbb{R}^3}\widehat{f}(\pp_1,\pp_2)\,\widehat{\eta}(-\pp_1-\pp_2)\,\ud\pp_1\ud\pp_2\;&=\;0\,,
 \end{split}
\end{equation}
for each $\eta\in H^{-\frac{1}{2}}(\mathbb{R}^3)$,
as one may conclude combining \eqref{eq:bosonicmomentum}, \eqref{eq:tracemomentum} (and its counterparts by symmetry), and \eqref{eq:vanishingduality}.

The following is therefore proved.


\begin{lemma}\label{lem:Hminimal} The definition \eqref{eq:domHring-initial} is equivalent to
\begin{equation}\label{eq:Hringshort}
 \begin{split}
   \mathcal{D}(\mathring{H})\;&=\;\left\{ 
   \begin{array}{c}
   f\in\cH_{\mathrm{b}}\cap H^2(\mathbb{R}^3\times\mathbb{R}^3) \\
   \textrm{$f$ satisfies \eqref{eq:triplevanishing} }\forall\eta\in H^{-\frac{1}{2}}(\mathbb{R}^3)
   \end{array}
   \right\} \\
  \widehat{\mathring{H} f}(\pp_1,\pp_2)\;&=\;(\pp_1^2+\pp_2^2+\pp_1\cdot\pp_2)\widehat{f}(\pp_1,\pp_2)\,.
 \end{split}
\end{equation}
\end{lemma}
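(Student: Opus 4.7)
The right-hand side of \eqref{eq:Hringshort} is, by the momentum-space trace formula \eqref{eq:tracemomentum} (and its counterparts for $\Gamma_2,\Gamma_3$) combined with the duality characterisation \eqref{eq:vanishingduality}, nothing else than the space of bosonic $H^2$-functions whose $H^{\frac{1}{2}}$-traces on each of the three coincidence hyperplanes $\Gamma_j$ vanish. Consequently, the lemma amounts to the identification
\[
 H^2_0\big((\mathbb{R}^3_{\yy_1}\times\mathbb{R}^3_{\yy_2})\setminus\Gamma\big)\cap\cH_\mathrm{b}\;=\;\{f\in\cH_\mathrm{b}\cap H^2(\mathbb{R}^3\times\mathbb{R}^3)\,|\,f|_{\Gamma_j}=0\textrm{ in }H^{\frac{1}{2}}(\mathbb{R}^3),\ j=1,2,3\}\,,
\]
which I would prove by establishing the two inclusions separately.

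The direction $\subseteq$ is routine. Any $\varphi\in C^\infty_0((\mathbb{R}^3\times\mathbb{R}^3)\setminus\Gamma)$ lies in $H^2$ with trace identically zero on each $\Gamma_j$. Since the trace map $H^2(\mathbb{R}^6)\to H^{\frac{1}{2}}(\Gamma_j)$ is continuous by the standard trace theorem \cite[Lemma 16.1]{Trtar-SobSpaces_interp}, the vanishing of the trace is preserved upon passing to $H^2$-limits. The bosonic constraint is $L^2$-closed, hence $H^2$-closed; moreover, compatibility of bosonic symmetry with support away from $\Gamma$ is guaranteed by the fact that the exchange transformations \eqref{eq:bosonic_transformations} permute the hyperplanes $\Gamma_j$ among themselves, so the bosonic symmetrisation of a test function on $(\mathbb{R}^3\times\mathbb{R}^3)\setminus\Gamma$ still belongs to the same class. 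This gives the inclusion.

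The reverse inclusion $\supseteq$ is a density statement: any $f$ in the right-hand side must be approximable in $H^2$-norm by functions in $C^\infty_0((\mathbb{R}^3\times\mathbb{R}^3)\setminus\Gamma)\cap\cH_\mathrm{b}$. The plan is the standard cutoff-plus-mollification procedure. Pick a smooth cutoff $\chi_\varepsilon$ on $\mathbb{R}^6$ with $\chi_\varepsilon\equiv 0$ on an $\varepsilon$-tube around $\Gamma$, $\chi_\varepsilon\equiv 1$ outside a $2\varepsilon$-tube, and $|\partial^\alpha\chi_\varepsilon|\lesssim\varepsilon^{-|\alpha|}$ for $|\alpha|\leq 2$; choose $\chi_\varepsilon$ invariant under \eqref{eq:bosonic_transformations}, which is possible because $\Gamma$ itself is invariant. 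Form $\chi_\varepsilon f$, then convolve with a smooth mollifier of scale $\ll\varepsilon$, and finally multiply by a bosonic spatial cutoff to obtain compact support. The resulting approximant lies in $C^\infty_0((\mathbb{R}^3\times\mathbb{R}^3)\setminus\Gamma)\cap\cH_\mathrm{b}$, and the mollification plus spatial-truncation steps are harmless in $H^2$ by standard arguments; the whole issue reduces to showing $\chi_\varepsilon f\to f$ in $H^2$.

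The main obstacle, and the only genuine difficulty, is to control the two ``shell'' terms
\[
 \|(\nabla\chi_\varepsilon)\cdot\nabla f\|_{L^2}\qquad\textrm{and}\qquad\|(\nabla^2\chi_\varepsilon)\,f\|_{L^2}\,,
\]
both supported on $\{0<\mathrm{dist}(\cdot,\Gamma)<2\varepsilon\}$ where $|\nabla^k\chi_\varepsilon|\sim\varepsilon^{-k}$. This is precisely where the vanishing of the $H^{\frac{1}{2}}$-trace of $f$ on each $\Gamma_j$ must be invoked, through a codimension-three Hardy-type estimate: localising near $\Gamma_1=\{\yy_2=\mathbf{0}\}$, the assumptions $f\in H^2$ and $f|_{\Gamma_1}=0$ yield $|\yy_2|^{-2}f$ and $|\yy_2|^{-1}\nabla f$ in $L^2$ on a neighbourhood of $\Gamma_1$, with norms controlled by $\|f\|_{H^2}$. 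Since on the shell one has $\varepsilon^{-1}\lesssim|\yy_2|^{-1}$, the two shell terms are bounded respectively by $\||\yy_2|^{-1}\nabla f\|_{L^2(\{|\yy_2|\lesssim\varepsilon\})}$ and $\||\yy_2|^{-2}f\|_{L^2(\{|\yy_2|\lesssim\varepsilon\})}$, which vanish as $\varepsilon\to 0$ by absolute continuity of the Lebesgue integral. The analogous estimate near $\Gamma_2$ and $\Gamma_3$ (equivalent by the bosonic symmetry \eqref{eq:bosonic_transformations}) yields $\chi_\varepsilon f\to f$ in $H^2$, completing the argument.
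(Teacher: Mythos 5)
Your proposal is correct in strategy, and it is in fact \emph{more} detailed than what the paper does: the paper's own ``proof'' consists of the trace-theorem discussion preceding the lemma (traces of $H^2(\mathbb{R}^3\times\mathbb{R}^3)$-functions on each $\Gamma_j$ lie in $H^{\frac{1}{2}}(\mathbb{R}^3)$, and their vanishing is expressed by duality as \eqref{eq:triplevanishing}), after which the statement is declared proved; the identification of the $H^2$-closure of $C^\infty_0\big((\mathbb{R}^3\times\mathbb{R}^3)\setminus\Gamma\big)$ with the vanishing-trace subspace — which is the genuine content, and exactly what your $\supseteq$ inclusion addresses — is left implicit as a standard fact. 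So you are supplying the argument the paper omits rather than taking a different route. Two points in your sketch deserve a word. First, the ``codimension-three Hardy-type estimate'' $\||\yy_2|^{-2}f\|_{L^2}\lesssim\|f\|_{H^2}$ is asserted rather than proved, and it is the only place where the vanishing of the trace actually enters (the bound $\||\yy_2|^{-1}\nabla f\|_{L^2}\lesssim\|f\|_{H^2}$ holds for \emph{all} $f\in H^2$ by the three-dimensional Hardy inequality in the $\yy_2$-slices). It is true: taking the partial Fourier transform in $\yy_1$ one reduces to $u\in H^2(\mathbb{R}^3)$ with $u(0)=0$, for which the representation $u(x)=\int_0^1 x\cdot\nabla u(tx)\,\ud t$, Minkowski's integral inequality, and Hardy's inequality give $\||x|^{-2}u\|_{L^2}\leqslant 2\||x|^{-1}\nabla u\|_{L^2}\leqslant 4\|D^2u\|_{L^2}$; this should be spelled out, since the inequality is an endpoint case in three dimensions (the unconditional Rellich inequality fails there). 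Second, near the triple coincidence configuration the $\varepsilon$-tubes around the three hyperplanes overlap, so the cutoff must be taken as a product of the three single-hyperplane cutoffs and the resulting cross terms controlled by the same slice-wise estimates; this is routine but should be acknowledged. With those two additions your argument is complete.
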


A convenient shorthand shall be
\begin{equation}
 H_\mathrm{b}^s(\mathbb{R}^3\times\mathbb{R}^3)\;:=\;\cH_{\mathrm{b}}\cap H^s(\mathbb{R}^3\times\mathbb{R}^3)\,,\qquad s\geqslant 0\,,
\end{equation}
for Sobolev spaces with bosonic symmetry.

Let us also observe that for any $\lambda>0$
\begin{equation}\label{eq:lambda-equiv-1}
 \pp_1^2+\pp_2^2+\pp_1\cdot\pp_2+\lambda\;\sim\;\pp_1^2+\pp_2^2+1\,,
\end{equation}
in the sense that each quantity controls the other from above and from below.

\subsection{Friedrichs extension}~

It is clear that $\mathring{H}$ is lower semi-bounded, with lower bound $\mathfrak{m}(\mathring{H})=0$. As such, it has a distinguished extension, the Friedrichs extension $\mathring{H}_F$.

\begin{lemma}\label{lem:Friedrichs}
 The Friedrichs extension $\mathring{H}_F$ of $\mathring{H}$ is the self-adjoint operator acting as
 \begin{equation}\label{eq:HFspace}
  (\mathring{H}_F \phi)(\yy_1,\yy_2) \;=\;-\Delta_{\yy_1}\phi(\yy_1,\yy_2)-\Delta_{\yy_2}\phi(\yy_1,\yy_2)-\nabla_{\yy_1}\cdot\nabla_{\yy_2}\phi(\yy_1,\yy_2)\,,
 \end{equation}
 or equivalently
 \begin{equation}\label{eq:HFmomentum}
   (\widehat{\mathring{H}_F \phi})(\pp_1,\pp_2)\;=\;(\pp_1^2+\pp_2^2+\pp_1\cdot\pp_2)\widehat{\phi}(\pp_1,\pp_2)\,,
 \end{equation}
 defined on the domain
 \begin{equation}\label{eq:HFdomain}
  \mathcal{D}(\mathring{H}_F)\;=\;H_\mathrm{b}^2(\mathbb{R}^3\times\mathbb{R}^3)\,.
 \end{equation}
 Its quadratic form is
 \begin{equation}\label{eq:HFform}
 \begin{split}
  \mathcal{D}[\mathring{H}_F]\;&=\;H_\mathrm{b}^1(\mathbb{R}^3\times\mathbb{R}^3) \\
  \mathring{H}_F[\phi]\;&=\;\frac{1}{2}\iint_{\mathbb{R}^3\times\mathbb{R}^3}\Big(\big|(\nabla_{\yy_1}+\nabla_{\yy_2})\phi\big|^2+\big|\nabla_{\yy_1}\phi\big|^2+\big|\nabla_{\yy_2}\phi\big|^2\Big)\,\ud\yy_1\,\ud\yy_2\,.
 \end{split}
 \end{equation}
\end{lemma}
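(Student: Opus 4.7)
The plan is to build $\mathring{H}_F$ via its quadratic form. I would compute the form induced by $\mathring{H}$ on $\mathcal{D}(\mathring{H})$, identify its closure with respect to the energy norm $\|\cdot\|_{\mathring{H}+\lambda}$, and then recover $\mathring{H}_F$ as the unique self-adjoint operator representing the closed form, finally isolating the operator domain through the first representation theorem.

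For the form computation, on any $\phi\in\mathcal{D}(\mathring{H})\subset H^2_\mathrm{b}$ Parseval combined with \eqref{eq:Hringshort} would give
\[
 \langle \phi, \mathring{H}\phi\rangle \;=\; \iint_{\mathbb{R}^3\times\mathbb{R}^3}(\pp_1^2+\pp_2^2+\pp_1\cdot\pp_2)\,|\widehat{\phi}(\pp_1,\pp_2)|^2\,\ud\pp_1\ud\pp_2\,,
\]
and the elementary identity $\pp_1^2+\pp_2^2+\pp_1\cdot\pp_2=\tfrac{1}{2}(|\pp_1|^2+|\pp_2|^2+|\pp_1+\pp_2|^2)$ translates, via Parseval in the position representation, into exactly the right-hand side of \eqref{eq:HFform} (equivalently, one reads off the same answer by integrating $-\Delta_{\yy_1}-\Delta_{\yy_2}-\nabla_{\yy_1}\cdot\nabla_{\yy_2}$ by parts against bosonic $H^2$-functions vanishing at $\Gamma$). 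By \eqref{eq:lambda-equiv-1} this energy norm is equivalent, for any $\lambda>0$, to the standard $H^1$-norm, so the form closure is necessarily a subspace of $H^1_\mathrm{b}(\mathbb{R}^3\times\mathbb{R}^3)$.

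The reverse inclusion --- that the form closure is all of $H^1_\mathrm{b}$ --- is the main analytic hurdle, and I would prove it by showing that $C^\infty_0\bigl((\mathbb{R}^3\times\mathbb{R}^3)\setminus\Gamma\bigr)\cap\cH_\mathrm{b}$ is $H^1$-dense in $H^1_\mathrm{b}$. This rests on a standard capacity argument: each hyperplane $\Gamma_j$ has codimension $3$ in $\mathbb{R}^6$, hence vanishing $H^1$-capacity, so it can be excised from the support of any $H^1$-function by a sequence of smooth cut-offs converging to the identity in $H^1$-norm. Bosonic symmetrisation, i.e.\ averaging over the finite group of permutations \eqref{eq:bosonic_transformations}, is $H^1$-continuous and preserves the property of being supported away from the (permutation-invariant) set $\Gamma$, yielding the desired density. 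Together with the previous step this identifies $\mathcal{D}[\mathring{H}_F]$ with $H^1_\mathrm{b}$ and certifies \eqref{eq:HFform}.

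It remains to pin down the operator domain. By the first representation theorem, $\mathcal{D}(\mathring{H}_F)=\mathcal{D}[\mathring{H}_F]\cap\mathcal{D}(\mathring{H}^*)$, and $\mathring{H}^*$ acts on $\cH_\mathrm{b}$ as the distribution $-\Delta_{\yy_1}-\Delta_{\yy_2}-\nabla_{\yy_1}\cdot\nabla_{\yy_2}$, i.e.\ in momentum space as multiplication by $\pp_1^2+\pp_2^2+\pp_1\cdot\pp_2$. For $f\in H^1_\mathrm{b}$, membership in $\mathcal{D}(\mathring{H}^*)$ amounts to $(\pp_1^2+\pp_2^2+\pp_1\cdot\pp_2)\widehat{f}\in L^2$, which by \eqref{eq:lambda-equiv-1} is equivalent to $(1+\pp_1^2+\pp_2^2)\widehat{f}\in L^2$, i.e.\ $f\in H^2_\mathrm{b}$, yielding \eqref{eq:HFdomain} together with \eqref{eq:HFspace}--\eqref{eq:HFmomentum}. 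The only non-routine ingredient is the $H^1$-density step: it is exactly where the codimension-$3$ geometry of the coincidence manifold is decisive, and it is also the conceptual reason why $\mathring{H}_F$ describes non-interacting bosons, so that genuinely contact-interacting trimers must be sought among the \emph{other} self-adjoint extensions treated in the sequel.
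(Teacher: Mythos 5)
Your overall strategy coincides with the paper's: identify the form closure of $\mathring{H}$ with $H^1_{\mathrm{b}}(\mathbb{R}^3\times\mathbb{R}^3)$ using the norm equivalence \eqref{eq:lambda-equiv-1}, read off the explicit form from the identity $\pp_1^2+\pp_2^2+\pp_1\cdot\pp_2=\frac{1}{2}(\pp_1+\pp_2)^2+\frac{1}{2}\pp_1^2+\frac{1}{2}\pp_2^2$, and then pin down the operator. One genuine difference is welcome: for the density of bosonic smooth functions supported away from $\Gamma$ in $H^1_{\mathrm{b}}$, the paper merely cites \cite[Lemma 3(ii)]{MO-2016}, whereas you supply the underlying mechanism (the coincidence hyperplanes have codimension $3$ in $\mathbb{R}^6$, hence vanishing $H^1$-capacity, and symmetrisation over \eqref{eq:bosonic_transformations} is $H^1$-continuous and preserves support away from the permutation-invariant $\Gamma$). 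That is a correct, self-contained substitute for the citation.

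Your final step, however, has a gap as written. You pass through $\mathcal{D}(\mathring{H}_F)=\mathcal{D}[\mathring{H}_F]\cap\mathcal{D}(\mathring{H}^*)$ (legitimate) but then assert that $\mathring{H}^*$ acts in momentum space as multiplication by $\pp_1^2+\pp_2^2+\pp_1\cdot\pp_2$, so that membership in $\mathcal{D}(\mathring{H}^*)$ would amount to $(\pp_1^2+\pp_2^2+\pp_1\cdot\pp_2)\widehat{f}\in L^2$. That characterisation of the adjoint is false: $\mathcal{D}(\mathring{H}^*)$ contains the singular potentials $u_\xi^\lambda$ of \eqref{eq:uxi}, for which $(\pp_1^2+\pp_2^2+\pp_1\cdot\pp_2+\lambda)\,\widehat{u_\xi^\lambda}$ is a sum of single-variable charges and is not square-integrable, and on such elements the action of $\mathring{H}^*$ differs from the differential expression by a distribution supported on $\Gamma$ (see \eqref{eq:Hstarg} and the remark following Lemma \ref{lem:Hstaretc}) --- indeed the whole point of the paper is that $\mathcal{D}(\mathring{H}^*)$ is strictly larger than $H^2_{\mathrm{b}}$. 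Your conclusion survives because the intersection with $H^1_{\mathrm{b}}$ kills the singular part: one must check that $u_\xi^\lambda\in H^1(\mathbb{R}^3\times\mathbb{R}^3)$ forces $\xi=0$, i.e.\ that the deficiency subspace meets the form domain trivially (a standard fact in the Kre{\u\i}n-Vi\v{s}ik-Birman scheme, visible here from the divergence of $\int\ud\pp_2\,(1+\pp_1^2+\pp_2^2)(\pp_1^2+\pp_2^2+\pp_1\cdot\pp_2+\lambda)^{-2}$), whence $H^1_{\mathrm{b}}\cap\mathcal{D}(\mathring{H}^*)=H^2_{\mathrm{b}}$ and the correction term is absent. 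The paper sidesteps all of this with a cheaper observation: \eqref{eq:HFspace}--\eqref{eq:HFdomain} manifestly define a self-adjoint extension of $\mathring{H}$ whose domain is contained in the form domain $H^1_{\mathrm{b}}$, and the Friedrichs extension is the unique self-adjoint extension with that property. Either repair works; as submitted, the step does not close.
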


\begin{proof}
The \emph{form} domain $\mathcal{D}[\mathring{H}]$ of $\mathring{H}$ is the completion of $\mathcal{D}(\mathring{H})=\cH_\mathrm{b}\cap H^2_0( (\mathbb{R}^3_{\yy_1}\times\mathbb{R}^3_{\yy_2})\setminus\Gamma)$ in the norm $\|f\|_F:=(\langle f,\mathring{H}f\rangle+\|f\|_{\cH}^2)^{\frac{1}{2}}$, and $\|f\|_F\approx\|f\|_{H^1}$ owing to \eqref{eq:lambda-equiv-1}. Reasoning as done in collaboration with Ottolini in \cite[Lemma 3(ii)]{MO-2016}, the above-mentioned completion is precisely $H^1_{\mathrm{b}}(\mathbb{R}^3\times\mathbb{R}^3)$. Thus, $\mathcal{D}[\mathring{H}]=H^1_{\mathrm{b}}(\mathbb{R}^3\times\mathbb{R}^3)$. Since \eqref{eq:HFspace}-\eqref{eq:HFdomain} obviously defines a self-adjoint extension of $\mathring{H}$ with domain entirely contained in $\mathcal{D}[\mathring{H}]$, necessarily such operator is the Friedrichs extension of $\mathring{H}$. The explicit formula for the evaluation of the quadratic form follows from the identity $\pp_1^2+\pp_2^2+\pp_1\cdot\pp_2=\frac{1}{2}(\pp_1+\pp_2)^2+\frac{1}{2}\pp_1^2+\frac{1}{2}\pp_2^2$.
\end{proof}

As $\mathring{H}$ is bounded from below, its self-adjoint realisations may be identified by means of the Kre{\u\i}n-Vi\v{s}ik-Birman extension scheme for semi-bounded symmetric operators \cite{GMO-KVB2017}. In this scheme each extension is conveniently parametrised with respect to a reference extension that has everywhere-defined bounded inverse.

Now, the Friedrichs extension $\mathring{H}_F$ has zero at the bottom of its spectrum and hence is not everywhere invertible in $\cH_\mathrm{b}$. One then searches for self-adjoint realisations of the shifted operator $\mathring{H}+\lambda\mathbbm{1}$, for some $\lambda>0$, since obviously
\begin{equation}
 \mathcal{D}(\mathring{H}+\lambda\mathbbm{1})\;=\;\mathcal{D}(\mathring{H})\,,\qquad (\mathring{H}+\lambda\mathbbm{1})_F=\mathring{H}_F+\lambda\mathbbm{1}\qquad \forall\lambda>0\,,
\end{equation}
and the latter operator is indeed everywhere invertible in $\cH_\mathrm{b}$. Once the self-adjoint extensions of $\mathring{H}+\lambda\mathbbm{1}$ are identified, the corresponding ones for $\mathring{H}$ are then read out from the former by removing the shift.

The data needed for the classification of the self-adjoint extensions of  $\mathring{H}+\lambda\mathbbm{1}$, according to the Kre{\u\i}n-Vi\v{s}ik-Birman theory, are the deficiency subspace $\ker(\mathring{H}^*+\lambda\mathbbm{1})$ and the action of $(\mathring{H}_F+\lambda\mathbbm{1})^{-1}$ on such space. We shall designate these data.

First of all, obviously,
\begin{equation}\label{eq:HFinverse}
 ((\mathring{H}_F+\lambda\mathbbm{1})^{-1}\psi)\,{\textrm{\large $\widehat{\,}$\normalsize}}\,(\pp_1,\pp_2)\;=\;(\pp_1^2+\pp_2^2+\pp_1\cdot\pp_2+\lambda)^{-1}\widehat{\psi}(\pp_1,\pp_2)
\end{equation}
for every $\psi\in\cH_\mathrm{b}$ and $\lambda>0$. Next we describe the adjoint.

\subsection{Adjoint}~

For given $\xi\in H^{-\frac{1}{2}}(\mathbb{R}^3)$ and $\lambda>0$ let $u_\xi^\lambda$ be the function defined by
\begin{equation}\label{eq:uxi}
 \widehat{u_\xi^\lambda}(\pp_1,\pp_2)\;:=\;\frac{\widehat{\xi}(\pp_1)+\widehat{\xi}(\pp_2)+\widehat{\xi}(-\pp_1-\pp_2)}{\pp_1^2+\pp_2^2+\pp_1\cdot\pp_2+\lambda}\,.
\end{equation}

\begin{lemma}\label{lem:uxiproperties}~

\begin{itemize}
 \item[(i)] For every $\lambda>0$ there exists a constant $c_\lambda>0$ such that for every $\xi\in H^{-\frac{1}{2}}(\mathbb{R}^3)$ one has
 \begin{equation}\label{eq:equivalencenorms}
  c_\lambda^{-1}\|\xi\|_{H^{-\frac{1}{2}}(\mathbb{R}^3)}\;\leqslant\;\|u_\xi^\lambda\|_{\cH}\;\leqslant c_\lambda\|\xi\|_{H^{-\frac{1}{2}}(\mathbb{R}^3)}\,.
 \end{equation}
 \item[(ii)] For every $\xi\in H^{-\frac{1}{2}}(\mathbb{R}^3)$, $u_\xi^\lambda\in\cH_{\mathrm{b}}$.
 \item[(iii)] If $u_\xi^\lambda=u_\eta^\lambda$ for some $\xi,\eta\in H^{-\frac{1}{2}}(\mathbb{R}^3)$ and $\lambda>0$, then $\xi=\eta$.
 \item[(iv)] For $\xi\in H^{-\frac{1}{2}}(\mathbb{R}^3)$ and $\lambda,\mu>0$ one has $u_\xi^\lambda-u^\mu_\xi\in H^2_\mathrm{b}(\mathbb{R}^3\times\mathbb{R}^3)$.
\end{itemize} 
\end{lemma}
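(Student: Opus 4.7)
The plan is to handle the four claims in an order that lets later parts lean on earlier ones: I would do (ii) first, then the norm equivalence (i), and finally deduce (iii) and (iv) from them.

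For (ii), I would verify directly that both numerator and denominator on the right-hand side of \eqref{eq:uxi} are invariant under the three bosonic symmetries of \eqref{eq:bosonicmomentum}. The denominator $\pp_1^2+\pp_2^2+\pp_1\cdot\pp_2+\lambda$ is patently symmetric under $\pp_1\leftrightarrow\pp_2$, and a one-line algebra check shows that it is also invariant under $(\pp_1,\pp_2)\mapsto(\pp_1,-\pp_1-\pp_2)$. For the numerator, the triple $\{\pp_1,\pp_2,-\pp_1-\pp_2\}$ is permuted as a set by each of those transformations, so the sum of the three values of $\widehat\xi$ is preserved. Hence $u_\xi^\lambda\in\cH_{\mathrm{b}}$.

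Claim (i) is the main obstacle and the part that actually uses the analytic structure. For the upper bound I would use $|a+b+c|^2\leqslant 3(|a|^2+|b|^2+|c|^2)$ on the numerator and reduce to three ``diagonal'' integrals of the type
\begin{equation*}
 \iint_{\mathbb{R}^3\times\mathbb{R}^3}\frac{|\widehat\xi(\pp_1)|^2}{(\pp_1^2+\pp_2^2+\pp_1\cdot\pp_2+\lambda)^2}\,\ud\pp_1\ud\pp_2\,,
\end{equation*}
completing the square in $\pp_2$, shifting $\pp_2\mapsto\pp_2-\tfrac12\pp_1$, and using the classical identity $\int_{\mathbb{R}^3}(\qq^2+a^2)^{-2}\ud\qq=\pi^2 a^{-1}$. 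The inner integral becomes $\pi^2(\tfrac34\pp_1^2+\lambda)^{-1/2}$, which is equivalent to $(1+\pp_1^2)^{-1/2}$ for $\lambda>0$, so the result is equivalent to $\|\xi\|^2_{H^{-1/2}(\mathbb{R}^3)}$. For the lower bound I would expand $|\widehat\xi(\pp_1)+\widehat\xi(\pp_2)+\widehat\xi(-\pp_1-\pp_2)|^2$ into three diagonal pieces (all equal by the symmetry of the denominator, and each contributing a strictly positive multiple of $\|\xi\|^2_{H^{-1/2}}$) and six cross terms. The cross integrals are the standard off-diagonal pieces of a Ter-Martirosyan--Skornyakov-type kernel, bounded by Schur's test after integrating one variable against a suitable auxiliary weight; the resulting bound leaves a strictly positive multiple of $\|\xi\|^2_{H^{-1/2}}$ on the right, which yields the lower inequality. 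The delicate step is the sharpness of this Schur bound; the equivalence \eqref{eq:lambda-equiv-1} and the specific positive quadratic form in the denominator are what make it work uniformly in $\xi$.

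For (iii), linearity of $\xi\mapsto u_\xi^\lambda$ reduces the claim to the implication $u_{\xi-\eta}^\lambda=0\Rightarrow\xi-\eta=0$, which is immediate from the lower bound in (i). For (iv) I would simply compute
\begin{equation*}
 \widehat{u_\xi^\lambda-u_\xi^\mu}(\pp_1,\pp_2)\;=\;\frac{(\mu-\lambda)\bigl(\widehat\xi(\pp_1)+\widehat\xi(\pp_2)+\widehat\xi(-\pp_1-\pp_2)\bigr)}{(\pp_1^2+\pp_2^2+\pp_1\cdot\pp_2+\lambda)(\pp_1^2+\pp_2^2+\pp_1\cdot\pp_2+\mu)}
\end{equation*}
and estimate the $H^2$-norm by using \eqref{eq:lambda-equiv-1} twice: the weight $(1+\pp_1^2+\pp_2^2)^2$ is absorbed by one factor of the denominator, and the remaining integrand is exactly the $\cH$-norm density of $u_\xi^\mu$ up to a harmless constant. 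Since $u_\xi^\mu\in\cH$ by (i), this shows $u_\xi^\lambda-u_\xi^\mu\in H^2(\mathbb{R}^3\times\mathbb{R}^3)$. Bosonic symmetry is inherited from (ii), so the difference lies in $H^2_{\mathrm{b}}(\mathbb{R}^3\times\mathbb{R}^3)$.
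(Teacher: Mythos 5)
Your parts (ii), (iii) and (iv) are fine and coincide with the paper's one-line arguments (symmetry of \eqref{eq:uxi} under \eqref{eq:bosonic_transformations}, linearity plus the lower bound of (i), and the explicit difference formula combined with \eqref{eq:lambda-equiv-1} and \eqref{eq:equivalencenorms}, respectively). One small ordering remark: claim (ii) asserts membership in $\cH_{\mathrm{b}}\subset\cH$, so it needs the square-integrability of $u_\xi^\lambda$, i.e.\ the upper bound of (i); doing (ii) strictly before (i) only establishes the symmetry of the expression, not that it defines an element of $\cH$.

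The genuine gap is in the lower bound of (i). Expanding the numerator gives $\|u_\xi^\lambda\|^2_{\cH}=3D+6C$ with
\begin{equation*}
 D=\iint_{\mathbb{R}^3\times\mathbb{R}^3}\frac{|\widehat{\xi}(\pp_1)|^2\,\ud\pp_1\ud\pp_2}{(\pp_1^2+\pp_2^2+\pp_1\cdot\pp_2+\lambda)^2}\,,\qquad
 C=\mathrm{Re}\iint_{\mathbb{R}^3\times\mathbb{R}^3}\frac{\overline{\widehat{\xi}(\pp_1)}\,\widehat{\xi}(\pp_2)\,\ud\pp_1\ud\pp_2}{(\pp_1^2+\pp_2^2+\pp_1\cdot\pp_2+\lambda)^2}\,,
\end{equation*}
and $D\approx\|\xi\|^2_{H^{-\frac{1}{2}}}$, so your strategy requires a bound of the form $|C|\leqslant\kappa D$ with $\kappa<\frac{1}{2}$. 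No Schur test can provide this. The off-diagonal kernel is positive, so the best possible Schur constant is attained with the scale-invariant weight $|\qq|^{-1}$, and an explicit computation at $\lambda=0$ gives
\begin{equation*}
 |\pp|\int_{\mathbb{R}^3}\frac{|\qq|^{-1}\,\ud\qq}{\,(\pp^2+\qq^2+\pp\cdot\qq)^2}\;=\;\frac{4\pi^2}{3\sqrt{3}\,|\pp|}\;=\;\frac{2}{3}\int_{\mathbb{R}^3}\frac{\ud\qq}{\,(\pp^2+\qq^2+\pp\cdot\qq)^2}\,,
\end{equation*}
i.e.\ $\kappa=\frac{2}{3}$, which yields only $3D+6C\geqslant -D$ — worse than the trivial $\|u_\xi^\lambda\|^2\geqslant 0$. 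Moreover $\frac{2}{3}$ is essentially attained (take $\widehat{\xi}(\pp)\sim|\pp|^{-2}$ over a long dyadic range), so no choice of auxiliary weight can push the constant below $\frac{1}{2}$: the obstruction is not the sharpness of the Schur bound but the fact that Schur controls only the absolute size of $C$, whereas the lower bound needs its \emph{sign structure}. The correct route — the one of \cite[Lemma B.2]{CDFMT-2015}, which the paper simply invokes for part (i) — is to decompose into spherical harmonics and pass to logarithmic (Mellin-type) variables, where the off-diagonal term becomes a convolution; its Fourier multiplier $\widehat{K}(s)$ satisfies $\widehat{K}(0)=\frac{2}{3}\widehat{K}_D$ but its negative part stays well above $-\frac{1}{2}\widehat{K}_D$, so the full symbol $3+6\widehat{K}(s)/\widehat{K}_D$ is bounded away from zero. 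This is the same mechanism as the paper's estimates for $T_\lambda$ via the functions $S_\ell(s)$ in Section \ref{sec:Tlambdaestimates}. As written, your proof of the coercivity rests on an inequality that is provably false.
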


\begin{proof}
 Part (i) can be proved by easily mimicking the very same argument of \cite[Lemma B.2]{CDFMT-2015}. Part (ii) follows from (i) and from the invariance of \eqref{eq:uxi} under the transformations \eqref{eq:bosonic_transformations}. Part (iii) follows from (i), owing to the linearity $\xi\mapsto u_\xi$. Part (iv) follows from the identity
 \[
  \widehat{u_\xi^\lambda}(\pp_1,\pp_2)-\widehat{u_\xi^\mu}(\pp_1,\pp_2)\;=\;\frac{(\mu-\lambda)\big(\widehat{\xi}(\pp_1)+\widehat{\xi}(\pp_2)+\widehat{\xi}(-\pp_1-\pp_2)\big)}{\,(\pp_1^2+\pp_2^2+\pp_1\cdot\pp_2+\lambda)\,(\pp_1^2+\pp_2^2+\pp_1\cdot\pp_2+\mu)\,}
 \]
 and from \eqref{eq:lambda-equiv-1} and \eqref{eq:equivalencenorms}.
\end{proof}

\begin{lemma}\label{lem:Hstaretc} Let $\lambda>0$.
\begin{itemize}
  \item[(i)] One has
 \begin{equation}\label{eq:kerHstarlambda}
  \ker(\mathring{H}^*+\lambda\mathbbm{1})\;=\;\big\{u_\xi^\lambda\,|\,\xi\in H^{-\frac{1}{2}}(\mathbb{R}^3)\big\}\,. 
 \end{equation}
 \item[(ii)] One has
 \begin{equation}\label{eq:DHFdecomposed}
 \begin{split}
 \mathcal{D}(\mathring{H}_F)\;&=\; H_\mathrm{b}^2(\mathbb{R}^3\times\mathbb{R}^3) \\
  &=\;\left\{
   \phi\in \cH_{\mathrm{b}}\,\left|
   \begin{array}{c}
    \displaystyle\widehat{\phi}\,=\,\widehat{f^\lambda}+\frac{\widehat{u_\eta^\lambda}}{\pp_1^2+\pp_2^2+\pp_1\cdot\pp_2+\lambda} \\
    \textrm{for } f^\lambda\in\mathcal{D}(\mathring{H})\,,\;\eta\in H^{-\frac{1}{2}}(\mathbb{R}^3)
   \end{array}
   \right.\right\}.
   \end{split}
 \end{equation}
 \item[(iii)] One has
 \begin{equation}\label{eq:DHstardecomposed}
 \begin{split}
   \mathcal{D}(\mathring{H}^*)\;&=\left\{
   g\in \cH_{\mathrm{b}}\,\left|
   \begin{array}{c}
    \displaystyle\widehat{g}\,=\,\widehat{\phi^\lambda}+\widehat{u_\xi^\lambda} \\
    \textrm{for } \phi^\lambda\in H_\mathrm{b}^2(\mathbb{R}^3\times\mathbb{R}^3) \,,\;\xi\in H^{-\frac{1}{2}}(\mathbb{R}^3)
   \end{array}
   \!\!\right.\right\} \\
   &=\left\{
   g\in \cH_{\mathrm{b}}\,\left|
   \begin{array}{c}
    \displaystyle\widehat{g}\,=\,\widehat{f^\lambda}+\frac{\widehat{u_\eta^\lambda}}{\pp_1^2+\pp_2^2+\pp_1\cdot\pp_2+\lambda}+\widehat{u_\xi^\lambda} \\
    \textrm{for } f^\lambda\in\mathcal{D}(\mathring{H})\,,\;\xi,\eta\in H^{-\frac{1}{2}}(\mathbb{R}^3)
   \end{array}
   \right.\right\}
 \end{split}
 \end{equation}
 and 
 \begin{equation}
  ((\mathring{H}^*+\lambda\mathbbm{1})g)\,{\textrm{\large $\widehat{\,}$\normalsize}}\,\;=\;(\pp_1^2+\pp_2^2+\pp_1\cdot\pp_2+\lambda)\widehat{\phi^\lambda}
 \end{equation}
 with
 \begin{equation}
  \widehat{\phi^\lambda}\;:=\;\widehat{f^\lambda}+\frac{\widehat{u_\eta^\lambda}}{\pp_1^2+\pp_2^2+\pp_1\cdot\pp_2+\lambda}\,,
 \end{equation}
 or equivalently
 \begin{equation}\label{eq:Hstarg}
  \begin{split}
   \widehat{(\mathring{H}^* g)}(\pp_1,\pp_2)\;&=\;(\pp_1^2+\pp_2^2+\pp_1\cdot \pp_2) \widehat{g}(\pp_1,\pp_2) \\
   &\qquad -\big(\widehat{\xi}(\pp_1)+\widehat{\xi}(\pp_2)+\widehat{\xi}(-\pp_1-\pp_2) \big)\,.
  \end{split}
 \end{equation}
\end{itemize}
The decompositions in \eqref{eq:DHFdecomposed} and \eqref{eq:DHstardecomposed} are unique, in the sense that for each $\phi\in\mathcal{D}(\mathring{H}_F)$ there exist unique $f^\lambda,\eta$ satisfying the decomposition in \eqref{eq:DHFdecomposed}, and for each $g\in \mathcal{D}(\mathring{H}^*)$ there exist unique $f^\lambda,\eta,\xi$ satisfying the decomposition in \eqref{eq:DHstardecomposed}.
\end{lemma}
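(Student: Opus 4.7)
My approach is to prove (i) first by a direct Fourier-space computation, and then to deduce (ii) and (iii) from the abstract decompositions of the Kre{\u\i}n-Vi\v{s}ik-Birman theory applied to the shifted, strictly positive operator $\mathring{H}+\lambda\mathbbm{1}$, whose Friedrichs extension $\mathring{H}_F+\lambda\mathbbm{1}$ is everywhere invertible in $\cH_\mathrm{b}$. The two required ingredients for the general formulas, namely $\ker(\mathring{H}^*+\lambda\mathbbm{1})$ and the resolvent action $(\mathring{H}_F+\lambda\mathbbm{1})^{-1}$ on it, are furnished by (i) and by \eqref{eq:HFinverse}, respectively.

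\textbf{Proof of (i).} Since $\mathring{H}$ is closed, $\ker(\mathring{H}^*+\lambda\mathbbm{1})=\mathrm{ran}(\mathring{H}+\lambda\mathbbm{1})^\perp$ in $\cH_\mathrm{b}$. Setting $M_\lambda(\pp_1,\pp_2):=\pp_1^2+\pp_2^2+\pp_1\cdot\pp_2+\lambda$, the equivalence \eqref{eq:lambda-equiv-1} makes multiplication by $M_\lambda$ a topological isomorphism $H^2_\mathrm{b}(\mathbb{R}^3\times\mathbb{R}^3)\to L^2_\mathrm{b}(\mathbb{R}^3\times\mathbb{R}^3)$ in the Fourier representation. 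The inclusion $\{u_\xi^\lambda\}\subseteq\ker(\mathring{H}^*+\lambda\mathbbm{1})$ is immediate: for $f\in\mathcal{D}(\mathring{H})$,
\[
\langle u_\xi^\lambda,(\mathring{H}+\lambda\mathbbm{1})f\rangle=\iint_{\mathbb{R}^3\times\mathbb{R}^3}\overline{\widehat{\xi}(\pp_1)+\widehat{\xi}(\pp_2)+\widehat{\xi}(-\pp_1-\pp_2)}\,\widehat{f}(\pp_1,\pp_2)\,\ud\pp_1\,\ud\pp_2=0
\]
by the vanishing conditions \eqref{eq:triplevanishing} (the three summands reducing, by bosonic symmetry \eqref{eq:bosonicmomentum}, to the same integral). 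For the converse inclusion, I would argue that under the isomorphism $\widehat{h}=M_\lambda\widehat{f}$ the conditions \eqref{eq:triplevanishing} on $\widehat{f}$ translate into the $L^2$-orthogonality of $\widehat{h}$ to every $\widehat{u_\eta^\lambda}$, again using \eqref{eq:bosonicmomentum} to merge the three separate constraints into a single symmetric one. Consequently $(\mathring{H}+\lambda\mathbbm{1})\mathcal{D}(\mathring{H})=\{u_\eta^\lambda:\eta\in H^{-1/2}(\mathbb{R}^3)\}^\perp$ in $\cH_\mathrm{b}$. Taking orthogonal complements and invoking Lemma \ref{lem:uxiproperties}(i)-(iii), which guarantees that $\{u_\eta^\lambda:\eta\in H^{-1/2}(\mathbb{R}^3)\}$ is a closed subspace of $\cH_\mathrm{b}$ topologically isomorphic to $H^{-1/2}(\mathbb{R}^3)$, yields \eqref{eq:kerHstarlambda}.

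\textbf{Deduction of (ii) and (iii).} With (i) in hand, the canonical Kre{\u\i}n-Vi\v{s}ik-Birman identities
\[
\mathcal{D}(\mathring{H}_F)=\mathcal{D}(\mathring{H})\dotplus(\mathring{H}_F+\lambda\mathbbm{1})^{-1}\ker(\mathring{H}^*+\lambda\mathbbm{1}),\quad\mathcal{D}(\mathring{H}^*)=\mathcal{D}(\mathring{H}_F)\dotplus\ker(\mathring{H}^*+\lambda\mathbbm{1}),
\]
combined with the explicit multiplier action \eqref{eq:HFinverse}, give at once \eqref{eq:DHFdecomposed} and \eqref{eq:DHstardecomposed}. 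The directness of the first sum follows from the observation that if $f\in\mathcal{D}(\mathring{H})$ and $(\mathring{H}_F+\lambda\mathbbm{1})f=u_\eta^\lambda$, then $u_\eta^\lambda\in\mathrm{ran}(\mathring{H}+\lambda\mathbbm{1})\cap\ker(\mathring{H}^*+\lambda\mathbbm{1})=\{0\}$, whence $\eta=0$ by Lemma \ref{lem:uxiproperties}(iii); the directness of the second sum follows from $\mathring{H}_F+\lambda\mathbbm{1}\geqslant\lambda\mathbbm{1}>0$. The action formula \eqref{eq:Hstarg} is then checked on $g=\phi^\lambda+u_\xi^\lambda$ using $\mathring{H}^*u_\xi^\lambda=-\lambda u_\xi^\lambda$ together with the algebraic identity
\[
(M_\lambda-\lambda)\widehat{u_\xi^\lambda}=\bigl(\widehat{\xi}(\pp_1)+\widehat{\xi}(\pp_2)+\widehat{\xi}(-\pp_1-\pp_2)\bigr)-\lambda\,\widehat{u_\xi^\lambda}.
\]

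The main obstacle will lie in the converse inclusion within (i): the duality step \eqref{eq:vanishingduality} must be tracked carefully so that vanishing of the $H^{1/2}$-traces of $f$ matches precisely with $L^2$-orthogonality of $(\mathring{H}+\lambda\mathbbm{1})f$ against the distributional symmetric combinations $\widehat{u_\eta^\lambda}$, and the collapse of the three constraints in \eqref{eq:triplevanishing} to a single one under bosonic symmetry must be justified at the level of the $H^{-1/2}$ pairing. Everything downstream reduces to an application of the abstract extension machinery to the concrete ingredients provided by (i), \eqref{eq:HFinverse}, and Lemma \ref{lem:uxiproperties}.
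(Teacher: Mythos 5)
Your proposal is correct and follows essentially the same route as the paper: part (i) is obtained by identifying $\ker(\mathring{H}^*+\lambda\mathbbm{1})=\mathrm{ran}(\mathring{H}+\lambda\mathbbm{1})^\perp$ through the vanishing conditions \eqref{eq:triplevanishing}, and parts (ii)--(iii) then follow from the standard Kre{\u\i}n-Vi\v{s}ik-Birman decomposition formulas together with \eqref{eq:HFinverse}. Your handling of the converse inclusion in (i) -- showing $\mathrm{ran}(\mathring{H}+\lambda\mathbbm{1})=\{u_\eta^\lambda\}^\perp$ and passing to double orthogonal complements via the closedness guaranteed by Lemma \ref{lem:uxiproperties}(i) -- is in fact a slightly more explicit rendering of the step the paper states tersely.
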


 \begin{proof}
  Any $u\in\ker(\mathring{H}^*+\lambda\mathbbm{1})=\mathrm{ran}(\mathring{H}+\lambda\mathbbm{1})^\perp$ is characterised by
  \[
   \iint_{\mathbb{R}^3}\widehat{u}(\pp_1,\pp_2)\,(\pp_1^2+\pp_2^2+\pp_1\cdot \pp_2+\lambda)\,\widehat{f}(\pp_1,\pp_2)\ud\pp_1\ud\pp_2\;=\;0\qquad\forall f\in\mathcal{D}(\mathring{H})\,.
  \]
  Combining this with \eqref{eq:triplevanishing}, one deduces that $\widehat{u}(\pp_1,\pp_2)\,(\pp_1^2+\pp_2^2+\pp_1\cdot \pp_2+\lambda)$ must be a linear combination of $\widehat{\xi}(\pp_1)$, $\widehat{\xi}(\pp_2)$, and $\widehat{\xi}(-\pp_1-\pp_2)$ for a generic $\xi\in H^{-\frac{1}{2}}(\mathbb{R}^3)$; as $u\in\cH_\mathrm{b}$, this combination must be the sum (up to an overall multiplicative prefactor). Thus, $\widehat{u}(\pp_1,\pp_2)\,(\pp_1^2+\pp_2^2+\pp_1\cdot \pp_2+\lambda)=\widehat{\xi}(\pp_1)+\widehat{\xi}(\pp_2)+\widehat{\xi}(-\pp_1-\pp_2)$, which proves part (i). Parts (ii) and (iii) then follow from part (i) and from \eqref{eq:HFinverse} as an application of the standard formulas
  \[
   \begin{split}
    \mathcal{D}(\mathring{H}_F)\;&=\;\mathcal{D}(\mathring{H})\dotplus (\mathring{H}_F+\lambda\mathbbm{1})^{-1}\ker(\mathring{H}^*+\lambda\mathbbm{1}) \\
    \mathcal{D}(\mathring{H}^*)\;&=\;\mathcal{D}(\mathring{H})\dotplus (\mathring{H}_F+\lambda\mathbbm{1})^{-1}\ker(\mathring{H}^*+\lambda\mathbbm{1})\dotplus \ker(\mathring{H}^*+\lambda\mathbbm{1})
   \end{split}
  \]
 (see, e.g., \cite[Lemma 1 and Theorem 1]{GMO-KVB2017}).  
 \end{proof}

 \begin{remark}
 \eqref{eq:kerHstarlambda}-\eqref{eq:DHstardecomposed} show that functions in $\mathcal{D}(\mathring{H}^*)$ have a `regular' $H^2$-component and a `singular' $L^2$-component, with no constraint among the two. The regular part is the domain of $\mathring{H}_F$, the singular part is the kernel of $\mathring{H}^*+\lambda\mathbbm{1}$.  Because of the possible singularity of a generic $g\in\mathcal{D}(\mathring{H}^*)$, the action on $g$ of the differential operator $-\Delta_{\yy_1}-\Delta_{\yy_2}-\nabla_{\yy_1}\cdot\nabla_{\yy_2}$ produces in general a non-$L^2$ output. More precisely, \eqref{eq:Hstarg} shows that one has to subtract from $(-\Delta_{\yy_1}-\Delta_{\yy_2}-\nabla_{\yy_1}\cdot\nabla_{\yy_2})g$ the distribution
 \begin{equation}\label{eq:livingonhyperplanes}
  (2\pi)^{\frac{3}{2}}\big(\xi(\yy_1)\delta(\yy_2)+\delta(\yy_1)\xi(\yy_2)+\delta(\yy_1-\yy_2)\xi(-\yy_2)\big)
 \end{equation}
 (that is, the inverse Fourier transform of the second summand in \eqref{eq:Hstarg}), a distribution supported at the coincidence manifold $\Gamma$, in order to obtain the $L^2$-function $\mathring{H}^* g$.
 \end{remark}

 \begin{remark}
 In position coordinates $(\yy_1,\yy_2)$, each of the two functions $u_\eta^\lambda$ and $u_\xi^\lambda$ appearing in the expression \eqref{eq:DHstardecomposed} of a generic element $g\in\mathcal{D}(\mathring{H}^*)$ is obtained by taking the \emph{convolution} of the Green function $\mathcal{G}_\lambda$ relative to $-\Delta_{\yy_1}-\Delta_{\yy_2}-\nabla_{\yy_1}\cdot\nabla_{\yy_2}+\lambda$ with a distribution of the form \eqref{eq:livingonhyperplanes} for the two considered labelling functions $\eta,\xi$. This structure, and the fact that in \eqref{eq:livingonhyperplanes} $\xi$ (and $\eta$) is interpreted as a function on the union of the coincidence hyperplanes, is formally analogous to the familiar picture in electrostatics, where $u_\xi^\lambda$ is the `potential' relative to the `charge' $\xi$. For this reason, as has been customary since long in this context \cite{dft-Nparticles-delta}, we shall retain the nomenclature that $\xi$ and $\eta$ are the \emph{charges} for the function $g$. In this respect, by \emph{charges} we shall mean functions in $H^{-\frac{1}{2}}(\mathbb{R}^3)$.  
 \end{remark}

 For $g\in\mathcal{D}(\mathring{H}^*)$, there is a unique charge $\xi$ at each parameter $\lambda>0$ satisfying the decomposition \eqref{eq:DHstardecomposed}, but a priori $\xi$ might be $\lambda$-dependent. Let us show that this cannot be the case, and one can speak of \emph{the} charge $\xi$ of $g$ tout court (understanding $\xi$, as usual, as the charge of the singular part of $g$, not to confuse it with the charge $\eta$ appearing in the regular part of $g$).

 \begin{lemma}\label{lem:chargexiofg}
  Let $\lambda,\lambda'>0$ and $g\in\mathcal{D}(\mathring{H}^*)$. If, according to \eqref{eq:DHstardecomposed},
  \[
   \widehat{g}\;=\;\widehat{\phi^\lambda}+\widehat{u_\xi^\lambda} \;=\;\widehat{\phi^{\lambda'}}+\widehat{u_{\xi'}^{\lambda'}}
  \]
  for some $\phi^\lambda,\phi^{\lambda'}\in H_\mathrm{b}^2(\mathbb{R}^3\times\mathbb{R}^3)$ and $\xi,\xi'\in H^{-\frac{1}{2}}(\mathbb{R}^3)$, then $\xi=\xi'$.  
 \end{lemma}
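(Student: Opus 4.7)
The idea is to rearrange the two decompositions so as to trap a difference $u_{\xi-\xi'}^{\lambda'}$ inside the domain of the Friedrichs extension, and then exploit the injectivity of $\mathring{H}_F + \lambda'\mathbbm{1}$.

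First, I would subtract the two decompositions to get
\begin{equation*}
 u_\xi^\lambda - u_{\xi'}^{\lambda'}\;=\;\phi^{\lambda'}-\phi^\lambda\;\in\; H_\mathrm{b}^2(\mathbb{R}^3\times\mathbb{R}^3)\,.
\end{equation*}
Next I would split
\begin{equation*}
 u_\xi^\lambda - u_{\xi'}^{\lambda'}\;=\;(u_\xi^\lambda-u_\xi^{\lambda'})+(u_\xi^{\lambda'}-u_{\xi'}^{\lambda'})\,,
\end{equation*}
noting that by Lemma \ref{lem:uxiproperties}(iv) the first parenthesis is in $H_\mathrm{b}^2(\mathbb{R}^3\times\mathbb{R}^3)$. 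Hence the second parenthesis, which by linearity of the map $\eta\mapsto u_\eta^{\lambda'}$ equals $u_{\xi-\xi'}^{\lambda'}$, also belongs to $H_\mathrm{b}^2(\mathbb{R}^3\times\mathbb{R}^3)=\mathcal{D}(\mathring{H}_F)$.

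At this point $u_{\xi-\xi'}^{\lambda'}$ lies simultaneously in $\mathcal{D}(\mathring{H}_F)$ and, by Lemma \ref{lem:Hstaretc}(i), in $\ker(\mathring{H}^*+\lambda'\mathbbm{1})$. Since $\mathring{H}_F\geqslant 0$, the operator $\mathring{H}_F+\lambda'\mathbbm{1}$ is strictly positive and therefore injective on its domain; for $v:=u_{\xi-\xi'}^{\lambda'}$ one then has
\begin{equation*}
 (\mathring{H}_F+\lambda'\mathbbm{1})v\;=\;(\mathring{H}^*+\lambda'\mathbbm{1})v\;=\;0\,,
\end{equation*}
which forces $v=0$. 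Finally, injectivity of the map $\eta\mapsto u_\eta^{\lambda'}$ established in Lemma \ref{lem:uxiproperties}(iii) yields $\xi-\xi'=0$, as desired.

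The only conceptual point that needs care is the recognition that $u_{\xi-\xi'}^{\lambda'}$ ends up in the intersection $\mathcal{D}(\mathring{H}_F)\cap\ker(\mathring{H}^*+\lambda'\mathbbm{1})$; once this is in place, the conclusion is immediate from the strict positivity of $\mathring{H}_F+\lambda'\mathbbm{1}$. No extra calculation beyond the already-available Lemma \ref{lem:uxiproperties} is required.
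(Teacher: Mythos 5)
Your proof is correct and follows essentially the same route as the paper's: both transfer the comparison to the $\lambda'$-decomposition via Lemma \ref{lem:uxiproperties}(iv) and then conclude from the directness of that decomposition together with the injectivity of $\eta\mapsto u_\eta^{\lambda'}$ (Lemma \ref{lem:uxiproperties}(iii)). The only difference is cosmetic: where the paper simply cites the uniqueness clause of Lemma \ref{lem:Hstaretc}, you unpack it by showing directly that $\mathcal{D}(\mathring{H}_F)\cap\ker(\mathring{H}^*+\lambda'\mathbbm{1})=\{0\}$ via the strict positivity of $\mathring{H}_F+\lambda'\mathbbm{1}$, which is a perfectly valid (and self-contained) justification of the same fact.
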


 \begin{proof}
  Since $u_{\xi}^{\lambda'}-u_\xi^\lambda\in H^2_\mathrm{b}(\mathbb{R}^3\times\mathbb{R}^3)$ (Lemma \ref{lem:uxiproperties}(iv)), then
  \[
   F^{\lambda'}\;:=\;\phi^\lambda-\big(u_{\xi}^{\lambda'}-u_{\xi}^{\lambda} \big)
  \]
 defines a function in $H^2_\mathrm{b}(\mathbb{R}^3\times\mathbb{R}^3)$, and
 \[
  \widehat{g}\;=\;\widehat{\phi^\lambda}+\widehat{u_\xi^\lambda} \;=\;\widehat{F^{\lambda'}}+\widehat{u_{\xi}^{\lambda'}}\,.
 \]
 Comparing the latter identity with $\widehat{g}=\widehat{\phi^{\lambda'}}+\widehat{u_{\xi'}^{\lambda'}}$, the uniqueness of the decomposition \eqref{eq:DHstardecomposed} of $g$ with parameter $\lambda'$ implies 
 \[
  F^{\lambda'}\;=\;\phi^{\lambda'}\,,\qquad u_{\xi}^{\lambda'}\;=\;u_{\xi'}^{\lambda'}\,.
 \]
 In turn, the latter identity implies $\xi=\xi'$ (Lemma \ref{lem:uxiproperties}(iii)).  
 \end{proof}

 \subsection{Deficiency subspace}\label{sec:deficiencysubspace}~
 
 Lemma \ref{lem:Hstaretc} shows that $\mathring{H}$ has \emph{infinite} deficiency index, as is the dimensionality of the deficiency subspace $\ker(\mathring{H}^*+\lambda\mathbbm{1})$ independently of $\lambda>0$.

 In the self-adjoint extension problem under study, it is convenient to use a unitarily isomorphic version of  $\ker(\mathring{H}^*+\lambda\mathbbm{1})$, which we shall now characterise.

 To this aim, for $\eta\in H^{-\frac{1}{2}}(\mathbb{R}^3)$ and $\lambda>0$ we define (for a.e.~$\pp$)
 \begin{equation}\label{eq:Wlambda}
  (\widehat{W_\lambda\eta})(\pp)\;:=\;\frac{3\pi^2}{\sqrt{\frac{3}{4}\pp^2+\lambda}}\,\widehat{\xi}(\pp)+6\int_{\mathbb{R}^3}\frac{\widehat{\xi}(\qq)}{(\pp^2+\qq^2+\pp\cdot \qq+\lambda)^2}\,\ud\qq\,.
 \end{equation}

 \begin{lemma}\label{lem:Wlambdaproperties} Let $\lambda>0$.
 
 \begin{itemize}
  \item[(i)] For generic $\xi,\eta\in H^{-\frac{1}{2}}(\mathbb{R}^3)$ one has $W_\lambda\eta\in H^{\frac{1}{2}}(\mathbb{R}^3)$ and
   \begin{equation}\label{eq:scalar_products}
 \langle u_\xi,u_\eta\rangle_{\cH}\;=\;\langle \xi,W_\lambda\eta\rangle_{H^{-\frac{1}{2}}(\mathbb{R}^3),H^{\frac{1}{2}}(\mathbb{R}^3)}\,.
 \end{equation}
  \item[(ii)] Formula \eqref{eq:Wlambda} defines a positive, bounded, linear bijection $W_\lambda: H^{-\frac{1}{2}}(\mathbb{R}^3)\to H^{\frac{1}{2}}(\mathbb{R}^3)$.    
 \end{itemize}
 \end{lemma}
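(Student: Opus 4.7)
The plan is to derive the lemma as a direct consequence of Lemma \ref{lem:uxiproperties} by computing $\langle u_\xi,u_\eta\rangle_{\cH}$ in the Fourier representation and reading $W_\lambda$ off. First I would use Plancherel to write
\[
\langle u_\xi^\lambda, u_\eta^\lambda\rangle_{\cH}\;=\;\iint_{\mathbb{R}^3\times\mathbb{R}^3}\frac{\,\overline{\widehat{\xi}(\pp_1)+\widehat{\xi}(\pp_2)+\widehat{\xi}(-\pp_1-\pp_2)}\;\bigl(\widehat{\eta}(\pp_1)+\widehat{\eta}(\pp_2)+\widehat{\eta}(-\pp_1-\pp_2)\bigr)\,}{(\pp_1^2+\pp_2^2+\pp_1\cdot\pp_2+\lambda)^{2}}\,\ud\pp_1\ud\pp_2\,,
\]
and then expand the product of sums into nine integrals. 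The key observation is that the denominator equals $\tfrac12(\pp_1^2+\pp_2^2+\pp_3^2)+\lambda$ with $\pp_3:=-\pp_1-\pp_2$ and is therefore fully invariant under the $S_3$-action permuting $(\pp_1,\pp_2,\pp_3)$; combined with the linear measure-preserving changes of variables that realise those permutations, this reduces the nine integrals to just two inequivalent types.

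Next I would evaluate the three \emph{diagonal} contributions (indices $(\alpha,\alpha)$ for $\alpha\in\{1,2,3\}$). Completing the square, $\pp_1^2+\pp_2^2+\pp_1\cdot\pp_2+\lambda=(\pp_2+\tfrac12\pp_1)^2+\tfrac34\pp_1^2+\lambda$, so the inner Gaussian-type integral over the un-summed momentum is the standard
\[
\int_{\mathbb{R}^3}\frac{\ud\qq}{(\qq^{2}+m^{2})^{2}}\;=\;\frac{\pi^{2}}{m}\,,
\]
with $m^{2}=\tfrac34\pp^{2}+\lambda$. This produces the multiplicative term $3\pi^{2}(\tfrac34\pp^{2}+\lambda)^{-1/2}\widehat{\eta}(\pp)$ of $\widehat{W_\lambda\eta}$. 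The six \emph{off-diagonal} contributions are shown to coincide pairwise: for each pair $(\alpha,\beta)$ with $\alpha\neq\beta$, a change of variables sending $(\pp_\alpha,\pp_\beta)$ to the new integration variables $(\pp,\qq)$ turns the integrand into $\overline{\widehat{\xi}(\pp)}\widehat{\eta}(\qq)(\pp^{2}+\qq^{2}+\pp\cdot\qq+\lambda)^{-2}$, because $D$ is $S_3$-symmetric in $(\pp_1,\pp_2,\pp_3)$. Summing yields the integral-kernel term of $\widehat{W_\lambda\eta}$, proving \eqref{eq:scalar_products} and, \emph{a fortiori}, that the functional $\xi\mapsto\langle u_\xi,u_\eta\rangle$ is bounded on $H^{-\frac12}$, so that by the Riesz identification $W_\lambda\eta\in H^{\frac12}(\mathbb{R}^3)$ with $\|W_\lambda\eta\|_{H^{1/2}}\leqslant c_\lambda^{2}\|\eta\|_{H^{-1/2}}$ (using Lemma \ref{lem:uxiproperties}(i)). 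This settles part (i).

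For part (ii), linearity is immediate from \eqref{eq:Wlambda}, and the bound just derived gives the continuity $W_\lambda:H^{-\frac12}(\mathbb{R}^3)\to H^{\frac12}(\mathbb{R}^3)$. Positivity and coercivity follow from part (i) together with Lemma \ref{lem:uxiproperties}(i), (iii): for every $\xi\in H^{-\frac12}(\mathbb{R}^3)$,
\[
\langle\xi,W_\lambda\xi\rangle_{H^{-\frac12},H^{\frac12}}\;=\;\|u_\xi^\lambda\|_{\cH}^{2}\;\geqslant\;c_\lambda^{-2}\|\xi\|_{H^{-\frac12}(\mathbb{R}^3)}^{2}\,,
\]
with equality at $0$ only when $\xi=0$. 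Self-adjointness in the form sense is the identity $\langle\xi,W_\lambda\eta\rangle=\overline{\langle\eta,W_\lambda\xi\rangle}$, which is again \eqref{eq:scalar_products}. Combining coercivity with continuity, the Lax--Milgram theorem (applied to the bounded sesquilinear form $(\xi,\eta)\mapsto\langle\xi,W_\lambda\eta\rangle$ on $H^{-\frac12}$) upgrades $W_\lambda$ to a topological bijection onto $H^{\frac12}(\mathbb{R}^3)$.

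The main technical hurdle I anticipate is the bookkeeping in the Fourier identity, specifically verifying that the $S_3$-symmetry really collapses all three diagonal terms to one evaluable Gaussian integral with the correct constant $\pi^2$, and then matching the numerical prefactors $3$ and $6$ in \eqref{eq:Wlambda}. Once the symmetry reduction is set up carefully, the remaining Gaussian evaluation and the Lax--Milgram step are routine.
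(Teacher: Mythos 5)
Your proposal is correct and follows essentially the same route as the paper: the same symmetry reduction of the nine Fourier integrals to a diagonal and an off-diagonal type, the same explicit evaluation $\int_{\mathbb{R}^3}(\qq^2+m^2)^{-2}\,\ud\qq=\pi^2/m$ with $m^2=\tfrac34\pp^2+\lambda$ producing the prefactors $3$ and $6$, and the same use of Lemma \ref{lem:uxiproperties}(i) for boundedness and coercivity. The only difference is cosmetic and sits in the last step: where you invoke Lax--Milgram for surjectivity onto $H^{\frac{1}{2}}(\mathbb{R}^3)$, the paper argues that the coercive bounded injection has closed range and then shows the range is dense by testing an arbitrary $\xi\in H^{-\frac{1}{2}}(\mathbb{R}^3)$ against $\mathrm{ran}\,W_\lambda$ and using the injectivity of $\xi\mapsto u_\xi^\lambda$ --- the two arguments are interchangeable.
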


 \begin{proof}
  By suitably exploiting symmetry in exchanging the integration variables,
  \[
   \begin{split}
    &\langle u_\xi,u_\eta\rangle_{\cH}\;= \\
    &=\;\iint_{\mathbb{R}^3\times\mathbb{R}^3}\ud\pp_1\ud\pp_2\,\frac{\overline{\widehat{\xi}(\pp_1)}+\overline{\widehat{\xi}(\pp_2)}+\overline{\widehat{\xi}(-\pp_1-\pp_2)}}{\pp_1^2+\pp_2^2+\pp_1\cdot\pp_2+\lambda}\,\frac{\widehat{\eta}(\pp_1)+\widehat{\eta}(\pp_2)+\widehat{\eta}(-\pp_1-\pp_2)}{\pp_1^2+\pp_2^2+\pp_1\cdot\pp_2+\lambda} \\
    &=\;3\iint_{\mathbb{R}^3\times\mathbb{R}^3}\ud\pp_1\ud\pp_2\,\frac{\overline{\widehat{\xi}(\pp_1)}\,\widehat{\eta}(\pp_1)}{(\pp_1^2+\pp_2^2+\pp_1\cdot\pp_2+\lambda)^2} \\
    &\qquad\qquad+6\iint_{\mathbb{R}^3\times\mathbb{R}^3}\ud\pp_1\ud\pp_2\,\frac{\overline{\widehat{\xi}(\pp_1)}\,\widehat{\eta}(\pp_2)}{(\pp_1^2+\pp_2^2+\pp_1\cdot\pp_2+\lambda)^2}\,.
   \end{split}
  \]
  In the first summand in the r.h.s.~above one computes
  \[
   \int_{\mathbb{R}^3}\frac{\ud\pp_2}{(\pp_1^2+\pp_2^2+\pp_1\cdot\pp_2+\lambda)^2}\;=\;\frac{\pi^2}{\,\sqrt{\frac{3}{4}\pp_1^2+\lambda}\,}\,,
  \]
  which eventually yields, in view of the definition \eqref{eq:Wlambda},
  \[
   \langle u_\xi,u_\eta\rangle_{\cH}\;=\;\int_{\mathbb{R}^3}\overline{\widehat{\xi}(\pp)}\, (\widehat{W_\lambda\eta})(\pp)\,\ud\pp\,.
  \]
    From the latter identity and \eqref{eq:equivalencenorms} we deduce
  \[
\begin{split}
\|W_\lambda\eta\|_{H^{\frac{1}{2}}}\;&=\;\sup_{\|\xi\|_{H^{-\frac{1}{2}}}=1}\Big|\int_{\mathbb{R}^3}\overline{\widehat{\xi}(\pp)}\,\widehat{(W_\lambda\eta)}(\pp)\,\ud \pp\,\Big|\;=\;\sup_{\|\xi\|_{H^{-\frac{1}{2}}}=1}\big|\langle u_\xi,u_\eta\rangle_{\cH}\big| \\
&\leqslant\;\sup_{\|\xi\|_{H^{-\frac{1}{2}}}=1}\|u_\xi\|_{\cH}\|u_\eta\|_{\cH}\;\leqslant \;\textrm{const}\cdot\|\eta\|_{H^{-\frac{1}{2}}}\qquad\forall\eta\in H^{-\frac{1}{2}}(\mathbb{R}^3)\,,
\end{split}
\]
  which shows that $W_\lambda H^{-\frac{1}{2}}(\mathbb{R}^3)\subset H^{\frac{1}{2}}(\mathbb{R}^3)$ and that \eqref{eq:scalar_products} holds true. This completes the proof of part (i).

Concerning part (ii), we know from the reasoning above that the map $W_\lambda:H^{-\frac{1}{2}}(\mathbb{R}^3)\to H^{\frac{1}{2}}(\mathbb{R}^3)$ is bounded. By \eqref{eq:equivalencenorms} and \eqref{eq:scalar_products},
  \[
   \langle \eta,W_\lambda\eta\rangle_{H^{-\frac{1}{2}},H^{\frac{1}{2}}}\;=\;\|u_\eta\|^2_{\cH}\;\geqslant\;c_\lambda^{-2}\,\|\eta\|_{H^{-\frac{1}{2}}}^2\,,
  \]
  which implies coercivity
\[
 \|W_\lambda\eta\|_{H^{\frac{1}{2}}}\;\geqslant\;c_\lambda^{-2}\,\|\eta\|_{H^{-\frac{1}{2}}}\,.
\]
This shows that $W_\lambda$ is a positive, injective $H^{-\frac{1}{2}}\to H^{\frac{1}{2}}$ map. $W_\lambda$ is thus invertible on $\mathrm{ran}W_\lambda$ and by boundedness $\mathrm{ran}W_\lambda$ is closed in $H^{\frac{1}{2}}(\mathbb{R}^3)$. It only remains to show that $\mathrm{ran}W_\lambda$ is also dense in $H^{\frac{1}{2}}(\mathbb{R}^3)$ to conclude that $W_\lambda^{-1}$ is everywhere defined and bounded. Now, testing by duality an \emph{arbitrary} $\xi\in H^{-\frac{1}{2}}(\mathbb{R}^3)$ against  $\mathrm{ran}W_\lambda\subset H^{\frac{1}{2}}(\mathbb{R}^3)$ we see that
\[
 \begin{split}
  &\langle \xi,W_\lambda\eta\rangle_{H^{-\frac{1}{2}},H^{\frac{1}{2}}}\;=\;0\qquad\forall\eta\in H^{-\frac{1}{2}}(\mathbb{R}^3) \\
  &\Rightarrow\;\langle u_\xi,u_\eta\rangle_{\cH}\;=\;0\qquad\quad\:\,\forall u_\eta\in\ker (\mathring{H}^*+\lambda\mathbbm{1}) \\
  &\Rightarrow\;u_\xi\,=\,0 \\
  &\Rightarrow\;\xi\,=\,0\,,
 \end{split}
\]
and this implies that $\mathrm{ran}W_\lambda$ is dense in $H^{\frac{1}{2}}(\mathbb{R}^3)$. Part (ii) is proved.  
 \end{proof}

 As a direct consequence of Lemma \ref{lem:Wlambdaproperties}, the expression
\begin{equation}\label{eq:W-scalar-product}
\langle \xi,\eta\rangle_{H^{-\frac{1}{2}}_{W_\lambda}}\;:=\;\langle \xi,W_\lambda\,\eta\rangle_{H^{-\frac{1}{2}},H^{\frac{1}{2}}}\;=\;\langle u_\xi,u_\eta\rangle_{\cH}
\end{equation}
defines a scalar product in $H^{-\frac{1}{2}}(\mathbb{R}^3)$. It is \emph{equivalent} to the standard scalar product of $H^{-\frac{1}{2}}(\mathbb{R}^3)$, as follows by combining \eqref{eq:W-scalar-product} with \eqref{eq:equivalencenorms}.

We shall denote by $H^{-\frac{1}{2}}_{W_\lambda}(\mathbb{R}^3)$ the Hilbert space consisting of the $H^{-\frac{1}{2}}(\mathbb{R}^3)$-functions and equipped with the scalar product \eqref{eq:W-scalar-product}.
Then the map
\begin{equation}\label{eq:isomorphism_Ulambda}
\begin{split}
U_\lambda\,:\,\ker (\mathring{H}^*+\lambda\mathbbm{1})\;&\;\xrightarrow[]{\;\;\;\cong\;\;\;}\;H^{-\frac{1}{2}}_{W_\lambda}(\mathbb{R}^3)\,,\qquad u_\xi \longmapsto \,\xi
\end{split}
\end{equation}
is an isomorphism between Hilbert spaces, with $\ker (\mathring{H}^*+\lambda\mathbbm{1})$ equipped with  the standard scalar product inherited from $\cH$.

\subsection{Extensions classification}~

For an arbitrary Hilbert space $\mathcal{K}$ let us denote by $\mathcal{S}(\mathcal{K})$ the collection of all self-adjoint operators $A:\mathcal{D}(A)\subset\mathcal{K}'\to\mathcal{K}'$ acting on Hilbert subspaces $\mathcal{K}'$ of $\mathcal{K}$.

The Kre{\u\i}n-Vi\v{s}ik-Birman theory determines that, given $\lambda>0$ and hence the deficiency subspace $\ker (\mathring{H}^*+\lambda\mathbbm{1})$, the self-adjoint extensions of $\mathring{H}$ are in an explicit one-to-one correspondence with the elements in $\mathcal{S}(\ker (\mathring{H}^*+\lambda\mathbbm{1}))$.

Equivalently, by unitary isomorphism (Subsect.~\ref{sec:deficiencysubspace}), each self-adjoint extension of $\mathring{H}$ is labelled by an element of $\mathcal{S}(H^{-\frac{1}{2}}_{W_\lambda}(\mathbb{R}^3))$. In practice, the latter viewpoint is going to be more informative.

The extension classification takes the following form.

\begin{theorem}\label{thm:generalclassification}
 Let $\lambda>0$.
 \begin{itemize}
  \item[(i)] The self-adjoint extensions of $\mathring{H}$ in $\cH_\mathrm{b}$ constitute the family
  \begin{equation}\label{eq:family}
   \big\{\mathring{H}_{\mathcal{A}_\lambda}\,\big|\,\mathcal{A}_\lambda\in \mathcal{S}\big(H^{-\frac{1}{2}}_{W_\lambda}(\mathbb{R}^3)\big)\big\}\,,
  \end{equation}
  where
  \begin{equation}\label{eq:domDHAshort}
   \mathcal{D}(\mathring{H}_{\mathcal{A}_\lambda})\;=\;\left\{ g\in\mathcal{D}(\mathring{H}^*)\left|\!
   \begin{array}{c}
   \eta=\mathcal{A}_\lambda\xi+\chi \\
   \xi\in\mathcal{D}(\mathcal{A}_\lambda) \\
   \chi\in \mathcal{D}(\mathcal{A}_\lambda)^{\perp_{\lambda}}\cap H^{-\frac{1}{2}}_{W_\lambda}(\mathbb{R}^3)
   \end{array}
   \!\!\right.\right\}
  \end{equation}
  or equivalently
  \begin{equation}\label{eq:domDHA}
   \mathcal{D}(\mathring{H}_{\mathcal{A}_\lambda})\;=\;\left\{
   g\in\cH_{\mathrm{b}}\left|
   \begin{array}{c}
    \widehat{g}\,=\,\widehat{\phi^\lambda}+\widehat{u_\xi^\lambda}\;\textrm{ with} \\
    \widehat{\phi^\lambda}\,=\,\widehat{f^\lambda}+\displaystyle\frac{\widehat{u_\eta^\lambda}}{\pp_1^2+\pp_2^2+\pp_1\cdot\pp_2+\lambda}\,, \\
    f^\lambda\in\mathcal{D}(\mathring{H})\,, \\
    \eta=\mathcal{A}_\lambda\xi+\chi\,,\quad \xi\in\mathcal{D}(\mathcal{A}_\lambda)\,, \\
    \chi\in \mathcal{D}(\mathcal{A}_\lambda)^{\perp_{\lambda}}\cap H^{-\frac{1}{2}}_{W_\lambda}(\mathbb{R}^3)
   \end{array}
   \!\!\right.\right\},
  \end{equation}
 and 
 \begin{equation}\label{eq:domDHA-actionDHA}
  \big((\mathring{H}_{\mathcal{A}_\lambda}+\lambda\mathbbm{1})g\big)\,{\textrm{\large $\widehat{\,}$\normalsize}}\,\;=\;(\pp_1^2+\pp_2^2+\pp_1\cdot\pp_2+\lambda)\,\widehat{\phi^\lambda}\,.
 \end{equation}
 In \eqref{eq:domDHAshort}-\eqref{eq:domDHA} above $\perp_{\lambda}$ refers to the orthogonality in the $H^{-\frac{1}{2}}_{W_\lambda}$-scalar product. The Friedrichs extension $\mathring{H}_F$, namely the operator \eqref{eq:HFspace}-\eqref{eq:HFdomain}, corresponds to the formal choice `$\mathcal{A}_\lambda=\infty$' on  $\mathcal{D}(\mathcal{A}_\lambda)=\{0\}$.
  \item[(ii)] An extension $\mathring{H}_{\mathcal{A}_\lambda}$ is lower semi-bounded with
  \[
   \mathring{H}_{\mathcal{A}_\lambda}\;\geqslant\;-\Lambda\mathbbm{1}\qquad\textrm{for some $\Lambda>0$}
  \]
  if and only if, $\forall\xi\in\mathcal{D}(\mathcal{A}_\lambda)$,
  \[
  \begin{split}
   \langle\xi,\mathcal{A}_\lambda\xi&\rangle_{H^{-\frac{1}{2}}_{W_\lambda}}\;\geqslant \\
   &\geqslant\;(\lambda-\Lambda)\|\xi\|^2_{H^{-\frac{1}{2}}_{W_\lambda}} +(\lambda-\Lambda)^2\langle\xi,(\mathring{H}_F+\Lambda\mathbf{1})^{-1}\xi\rangle_{H^{-\frac{1}{2}}_{W_\lambda}}\,.
  \end{split}
  \]
  In particular,
   \begin{equation}\label{eq:positiveSBiffpositveB-1_Tversion}
 \begin{split}
 \mathfrak{m}(\mathring{H}_{\mathcal{A}_\lambda})\;\geqslant \;-\lambda\quad&\Leftrightarrow\quad \mathfrak{m}(\mathcal{A}_\lambda)\;\geqslant\; 0 \\
 \mathfrak{m}(\mathring{H}_{\mathcal{A}_\lambda})\;> \;-\lambda\quad&\Leftrightarrow\quad \mathfrak{m}(\mathcal{A}_\lambda)\;>\; 0\,.
 \end{split}
 \end{equation}
  Moreover, if $\mathfrak{m}(\mathcal{A}_\lambda)>-\lambda$, then
 \begin{equation}\label{eq:bounds_mS_mB_Tversion}
 \mathfrak{m}(\mathcal{A}_\lambda)\;\geqslant\; \mathfrak{m}(\mathring{H}_{\mathcal{A}_\lambda})+\lambda\;\geqslant\;\frac{\lambda \,\mathfrak{m}(\mathcal{A}_\lambda)}{\,\lambda+\mathfrak{m}(\mathcal{A}_\lambda)}\,.
  \end{equation}
 \item[(iii)] The quadratic form of any lower semi-bounded extension $\mathring{H}_{\mathcal{A}_\lambda}$ is given by
 \begin{equation}\label{eq:decomposition_of_form_domains_Tversion}
 \begin{split}
 \mathcal{D}[\mathring{H}_{\mathcal{A}_\lambda}]\;&=\;\mathcal{D}[\mathring{H}_F]\,\dotplus\,U_\lambda^{-1}\mathcal{D}[\mathcal{A}_\lambda] \\
 \mathring{H}_{\mathcal{A}_\lambda}[\phi^\lambda+u_\xi^\lambda]\;&=\;\mathring{H}_F[\phi^\lambda] +\lambda\Big(\|\phi^\lambda\big\|^2_{\cH}-\big\|\phi^\lambda+u_\xi^\lambda\big\|^2_{\cH} \Big)+\mathcal{A}_\lambda[\xi] \\
 &\forall \phi^\lambda\in\mathcal{D}[\mathring{H}_F]=H_\mathrm{b}^1(\mathbb{R}^3\times\mathbb{R}^3)\,,\;\forall \xi\in\mathcal{D}[\mathcal{A}_\lambda]\,,
 \end{split}
\end{equation}
 and the lower semi-bounded extensions are ordered in the sense of quadratic forms according to the analogous ordering of the labelling operators, that is,
\begin{equation}\label{eq:extension_ordering_Tversion}
\mathring{H}_{\mathcal{A}_\lambda^{(1)}}\,\geqslant\,\mathring{H}_{\mathcal{A}_\lambda^{(2)}}\qquad\Leftrightarrow\qquad \mathcal{A}_\lambda^{(1)}\,\geqslant\,\mathcal{A}_\lambda^{(2)}\,.
\end{equation}
\end{itemize}
\end{theorem}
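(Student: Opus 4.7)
The plan is to invoke the Kre\u{\i}n-Vi\v{s}ik-Birman (KVB) extension theorem \cite{GMO-KVB2017} applied to the strictly positive symmetric operator $\mathring{H}+\lambda\mathbbm{1}$, whose reference self-adjoint extension $\mathring{H}_F+\lambda\mathbbm{1}$ has bounded everywhere-defined inverse on $\cH_\mathrm{b}$, and then to transport the resulting abstract parametrisation (which yields that of $\mathring{H}$ by undoing the shift) through the Hilbert-space isomorphism $U_\lambda$ of \eqref{eq:isomorphism_Ulambda}. All the necessary ingredients are already at hand: the description of $\ker(\mathring{H}^*+\lambda\mathbbm{1})$ via charges in Lemma \ref{lem:Hstaretc}(i), the action of $(\mathring{H}_F+\lambda\mathbbm{1})^{-1}$ in \eqref{eq:HFinverse}, the unique three-summand decomposition of $\mathcal{D}(\mathring{H}^*)$ in Lemma \ref{lem:Hstaretc}(iii) and Lemma \ref{lem:chargexiofg}, and the scalar-product identity \eqref{eq:W-scalar-product} guaranteeing that $U_\lambda$ is unitary.

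For part (i), the KVB theorem bijects the self-adjoint extensions of $\mathring{H}+\lambda\mathbbm{1}$ with self-adjoint operators $B$ acting on Hilbert subspaces of $\ker(\mathring{H}^*+\lambda\mathbbm{1})$, prescribing each extension's domain by the boundary relation ``$(\mathring{H}_F+\lambda\mathbbm{1})^{-1}$-deficiency component equals $B$ applied to the $\ker$-component plus a vector orthogonal to $\mathcal{D}(B)$''. Since by Lemma \ref{lem:Hstaretc}(iii) the two deficiency components of any $g\in\mathcal{D}(\mathring{H}^*)$ are exactly $(\mathring{H}_F+\lambda\mathbbm{1})^{-1}u_\eta^\lambda$ and $u_\xi^\lambda$, and since $u_\xi^\lambda\mapsto\xi$ is a bijection onto $H^{-\frac{1}{2}}(\mathbb{R}^3)$ by Lemma \ref{lem:uxiproperties}(iii), the abstract boundary relation becomes precisely $\eta=\mathcal{A}_\lambda\xi+\chi$ with $\chi\in\mathcal{D}(\mathcal{A}_\lambda)^{\perp_{\lambda}}\cap H^{-\frac{1}{2}}_{W_\lambda}(\mathbb{R}^3)$, where $\mathcal{A}_\lambda:=U_\lambda B U_\lambda^{-1}$. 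This yields \eqref{eq:domDHAshort}, and expanding the $H^2$-regular part $\phi^\lambda$ according to \eqref{eq:DHFdecomposed} gives \eqref{eq:domDHA}. The action \eqref{eq:domDHA-actionDHA} follows because $\mathring{H}_{\mathcal{A}_\lambda}$ is a restriction of $\mathring{H}^*$, and \eqref{eq:Hstarg} shows that $\mathring{H}^*+\lambda\mathbbm{1}$ annihilates $u_\xi^\lambda$ while acting as the Fourier multiplier $\pp_1^2+\pp_2^2+\pp_1\cdot\pp_2+\lambda$ on $\phi^\lambda$.

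Parts (ii) and (iii) are obtained by reading off the abstract KVB statements on semi-boundedness, form decomposition and operator ordering \cite{GMO-KVB2017} through the isomorphism $U_\lambda$. The scalar-product identity \eqref{eq:W-scalar-product} ensures that the bottom of the spectrum, the evaluation of quadratic forms, and the ordering of labelling operators all transport faithfully between $\ker(\mathring{H}^*+\lambda\mathbbm{1})$ (with the $\cH_\mathrm{b}$ inner product) and $H^{-\frac{1}{2}}_{W_\lambda}(\mathbb{R}^3)$. The explicit $\lambda$-dependent term $\lambda(\|\phi^\lambda\|^2_{\cH}-\|\phi^\lambda+u_\xi^\lambda\|^2_{\cH})$ in \eqref{eq:decomposition_of_form_domains_Tversion} arises from subtracting $\lambda\|g\|^2_{\cH}$ to pass from the form of $\mathring{H}_{\mathcal{A}_\lambda}+\lambda\mathbbm{1}$, to which the abstract KVB form formula applies directly, to that of $\mathring{H}_{\mathcal{A}_\lambda}$.

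The main obstacle is essentially bookkeeping rather than analytic difficulty: one must verify that although the three-summand decomposition $g=f^\lambda+(\mathring{H}_F+\lambda\mathbbm{1})^{-1}u_\eta^\lambda+u_\xi^\lambda$ depends on $\lambda$ at the level of $f^\lambda$ and $\eta$, the singular charge $\xi$ does \emph{not}, as established in Lemma \ref{lem:chargexiofg}. This internal consistency is what makes the different $\lambda$-indexed families $\{\mathring{H}_{\mathcal{A}_\lambda}\}_{\mathcal{A}_\lambda}$ mere reparametrisations of the same underlying set of self-adjoint extensions, and allows the KVB bijection to descend unambiguously to the classification \eqref{eq:family}.
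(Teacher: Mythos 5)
Your proposal is correct and follows essentially the same route as the paper, which likewise proves this theorem by a direct application of the abstract Kre{\u\i}n--Vi\v{s}ik--Birman classification (Theorems 5--7 of the cited reference) to $\mathring{H}+\lambda\mathbbm{1}$, using the data of Lemmas \ref{lem:Hminimal}, \ref{lem:Friedrichs}, and \ref{lem:Hstaretc} and transporting the formulas through the isomorphism $U_\lambda$ of \eqref{eq:isomorphism_Ulambda}. Your additional remarks on the $\lambda$-independence of the singular charge via Lemma \ref{lem:chargexiofg} and on the origin of the $\lambda$-dependent term in the quadratic form are consistent with, and slightly more explicit than, the paper's one-paragraph argument.
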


We recall that the symbol $\mathfrak{m}$ in \eqref{eq:positiveSBiffpositveB-1_Tversion}-\eqref{eq:bounds_mS_mB_Tversion} denotes the bottom of the spectrum of the considered operator.

Theorem \ref{thm:generalclassification} is a direct application of the general extension scheme a la Kre{\u\i}n-Vi\v{s}ik-Birman (we refer, e.g., to \cite[Theorems 5-7]{GMO-KVB2017}) to the minimal operator $\mathring{H}+\lambda\mathbbm{1}$, given the data provided by Lemmas \ref{lem:Hminimal}, \ref{lem:Friedrichs}, and \ref{lem:Hstaretc}, and exploiting the Hilbert space isomorphism \eqref{eq:isomorphism_Ulambda} in order to re-phrase the classification formulas in terms of the unitarily isomorphic version $H^{-\frac{1}{2}}_{W_\lambda}(\mathbb{R}^3)$ of the deficiency subspace.

We shall customarily refer to each $\mathcal{A}_\lambda$ as the `labelling operator', or also the `(Vi\v{s}ik-)Birman operator', of the extension $H_{\mathcal{A}_\lambda}$. (Strictly speaking, the actual labelling operator originally introduced by Vi\v{s}ik \cite{Vishik-1952} and Birman \cite{Birman-1956,KM-2015-Birman} was rather the inverse of the present $\mathcal{A}_\lambda$ on $\mathrm{ran}\mathcal{A}_\lambda$ -- see, e.g., \cite[Sect.~3]{GMO-KVB2017}.)

\begin{remark}\label{rem:restrictions}
 The domain of each $\mathring{H}_{\mathcal{A}_\lambda}$, as indicated by \eqref{eq:domDHAshort}, is a suitable restriction of the domain of $\mathring{H}^*$ obtained by selecting only those functions $g$ whose charges $\eta_g$ and $\xi_g$ are constrained by the self-adjointness condition
\begin{equation}
  \eta_g\;=\;\mathcal{A}_\lambda\xi_g+\chi_g
\end{equation}
for some additional charge $\chi_g\in\mathcal{D}(\mathcal{A}_\lambda)^{\perp_\lambda}$, whence
\begin{equation}\label{eq:constraintetaxi}
 ( \eta_g-\mathcal{A}_\lambda\xi_g )\;\in\;\mathcal{D}(\mathcal{A}_\lambda)^{\perp_\lambda}\cap H^{-\frac{1}{2}}(\mathbb{R}^3)\,.
\end{equation}
\end{remark}

\begin{remark}\label{rem:samedomains}
 Fixed a self-adjoint extension $\mathscr{H}$ of $\mathring{H}$ and representing it as $\mathscr{H}=\mathring{H}_{\mathcal{A}_\lambda^{(\mathscr{H})}}$ for suitable labelling operators $\mathcal{A}_\lambda^{(\mathscr{H})}$ for each $\lambda>0$ according to Theorem \ref{thm:generalclassification}, one has
 \begin{equation}
  \mathcal{D}\big(\mathcal{A}_{\lambda}^{(\mathscr{H})}\big)\;=\;\mathcal{D}\big(\mathcal{A}_{\lambda'}^{(\mathscr{H})}\big)\qquad\forall\lambda,\lambda'>0\,.
 \end{equation}
 That is, the explicit action of each labelling operator changes with $\lambda$, but the domain stays fixed. This is an obvious consequence of the identity
 \begin{equation}
   \mathcal{D}\big(\mathcal{A}_{\lambda}^{(\mathscr{H})}\big)\;=\;\left\{\xi\in H^{-\frac{1}{2}}(\mathbb{R}^3)\,\left|\!
   \begin{array}{c}
    \xi\textrm{ is the singular-part charge of g} \\
    \textrm{ for some }g\in\mathcal{D}(\mathscr{H})
   \end{array}
   \!\!\right.\right\}
 \end{equation}
 that follows from the uniqueness of the charge $\xi$ for each $g$ (Lemma \ref{lem:chargexiofg}). 
\end{remark}

\section{Two-body short-scale singularity}\label{sec:two-body-short-scale-sing}

The domain of each self-adjoint extension $\mathring{H}_{\mathcal{A}_\lambda}$ of $\mathring{H}$ is a suitable restriction of the domain of $\mathring{H}^*$. Each restriction of self-adjointness must be a constraint of the form \eqref{eq:constraintetaxi} on the charges $\eta$ and $\xi$ (Remark \ref{rem:restrictions}). As such two charges characterise respectively the regular ($\phi^\lambda$) and the singular ($u_\xi^\lambda$) part of a generic $g\in\mathcal{D}(\mathring{H}^*)$, indirectly this constraint is a condition linking $\phi^\lambda$ and $u_\xi^\lambda$ (which otherwise would be independent). In practice this amounts to selecting those $g$'s from $\mathcal{D}(\mathring{H}^*)$ which display an admissible type of short-scale asymptotics as $|\yy_1|\to 0$, of $|\yy_2|\to 0$, or $|\yy_2-\yy_1|\to 0$, that is, when \emph{two} of the three particles of the trimer come on top of each other.

In this Section we elaborate on this perspective, as it is going to drive the identification of \emph{physically meaningful} self-adjoint extensions of $\mathring{H}$.

\subsection{Short-scale structure}~

For the functions $\psi\in L^2(\mathbb{R}^3\times\mathbb{R}^3,\ud\yy_1\ud\yy_2)$ of interest, let us highlight a convenient way to monitor the behaviour of $\psi(\yy_1,\yy_2)$ as $|\yy_2|\to 0$ at fixed $\yy_1$. 

Let us write $\yy_2\in\mathbb{R}^3$ in spherical coordinates as $\yy_2\equiv |\yy_2|\Omega_{\yy_2}$, with $\Omega_{\yy_2}\in\mathbb{S}^2$, and for $\rho>0$ and almost every $\yy_1\in\mathbb{R}^3$, let us define
\begin{equation}
 \psi_{\mathrm{av}}(\yy_1;\rho)\;:=\;\frac{1}{4\pi}\int_{\mathbb{S}^2}\psi(\yy_1,\rho\Omega)\,\ud\Omega\,.
\end{equation}
Thus, the function $\yy_1\mapsto\psi_{\mathrm{av}}(\yy_1;\rho)$ is the spherical average of the function $\yy_1\mapsto\psi(\yy_1,\yy_2)$ over the sphere with $|\yy_2|=\rho$.

For later purposes, we are concerned with certain meaningful behaviours of $\psi_{\mathrm{av}}(\yy_1;\rho)$ as $\rho\to 0$ at fixed $\yy_1$, namely when it either approaches a finite value or instead diverges as $\rho^{-1}$. With no pretension of full generality, let us adopt the following characterisation: we shall say that a measurable function $\varphi:[0,+\infty)\to\mathbb{C}$ displays `$\mathcal{Z}$-behaviour' (at zero) when $\varphi\in L^2(\mathbb{R}^+,\rho^2\ud \rho)$, $\varphi$ is continuous in a neighbourhood $(0,\varepsilon_\varphi)$ for some $\varepsilon_\varphi>0$,  and
\begin{equation}\label{eq:Z}
 \frac{\int_0^{+\infty}\!\ud\rho\,\frac{\,\sin\rho-\rho\cos\rho}{\rho}\,\varphi(\frac{\rho}{R})}{\varphi(\frac{1}{R})}\;\xrightarrow[]{\,R\to +\infty\,}\;\frac{\pi}{2}\,c_\varphi
\end{equation}
for some constant $c_\varphi\in\mathbb{C}$. In terms of the even extension $\phi(\rho):=\varphi(|\rho|)$, $\rho\in\mathbb{R}$, \eqref{eq:Z} is equivalent to
\begin{equation}\label{eq:Z2}
 \frac{\,\int_{-R}^{R}\widehat{\phi}(s)\,\ud s-2 R\,\widehat{\phi}(R)\,}{\,\int_\mathbb{R}e^{\ii s/R}\,\widehat{\phi}(s)\,\ud s\,}\;\xrightarrow[]{\,R\to +\infty\,}\;c_\varphi\,.
\end{equation}
The request that $\varphi\in L^2(\mathbb{R}^+,\rho^2\ud \rho)$ is made precisely with the function $\rho\mapsto\psi_{\mathrm{av}}(\yy_1;\rho)$ in mind, of course.

Observe that \eqref{eq:Z}-\eqref{eq:Z2} is just a convenient way to characterise the behaviour of $\varphi(\rho)$ as $\rho\to 0$. This is clear if one interprets \emph{separately} the two summands that emerge from the above expressions (but in general we do want to include the possible effect of \emph{compensation} between them). Thus, for instance, if $\phi$ is a Schwartz function with $\phi(0)=\varphi(0)\neq 0$, standard Riemann-Lebesgue and Fourier transform arguments yield
\[
 \begin{split}
  \int_0^{+\infty}\!\ud\rho\,\frac{\sin\rho}{\rho}\,\varphi({\textstyle\frac{\rho}{R}})\;&\xrightarrow[]{\,R\to +\infty\,}\;\varphi(0)\int_0^{+\infty}\!\ud\rho\,\frac{\sin\rho}{\rho}\;=\;\frac{\pi}{2}\,\varphi(0)\,, \\
  \int_0^{+\infty}\!\ud\rho\,\cos\rho\,\varphi({\textstyle\frac{\rho}{R}})\;&=\;\sqrt{\frac{\pi}{2}}\,R\,\widehat{\phi}(R)\;\xrightarrow[]{\,R\to +\infty\,}\;0\,,
 \end{split} 
\]
therefore in this case \eqref{eq:Z} is satisfied with $c_\varphi=1$. More generally, the asymptotic finiteness of the quantities
\[
 \frac{\,\int_{-R}^{R}\widehat{\phi}(s)\,\ud s\,}{\,\int_\mathbb{R}e^{\ii s/R}\,\widehat{\phi}(s)\,\ud s\,}\,,\qquad\frac{\,R\,\widehat{\phi}(R)\,}{\phi(\frac{1}{R})}
\]
as $R\to +\infty$ encodes a prescription on $\phi(\rho)$ as $\rho$ vanishes, including when $\phi$ (hence $\varphi$) is singular at $\rho=0$. In fact, \eqref{eq:Z}-\eqref{eq:Z2} encode in general a possible compensation among the above two summands. For instance, for the function $\varphi=\rho^{-1}\mathbf{1}_{(0,1)}$ one finds
\[
\frac{\int_0^{+\infty}\!\ud\rho\,\frac{\,\sin\rho-\rho\cos\rho}{\rho}\,\varphi(\frac{\rho}{R})}{\varphi(\frac{1}{R})}\,=\,\int_0^R\ud\rho\,\frac{\,\sin\rho-\rho\cos\rho}{\rho^2}\,=\,\Big[-\frac{\sin\rho}{\rho}\,\Big]_{0}^R\,\xrightarrow[]{\,R\to +\infty\,}\,1\,,
\]
meaning that in this case \eqref{eq:Z} is satisfied with $c_\varphi=\frac{2}{\pi}$.

Clearly, the $\mathcal{Z}$-behaviour is not the most general behaviour of $\rho\mapsto\psi_{\mathrm{av}}(\yy_1;\rho)$ when $\psi\in L^2(\mathbb{R}^3\times\mathbb{R}^3,\ud\yy_1\ud\yy_2)$ or even, for later applications, when $\psi$ belongs to the domain of self-adjoint operator of interest. It is generic enough, though, to comprise both functions $\varphi$ with sufficient regularity at $\rho=0$ and integrability over $[0,+\infty)$, and functions with enough integrability and local $\rho^{-1}$-singularity.

\begin{lemma}\label{lem:shortscalegeneric}
  Let $\psi\in\cH_\mathrm{b}$ such that for almost every $\yy_1$ the function $\rho\mapsto\psi_{\mathrm{av}}(\yy_1;\rho)$ has $\mathcal{Z}$-behaviour, for concreteness uniformly in $\yy_1$ (thus, with the same constant in the limit \eqref{eq:Z}).
 For $R>0$ and a.e.~$\pp_1$ let
 \begin{equation}\label{eq:ApsiR}
  \widehat{A}_{\psi,R}(\pp_1)\;:=\;\frac{1}{\;(2\pi)^{\frac{3}{2}}}\int_{\!\substack{ \\ \\ \pp_2\in\mathbb{R}^3 \\ |\pp_2|<R}}\widehat{\psi}(\pp_1,\pp_2)\,\ud \pp_2\,.
 \end{equation}
 Then, for a.e.~$\yy_1$, and for some constant $c_\psi\in\mathbb{C}$,
 \begin{equation}\label{eq:AR}
  A_{\psi,R}(\yy_1)\;=\; c_\psi\,\psi_{\mathrm{av}}(\yy_1;{\textstyle\frac{1}{R})}\,(1+o(1))\qquad \textrm{as }\;R\to +\infty\,.
 \end{equation}
\end{lemma}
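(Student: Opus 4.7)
The strategy is to represent $A_{\psi,R}(\yy_1)$ directly in position space and recognise it, after a change of variables, as precisely $\frac{2}{\pi}$ times the numerator of the quotient in \eqref{eq:Z} evaluated at $\varphi=\psi_{\mathrm{av}}(\yy_1;\cdot)$. Once that is accomplished, \eqref{eq:AR} becomes a direct consequence of the assumed $\mathcal{Z}$-behaviour of $\psi_{\mathrm{av}}(\yy_1;\cdot)$.

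First, I would turn the Fourier-side truncation $\mathbf{1}_{\{|\pp_2|<R\}}\widehat{\psi}(\pp_1,\pp_2)$ into a convolution in $\yy_2$ against the inverse Fourier transform of the indicator of the ball of radius $R$. Using the classical identity $\int_{|\pp|<R}e^{\ii\yy\cdot\pp}\,\ud\pp=\frac{4\pi}{|\yy|^{3}}\big(\sin(R|\yy|)-R|\yy|\cos(R|\yy|)\big)$, this yields the almost-everywhere identity
\[
 A_{\psi,R}(\yy_1)\;=\;\frac{1}{2\pi^{2}}\int_{\R^{3}}\frac{\sin(R|\yy_2|)-R|\yy_2|\cos(R|\yy_2|)}{|\yy_2|^{3}}\,\psi(\yy_1,\yy_2)\,\ud\yy_2\,.
\]
Since the kernel is radial in $\yy_2$, passing to spherical coordinates isolates the spherical average $\psi_{\mathrm{av}}$ and gives
\[
 A_{\psi,R}(\yy_1)\;=\;\frac{2}{\pi}\int_0^{+\infty}\!\!\frac{\sin(R\rho)-R\rho\cos(R\rho)}{\rho}\,\psi_{\mathrm{av}}(\yy_1;\rho)\,\ud\rho\,.
\]
Rescaling via $u=R\rho$ turns the right-hand side into $\frac{2}{\pi}\int_0^{+\infty}\frac{\sin u-u\cos u}{u}\,\psi_{\mathrm{av}}(\yy_1;u/R)\,\ud u$, which, by the uniform $\mathcal{Z}$-behaviour hypothesis \eqref{eq:Z} applied to $\varphi=\psi_{\mathrm{av}}(\yy_1;\cdot)$, is asymptotically equivalent to $\frac{2}{\pi}\cdot\frac{\pi}{2}\,c_\psi\,\psi_{\mathrm{av}}(\yy_1;1/R)=c_\psi\,\psi_{\mathrm{av}}(\yy_1;1/R)$ as $R\to+\infty$, exactly \eqref{eq:AR}.

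The one delicate point is the very first step: the kernel $M_R(\yy_2)=\frac{1}{2\pi^{2}|\yy_2|^{3}}(\sin(R|\yy_2|)-R|\yy_2|\cos(R|\yy_2|))$ decays only like $|\yy_2|^{-2}$ and hence is not in $L^1(\R^3)$, so Fubini is not directly available for the derivation of the position-space convolution. I would instead obtain the identity by Plancherel applied to the partial Fourier transform in $\yy_2$ alone, which is legitimate because the multiplier $\mathbf{1}_{\{|\pp_2|<R\}}$ is bounded and acts on $\widehat{\psi}(\pp_1,\cdot)\in L^2(\R^3_{\pp_2})$ for almost every $\pp_1$. Once the position-space formula is secured as an almost-everywhere equality in $\yy_1$, the radial reduction, the change of variable and the passage to the limit are purely one-dimensional operations at each fixed $\yy_1$, and the uniformity clause in the $\mathcal{Z}$-behaviour assumption guarantees that the single constant $c_\psi$ emerges throughout.
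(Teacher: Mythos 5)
Your proposal is correct and follows essentially the same route as the paper: both reduce $A_{\psi,R}(\yy_1)$ to $\frac{2}{\pi}\int_0^{+\infty}\frac{\sin\rho-\rho\cos\rho}{\rho}\,\psi_{\mathrm{av}}(\yy_1;\frac{\rho}{R})\,\ud\rho$ via the inverse Fourier transform of the ball indicator (the paper's $\delta_R$ is exactly your kernel $M_R$) and then invoke the $\mathcal{Z}$-behaviour hypothesis. Your added remark justifying the position-space identity by Plancherel rather than by formal delta-function manipulation is a small refinement of rigour over the paper's presentation, not a different argument.
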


\begin{remark}
 In the assumption of the Lemma $\psi$ may be singular at $\yy_2=0$, in which case both sides of \eqref{eq:AR} diverge with $R$. If instead $\psi$ is suitably regular at $\yy_2=0$, then the r.h.s.~converges to $\psi(\yy_1,\mathbf{0})$, consistently with \eqref{eq:psiy0}-\eqref{eq:tracemomentum} above. 
\end{remark}

\begin{proof}[Proof of Lemma \ref{lem:shortscalegeneric}]
 One has
 \[
  \begin{split}
    A_{\psi,R}(\yy_1)\;&=\;\frac{1}{\;(2\pi)^3}\iint_{\mathbb{R}^3\times\mathbb{R}^3}\ud\pp_1\ud\pp_2\,e^{\ii\pp_1\cdot\yy_1}\,\mathbf{1}_{\{|\pp_2|<R\}}(\pp_2)\,\widehat{\psi}(\pp_1,\pp_2) \\
    &=\;\iint_{\mathbb{R}^3\times\mathbb{R}^3}\ud\zz_1\ud\zz_2\,\delta(\zz_1+\yy_1)\,\delta_R(\zz_2)\,\psi(\zz_1,\zz_2) \\
    &=\;\int_{\mathbb{R}^3}\ud\zz_2\,\delta_R(\zz_2)\,\psi(-\yy_1,\zz_2)\;=\;\int_{\mathbb{R}^3}\ud\zz_2\,\delta_R(\zz_2)\,\psi(\yy_1,-\zz_2)\,,
  \end{split}
 \]
 where
 \[
  \begin{split}
   \delta_R(\zz_2)\;:=&\;\,\Big(\frac{\mathbf{1}_{\{|\pp_2|<R\}}}{(2\pi)^{\frac{3}{2}}}\Big)^{\!\vee}(\zz_2)\;=\;\frac{1}{\;(2\pi)^{3}}\int_{\!\substack{ \\ \\ \pp_2\in\mathbb{R}^3 \\ |\pp_2|<R}}e^{\ii \pp_2\cdot\zz_2}\,\ud\pp_2 \\
   =&\;\,\frac{2\pi}{\;(2\pi)^{3}}\int_0^R\ud r\,r^2\int_{-1}^1\,\ud t\,e^{\ii r |\zz_2| t} \\ 
   =&\;\,\frac{2 R^3}{\;(2\pi)^{2}}\,\frac{\,\sin R|\zz_2|-R|\zz_2|\cos R|\zz_2|\,}{(R|\zz_2|)^3}\,.
  \end{split}
 \]
 In fact, $\delta_R$ is a smooth, approximate delta-distribution in three dimensions. Thus,
 \[
  \begin{split}
   A_{\psi,R}(\yy_1)\;&=\;\frac{2}{\;(2\pi)^2}\int_{\mathbb{R}^3}\ud\zz_2\,\frac{\,\sin|\zz_2|-|\zz_2|\cos |\zz_2|\,}{|\zz_2|^3}\,\psi(\yy_1,{\textstyle\frac{1}{R}}\zz_2) \\
   &=\;\frac{2}{\;(2\pi)^2}\int_0^{+\infty}\!\ud\rho\,\frac{\,\sin\rho-\rho\cos\rho}{\rho}\Big(\int_{\mathbb{S}^2}\ud\Omega\,\psi(\yy_1,{\textstyle\frac{1}{R}}\rho\,\Omega)\Big) \\
   &=\;\frac{2}{\pi}\int_0^{+\infty}\!\ud\rho\,\frac{\,\sin\rho-\rho\cos\rho}{\rho}\,\psi_{\mathrm{av}}(\yy_1,\textstyle\frac{\rho}{R})\,.
  \end{split}  
 \]
 By assumption (see \eqref{eq:Z} above),
 \[
  \lim_{R\to +\infty}\frac{\int_0^{+\infty}\!\ud\rho\,\frac{\,\sin\rho-\rho\cos\rho}{\rho}\,\psi_{\mathrm{av}}(\yy_1,\textstyle\frac{\rho}{R})}{\psi_{\mathrm{av}}(\yy_1,\textstyle\frac{1}{R})}\;=\;\frac{\pi}{2}\,c_\psi
 \]
 for some constant $c_\psi\in\mathbb{C}$. Therefore,
 \[
  A_{\psi,R}(\yy_1)\,=\,\frac{2}{\pi}\int_0^{+\infty}\!\ud\rho\,\frac{\,\sin\rho-\rho\cos\rho}{\rho}\,\psi_{\mathrm{av}}(\yy_1,\textstyle\frac{\rho}{R})\,\stackrel{R\to +\infty}{=}  \,c_\psi\,\psi_{\mathrm{av}}(\yy_1,\textstyle\frac{1}{R})\,(1+o(1))\,,
 \]
which completes the proof. 
\end{proof}

\begin{remark}
 Should, more realistically, the function $\rho\mapsto\psi_{\mathrm{av}}(\yy_1,\rho)$ in the above proof display $\mathcal{Z}$-behaviour non-uniformly in $\yy_1$, the counterpart of the $c_\psi$-constant would be a function $c_\psi(\yy_1)$. We are not really interested in pushing such generality forward: we merely introduced the $\mathcal{Z}$-behaviour to visualize, in meaningful concrete cases, the correspondence between the two expressions \eqref{eq:ApsiR} and \eqref{eq:AR} with explicit dependence on the cut-off parameter $R$. 
\end{remark}

\subsection{The $T_\lambda$ operator}\label{sec:Tlambdaoperator}~

In practice, the computation of \eqref{eq:ApsiR} for elements of $\mathcal{D}(\mathring{H}^*)$ produces a quantity that for convenience we analyse separately in this Subsection, before resuming the discussion in the following Subsect.~\ref{sec:largemomentumasympt}.

For $\xi\in H^{-\frac{1}{2}}(\mathbb{R}^3)$ and $\lambda>0$ we define (for a.e.~$\pp$)
\begin{equation}\label{eq:Tlambda}
  (\widehat{T_\lambda\xi})(\pp)\;:=\;2\pi^2\sqrt{\frac{3}{4}\pp^2+\lambda}\;\widehat{\xi}(\pp)-2\int_{\mathbb{R}^3}\frac{\widehat{\xi}(\qq)}{\,\pp^2+\qq^2+\pp\cdot \qq+\lambda\,}\,\ud\qq\,.
 \end{equation}

At least for $\xi\in H^{-\frac{1}{2}+\varepsilon}(\mathbb{R}^3)$, $\varepsilon>0$, \eqref{eq:Tlambda} defines an almost-everywhere finite quantity, for
\[
\Big|\int_{\mathbb{R}^3}\frac{\widehat{\xi}(\qq)}{\,\pp^2+\qq^2+\pp\cdot \qq+\lambda\,}\,\ud\qq\Big|\;\lesssim\;\|\xi\|_{H^{-\frac{1}{2}+\varepsilon}}\Big(\int_{\mathbb{R}^3}\frac{(\qq^2+1)^{\frac{1}{2}-\varepsilon}}{(\pp^2+\qq^2+1)^2}\,\ud q\Big)^{1/2}\;<\;+\infty\,.
\]
Instead, the example $\widehat{\xi}_0(\qq):=\mathbf{1}_{\{|\qq|\geqslant 2\}}(|\qq|\log|\qq|)^{-1}$ shows that \eqref{eq:Tlambda} may be \emph{infinite} for generic $H^{-\frac{1}{2}}$-functions.

The map $\xi\mapsto T_\lambda\xi$ is central in this work. It commutes with the rotations in $\mathbb{R}^3$ and therefore, upon densely defining it over $H^s(\mathbb{R}^3)$, $s\geqslant-\frac{1}{2}$, one has
\begin{equation}\label{eq:decompTTell}
 T_\lambda\;=\;\bigoplus_{\ell=0}^\infty T_\lambda^{(\ell)}
\end{equation}
in the sense of direct sum of operators on Hilbert space with respect to the canonical decomposition
\begin{equation}\label{eq:bigdecomp}
 \begin{split}
 H^s(\mathbb{R}^3)\;&\cong\;\bigoplus_{\ell=0}^\infty \Big( L^2(\mathbb{R}^+,(1+p^2)^s p^2\ud p) \otimes \mathrm{span}\big\{ Y_{\ell,n}\,|\,n=-\ell,\dots,\ell \big\} \Big)  \\
 &\equiv\;\bigoplus_{\ell=0}^\infty \,H^s_{\ell}(\mathbb{R}^3)\,.
 \end{split}
\end{equation}
Here the $Y_{\ell,n}$'s form the usual orthonormal basis of $L^2(\mathbb{S}^2)$ of spherical harmonics and each $\xi\in H^s(\mathbb{R}^3)$ decomposes with respect to \eqref{eq:bigdecomp} as
\begin{equation}\label{eq:xihatangularexpansion}
 \begin{split}
  \widehat{\xi}(\pp)\;&=\;\sum_{\ell=0}^\infty\sum_{n=-\ell}^\ell f_{\ell,n}^{(\xi)}(|\pp|) Y_{\ell,n}(\Omega_{\pp})\;=\;\sum_{\ell=0}^\infty\widehat{\xi^{(\ell)}}(\pp) \\
  \widehat{\xi^{(\ell)}}(\pp)\;&\!:=\;\sum_{n=-\ell}^\ell f_{\ell,n}^{(\xi)}(|\pp|) Y_{\ell,n}(\Omega_{\pp})
 \end{split}
\end{equation}
in polar coordinates $\pp\equiv|\pp|\Omega_\pp$. Explicitly, 
\begin{equation}
 \langle\xi,\eta\rangle_{H^s}\;=\;\sum_{\ell=0}^\infty\langle\xi^{(\ell)},\eta^{(\ell)}\rangle_{H^s_\ell}\;=\;\sum_{\ell=0}^\infty\sum_{n=-\ell}^\ell\int_{\mathbb{R}^+}\overline{f_{\ell,n}^{(\xi)}(p)}\,f_{\ell,n}^{(\eta)}(p)\,(1+p^2)^s p^2\ud p 
\end{equation}
%
%
%
%
%
and 
\begin{equation}\label{eq:TlambdaTlambdaell}
 \widehat{T_\lambda\xi}\;=\;\sum_{\ell=0}^\infty \widehat{T_\lambda^{(\ell)}\xi^{(\ell)}}\,.
\end{equation}

Let us denote by $P_\ell$ the Legendre polynomial of order $\ell=0,1,2,\dots$, namely
\begin{equation}\label{def_Legendre}
P_\ell(t)\equiv\frac{1}{2^\ell \ell!}\,\frac{\ud^\ell}{\ud t^\ell}\,(t^2-1)^\ell\,.
\end{equation}

\begin{lemma}[Decomposition properties of $T_\lambda$]\label{lem:Tlambdadecomposition}
 Let $\lambda>0$ and $\xi^{(\ell)},\eta^{(\ell)}\in H^s_\ell(\mathbb{R}^3)$. 
 Then, with respect to the representation \eqref{eq:xihatangularexpansion},
 \begin{itemize}
  \item[(i)] $T_\lambda^{(\ell)}$ acts trivially (i.e., as the identity) on the angular components of $\widehat{\xi}^{(\ell)}$, and acts as
  \begin{equation}\label{eq:fellsector}
   f_{\ell,n}^{(\xi)}(p)\:\mapsto\:2\pi^2\sqrt{{\textstyle\frac{3}{4}}p^2+\lambda}\,f_{\ell,n}^{(\xi)}(p)-4\pi\!\!\int_{\mathbb{R}^+}\!\ud q\,q^2  f_{\ell,n}^{(\xi)}(q)\!\int_{-1}^1\frac{P_\ell(t)\,\ud t}{p^2+q^2+p \,q \,t+\lambda}
  \end{equation}
 on each radial component;
  \item[(ii)] one has
  \begin{equation}\label{eq:xiTxipre}
    \int_{\mathbb{R}^3} \overline{\,\widehat{\xi}(\pp)}\, \big(\widehat{T_\lambda\eta}\big)(\pp)\,\ud\pp\;=\;\sum_{\ell=0}^\infty \int_{\mathbb{R}^3} \overline{\,\widehat{\xi^{(\ell)}}(\pp)}\, \big(\widehat{T_\lambda^{(\ell)}\eta^{(\ell)}}\big)(\pp)\,\ud\pp
  \end{equation}
 and
 \begin{equation}\label{eq:xiTxi}
   \begin{split}
   \int_{\mathbb{R}^3}& \overline{\,\widehat{\xi^{(\ell)}}(\pp)}\, \big(\widehat{T_\lambda^{(\ell)}\eta^{(\ell)}}\big)(\pp)\,\ud\pp\;=\;2\pi^2\!\int_{\mathbb{R}^+}\!\ud p\,p^2\,\overline{f_{\ell,n}^{(\xi)}(p)}f_{\ell,n}^{(\eta)}(p)\sqrt{{\textstyle\frac{3}{4}}p^2+\lambda} \\
   &\qquad -4\pi\!\iint_{\mathbb{R}^+\times\mathbb{R}^+}\ud p\,\ud q\,p^2q^2\,\overline{f_{\ell,n}^{(\xi)}(p)}\,f_{\ell,n}^{(\eta)}(q)\!\int_{-1}^1\ud t\,\frac{P_\ell(t)}{p^2+q^2+p \,q \,t+\lambda}\,.
    \end{split}
  \end{equation}
 \end{itemize}
\end{lemma}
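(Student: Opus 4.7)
The plan is to observe that $T_\lambda$ decomposes naturally into its multiplicative part and an integral part, both of which are manifestly invariant under the diagonal action of $\mathrm{SO}(3)$, and then to make this decomposition explicit in the spherical harmonic basis.

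First, I would address part (i). The multiplicative prefactor $2\pi^2\sqrt{\frac{3}{4}\pp^2+\lambda}$ depends only on $|\pp|$; hence, when applied to $\widehat{\xi^{(\ell)}}(\pp)=\sum_n f_{\ell,n}^{(\xi)}(|\pp|)Y_{\ell,n}(\Omega_\pp)$, it acts purely on the radial factor and preserves each angular mode $Y_{\ell,n}$. For the integral term, the kernel $K(\pp,\qq):=(\pp^2+\qq^2+\pp\cdot\qq+\lambda)^{-1}$ depends on $\pp,\qq$ only through $|\pp|,|\qq|$, and $\Omega_\pp\cdot\Omega_\qq$, so the corresponding integral operator commutes with simultaneous rotations. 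To make the angular action explicit, I would pass to polar coordinates $\qq=q\Omega_\qq$ and expand $K$ in Legendre polynomials of $t=\Omega_\pp\cdot\Omega_\qq$, namely $K(p,q,t)=\sum_{\ell\geqslant 0}c_\ell(p,q)P_\ell(t)$ with $c_\ell(p,q)=\frac{2\ell+1}{2}\int_{-1}^1 P_\ell(t)K(p,q,t)\,\ud t$. Combining this with the spherical-harmonic addition formula $P_\ell(\Omega_\pp\cdot\Omega_\qq)=\frac{4\pi}{2\ell+1}\sum_n Y_{\ell,n}(\Omega_\pp)\overline{Y_{\ell,n}(\Omega_\qq)}$ and the orthonormality of $\{Y_{\ell,n}\}$ in $L^2(\mathbb{S}^2)$, I obtain
\[
 \int_{\mathbb{S}^2}\frac{Y_{\ell,n}(\Omega_\qq)\,\ud\Omega_\qq}{\,p^2+q^2+pq\,\Omega_\pp\cdot\Omega_\qq+\lambda\,}\;=\;2\pi\, Y_{\ell,n}(\Omega_\pp)\!\int_{-1}^{1}\!\!\frac{P_\ell(t)\,\ud t}{p^2+q^2+pq\,t+\lambda}\,.
\]
Integrating in $q^2\ud q$ against $f_{\ell,n}^{(\xi)}(q)$ and multiplying by the prefactor $-2$ from \eqref{eq:Tlambda} reproduces exactly the second term in \eqref{eq:fellsector}. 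This confirms that the angular content is preserved and that on each radial sector $T_\lambda^{(\ell)}$ acts as stated.

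Part (ii) then follows by a direct calculation. Using \eqref{eq:decompTTell}, the pairing in \eqref{eq:xiTxipre} decouples as a sum over $\ell$ because distinct angular sectors are orthogonal in $L^2(\mathbb{R}^3)$. On a single sector, writing $\pp=p\Omega_\pp$, expanding $\widehat{\xi^{(\ell)}}$ and $\widehat{T_\lambda^{(\ell)}\eta^{(\ell)}}$ via \eqref{eq:xihatangularexpansion} and \eqref{eq:fellsector}, performing the $\Omega_\pp$-integration with $\int_{\mathbb{S}^2}\overline{Y_{\ell,n}}Y_{\ell,n'}=\delta_{n,n'}$, and keeping the radial measure $p^2\ud p$ yields \eqref{eq:xiTxi}.

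The only slightly delicate point is tracking the numerical constants correctly through the addition formula (the factors $\frac{4\pi}{2\ell+1}$ and $\frac{2\ell+1}{2}$ cancel to produce the clean $2\pi$ in front of the $t$-integral, and hence the $4\pi$ in \eqref{eq:fellsector} after multiplying by the $-2$ from \eqref{eq:Tlambda}); this, together with justifying the exchange of summation and integration (which follows from dominated convergence once one restricts to $\xi^{(\ell)}\in H^{-1/2+\varepsilon}_\ell(\mathbb{R}^3)$ so that the kernel integrals converge absolutely, the general case then being handled by density), is the main bookkeeping obstacle; there is no conceptual difficulty beyond the rotational symmetry of $T_\lambda$.
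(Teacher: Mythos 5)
Your proof is correct and takes essentially the same route as the paper's: the paper also reduces everything to the Legendre-polynomial expansion of the kernel $(\pp^2+\qq^2+\pp\cdot\qq+\lambda)^{-1}$ combined with the spherical-harmonic addition formula, and your constant bookkeeping (the $\frac{2\ell+1}{2}$ and $\frac{4\pi}{2\ell+1}$ cancelling to give $2\pi$, hence $4\pi$ after the factor $-2$ from \eqref{eq:Tlambda}) reproduces \eqref{eq:fellsector}--\eqref{eq:xiTxi} exactly. The paper's proof is just a terser version of the same computation, so no further comparison is needed.
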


\begin{proof}
   The triviality of the action of $T_\lambda^{(\ell)}$ on the angular components is due to the invariance of $T_\lambda$ under rotations. All other formulas are then straightforwardly derived from \eqref{eq:Tlambda} by exploiting the following standard expansion in Legendre polynomials and the addition formula for spherical harmonics: 
\begin{equation}\label{expansion-Leg-poly}
\begin{split}
&\frac{1}{\pp^2+\qq^2+\pp\cdot\qq+\lambda} \;=\;\sum_{\ell=0}^\infty\frac{2\ell+1}{2}\!\int_{-1}^1\ud t\,\frac{P_\ell(t)\,P_\ell(\cos(\theta_{\pp,\qq}))}{\pp^2+\qq^2+|\pp|\,|\qq|\,t+\lambda} \\
&\quad =\; \sum_{\ell=0}^\infty 2\pi\int_{-1}^1\ud t\,\frac{P_\ell(t)}{\pp^2+\qq^2+|\pp|\,|\qq|\,t+\lambda}\sum_{r=-\ell}^\ell\overline{Y_{\ell r}(\Omega_{\pp})}\,Y_{\ell r}(\Omega_{\qq})
\end{split}
\end{equation}
    (see, e.g., \cite[Eq.~(8.814)]{Gradshteyn-tables-of-integrals-etc}).
  \end{proof}

\begin{lemma}[Mapping properties of $T_\lambda$]\label{lem:Tlambdaproperties} Let $\lambda>0$.

\begin{itemize}
 \item[(i)] For each $s\geqslant 1$ \eqref{eq:Tlambda} defines an operator
 \[
  T_\lambda:\mathcal{D}(T_\lambda)\subset L^2(\mathbb{R}^3)\to L^2(\mathbb{R}^3)\,,\qquad \mathcal{D}(T_\lambda)\;:=\;H^s(\mathbb{R}^3)
 \]
 that is densely defined and symmetric in $L^2(\mathbb{R}^3)$.
 \item[(ii)] One has
 \begin{equation}\label{eq:Tlambdamapping}
  \| T_\lambda \xi\|_{H^{s-1}}\;\lesssim\;\|\xi\|_{H^s}\qquad \forall\xi\in H^s(\mathbb{R}^3)\,,\qquad s\in\Big(-\frac{1}{2},\frac{3}{2}\Big)\,,
 \end{equation}
 i.e., \eqref{eq:Tlambda} defines a bounded operator $T_\lambda:H^s(\mathbb{R}^3)\to H^{s-1}(\mathbb{R}^3)$ for every $s\in(-\frac{1}{2},\frac{3}{2})$.
 \item[(iii)] One has
 \begin{equation}\label{eq:TlambdamappingN}
  \| T_\lambda^{(\ell)} \xi\|_{H^{s-1}}\;\lesssim\;\|\xi\|_{H^s}\qquad \forall\xi\in H^s_\ell(\mathbb{R}^3)\,,\qquad s\in\Big[-\frac{1}{2},\frac{3}{2}\Big]\,,\quad\ell\in\mathbb{N}\,,
 \end{equation}
 i.e., \eqref{eq:Tlambda} defines a bounded operator $T_\lambda:H^s_\ell(\mathbb{R}^3)\to H^{s-1}_\ell(\mathbb{R}^3)$ for every $s\in[-\frac{1}{2},\frac{3}{2}]$, provided that $\ell\in\mathbb{N}$. In the sector $\ell=0$ \eqref{eq:TlambdamappingN} fails in general at the endpoints in $s$ and only \eqref{eq:Tlambdamapping} is valid.
 \item[(iv)] For any other $\lambda'>0$ one has
 \begin{equation}
  \|(T_{\lambda'}-T_\lambda)\xi\|_{H^{\frac{1}{2}}}\;\lesssim\;|\lambda'-\lambda|\,\|\xi\|_{H^{-\frac{1}{2}}}\,.
 \end{equation}
 \item[(v)] For $s\geqslant\frac{1}{2}$ and $\xi,\eta\in H^s(\mathbb{R}^3)$
 one has
 \begin{equation}\label{eq:Tlambdaexchange}
 \int_{\mathbb{R}^3} \overline{\,\widehat{\xi}(\pp)}\, \big(\widehat{T_\lambda\eta}\big)(\pp)\,\ud\pp\;=\;\int_{\mathbb{R}^3} \overline{\,\widehat{T_\lambda\xi}(\pp)}\, \widehat{\eta}(\pp)\,\ud\pp\quad\qquad(s\geqslant\textstyle\frac{1}{2})
 \end{equation}
 and the quantity above is real and finite.
\end{itemize}
\end{lemma}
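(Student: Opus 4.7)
The plan is to decompose $T_\lambda = M_\lambda + K_\lambda$, where $M_\lambda$ is the multiplication operator with symbol $2\pi^2\sqrt{\tfrac34\pp^2+\lambda}$ and $K_\lambda$ is the integral operator with kernel $k_\lambda(\pp,\qq)=-2(\pp^2+\qq^2+\pp\cdot\qq+\lambda)^{-1}$. The multiplier trivially satisfies $\|M_\lambda\xi\|_{H^{s-1}}\lesssim \|\xi\|_{H^s}$ for every real $s$, so all the analytic work concentrates on $K_\lambda$. Its kernel is controlled by the elementary inequality $\tfrac12(\pp^2+\qq^2)\leq\pp^2+\qq^2+\pp\cdot\qq\leq\tfrac32(\pp^2+\qq^2)$, whence $|k_\lambda(\pp,\qq)|\lesssim(\langle\pp\rangle^2+\langle\qq\rangle^2)^{-1}$. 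For (ii) I would apply Schur's test to the weighted kernel $\tilde K(\pp,\qq):=\langle\pp\rangle^{s-1}|k_\lambda(\pp,\qq)|\langle\qq\rangle^{-s}$ using the elementary bound
\begin{equation*}
 \int_{\mathbb{R}^3}(\pp^2+\qq^2+1)^{-1}\langle\qq\rangle^{-\beta}\,\ud\qq\;\lesssim\;\langle\pp\rangle^{1-\beta}\qquad(-1<\beta<3)
\end{equation*}
(proved by splitting $|\qq|\lessgtr\langle\pp\rangle$), applied in both orderings of the variables; matching the exponents forces $s\in(-\tfrac12,\tfrac32)$. Part (i) then follows by specialisation to $s\geq 1$, the symmetry of $T_\lambda$ on $L^2$ being an immediate consequence of Fubini applied to the $(\pp,\qq)$-symmetric kernel $k_\lambda$.

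For (iii) I would invoke the partial-wave reduction of Lemma \ref{lem:Tlambdadecomposition}, which in each sector reduces $T_\lambda^{(\ell)}$ to a scalar integral operator on $L^2(\mathbb{R}^+,p^2\ud p)$ with radial kernel $K_\ell(p,q):=-4\pi\int_{-1}^1 P_\ell(t)(p^2+q^2+pqt+\lambda)^{-1}\ud t$. For $\ell=0$ this is $O((p^2+q^2+\lambda)^{-1})$ and no improvement over (ii) is available; the sharpness of the open interval is certified by the logarithmic counterexample $\widehat\xi_0(\qq)=\mathbf 1_{|\qq|\geq 2}(|\qq|\log|\qq|)^{-1}$ already flagged after \eqref{eq:Tlambda}. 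For $\ell\geq 1$, instead, I would Neumann-expand the denominator in powers of $pqt/(p^2+q^2+\lambda)$ and exploit the orthogonality $\int_{-1}^1 P_\ell(t)t^n\ud t=0$ for $n<\ell$ to obtain
\begin{equation*}
 |K_\ell(p,q)|\;\lesssim\;\frac{(pq)^\ell}{(p^2+q^2+\lambda)^{\ell+1}}\;\lesssim\;\frac{1}{p^2+q^2+\lambda}\Big(\frac{\min(p,q)}{\max(p,q)}\Big)^{\!\ell}\,,
\end{equation*}
an asymmetric gain that rerunning the weighted Schur test translates into closed-interval boundedness $H^s_\ell\to H^{s-1}_\ell$ for every $s\in[-\tfrac12,\tfrac32]$.

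For (iv) I observe that the multiplier difference obeys
\begin{equation*}
 \sqrt{\tfrac34\pp^2+\lambda'}-\sqrt{\tfrac34\pp^2+\lambda}\;=\;\frac{\lambda'-\lambda}{\sqrt{\tfrac34\pp^2+\lambda'}+\sqrt{\tfrac34\pp^2+\lambda}}\;\lesssim\;\frac{|\lambda'-\lambda|}{\langle\pp\rangle}\,,
\end{equation*}
gaining one extra power of $\langle\pp\rangle^{-1}$, while the kernel difference $(k_{\lambda'}-k_\lambda)(\pp,\qq) = 2(\lambda'-\lambda)[(\ldots+\lambda)(\ldots+\lambda')]^{-1}$ gains two extra powers. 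Rerunning the Schur bound of (ii) with this improved kernel promotes the mapping $H^{-1/2}\to H^{-3/2}$ of part (ii) to $H^{-1/2}\to H^{1/2}$ with prefactor $|\lambda'-\lambda|$. Finally for (v), once $s\geq\tfrac12$ we have $T_\lambda\eta\in H^{s-1}\subset H^{-1/2}$, which pairs by duality with $\xi\in H^s\subset H^{1/2}$; writing the bracket as a double integral and applying Fubini (justified by the absolute integrability guaranteed by (ii)), the symmetry $k_\lambda(\pp,\qq)=k_\lambda(\qq,\pp)$ yields $\overline{\langle\xi,T_\lambda\eta\rangle}=\langle T_\lambda\xi,\eta\rangle$, and reality upon taking $\xi=\eta$.

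The main obstacle is the endpoint part of (iii): the closed-interval estimate at $s=\pm\tfrac12$ (modulo the shift) requires the asymmetric Schur weights to be calibrated precisely to absorb the $(\min/\max)^\ell$ gain, and the sharp contrast with the $\ell=0$ failure must be documented quantitatively, since the presence of this asymmetry between the $s$-wave and higher partial waves underlies the entire self-adjoint extension theory developed subsequently.
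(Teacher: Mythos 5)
Your treatment of (iv) and (v) coincides with the paper's: the paper likewise splits the multiplier difference as $(\lambda'-\lambda)/(\sqrt{\cdot}+\sqrt{\cdot})$ and observes that the kernel difference acquires a squared denominator, reducing (iv) to the $H^{-\frac{1}{2}}\to H^{\frac{1}{2}}$ boundedness already established for $W_\lambda$, and it proves (v) by exactly your Fubini--Tonelli argument with the chain of estimates through \eqref{eq:Tlambdamapping}. For (i)--(iii), however, the paper does not give a proof at all: it notes that the integral part of the bosonic $T_\lambda$ coincides, up to a prefactor, with the operator studied in the fermionic $(2+1)$ problem and imports the statements from \cite[Propositions 3 and 4, Corollary 2]{MO-2016} (see also \cite{CDFMT-2012}). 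Your self-contained route --- weighted Schur test for the open interval, Legendre orthogonality producing the $(\min(p,q)/\max(p,q))^{\ell}$ kernel gain for the closed interval when $\ell\geqslant 1$ --- is sound and is essentially what those references do; the gain of one full power in the off-diagonal decay is exactly what rescues the two Schur row/column sums that diverge logarithmically at $s=\pm\frac{1}{2}+1$ in the $s$-wave. You correctly certify failure at $s=-\frac{1}{2}$, $\ell=0$ by the logarithmic charge $\xi_0$; for the other endpoint $s=\frac{3}{2}$ the relevant counterexample is the one of Remark \ref{rem:Tl-failstomap}.

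One concrete slip: your displayed auxiliary inequality
\[
 \int_{\mathbb{R}^3}(\pp^2+\qq^2+1)^{-1}\langle\qq\rangle^{-\beta}\,\ud\qq\;\lesssim\;\langle\pp\rangle^{1-\beta}
\]
is false for $-1<\beta\leqslant 1$: near $|\qq|=\infty$ the integrand behaves like $|\qq|^{-\beta}\,\ud|\qq|$ after accounting for the volume element, so the integral diverges unless $\beta>1$. The correct range is $1<\beta<3$. This matters because the exponent bookkeeping in the Schur test is what produces the interval for $s$: with the correct range the two conditions $1<s+\gamma<3$ and $1<1-s+\gamma<3$ admit a common weight exponent $\gamma$ precisely for $s\in(-\frac{1}{2},\frac{3}{2})$, whereas your stated range $-1<\beta<3$ would instead ``force'' $s\in(-\frac{3}{2},\frac{5}{2})$, which is too large and contradicts the counterexamples. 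So the conclusion you reach is right, but only because you implicitly used the correct threshold $\beta>1$; the inequality as written needs amending. With that correction, and with the endpoint Schur computation for $\ell\geqslant 1$ actually carried out (you flag it as the main remaining obstacle), the argument is complete.
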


\begin{proof}
 All claims (i)-(iii) are obvious for the multiplicative part of $T_\lambda$, namely the first summand in the r.h.s.~of \eqref{eq:Tlambda}, and need only be proved for the integral part of $T_\lambda$. The latter, apart from an irrelevant multiplicative prefactor, is the same as the multiplicative part of the `fermionic' counterpart of $T_\lambda$, namely the analogous operator emerging in the analysis of a trimer consisting of two identical fermions and a third different particle. All the claimed properties were already demonstrated in that case in collaboration with Ottolini in \cite[Propositions 3 and 4, Corollary 2]{MO-2016}.

 Concerning part (iv),
 \[
 \begin{split}
  ((T_{\lambda'}-T_\lambda)&\xi)\,{\textrm{\large $\widehat{\,}$\normalsize}}\,(\pp)\;=\; \frac{2\pi^2(\lambda'-\lambda)}{\:\sqrt{\frac{3}{4}\pp^2+\lambda'}+\sqrt{\frac{3}{4}\pp^2+\lambda}\:}\,\widehat{\xi}(\pp)\\
  & +2(\lambda'-\lambda)\int_{\mathbb{R}^3}\,\frac{\widehat{\xi}(\qq)}{\,(\pp^2+\qq^2+\pp\cdot\qq+\lambda)\,(\pp^2+\qq^2+\pp\cdot\qq+\lambda')\,}\,\ud\qq
 \end{split}
 \]
 whence
 \[
   \big|(T_{\lambda'}-T_\lambda)\xi)\,{\textrm{\large $\widehat{\,}$\normalsize}}\,(\pp)\big|\;\lesssim\;|\lambda'-\lambda|\,\bigg( \frac{2\pi^2\,|\widehat{\xi}(\pp)|}{\,\sqrt{\frac{3}{4}\pp^2+\lambda}\,}+\int_{\mathbb{R}^3}\,\frac{|\widehat{\xi}(\qq)|}{\,(\pp^2+\qq^2+1)^2\,}\,\ud\qq\bigg)\,.
 \]
 Thus, $\xi\mapsto(T_{\lambda'}-T_\lambda)\xi$ has the same behaviour as $W_\lambda$, and hence the same $H^{-\frac{1}{2}}\to H^{\frac{1}{2}}$ boundedness.

 Concerning (v), the only non-trivial piece of the claim regards the integral part of $T_\lambda$, namely the identity
  \begin{equation*}
 \begin{split}
  \int_{\mathbb{R}^3}&\ud\pp\,\overline{\widehat{\xi}(\pp)}\,\Big(\int_{\mathbb{R}^3}\ud\qq\,\frac{\widehat{\eta}(\qq)}{\,\pp^2+\qq^2+\pp\cdot\qq+\lambda\,}\Big) \\
   &=\;\int_{\mathbb{R}^3}\ud\qq\,\Big(\int_{\mathbb{R}^3}\ud\pp\,\frac{\overline{\widehat{\xi}(\pp)}}{\,\pp^2+\qq^2+\pp\cdot\qq+\lambda\,}\Big)\,\widehat{\eta}(\qq)\,.
 \end{split}
 \end{equation*}
 The exchange of integration order above is indeed legitimate, as the assumptions on $\xi,\eta$ guarantee the applicability of Fubini-Tonelli theorem. More precisely,
 \[
 \begin{split}
  &\bigg|\int_{\mathbb{R}^3} \ud\pp\,\overline{\widehat{\xi}(\pp)}\,\Big(\int_{\mathbb{R}^3}\ud\qq\,\frac{\widehat{\eta}(\qq)}{\,\pp^2+\qq^2+\pp\cdot\qq+\lambda\,}\Big)\bigg| \\
  &\;\lesssim\;	 \int_{\mathbb{R}^3}\ud\pp\,|\widehat{\xi}(\pp)|\,\Big(\int_{\mathbb{R}^3}\ud\qq\,\frac{|\widehat{\eta}(\qq)|}{\,\pp^2+\qq^2+1\,}\Big) \;\leqslant\;\|\xi\|_{H^{\frac{1}{2}}}\bigg\|\int_{\mathbb{R}^3}\ud\qq\,\frac{|\widehat{\eta}(\qq)|}{\,\pp^2+\qq^2+1\,} \bigg\|_{H^{-\frac{1}{2}}}\\
  &\;\lesssim\;\|\xi\|_{H^{\frac{1}{2}}}\|\eta\|_{H^{\frac{1}{2}}}\;\leqslant\;\|\xi\|_{H^{s}}\|\eta\|_{H^{s}}\;<\;+\infty\qquad\forall s\geqslant\frac{1}{2}\,,
 \end{split}
 \]
 where we applied \eqref{eq:lambda-equiv-1} in the first inequality and \eqref{eq:Tlambdamapping} in the third (estimate \eqref{eq:Tlambdamapping} refers to the whole $T_\lambda$, but as commented above in the course of its proof it is actually established by demonstrating the only non-trivial piece of the estimate, namely the one involving the integral part of $T_\lambda$). 
\end{proof}


\begin{remark}\label{rem:Tl-failstomap}
 $T_\lambda$ fails to map $H^{\frac{3}{2}}(\mathbb{R}^3)$ into $H^{\frac{1}{2}}(\mathbb{R}^3)$ as is the case, for instance, for the action of $T_\lambda$ on the class of spherically symmetric functions in $\mathcal{F}^{-1} C^\infty_0(\mathbb{R}^3_\pp)$. Indeed, if $\xi$ has symmetry $\ell=0$ and $\widehat{\xi}\in C^\infty_0(\mathbb{R}^3_\pp)$, then the contribution from the integral part of $(\widehat{T_\lambda\xi})(\pp)$ is of the order of (see \eqref{eq:fellsector} above)
 \[
 \begin{split}
  \int_{\mathrm{supp}\,f^{(\xi)}}&\ud q\,q^2\,f^{(\xi)}(q)\int_{-1}^1\frac{\ud t}{\,|\pp|^2+q^2+|\pp|qt+\lambda\,} \\
  &=\;\frac{1}{|\pp|}\int_{\mathrm{supp}\,f^{(\xi)}}\ud q\,q\,f^{(\xi)}(q)\,\log\Big(1+\frac{2|\pp|q}{\,|\pp|^2+q^2-|\pp|qt+\lambda\,}\Big)\,,
 \end{split}
 \]
 which, both in the limit $|\pp|\to 0$ and $|\pp|\to +\infty$ is of the order of
 \[
  \int_{\mathrm{supp}\,f^{(\xi)}}\ud q\,\frac{q^2}{\,|\pp|^2+q^2-|\pp|qt+\lambda\,}\,f^{(\xi)}(q)\;\sim\;\frac{1}{\,\pp^2+1\,}\,.
 \]
 The contribution from the multiplicative part of $(\widehat{T_\lambda\xi})(\pp)$ is obviously a compactly supported function, the conclusion therefore is $(\widehat{T_\lambda\xi})(\pp)\sim (\pp^2+1)^{-1}$, and the latter is a $H^{\frac{1}{2}-\varepsilon}$-function $\forall\varepsilon>0$ not belonging to $H^{\frac{1}{2}}(\mathbb{R}^3)$. 
\end{remark}

\begin{remark}\label{rem:whensymmetric}
 Parts (i) and (v) of Lemma \ref{lem:Tlambdaproperties} present two regimes of validity of the identity \eqref{eq:Tlambdaexchange} when $\xi,\eta\in H^s(\mathbb{R}^3)$ for $s\geqslant\frac{1}{2}$. In the regime $s\geqslant 1$, each side of the \eqref{eq:Tlambdaexchange} is a product of two $L^2$-functions and such identity amounts to the symmetry of $T_\lambda$ in $L^2(\mathbb{R}^3)$ with domain $H^s(\mathbb{R}^3)$. For $\frac{1}{2}\leqslant s<1$, instead, $T_\lambda$ does not make sense any longer as an operator on $L^2(\mathbb{R}^3)$, and yet \eqref{lem:Tlambdaproperties} still expresses the symmetry of the action of $T_\lambda$ on $H^s$-functions, and hence also the reality of the considered integrals. 
\end{remark}

 Additional relevant properties $T_\lambda$ are discussed in Subsect.~\ref{sec:Tlambdaestimates}.

\subsection{Large momentum asymptotics}\label{sec:largemomentumasympt}~

\begin{lemma}\label{lem:largepasympt-star}
 Let $g\in\mathcal{D}(\mathring{H}^*)$ and let $\lambda>0$. Then, decomposing $\widehat{g}=\widehat{\phi^\lambda}+\widehat{u_\xi^\lambda}$ with $\widehat{\phi^\lambda}=\widehat{f^\lambda}+(\pp_1^2+\pp_2^2+\pp_1\cdot\pp_2+\lambda)^{-1}\,\widehat{u_\eta^\lambda}$ as demonstrated in Lemma \ref{lem:Hstaretc}, in the limit $R\to +\infty$ one has the asymptotics
 \begin{equation}\label{eq:g-largep2-star}
  \int_{\!\substack{ \\ \\ \pp_2\in\mathbb{R}^3 \\ |\pp_2|<R}}\widehat{g}(\pp_1,\pp_2)\,\ud\pp_2\;=\;4\pi R\,\widehat{\xi}(\pp_1)+\Big({\textstyle\frac{1}{3}}(\widehat{W_\lambda\eta})(\pp_1)-(\widehat{T_\lambda\xi})(\pp_1)\Big)+o(1)
 \end{equation}
 as well as the identity
 \begin{equation}\label{eq:phi-largep2-star}
  \int_{\mathbb{R}^3}\widehat{\phi^\lambda}(\pp_1,\pp_2)\,\ud\pp_2\;=\;{\textstyle\frac{1}{3}}(\widehat{W_\lambda\eta})(\pp_1)\,.
 \end{equation} 
\end{lemma}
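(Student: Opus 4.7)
The idea is to exploit the decomposition $\widehat{g}=\widehat{\phi^\lambda}+\widehat{u_\xi^\lambda}$ of Lemma \ref{lem:Hstaretc} and treat the regular and singular parts separately: the singular part $u_\xi^\lambda$ alone produces the divergent $4\pi R\,\widehat{\xi}(\pp_1)$ summand (as $\widehat{u_\xi^\lambda}$ fails to be integrable in $\pp_2$), whereas the regular part $\phi^\lambda\in H_\mathrm{b}^2$ contributes the entire finite term $\tfrac{1}{3}\widehat{W_\lambda\eta}(\pp_1)$ and trivially admits the $R\to+\infty$ limit.

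\textbf{Regular part.} Since $\phi^\lambda\in H_\mathrm{b}^2(\mathbb{R}^3\times\mathbb{R}^3)$, the trace theorem together with \eqref{eq:tracemomentum} identifies $\int_{\mathbb{R}^3}\widehat{\phi^\lambda}(\pp_1,\pp_2)\,\ud\pp_2$ with $(2\pi)^{3/2}\,\widehat{\phi^\lambda|_{\Gamma_1}}(\pp_1)$ and ensures that the truncation at $|\pp_2|<R$ converges to it as $R\to+\infty$. The $\widehat{f^\lambda}$ contribution is killed by the defining condition \eqref{eq:triplevanishing} of $\mathcal{D}(\mathring{H})$, leaving only the explicit $\int\widehat{u_\eta^\lambda}(\pp_1,\pp_2)/(\pp_1^2+\pp_2^2+\pp_1\cdot\pp_2+\lambda)\,\ud\pp_2$. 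I would split this using the three-term numerator of $\widehat{u_\eta^\lambda}$: the ``diagonal'' piece, where $\widehat{\eta}(\pp_1)$ factors out, reduces after completion of the square and the translation $\mathbf{k}=\pp_2+\pp_1/2$ to the elementary radial integral $4\pi\int_0^\infty k^2(k^2+a^2)^{-2}\,\ud k=\pi^2/a$ with $a:=\sqrt{\tfrac{3}{4}\pp_1^2+\lambda}$, while the $\widehat{\eta}(\pp_2)$ and $\widehat{\eta}(-\pp_1-\pp_2)$ pieces are equal under the kernel-preserving substitution $\pp_2\mapsto-\pp_1-\pp_2$ and together give $2\int\widehat{\eta}(\qq)/(\pp_1^2+\qq^2+\pp_1\cdot\qq+\lambda)^2\,\ud\qq$. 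Matching the sum against the definition \eqref{eq:Wlambda} of $W_\lambda$ yields precisely $\tfrac{1}{3}\widehat{W_\lambda\eta}(\pp_1)$, which is \eqref{eq:phi-largep2-star}.

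\textbf{Singular part.} Decomposing $\widehat{u_\xi^\lambda}(\pp_1,\pp_2)$ the same way, the growth in $R$ comes from the diagonal term $\widehat{\xi}(\pp_1)\,J(R,\pp_1)$, where $J(R,\pp_1):=\int_{|\pp_2|<R}\ud\pp_2/(\pp_1^2+\pp_2^2+\pp_1\cdot\pp_2+\lambda)$. After completing the square and translating $\mathbf{k}=\pp_2+\pp_1/2$, I would compare $J$ with the centred counterpart $\int_{|\mathbf{k}|<R}\ud\mathbf{k}/(\mathbf{k}^2+a^2)=4\pi R-4\pi a\arctan(R/a)=4\pi R-2\pi^2 a+O(1/R)$: the symmetric difference between the centred and recentred balls lies in a thin shell of width $O(|\pp_1|)$ around $|\mathbf{k}|=R$, on which the integrand is $O(R^{-2})$, and the leading angular correction $\int_{\mathbb{S}^2}\Omega\cdot\pp_1\,\ud\Omega$ vanishes by odd parity, whence $J(R,\pp_1)=4\pi R-2\pi^2 a+o(1)$. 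The two remaining pieces, with $\widehat{\xi}(\pp_2)$ and $\widehat{\xi}(-\pp_1-\pp_2)$ in the numerator, each converge as $R\to+\infty$ to $\int_{\mathbb{R}^3}\widehat{\xi}(\qq)/(\pp_1^2+\qq^2+\pp_1\cdot\qq+\lambda)\,\ud\qq$ (the second after the substitution above). Comparing the three contributions against the definition \eqref{eq:Tlambda} of $T_\lambda$ yields $\int_{|\pp_2|<R}\widehat{u_\xi^\lambda}(\pp_1,\pp_2)\,\ud\pp_2=4\pi R\,\widehat{\xi}(\pp_1)-\widehat{T_\lambda\xi}(\pp_1)+o(1)$, and adding the regular contribution produces \eqref{eq:g-largep2-star}.

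\textbf{Main obstacle.} The crux is the extraction of the $4\pi R-2\pi^2 a$ asymptotics from $J(R,\pp_1)$ and the check that the ball-translation discrepancy is genuinely $o(1)$ rather than $O(1)$: the cancellation of the first angular moment of $\Omega$ on $\mathbb{S}^2$ does the essential work there. A related subtlety is that, as already observed after \eqref{eq:Tlambda}, the integral defining $\widehat{T_\lambda\xi}(\pp_1)$ is pointwise finite only for charges in $H^{-\frac{1}{2}+\varepsilon}(\mathbb{R}^3)$; accordingly, the $o(1)$ in \eqref{eq:g-largep2-star} should be understood in the a.e.~sense on the set where both sides are finite, and then extended to generic $\xi\in H^{-\frac{1}{2}}(\mathbb{R}^3)$ via the mapping properties of $T_\lambda$ collected in Lemma \ref{lem:Tlambdaproperties}.
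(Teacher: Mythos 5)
Your proposal is correct and follows essentially the same route as the paper: decompose $\widehat{g}=\widehat{\phi^\lambda}+\widehat{u_\xi^\lambda}$, kill the $\widehat{f^\lambda}$ contribution via \eqref{eq:triplevanishing}, split each three-term numerator, identify two of the three pieces after the substitution $\pp_2\mapsto-\pp_1-\pp_2$, and match the outcome against the definitions \eqref{eq:Wlambda} and \eqref{eq:Tlambda}; the caveat on finiteness for generic $\xi\in H^{-\frac{1}{2}}$ matches the paper's treatment as well. The only point where you diverge is the evaluation of $J(R,\pp_1)$: the paper computes it in closed form in spherical coordinates and expands the resulting logs and arctans, whereas you recentre via $\mathbf{k}=\pp_2+\pp_1/2$ and compare with the centred ball; your version is sound, since after recentring the domain becomes $\{|\mathbf{k}-\pp_1/2|<R\}$, the radial upper limit is $R+\frac{1}{2}\Omega\cdot\pp_1+O(1/R)$, the first angular moment $\int_{\mathbb{S}^2}\Omega\cdot\pp_1\,\ud\Omega$ vanishes, and the remaining corrections are $O(1/R)$ — so the discrepancy is genuinely $o(1)$ and not merely $O(|\pp_1|)$, which is exactly the cancellation you flagged as the crux.
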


An immediate corollary of Lemma \ref{lem:largepasympt-star}, obtained by means of Lemma \ref{lem:shortscalegeneric} taking $R=|\yy_2|^{-1}\to +\infty$, is the following.

\begin{corollary}\label{cor:largepasympt-star}
 Under the assumptions of Lemmas \ref{lem:largepasympt-star} and \ref{lem:shortscalegeneric} one has
 \begin{equation}\label{eq:g-largep2-star-yversion}
 (2\pi)^{\frac{3}{2}} c_g\,g_{\mathrm{av}}(\yy_1;|\yy_2|)\,\stackrel{|\yy_2|\to 0}{=}\,\frac{4\pi}{|\yy_2|}\xi(\yy_1)+\Big( {\textstyle\frac{1}{3}}(W_\lambda\eta)(\yy_1)-(T_\lambda\xi)(\yy_1)\Big) + o(1)  
 \end{equation}
 for some constant $c_g\in\mathbb{C}$, and 
 \begin{equation}\label{eq:phi-largep2-star-yversion}
   \phi^\lambda(\yy_1,\mathbf{0})\;=\;\frac{1}{\,3\,(2\pi)^{\frac{3}{2}}}\,(W_\lambda\eta)(\yy_1)
 \end{equation}
 for a.e.~$\yy_1$. 
\end{corollary}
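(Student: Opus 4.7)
The plan is to deduce the corollary as a direct combination of the two preceding lemmas together with the trace formula \eqref{eq:tracemomentum}.

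For the first identity \eqref{eq:g-largep2-star-yversion}, I would start from Lemma \ref{lem:largepasympt-star}, equation \eqref{eq:g-largep2-star}, and divide by $(2\pi)^{\frac{3}{2}}$ so that the left-hand side becomes $\widehat{A}_{g,R}(\pp_1)$ in the sense of the definition \eqref{eq:ApsiR}. This produces the asymptotic expansion
\[
 \widehat{A}_{g,R}(\pp_1)\;=\;\frac{4\pi R}{(2\pi)^{\frac{3}{2}}}\widehat{\xi}(\pp_1)+\frac{1}{(2\pi)^{\frac{3}{2}}}\Big({\textstyle\frac{1}{3}}(\widehat{W_\lambda\eta})(\pp_1)-(\widehat{T_\lambda\xi})(\pp_1)\Big)+o(1)
\]
as $R\to +\infty$. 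Inverting the Fourier transform in the $\pp_1$-variable yields the analogous expression for $A_{g,R}(\yy_1)$ in terms of $\xi(\yy_1)$, $(W_\lambda\eta)(\yy_1)$, and $(T_\lambda\xi)(\yy_1)$.

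Next, I would invoke Lemma \ref{lem:shortscalegeneric}, whose hypotheses on $g$ are assumed in the corollary, with the choice $R=|\yy_2|^{-1}$, which expresses the \emph{same} quantity $A_{g,R}(\yy_1)$ as $c_g\,g_{\mathrm{av}}(\yy_1;|\yy_2|)(1+o(1))$ as $|\yy_2|\to 0$. Equating the two expressions for $A_{g,1/|\yy_2|}(\yy_1)$ and multiplying by $(2\pi)^{\frac{3}{2}}$ produces \eqref{eq:g-largep2-star-yversion} verbatim.

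For the second identity \eqref{eq:phi-largep2-star-yversion}, I would simply apply the trace formula \eqref{eq:tracemomentum} to the regular component $\phi^\lambda\in H^2_\mathrm{b}(\mathbb{R}^3\times\mathbb{R}^3)$ (for which the trace at $\Gamma_1$ is a bona fide $H^{\frac{1}{2}}(\mathbb{R}^3)$-function), obtaining
\[
 \widehat{\phi^\lambda|_{\Gamma_1}}(\pp_1)\;=\;\frac{1}{(2\pi)^{\frac{3}{2}}}\int_{\mathbb{R}^3}\widehat{\phi^\lambda}(\pp_1,\pp_2)\,\ud\pp_2\;=\;\frac{1}{\,3(2\pi)^{\frac{3}{2}}}(\widehat{W_\lambda\eta})(\pp_1)\,,
\]
where the second equality is precisely \eqref{eq:phi-largep2-star} from Lemma \ref{lem:largepasympt-star}. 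Inverting Fourier in $\pp_1$ gives \eqref{eq:phi-largep2-star-yversion} a.e.~in $\yy_1$. The only conceptual point to be careful about is that the various $o(1)$ and inverse-Fourier operations need to be interpreted consistently in the appropriate distributional sense (since $\widehat{\xi}\in H^{-\frac{1}{2}}$ in general), but this is routine once the bounds of Lemmas \ref{lem:uxiproperties}, \ref{lem:Wlambdaproperties}, and \ref{lem:Tlambdaproperties} are in hand; no further estimate is required beyond what was already established.
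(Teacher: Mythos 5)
Your argument is exactly the one the paper intends: the paper states the corollary as an immediate consequence of Lemma \ref{lem:largepasympt-star} combined with Lemma \ref{lem:shortscalegeneric} at $R=|\yy_2|^{-1}$, and your proposal simply fleshes out that combination (plus the trace identity \eqref{eq:tracemomentum} for the regular part), with the correct bookkeeping of the $(2\pi)^{\frac{3}{2}}$ factors. No gap; this matches the paper's proof.
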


\begin{remark}\label{rem:gexpansionfinite}
 Not for all $g\in \mathcal{D}(\mathring{H}^*)$ are \eqref{eq:g-largep2-star} and \eqref{eq:g-largep2-star-yversion} \emph{finite} quantities, but surely they are if the charge $\xi$ of $g$ has at $H^{-\frac{1}{2}+\varepsilon}$-regularity for some $\varepsilon>0$ (as argued right after the definition \eqref{eq:Tlambda}). 
\end{remark}

\begin{proof}[Proof of Lemma \ref{lem:largepasympt-star}]
 For what observed in Remark \ref{rem:gexpansionfinite}, we tacitly restrict the computations to those $\xi$'s making the following integrals finite (e.g., all $\xi$'s with $H^{-\frac{1}{2}+\varepsilon}$-regularity), for otherwise the corresponding identities to prove are all identities between infinites.
 
 One has
 \[
  \begin{split}
    &\int_{\!\substack{ \\ \\ \pp_2\in\mathbb{R}^3 \\ |\pp_2|<R}}\widehat{u_\xi^\lambda}(\pp_1,\pp_2)\ud\pp_2\;=\;\widehat{\xi}(\pp_1) \int_{\!\substack{ \\ \\ \pp_2\in\mathbb{R}^3 \\ |\pp_2|<R}}\frac{\ud \pp_2}{\,\pp_1^2+\pp_2^2+\pp_1\cdot\pp_2+\lambda\,} \\
    &\quad +\int_{\!\substack{ \\ \\ \pp_2\in\mathbb{R}^3 \\ |\pp_2|<R}}\frac{\widehat{\xi}(\pp_2)}{\,\pp_1^2+\pp_2^2+\pp_1\cdot\pp_2+\lambda\,}\,\ud\pp_2+\int_{\!\substack{ \\ \\ \pp_2\in\mathbb{R}^3 \\ |\pp_2|<R}}\frac{\widehat{\xi}(-\pp_1-\pp_2)}{\,\pp_1^2+\pp_2^2+\pp_1\cdot\pp_2+\lambda\,}\,\ud\pp_2\,.
  \end{split}
 \]
 Both last two summands in the r.h.s.~above converge as $R\to +\infty$ to
 \[
  \int_{\mathbb{R}^3}\frac{\widehat{\xi}(\pp_2)}{\,\pp_1^2+\pp_2^2+\pp_1\cdot\pp_2+\lambda\,}\,\ud\pp_2
 \]
 (for the third one this follows after an obvious change of the integration variable). Moreover,
 \[
\begin{split}
\int_{\!\substack{ \\ \\ \pp_2\in\mathbb{R}^3 \\ |\pp_2|<R}}&\frac{\ud \pp_2}{\,\pp_1^2+\pp_2^2+\pp_1\cdot\pp_2+\lambda\,}\;=\;2\pi\int_0^R\ud r\,r^2\int_{-1}^1\frac{\ud t}{\pp_1^2+r^2+|\pp_1| r t+\lambda} \\
&=\;\frac{2\pi}{|\pp_1|}\int_0^R r\log\frac{r^2+\pp_1^2+|\pp_1|r+\lambda}{r^2+\pp_1^2-|\pp_1|r+\lambda}\,\ud r \\
&=\;2\pi R\,\Big(1+\frac{R}{2|\pp_1|}\log\frac{R^2+\pp_1^2+|\pp_1|R+\lambda}{R^2+\pp_1^2-|\pp_1| R+\lambda}\Big) \\
&\qquad\quad+2\pi\sqrt{\frac{3}{4} \pp_1^2+\lambda}\,\Big(\!\arctan\frac{|\pp_1|-2R}{2\sqrt{\frac{3}{4} \pp_1^2+\lambda}}-\arctan\frac{|\pp_1|+2R}{2\sqrt{\frac{3}{4} \pp_1^2+\lambda}}\,\Big) \\
&\qquad\quad+\pi\frac{\pp_1^2+\lambda}{4\sqrt{\frac{3}{4} \pp_1^2+\lambda}}\,\log\frac{R^2+\pp_1^2+|\pp_1|R+\lambda}{R^2+\pp_1^2-|\pp_1| R+\lambda} \\
&=\;4\pi R-2\pi^2\sqrt{\frac{3}{4} \pp_1^2+\lambda}+o(1)\qquad\textrm{as }\;R\to +\infty\,.
\end{split}
\]
Thus,
\[
\begin{split}
 &\int_{\!\substack{ \\ \\ \pp_2\in\mathbb{R}^3 \\ |\pp_2|<R}}\widehat{u_\xi^\lambda}(\pp_1,\pp_2)\ud\pp_2 \\
 &=\;4\pi R\,\widehat{\xi}(\pp_1)-2\pi^2\sqrt{\frac{3}{4} \pp_1^2+\lambda}\,\widehat{\xi}(\pp_1)+2 \int_{\mathbb{R}^3}\frac{\widehat{\xi}(\pp_2)}{\,\pp_1^2+\pp_2^2+\pp_1\cdot\pp_2+\lambda\,}\,\ud\pp_2+o(1) \\
 &=\;4\pi R\,\widehat{\xi}(\pp_1)-(\widehat{T_\lambda}\xi)(\pp_1)+o(1)\,.
\end{split}
\]

Next, we compute (using $\int_{\mathbb{R}^3}f^\lambda(\pp_1,\pp_2)\ud\pp_2=0$)
\[
 \begin{split}
 &\int_{\mathbb{R}^3}\widehat{\phi^\lambda}(\pp_1,\pp_2)\,\ud\pp_2\;=\;\int_{\mathbb{R}^3}\frac{\widehat{u}_\eta(\pp_1,\pp_2)}{\,\pp_1^2+\pp_2^2+\pp_1\cdot\pp_2+\lambda\,} \\
 &=\;\widehat{\eta}(\pp_1) \int_{\mathbb{R}^3}\frac{\ud \pp_2}{\,(\pp_1^2+\pp_2^2+\pp_1\cdot\pp_2+\lambda)^2} \\
    &\quad +\int_{\mathbb{R}^3}\frac{\widehat{\eta}(\pp_2)}{\,(\pp_1^2+\pp_2^2+\pp_1\cdot\pp_2+\lambda)^2}\,\ud\pp_2+\int_{\mathbb{R}^3}\frac{\widehat{\eta}(-\pp_1-\pp_2)}{\,(\pp_1^2+\pp_2^2+\pp_1\cdot\pp_2+\lambda)^2}\,\ud\pp_2\,.
 \end{split}
\]
By an obvious change of variable one sees that the last two summands are the same. Moreover, 
\[
 \int_{\mathbb{R}^3}\frac{\ud \pp_2}{\,(\pp_1^2+\pp_2^2+\pp_1\cdot\pp_2+\lambda)^2}\;=\;\frac{\pi^2}{\,\sqrt{\frac{3}{4} \pp_1^2+\lambda}\,}\,.
\]
Therefore,
\[
\begin{split}
 \int_{\mathbb{R}^3}\widehat{\phi^\lambda}(\pp_1,\pp_2)\,\ud\pp_2\;&=\;\frac{\pi^2}{\,\sqrt{\frac{3}{4} \pp_1^2+\lambda}\,}+2\int_{\mathbb{R}^3}\frac{\widehat{\eta}(\pp_2)}{\,(\pp_1^2+\pp_2^2+\pp_1\cdot\pp_2+\lambda)^2}\,\ud\pp_2 \\
 &=\;\frac{1}{3}\,(\widehat{W_\lambda\eta})(\pp_1)
\end{split}
\]
This proves \eqref{eq:phi-largep2-star}, and combining this with the above results for $\widehat{u_\xi^\lambda}$ one proves \eqref{eq:g-largep2-star}. 
\end{proof}

By exploiting the bosonic symmetry and repeating the above arguments with respect the other coincidence hyperplanes, one finally obtains the following picture:
\begin{itemize}
 \item a function $f\in \mathcal{D}(\mathring{H})$ vanishes by definition in a neighbourhood of the coincidence manifold $\Gamma$;
 \item at each hyperplane, away from the configuration of triple coincidence, a function $\phi\in\mathcal{D}(\mathring{H}_F)$ is finite, as shown by \eqref{eq:phi-largep2-star-yversion};
 \item a generic $g\in\mathcal{D}(\mathring{H}^*)$ display the $|\yy|^{-1}$ singularity, as shown by \eqref{eq:g-largep2-star-yversion}.
\end{itemize}
Actually, \eqref{eq:g-largep2-star-yversion} and  \eqref{eq:phi-largep2-star-yversion} express the short-scale behaviour counterpart of the large momentum asymptotics \eqref{eq:g-largep2-star} and \eqref{eq:phi-largep2-star}, respectively -- and always with the caveat that the asymptotics for $g$ have finite coefficients only for a subclass of charges $\xi$ (which includes all charges with $H^{-\frac{1}{2}+\varepsilon}$-regularity).

The leading singularity of $g$ is of order $|\yy|^{-1}$ in the relative variable with respect to the considered coincidence hyperplane. Explicitly, in terms of the charges $\xi$ and $\eta$ of $g$,
\begin{equation}\label{eq:shortscalegeneric2}
 g_{\mathrm{av}}(\yy_1;|\yy_2|)\,\stackrel{|\yy_2|\to 0}{=}\,c_g^{-1}\sqrt{\frac{2}{\pi}\,}\,\Big(\frac{\xi(\yy_1)}{|\yy_2|}+\omega_{\xi,\eta}(\yy_1)\Big)+o(1)\,,
\end{equation}
\begin{equation}
 \omega_{\xi,\eta}(\yy_1)\;:=\; {\textstyle\frac{1}{4\pi}}\big({\textstyle\frac{1}{3}}(W_\lambda\eta)(\yy_1)-(T_\lambda\xi)(\yy_1)\big)\,,
\end{equation}
point-wise almost-everywhere in $\yy_1$. Analogous expressions hold with respect to the other coincidence hyperplanes, with the same $\xi$ and $\omega_{\xi,\eta}$.

 $\xi$ and $\omega_{\xi,\eta}$  are interpreted in \eqref{eq:shortscalegeneric2} as functions supported on the coincidence hyperplane ($\{\yy_2=\mathbf{0}\}$ in this case). The leading singularity's coefficient $\xi$ has some $H^{-\frac{1}{2}}$-regularity (in fact, more than that). The next-to-leading singularity's coefficient $\omega_{\xi,\eta}$, in general, is not even $H^{-\frac{1}{2}}$-regular, owing to the mapping properties of $W_\lambda$ (Lemma \ref{lem:Wlambdaproperties}) and $T_\lambda$ (Lemma \ref{lem:Tlambdaproperties}). But if the charge $\xi$ is absent in $g$, and hence $g\in H^2(\mathbb{R}^3\times\mathbb{R}^3,\ud\yy_1\ud\yy_2)$, then $\omega_{\xi,\eta}$ has the same regularity of $W_\lambda\eta$, namely the very $H^{\frac{1}{2}}(\mathbb{R}^3)$-regularity prescribed by the trace theorem.

When a self-adjoint extension $\mathring{H}_{\mathcal{A}_\lambda}$ is considered, and hence the subspace $\mathcal{D}(\mathring{H}_{\mathcal{A}_\lambda})$ is selected out of $\mathcal{D}(\mathring{H}^*)$ by means of the constraint \eqref{eq:constraintetaxi} on the charges $\xi$ and $\eta$, in practice one makes a choice in the class of leading coefficients $\xi$ and subleading coefficients $\omega_{\xi,\eta}$ of the short-scale expansion \eqref{eq:shortscalegeneric2}  which amounts to taking
\begin{equation}\label{eq:selection}
 \begin{split}
  \xi \;&\in\;\mathcal{D}(\mathcal{A}_\lambda)\,, \\
  4\pi\,\omega_{\xi,\eta}\;&=\; {\textstyle\frac{1}{3}} W_\lambda(\mathcal{A}_\lambda\xi+\chi)-T_\lambda\xi\qquad\textrm{for some }\chi\in\mathcal{D}(\mathcal{A}_\lambda)^{\perp_\lambda}\cap H^{-\frac{1}{2}}(\mathbb{R}^3)\,.
 \end{split}
\end{equation}


\section{Ter-Martirosyan Skornyakov extensions}\label{sec:TMSextension-section}

\subsection{TMS and BP asymptotics}\label{sec:TMS-BP-asympt}~

Given $g\in\cH_\mathrm{b}$ such that $\int_{|\pp_2|<R}\widehat{g}(\pp_1,\pp_2)\ud\pp_2<+\infty$ for all $R>0$, we shall say that $g$ satisfies the Ter-Martirosyan Skornyakov (TMS) condition with parameter $a\in(\mathbb{R}\setminus\{0\})\cup\{\infty\}$ if there exists a function $\xi_\circ$ such that
 \begin{equation}\label{eq:g-TMS-generic}
  \int_{\!\substack{ \\ \\ \pp_2\in\mathbb{R}^3 \\ |\pp_2|<R}}\widehat{g}(\pp_1,\pp_2)\,\ud\pp_2\,\stackrel{R\to +\infty}{=}\,4\pi \Big(R-\frac{1}{a}\Big)\widehat{\xi_\circ}(\pp_1)+o(1)\,.
 \end{equation}
An operator $K$ on $\cH_\mathrm{b}$ for which all $g$'s of $\mathcal{D}(K)$ with $\int_{|\pp_2|<R}\widehat{g}(\pp_1,\pp_2)\ud\pp_2<+\infty$ $\forall R>0$ satisfy \eqref{eq:g-TMS-generic} shall be called a Ter-Martirosyan Skornyakov operator. (For the time being this definition is kept deliberately general: $K$ may or may not be densely defined, symmetric, self-adjoint, etc., and nothing is said about the class $\xi_\circ$ belongs to.)

For those $g$'s of $\mathcal{D}(K)$ for which it is possible to repeat the arguments of Lemma \ref{lem:shortscalegeneric}, \eqref{eq:g-TMS-generic} amounts to
 \begin{equation}\label{eq:g-TMS-BP-generic}
 g_{\mathrm{av}}(\yy_1;|\yy_2|)\,\stackrel{|\yy_2|\to 0}{=}\,c_g^{-1}\sqrt{\frac{2}{\pi}\,}\,\Big(\frac{1}{|\yy_2|}-\frac{1}{a}\Big)\,\xi_\circ(\yy_1) + o(1)  
 \end{equation}
   point-wise almost everywhere in $\yy_1\in\mathbb{R}^3$. (The numerical pre-factors appearing in the r.h.s.~of \eqref{eq:g-TMS-generic} and \eqref{eq:g-TMS-BP-generic} are merely prepared for the forthcoming application to the analysis of the extensions of $\mathring{H}$.)

The case $a=0$ in \eqref{eq:g-TMS-generic}-\eqref{eq:g-TMS-BP-generic} would correspond to $\xi_\circ\equiv 0$ and hence to the fact that the quantity $\int_{\mathbb{R}^3}\widehat{g}(\pp_1,\pp_2)\ud\pp_2$ is finite. As the TMS condition is meant to pinpoint an actual \emph{singularity} of $g$ at each coincidence hyperplane, one conventionally excludes $a=0$ from the above definition.

 In fact, \eqref{eq:g-TMS-generic}-\eqref{eq:g-TMS-BP-generic} describe a short-scale structure of $g$ in the vicinity of each coincidence hyperplane (but away from the triple coincidence point) which in spatial coordinates has precisely the form of the Bethe Peierls contact condition \eqref{eq:preBP} expected on physical grounds for the eigenfunctions of a quantum trimer with zero-range interaction: in this interpretation, $a$ is the two-body $s$-wave scattering length of the interaction.

 We shall refer to \eqref{eq:g-TMS-BP-generic} too as the Bethe Peierls (BP) condition and we shall equivalently say that in the TMS condition \eqref{eq:g-TMS-generic}, resp., the BP condition \eqref{eq:g-TMS-BP-generic}, the quantity
  \begin{equation}\label{eq:a-alpha}
  \alpha\;:=\;-\frac{4\pi}{a}\;\in\;\mathbb{R}
 \end{equation}
  is the `inverse (negative) scattering length' (in suitable units).

 This indicates that within the huge variety of self-adjoint extensions of $\mathring{H}$ (Theorem \ref{thm:generalclassification}), the physically meaningful ones are those displaying the TMS condition for functions of their domain.

 De facto some arbitrariness in the modelling still remains, as we shall elaborate further on in due time (Subsect.~\ref{sec:definite-ell-general} and \ref{sec:variants}), for one could deem an extension `physically meaningful'
 \begin{itemize}
  \item in the restrictive sense that \emph{all} functions in the domain of the extension satisfy the TMS asymptotics (meaning, all functions $g$ for which, at any $R>0$, the quantity $\int_{|\pp_2|<R}\widehat{g}(\pp_1,\pp_2)\ud\pp_2$ is finite),
  \item in the milder sense that only \emph{some} relevant functions do, for instance declaring the physical asymptotics for functions with given symmetry, or for certain \emph{eigenfunctions} of the extension.
 \end{itemize}

  Let us examine first the possibility that \emph{at least one} function in the domain of a self-adjoint extension of $\mathring{H}$ satisfies the TMS condition.

 \begin{lemma}\label{eq:oneTMSfunction}
  Let $\mathscr{H}$ be a self-adjoint extension of $\mathring{H}$ and let $\alpha\in\mathbb{R}$. Assume that there exists $g\in\mathcal{D}(\mathscr{H})$ satisfying the TMS condition \eqref{eq:g-TMS-generic} for the given $\alpha$ and for some function $\xi_\circ$. One has the following.
  \begin{itemize}
   \item[(i)] $\xi_\circ$ must coincide with the charge $\xi$ of (the singular part of) $g$ (Lemma \ref{lem:chargexiofg}):
   \begin{equation}\label{eq:xixicirc}
     \xi\;=\;\xi_\circ \,.
   \end{equation}
   \item[(ii)] \emph{For every} shift parameter $\lambda>0$ with respect to which the canonical representation \eqref{eq:domDHA} of $g$ is written, the charges $\xi\in\mathcal{D}(\mathcal{A}_\lambda)$ and $\chi\in\mathcal{D}(\mathcal{A}_\lambda)^{\perp_\lambda}\cap H^{-\frac{1}{2}}(\mathbb{R}^3)$ of $g$ must satisfy
   \begin{eqnarray}
     {\textstyle\frac{1}{3}} W_\lambda(\mathcal{A}_\lambda\xi+\chi)\!\!&=&\!\!T_\lambda\xi+\alpha\xi \label{eq:TMSoncharges} \\
     T_\lambda\xi+\alpha\xi\!\!&\in&\!\! H^{\frac{1}{2}}(\mathbb{R}^3)\,. \label{eq:Ta12}
   \end{eqnarray}
   \item[(iii)] \emph{For every} shift parameter $\lambda>0$, $g$ and its regular part $\phi^\lambda$ must satisfy 
 \begin{eqnarray}
  \int_{\!\substack{ \\ \\ \pp_2\in\mathbb{R}^3 \\ |\pp_2|<R}}\widehat{g}(\pp_1,\pp_2)\,\ud\pp_2\!\!&=&\!\!(4\pi R+\alpha)\,\widehat{\xi}(\pp_1)+o(1)\,, \label{eq:g-largep2-TMS0} \\
  \int_{\mathbb{R}^3}\widehat{\phi^\lambda}(\pp_1,\pp_2)\,\ud\pp_2\!\!&=&\!\!(\widehat{T_\lambda\xi})(\pp_1)+\alpha\,\widehat{\xi}(\pp_1) \label{eq:phi-largep2-TMS0}\,, \\
  \phi^\lambda(\yy_1,\mathbf{0})\!\!&=&\!\!\frac{1}{\,(2\pi)^{\frac{3}{2}}}\,\Big((T_\lambda\xi)(\yy_1)+\alpha\,\xi(\yy_1)\Big) \label{eq:phi-largep2-star-yversion-TMS0}\,,
 \end{eqnarray}
 \eqref{eq:g-largep2-TMS0}, \eqref{eq:phi-largep2-TMS0}, \eqref{eq:phi-largep2-star-yversion-TMS0} being equivalent.
 \end{itemize}
 \end{lemma}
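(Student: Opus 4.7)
The plan is to prove all three parts by matching, as $R\to +\infty$, two asymptotic expansions of $\int_{|\pp_2|<R}\widehat{g}(\pp_1,\pp_2)\,\ud\pp_2$: the general one produced by Lemma \ref{lem:largepasympt-star} for any element of $\mathcal{D}(\mathring{H}^*)$, and the TMS hypothesis \eqref{eq:g-TMS-generic} imposed on $g$. Using $\alpha=-4\pi/a$, the TMS expansion reads $4\pi R\,\widehat{\xi_\circ}(\pp_1)+\alpha\,\widehat{\xi_\circ}(\pp_1)+o(1)$, while the Lemma-\ref{lem:largepasympt-star} expansion reads $4\pi R\,\widehat{\xi}(\pp_1)+\big(\tfrac{1}{3}(\widehat{W_\lambda\eta})(\pp_1)-(\widehat{T_\lambda\xi})(\pp_1)\big)+o(1)$, having fixed $\lambda>0$ and decomposed $\widehat{g}=\widehat{\phi^\lambda}+\widehat{u_\xi^\lambda}$ as in \eqref{eq:DHstardecomposed}.

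For part (i), I divide both expansions by $4\pi R$ and let $R\to+\infty$: only the leading coefficient survives, forcing $\widehat{\xi}=\widehat{\xi_\circ}$ pointwise almost everywhere, which is \eqref{eq:xixicirc}.

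For part (ii), having established (i), I subtract the common leading term $4\pi R\,\widehat{\xi}$ and pass to the limit in the remaining $O(1)$ pieces, obtaining the identity $\tfrac{1}{3}\widehat{W_\lambda\eta}-\widehat{T_\lambda\xi}=\alpha\,\widehat{\xi}$. Because $g\in\mathcal{D}(\mathscr{H})$ and $\mathscr{H}=\mathring{H}_{\mathcal{A}_\lambda^{(\mathscr{H})}}$ for a suitable labelling operator $\mathcal{A}_\lambda^{(\mathscr{H})}$ by Theorem \ref{thm:generalclassification}, the self-adjointness constraint \eqref{eq:domDHAshort} writes $\eta=\mathcal{A}_\lambda\xi+\chi$ with $\xi\in\mathcal{D}(\mathcal{A}_\lambda)$ and $\chi\in\mathcal{D}(\mathcal{A}_\lambda)^{\perp_\lambda}\cap H^{-\frac{1}{2}}_{W_\lambda}(\mathbb{R}^3)$; substitution gives \eqref{eq:TMSoncharges}. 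The upgrade to \emph{every} $\lambda>0$ relies on Lemma \ref{lem:chargexiofg} (the charge $\xi$ of $g$ is the same at every shift parameter) and on Remark \ref{rem:samedomains} (the domain of $\mathcal{A}_\lambda^{(\mathscr{H})}$ is the same at every $\lambda$); for each individual $\lambda$ the corresponding $\eta$ and $\chi$ may of course differ, but \eqref{eq:TMSoncharges} holds with the appropriate triple in each case. Finally, \eqref{eq:Ta12} is a direct consequence of \eqref{eq:TMSoncharges} together with Lemma \ref{lem:Wlambdaproperties}(ii): since $\mathcal{A}_\lambda\xi+\chi\in H^{-\frac{1}{2}}(\mathbb{R}^3)$, its image under $\tfrac{1}{3}W_\lambda$ lies in $H^{\frac{1}{2}}(\mathbb{R}^3)$, and hence so must $T_\lambda\xi+\alpha\xi$.

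For part (iii), identity \eqref{eq:g-largep2-TMS0} is a direct rewriting of the TMS hypothesis after replacing $\xi_\circ$ by $\xi$ via (i); \eqref{eq:phi-largep2-TMS0} follows by combining the exact formula \eqref{eq:phi-largep2-star} of Lemma \ref{lem:largepasympt-star} with \eqref{eq:TMSoncharges}; and \eqref{eq:phi-largep2-star-yversion-TMS0} follows analogously from \eqref{eq:phi-largep2-star-yversion} of Corollary \ref{cor:largepasympt-star} combined with \eqref{eq:TMSoncharges}. Their mutual equivalence is precisely the Fourier-inversion content already worked out in Lemma \ref{lem:shortscalegeneric} and Corollary \ref{cor:largepasympt-star}. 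The only mild subtlety in the whole argument is justifying the term-by-term identification of the two $R$-asymptotic expansions, but since both have the explicit structure (coefficient)$\,\cdot\,R$ plus convergent $O(1)$ plus $o(1)$, this reduces to a clean two-step comparison rather than a genuine obstacle.
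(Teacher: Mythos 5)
Your proof is correct and follows essentially the same route as the paper: fix $\lambda$, compare the general large-$R$ asymptotics of Lemma \ref{lem:largepasympt-star} with the assumed TMS asymptotics \eqref{eq:g-TMS-generic} to identify first the $R$-coefficient (giving \eqref{eq:xixicirc}) and then the $O(1)$ terms (giving \eqref{eq:TMSoncharges}), invoke the arbitrariness of $\lambda$ together with Lemma \ref{lem:chargexiofg} and Remark \ref{rem:samedomains}, and read off \eqref{eq:Ta12} from $\mathrm{ran}\,W_\lambda=H^{\frac{1}{2}}(\mathbb{R}^3)$. The only difference is that you make explicit the two-step limiting argument (divide by $4\pi R$, then subtract) that the paper leaves implicit in the phrase ``comparing the asymptotics''.
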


 \begin{proof} Let $\lambda>0$ and write
  \[
   \widehat{g}\,=\,\widehat{f^\lambda}+\displaystyle\frac{\widehat{u_{\mathcal{A}_\lambda\xi+\chi}^\lambda}}{\pp_1^2+\pp_2^2+\pp_1\cdot\pp_2+\lambda}+\widehat{u_\xi^\lambda}
  \]
 according to \eqref{eq:domDHA}. 
 Comparing the asymptotics \eqref{eq:g-largep2-star} valid for such $g$ with the asymptotics \eqref{eq:g-TMS-generic} assumed in the hypothesis, one deduces
 \[
  \begin{split}
   \xi\;&=\;\xi_\circ \\
   {\textstyle\frac{1}{3}} W_\lambda(\mathcal{A}_\lambda\xi+\chi)-T_\lambda\xi\; &=\;\alpha\,\xi_\circ\,.
  \end{split}
 \]
 From the arbitrariness of $\lambda$ one concludes that the identity \eqref{eq:TMSoncharges} holds true \emph{irrespective of $\lambda$}. Moreover, $T_\lambda\xi+\alpha\xi$ must make sense as a function in $\mathrm{ran}W_\lambda=H^{\frac{1}{2}}(\mathbb{R}^3)$ (Lemma \ref{lem:Wlambdaproperties}) irrespective of $\lambda$. This completes the proof of parts (i) and (ii).
  Plugging \eqref{eq:TMSoncharges} into \eqref{eq:g-largep2-star} and \eqref{eq:phi-largep2-star} yields \eqref{eq:g-largep2-TMS0} and \eqref{eq:phi-largep2-TMS0}, and in fact one can be derived one from the other, by comparison with the corresponding general identities \eqref{eq:g-largep2-star} and \eqref{eq:phi-largep2-star}. In turn, \eqref{eq:phi-largep2-TMS0} and \eqref{eq:phi-largep2-star-yversion-TMS0} correspond to each other via Fourier transform. Parts (ii) and (iii) too is proved.
 \end{proof}

 \begin{remark}
  Owing to Lemma \ref{eq:oneTMSfunction}(ii), \eqref{eq:phi-largep2-star-yversion-TMS0} must be an identity in $H^{\frac{1}{2}}(\mathbb{R}^3)$. This is indeed consistent with the $H^2(\mathbb{R}^3\times\mathbb{R}^3)\to H^{\frac{1}{2}}(\mathbb{R}^3)$ trace properties.
 \end{remark}


 \begin{remark}
  The `lesson' from Lemma \ref{eq:oneTMSfunction} is that TMS condition \emph{and} self-adjointness of the extension impose strong restrictions. A function $g$ fulfilling the TMS condition \emph{inside the domain of a self-adjoint extension of $\mathring{H}$} must satisfy the restrictions \eqref{eq:xixicirc}-\eqref{eq:Ta12}  $\forall\lambda>0$. 
 \end{remark}

 \subsection{Generalities on TMS extensions}\label{sec:generalitiesTMSext}~
 
 There are two relevant types of operators related with $\mathring{H}$ and compatible with the emergence of large-momenta / short-scale asymptotics of Ter-Martirosyan Skornyakov / Bethe Peierls type:
 
 \noindent \emph{an operator $\mathscr{H}$ on $\cH_\mathrm{b}$ such that $\mathring{H}\subset\mathscr{H}=\mathscr{H}^*$ (respectively, $\mathring{H}\subset\mathscr{H}\subset\mathscr{H}^*$)  and such that \emph{every} $g\in\mathcal{D}(\mathscr{H})$ satisfies the TMS condition \eqref{eq:g-largep2-TMS0} with the same given $\alpha$ shall be called a Ter-Martirosyan Skornyakov self-adjoint extension (respectively, Ter-Martirosyan Skornyakov symmetric extension) of $\mathring{H}$ with parameter $\alpha$}.

 \begin{remark}\label{rem:TMSsym}
 While the above definition in the \emph{self-adjoint} case is self-explanatory, based on the preceding analysis, as the self-adjoint extensions of $\mathring{H}$ are classified in Theorem \ref{thm:generalclassification} and the circumstance that $g\in\mathcal{D}(\mathscr{H})$ satisfies the TMS condition is analysed in Lemma \ref{eq:oneTMSfunction}, a clarification is in order for the \emph{symmetric} case.
 In fact, formulas \eqref{eq:domDHA}-\eqref{eq:domDHA-actionDHA} above make sense also when $\mathcal{A}_\lambda$ is simply symmetric (not necessarily self-adjoint) in $H^{-\frac{1}{2}}_{W_\lambda}(\mathbb{R}^3)$, in which case the operator $H_{\mathcal{A}_\lambda}$ thus defined is evidently still an extension of $\mathring{H}$. Let us show that $H_{\mathcal{A}_\lambda}$ is also symmetric. For generic $g\in\mathcal{D}(\mathscr{H})$,
 \[
  \begin{split}
   &\langle g , (H_{\mathcal{A}_\lambda} +\lambda\mathbbm{1})g\rangle_{\cH_\mathrm{b}}\;=\;\big\langle \phi^\lambda+ u_\xi^\lambda , (H_{\mathcal{A}_\lambda} +\lambda\mathbbm{1}) (\phi^\lambda+ u_\xi^\lambda)\big\rangle_{\cH_\mathrm{b}} \\
   &=\;\big\langle \phi^\lambda , (H_{F} +\lambda\mathbbm{1}) \phi^\lambda\big\rangle_{\cH_\mathrm{b}}+ \big\langle u_\xi^\lambda , (H_{F} +\lambda\mathbbm{1}) \phi^\lambda\big\rangle_{\cH_\mathrm{b}} \\
   &=\;\big\langle \phi^\lambda , (H_{F} +\lambda\mathbbm{1}) \phi^\lambda\big\rangle_{\cH_\mathrm{b}}+ \big\langle u_\xi^\lambda , u_{\mathcal{A}_\lambda\xi+\chi}^\lambda\big\rangle_{\cH_\mathrm{b}} \\
    &=\;\big\langle \phi^\lambda , (H_{F} +\lambda\mathbbm{1}) \phi^\lambda\big\rangle_{\cH_\mathrm{b}}+ \langle \xi,\mathcal{A}_\lambda\xi+\chi\rangle_{H^{-\frac{1}{2}}_{W_\lambda}} \\
    &=\;\big\langle \phi^\lambda , (H_{F} +\lambda\mathbbm{1}) \phi^\lambda\big\rangle_{\cH_\mathrm{b}}+ \langle \xi,\mathcal{A}_\lambda\xi\rangle_{H^{-\frac{1}{2}}_{W_\lambda}}\;\in\;\mathbb{R}\,.
  \end{split}
 \]
 (We used $\langle  u_\xi^\lambda , (\mathring{H}+\lambda\mathbbm{1})f^\lambda\rangle_{\cH_\mathrm{b}}=0$ in the third step, \eqref{eq:W-scalar-product} in the fourth, and $\chi\perp_\lambda\xi$ in the fifth.) For the reality of the above expression it indeed suffices $\mathcal{A}_\lambda$ to be symmetric in $H^{-\frac{1}{2}}_{W_\lambda}(\mathbb{R}^3)$. The proof of Lemma \ref{eq:oneTMSfunction} can be just repeated for the symmetric $\mathcal{A}_\lambda$ and the same conclusions hold for the symmetric extension $\mathscr{H}$ considered now.
 \end{remark}

 TMS \emph{symmetric} extensions of $\mathring{H}$ will play a crucial role in Sect.~\ref{sec:lzero}. For the time being, let us focus on TMS \emph{self-adjoint} extensions, and comment on their symmetric counterpart at the end of this Subsection (Remark \ref{rem:remonsym}).

 The requirement that a self-adjoint extension of $\mathring{H}$ \emph{as a whole} be a Ter-Martirosyan Skornyakov operator imposes a precise choice of the corresponding Birman operators $\mathcal{A}_\lambda$.

 \begin{lemma}\label{lem:generalTMSext}
  Let $\lambda>0$ and $\alpha\in\mathbb{R}$. Let $\mathcal{A}_\lambda\in\mathcal{K}(H^{-\frac{1}{2}}_{W_\lambda}(\mathbb{R}^3))$ and let $\mathring{H}_{\mathcal{A}_\lambda}$ be corresponding self-adjoint extension of $\mathring{H}$.
  The following two conditions are equivalent.
  \begin{itemize}
   \item[(i)] Every $g\in\mathcal{D}(\mathring{H}_{\mathcal{A}_\lambda})$ satisfies the TMS condition \eqref{eq:TMSoncharges} with the given $\alpha$.
   \item[(ii)] $\mathcal{D}(\mathcal{A}_\lambda)$ is dense in $H^{-\frac{1}{2}}(\mathbb{R}^3)$,
      $(T_\lambda+\alpha\mathbbm{1})\mathcal{D}(\mathcal{A}_\lambda)\subset H^{\frac{1}{2}}(\mathbb{R}^3)$, and 
   \begin{equation}\label{eq:Alambdaexpression}
     \mathcal{A}_\lambda\;=\;3 W_\lambda^{-1}(T_\lambda+\alpha\mathbbm{1})\,.
   \end{equation}
  \end{itemize}
  \end{lemma}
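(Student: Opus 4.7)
The plan is to read off both implications directly from the general extension classification in Theorem \ref{thm:generalclassification} together with Lemma \ref{eq:oneTMSfunction}, which already encodes the charge-level translation of a TMS asymptotics. The main structural fact I will use repeatedly is that for $g\in\mathcal{D}(\mathring{H}_{\mathcal{A}_\lambda})$ the charges decompose uniquely as $\xi\in\mathcal{D}(\mathcal{A}_\lambda)$ and $\eta=\mathcal{A}_\lambda\xi+\chi$ with $\chi\in\mathcal{D}(\mathcal{A}_\lambda)^{\perp_\lambda}\cap H^{-\frac{1}{2}}(\mathbb{R}^3)$, and that the large-momentum expansion of $g$ is given by \eqref{eq:g-largep2-star} in terms of $(\xi,\eta)$. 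Comparing \eqref{eq:g-largep2-star} with the TMS asymptotics \eqref{eq:g-TMS-generic} is precisely what produces the operator identity \eqref{eq:Alambdaexpression}.

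For the implication (ii)$\Rightarrow$(i), I would pick an arbitrary $g\in\mathcal{D}(\mathring{H}_{\mathcal{A}_\lambda})$ and its charges $(\xi,\chi)$. Density of $\mathcal{D}(\mathcal{A}_\lambda)$ in $H^{-\frac{1}{2}}(\mathbb{R}^3)$ together with the equivalence of the $H^{-\frac{1}{2}}_{W_\lambda}$- and $H^{-\frac{1}{2}}$-norms (granted by \eqref{eq:W-scalar-product}--\eqref{eq:equivalencenorms}) forces $\mathcal{D}(\mathcal{A}_\lambda)^{\perp_\lambda}\cap H^{-\frac{1}{2}}=\{0\}$, so $\chi=0$ and hence $\eta=\mathcal{A}_\lambda\xi = 3W_\lambda^{-1}(T_\lambda\xi+\alpha\xi)$. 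Plugging this into \eqref{eq:g-largep2-star} gives
\[
\int_{|\pp_2|<R}\widehat{g}(\pp_1,\pp_2)\,\ud\pp_2\;=\;4\pi R\,\widehat{\xi}(\pp_1)+\Big(\widehat{T_\lambda\xi}(\pp_1)+\alpha\,\widehat{\xi}(\pp_1)-\widehat{T_\lambda\xi}(\pp_1)\Big)+o(1)\;=\;(4\pi R+\alpha)\widehat{\xi}(\pp_1)+o(1),
\]
which is exactly \eqref{eq:g-TMS-generic} with $\xi_\circ=\xi$. The hypothesis $(T_\lambda+\alpha\mathbbm{1})\mathcal{D}(\mathcal{A}_\lambda)\subset H^{\frac{1}{2}}(\mathbb{R}^3)=\mathrm{ran}\,W_\lambda$ is exactly what makes $W_\lambda^{-1}(T_\lambda+\alpha\mathbbm{1})\xi$ well-defined in $H^{-\frac{1}{2}}(\mathbb{R}^3)$, so that formula \eqref{eq:Alambdaexpression} really defines an operator on $\mathcal{D}(\mathcal{A}_\lambda)$.

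For the converse (i)$\Rightarrow$(ii), I would apply Lemma \ref{eq:oneTMSfunction} to every $g\in\mathcal{D}(\mathring{H}_{\mathcal{A}_\lambda})$, obtaining the universal identity
\[
{\textstyle\frac{1}{3}}W_\lambda(\mathcal{A}_\lambda\xi+\chi)\;=\;T_\lambda\xi+\alpha\xi
\]
for all admissible pairs $(\xi,\chi)$. Fixing $\xi\in\mathcal{D}(\mathcal{A}_\lambda)$ and noting that by Theorem \ref{thm:generalclassification} one may choose $g$ with that $\xi$ and any $\chi\in\mathcal{D}(\mathcal{A}_\lambda)^{\perp_\lambda}\cap H^{-\frac{1}{2}}$, the right-hand side above is independent of $\chi$, so $W_\lambda\chi=0$; injectivity of $W_\lambda$ (Lemma \ref{lem:Wlambdaproperties}(ii)) then forces $\chi=0$. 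Hence $\mathcal{D}(\mathcal{A}_\lambda)^{\perp_\lambda}\cap H^{-\frac{1}{2}}=\{0\}$, i.e.\ $\mathcal{D}(\mathcal{A}_\lambda)$ is dense in $H^{-\frac{1}{2}}(\mathbb{R}^3)$. With $\chi=0$ the identity reduces to $\tfrac{1}{3}W_\lambda\mathcal{A}_\lambda\xi=T_\lambda\xi+\alpha\xi$; this simultaneously shows $(T_\lambda+\alpha\mathbbm{1})\mathcal{D}(\mathcal{A}_\lambda)\subset\mathrm{ran}\,W_\lambda=H^{\frac{1}{2}}(\mathbb{R}^3)$ and, after applying $W_\lambda^{-1}$, yields \eqref{eq:Alambdaexpression}.

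The step I expect to be the only genuinely subtle one is the density claim: one has to be careful that $\perp_\lambda$ refers to the $W_\lambda$-weighted inner product on a (possibly proper) closed subspace of $H^{-\frac{1}{2}}_{W_\lambda}(\mathbb{R}^3)$, and to invoke both the equivalence of norms from \eqref{eq:W-scalar-product}--\eqref{eq:equivalencenorms} and the injectivity of $W_\lambda$ in order to conclude that the only admissible $\chi$ is zero. Everything else is a direct substitution in the asymptotic formula \eqref{eq:g-largep2-star} from Lemma \ref{lem:largepasympt-star} combined with Lemma \ref{eq:oneTMSfunction}.
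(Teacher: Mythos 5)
Your proposal is correct and follows essentially the same route as the paper: the implication (ii) $\Rightarrow$ (i) is read off from the asymptotics \eqref{eq:g-largep2-star} of Lemma \ref{lem:largepasympt-star}, and (i) $\Rightarrow$ (ii) comes from the charge identity of Lemma \ref{eq:oneTMSfunction}, with the arbitrariness of $\chi$ in $\mathcal{D}(\mathcal{A}_\lambda)^{\perp_\lambda}\cap H^{-\frac{1}{2}}(\mathbb{R}^3)$ forcing that subspace to be trivial (hence density) and then yielding \eqref{eq:Alambdaexpression}. Your explicit appeal to the injectivity of $W_\lambda$ to conclude $\chi=0$ is just a slightly more spelled-out version of the paper's observation that the constraint on $\chi$ can only hold if the space of admissible $\chi$'s is trivial.
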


 \begin{proof}
  The implication (ii) $\Rightarrow$ (i) is obvious from Lemma \ref{lem:largepasympt-star}. Conversely, if \eqref{eq:TMSoncharges} is to be satisfied by every $g\in\mathcal{D}(\mathring{H}_{\mathcal{A}_\lambda})$, then owing to Lemma \ref{eq:oneTMSfunction}
  \[
   \begin{cases}
    T_\lambda\xi+\alpha\,\xi\in H^{\frac{1}{2}}(\mathbb{R}^3) \\
    \chi=3W_\lambda^{-1}(T_\lambda\xi+\alpha\,\xi)-\mathcal{A}_\lambda\xi
   \end{cases}
   \quad \forall\xi\in\mathcal{D}(\mathcal{A}_\lambda)\,,\;\forall\chi\in\mathcal{D}(\mathcal{A}_\lambda)^{\perp_\lambda}\cap H^{-\frac{1}{2}}(\mathbb{R}^3)\,.
  \]
  The first condition means precisely $(T_\lambda+\alpha\mathbbm{1})\mathcal{D}(\mathcal{A}_\lambda)\subset H^{\frac{1}{2}}(\mathbb{R}^3)$, and the second condition can only be satisfied if $\mathcal{D}(\mathcal{A}_\lambda)^{\perp_\lambda}\cap H^{-\frac{1}{2}}(\mathbb{R}^3)$ is trivial, namely when $\mathcal{D}(\mathcal{A}_\lambda)$ is dense in $H^{-\frac{1}{2}}(\mathbb{R}^3)$ and $\mathcal{A}_\lambda=3 W_\lambda^{-1}(T_\lambda+\alpha\mathbbm{1})$.
  \end{proof}

 \begin{theorem}\label{thm:globalTMSext}
  Let $\alpha\in\mathbb{R}$ and let $\mathscr{H}$ be an operator on $\cH_{\mathrm{b}}$. The following two possibilities are equivalent.
  \begin{itemize}
   \item[(i)] $\mathscr{H}$ is a Ter-Martirosyan Skornyakov self-adjoint extension of $\mathring{H}$ with inverse scattering length $\alpha$.
   \item[(ii)] There exists a subspace $\mathcal{D}\subset H^{-\frac{1}{2}}(\mathbb{R}^3)$ such that, for one and hence for all $\lambda>0$,
   \begin{itemize}
   \item[1.] $\mathcal{D}$ is dense in $H^{-\frac{1}{2}}(\mathbb{R}^3)$,
   \item[2.] $(T_\lambda+\alpha\mathbbm{1})\mathcal{D}\subset H^{\frac{1}{2}}(\mathbb{R}^3)$,
   \item[3.] the operator
   \begin{equation}\label{eq:Alfinally}
    \begin{split}
     \mathcal{A}_\lambda\;&:=\;3 W_\lambda^{-1}(T_\lambda+\alpha\mathbbm{1}) \\
     \mathcal{D}(\mathcal{A}_\lambda)\;&:=\;\mathcal{D}
    \end{split}
   \end{equation}
   is self-adjoint in $H^{-\frac{1}{2}}_{W_\lambda}(\mathbb{R}^3)$,
   \item[4.] $\mathscr{H}=\mathring{H}_{\mathcal{A}_\lambda}$.
   \end{itemize}   
  \end{itemize}
   When (i) or (ii) are matched, for one and hence for all $\lambda>0$ one has
   \begin{equation}\label{eq:HAl-tms-dom}
  \mathcal{D}(\mathscr{H})\;=\;
  \left\{g=\phi^\lambda+u_\xi^\lambda\left|\!
  \begin{array}{c}
   \phi^\lambda\in H^2_\mathrm{b}(\mathbb{R}^3\times\mathbb{R}^3)\,,\;\xi\in\mathcal{D}\,, \\
   \displaystyle\int_{\mathbb{R}^3}\widehat{\phi^\lambda}(\pp_1,\pp_2)\,\ud\pp_2\,=\,(\widehat{T_\lambda\xi})(\pp_1)+\alpha\,\widehat{\xi}(\pp_1)
  \end{array}
  \!\!\!\right.\right\}
  \end{equation}
  where for each $\lambda$ the above decomposition of $g$ in terms of $\phi^\lambda$ and $\xi$ is unique, and 
  \begin{equation}\label{eq:HAl-tms-act}
   (\mathscr{H}+\lambda\mathbbm{1})g\;=\;(\mathring{H}_F+\lambda\mathbbm{1})\phi^\lambda\,.
  \end{equation}	
   \end{theorem}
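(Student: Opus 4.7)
The plan is to reduce Theorem \ref{thm:globalTMSext} to a $\lambda$-independent reformulation of Lemma \ref{lem:generalTMSext}, using the Kre{\u\i}n--Vi\v{s}ik--Birman bijection of Theorem \ref{thm:generalclassification} together with the $\lambda$-invariance of the Birman-operator domain (Remark \ref{rem:samedomains}) and the smoothing bound of Lemma \ref{lem:Tlambdaproperties}(iv). Neither implication will demand new estimates: the substantive analytic content is entirely concentrated in Lemma \ref{lem:generalTMSext} (which fixes the form of $\mathcal{A}_\lambda$ at a single shift) and in Lemma \ref{lem:Tlambdaproperties}(iv) (which propagates the mapping condition $2$ across shifts).

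For the implication (ii) $\Rightarrow$ (i) I would fix a shift $\lambda$ at which items 1--4 are declared to hold. Self-adjointness of $\mathcal{A}_\lambda$ on $H^{-\frac{1}{2}}_{W_\lambda}(\mathbb{R}^3)$ combined with Theorem \ref{thm:generalclassification} already certifies that $\mathring{H}_{\mathcal{A}_\lambda}$ is a self-adjoint extension of $\mathring{H}$; density of $\mathcal{D}$ in $H^{-\frac{1}{2}}(\mathbb{R}^3)$, the inclusion $(T_\lambda+\alpha\mathbbm{1})\mathcal{D}\subset H^{\frac{1}{2}}(\mathbb{R}^3)$, and the identity $\mathcal{A}_\lambda=3W_\lambda^{-1}(T_\lambda+\alpha\mathbbm{1})$ are exactly the three ingredients feeding the sufficient direction of Lemma \ref{lem:generalTMSext}, which then yields the TMS asymptotics \eqref{eq:g-largep2-TMS0} for every $g\in\mathcal{D}(\mathring{H}_{\mathcal{A}_\lambda})$. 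In the converse direction (i) $\Rightarrow$ (ii) I would pick any $\lambda>0$ and invoke Theorem \ref{thm:generalclassification} to obtain a unique self-adjoint labelling operator $\mathcal{A}_\lambda$ with $\mathscr{H}=\mathring{H}_{\mathcal{A}_\lambda}$; the necessary direction of Lemma \ref{lem:generalTMSext} then forces items 1--3, while Remark \ref{rem:samedomains} certifies that $\mathcal{D}:=\mathcal{D}(\mathcal{A}_\lambda)$ is independent of $\lambda$, whence items 1--4 are simultaneously met at every shift.

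The ``for one and hence for all $\lambda>0$'' equivalence internal to (ii) is then a direct corollary of Lemma \ref{lem:Tlambdaproperties}(iv): if item 2 holds at some $\lambda_0$, then for every $\lambda>0$ and every $\xi\in\mathcal{D}$ one has
\[
(T_\lambda+\alpha\mathbbm{1})\xi\;=\;(T_{\lambda_0}+\alpha\mathbbm{1})\xi+(T_\lambda-T_{\lambda_0})\xi\;\in\;H^{\frac{1}{2}}(\mathbb{R}^3)\,,
\]
and self-adjointness of the resulting $\mathcal{A}_\lambda$ at the new shift then follows from the bijectivity of Theorem \ref{thm:generalclassification} applied to the fixed extension $\mathring{H}_{\mathcal{A}_{\lambda_0}}$. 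Finally, the explicit description \eqref{eq:HAl-tms-dom}--\eqref{eq:HAl-tms-act} is read off from \eqref{eq:domDHA}--\eqref{eq:domDHA-actionDHA}: density of $\mathcal{D}$ in $H^{-\frac{1}{2}}_{W_\lambda}(\mathbb{R}^3)$ collapses the orthogonal component $\chi$ to zero, so $\eta=\mathcal{A}_\lambda\xi=3W_\lambda^{-1}(T_\lambda+\alpha\mathbbm{1})\xi$, which via \eqref{eq:phi-largep2-star} is equivalent to the integral constraint prescribed in \eqref{eq:HAl-tms-dom}; uniqueness of the decomposition $g=\phi^\lambda+u_\xi^\lambda$ is part of Lemma \ref{lem:Hstaretc}(iii), and \eqref{eq:HAl-tms-act} is a transcription of \eqref{eq:domDHA-actionDHA} using $\mathcal{D}(\mathring{H}_F)=H_\mathrm{b}^2(\mathbb{R}^3\times\mathbb{R}^3)$. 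The only step I expect to require a moment of care is the verification that, when passing from one shift to another, the labelling operator produced by Theorem \ref{thm:generalclassification} retains the canonical form $3W_{\lambda'}^{-1}(T_{\lambda'}+\alpha\mathbbm{1})$ on the same domain $\mathcal{D}$; this is precisely where the combination of Remark \ref{rem:samedomains} with the necessity direction of Lemma \ref{lem:generalTMSext} is indispensable.
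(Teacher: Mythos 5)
Your proposal is correct and follows essentially the same route as the paper: both directions are reduced to Lemma \ref{lem:generalTMSext}, the $\lambda$-independence of the charge domain is supplied by Remark \ref{rem:samedomains}, and the explicit formulas \eqref{eq:HAl-tms-dom}--\eqref{eq:HAl-tms-act} are read off from Theorem \ref{thm:generalclassification} together with \eqref{eq:phi-largep2-TMS0}. The only (harmless) deviation is in propagating condition 2 across shifts: you invoke the bound of Lemma \ref{lem:Tlambdaproperties}(iv) directly, whereas the paper gets the same conclusion for free by first establishing (i) from (ii) at a single $\lambda_\circ$ and then re-applying the already-proved implication (i) $\Rightarrow$ (ii) at every other $\lambda$.
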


   \begin{proof}
    Assume that every $g\in\mathcal{D}(\mathscr{H})$ satisfies the TMS asymptotics \eqref{eq:g-largep2-TMS0} with the same $\alpha$.  Applying Lemma \ref{lem:generalTMSext} one obtains all four conditions 1.~through 4.~listed in part (ii), for every $\lambda>0$, except that $\mathcal{D}$ is replaced by $\mathcal{D}(\mathcal{A}_\lambda)$ for each considered $\lambda$. But all such $\mathcal{D}(\mathcal{A}_\lambda)$'s are in fact the same subspace (Remark \ref{rem:samedomains}). 
%
    The proof of (i) $\Rightarrow$ (ii) is completed.
    
    Conversely, assume that (ii) holds true for \emph{one} $\lambda_\circ>0$. Applying Lemma \ref{lem:generalTMSext} one deduces that $\mathscr{H}$ is a Ter-Martirosyan Skornyakov self-adjoint extension of $\mathring{H}$ with inverse scattering length $\alpha$. Since we know already that (i) $\Rightarrow$ (ii), then condition (ii) holds true for \emph{any} other $\lambda>0$ as well. This establishes the full implication (ii) $\Rightarrow$ (i).
    
    Under condition (i), or equivalently (ii), \eqref{eq:HAl-tms-dom}-\eqref{eq:HAl-tms-act} then follow from \eqref{eq:domDHA}-\eqref{eq:domDHA-actionDHA} of Theorem \ref{thm:generalclassification}(i) and from \eqref{eq:phi-largep2-TMS0} and \eqref{eq:Alfinally}. 
   \end{proof}

  It is worth stressing that formulas \eqref{eq:HAl-tms-dom}-\eqref{eq:HAl-tms-act} alone, considered for some subspace $\mathcal{D}$ of $H^{-\frac{1}{2}}(\mathbb{R}^3)$, evidently define an extension $\mathscr{H}$ of $\mathring{H}$; however, they do not necessarily make $\mathscr{H}$ a self-adjoint extension. We formulate this point in the form of a separate corollary for later purposes.

  \begin{corollary}\label{cor:globalTMSext}
   Let $\alpha\in\mathbb{R}$, $\lambda>0$, let $\mathcal{D}$ be a subspace of $H^{-\frac{1}{2}}(\mathbb{R}^3)$, and let $\mathscr{H}$ be the operator defined by \eqref{eq:HAl-tms-dom}-\eqref{eq:HAl-tms-act}. Then $\mathscr{H}$ is self-adjoint in $\cH_\mathrm{b}$ if and only if $\mathcal{D}$ is dense in $H^{-\frac{1}{2}}(\mathbb{R}^3)$, $(T_\lambda+\alpha\mathbbm{1})\mathcal{D}\subset H^{\frac{1}{2}}(\mathbb{R}^3)$, and the operator \eqref{eq:Alfinally} is self-adjoint in $H^{-\frac{1}{2}}_{W_\lambda}(\mathbb{R}^3)$.   
  \end{corollary}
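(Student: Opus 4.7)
The strategy is to recognise Corollary~\ref{cor:globalTMSext} as a direct consequence of Theorem~\ref{thm:globalTMSext}. The substantive preliminary observation, to be established in any case, is that irrespective of the subspace $\mathcal{D}$ the operator $\mathscr{H}$ defined by \eqref{eq:HAl-tms-dom}--\eqref{eq:HAl-tms-act} automatically extends $\mathring{H}$ and has every element of its domain satisfying the TMS asymptotics \eqref{eq:g-largep2-TMS0} with the prescribed $\alpha$.

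First I would verify $\mathring{H}\subset\mathscr{H}$. Taking $\xi=0$ in \eqref{eq:HAl-tms-dom} reduces the trace constraint to $\int_{\mathbb{R}^3}\widehat{\phi^\lambda}(\pp_1,\pp_2)\,\ud\pp_2=0$, which combined with the bosonic symmetry of $\phi^\lambda$ and \eqref{eq:triplevanishing} characterises exactly $\phi^\lambda\in\mathcal{D}(\mathring{H})$; on such elements \eqref{eq:HAl-tms-act} reproduces $(\mathring{H}+\lambda\mathbbm{1})\phi^\lambda$. Next I would verify the TMS asymptotics: for a generic $g=\phi^\lambda+u_\xi^\lambda\in\mathcal{D}(\mathscr{H})$, I decompose $\phi^\lambda\in H^2_\mathrm{b}(\mathbb{R}^3\times\mathbb{R}^3)$ via Lemma~\ref{lem:Hstaretc}(ii) as $\widehat{\phi^\lambda}=\widehat{f^\lambda}+(\pp_1^2+\pp_2^2+\pp_1\cdot\pp_2+\lambda)^{-1}\widehat{u_\eta^\lambda}$ for a unique pair $f^\lambda\in\mathcal{D}(\mathring{H})$ and $\eta\in H^{-\frac{1}{2}}(\mathbb{R}^3)$. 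Identity \eqref{eq:phi-largep2-star} then recasts the trace constraint of \eqref{eq:HAl-tms-dom} as $\tfrac{1}{3}W_\lambda\eta=T_\lambda\xi+\alpha\xi$, forcing $(T_\lambda+\alpha\mathbbm{1})\xi\in\mathrm{ran}\,W_\lambda=H^{\frac{1}{2}}(\mathbb{R}^3)$ and $\eta=3W_\lambda^{-1}(T_\lambda\xi+\alpha\xi)$. Substituted back into the large-$R$ expansion \eqref{eq:g-largep2-star}, this makes the $\tfrac{1}{3}\widehat{W_\lambda\eta}$ and $-\widehat{T_\lambda\xi}$ contributions compensate down to $\alpha\widehat{\xi}(\pp_1)$, producing exactly \eqref{eq:g-largep2-TMS0}.

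With these two facts in hand, both implications become direct applications of Theorem~\ref{thm:globalTMSext}. If $\mathscr{H}$ is self-adjoint, then by the preceding $\mathscr{H}$ is a TMS self-adjoint extension of $\mathring{H}$ with parameter $\alpha$, and Theorem~\ref{thm:globalTMSext}(i)$\Rightarrow$(ii) supplies the three stated conditions, with $\mathcal{D}=\mathcal{D}(\mathcal{A}_\lambda)$ identified through the $\lambda$-independence and uniqueness of the singular charge $\xi$ (Lemma~\ref{lem:chargexiofg} together with Remark~\ref{rem:samedomains}). Conversely, if the three conditions hold, Theorem~\ref{thm:globalTMSext}(ii)$\Rightarrow$(i) produces the TMS self-adjoint extension $\mathring{H}_{\mathcal{A}_\lambda}$, whose explicit description \eqref{eq:HAl-tms-dom}--\eqref{eq:HAl-tms-act} coincides with $\mathscr{H}$; hence $\mathscr{H}=\mathring{H}_{\mathcal{A}_\lambda}$ is self-adjoint.

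The delicate point I anticipate is the second preliminary step: the $\xi=0$ argument alone carries no TMS information, and one must pass through the canonical $\mathcal{D}(\mathring{H}^*)$-decomposition of $\phi^\lambda$ and the precise identity \eqref{eq:phi-largep2-star} before the trace constraint in \eqref{eq:HAl-tms-dom} can be recognised as the charge relation $\eta=3W_\lambda^{-1}(T_\lambda\xi+\alpha\xi)$ encoding TMS with parameter $\alpha$. Once this recognition is achieved, the rest amounts to reading off the content of Theorem~\ref{thm:globalTMSext}.
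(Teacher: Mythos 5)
Your proposal is correct and takes essentially the route the paper intends: Corollary~\ref{cor:globalTMSext} is stated there without a separate proof, as an immediate reformulation of Theorem~\ref{thm:globalTMSext}, and your two preliminary observations (that \eqref{eq:HAl-tms-dom}--\eqref{eq:HAl-tms-act} always define an extension of $\mathring{H}$, and that via \eqref{eq:phi-largep2-star} the trace constraint is exactly the charge relation $\eta=3W_\lambda^{-1}(T_\lambda+\alpha\mathbbm{1})\xi$, which forces the TMS asymptotics \eqref{eq:g-largep2-TMS0}) are precisely what legitimises that reformulation. Both implications then reduce to the two directions of Theorem~\ref{thm:globalTMSext} exactly as you describe.
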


  Thus, the quest of Ter-Martirosyan Skornyakov self-adjoint extensions of $\mathscr{H}$ in $\cH_{\mathrm{b}}$ is boiled down to the self-adjointness problem of $W_\lambda^{-1}(T_\lambda+\alpha\mathbbm{1})$ in $H^{-\frac{1}{2}}_{W_\lambda}(\mathbb{R}^3)$ with domain $\mathcal{D}$, hence in practice to the \emph{problem of finding a domain of self-adjointness for the formal action $\xi\mapsto W_\lambda^{-1}(T_\lambda+\alpha\mathbbm{1})\xi$}. This task actually constitutes the hard part of the rigorous modelling of physically meaningful Hamiltonians of zero-range interactions for the considered bosonic trimer.

  For an operator satisfying either condition of Theorem \ref{thm:globalTMSext} we shall use the natural notation $\mathscr{H}_\alpha$, so as to emphasize the only relevant parameter of the considered Ter-Martirosyan Skornyakov (self-adjoint) extension of $\mathring{H}$. This must be done keeping in mind that in principle for the same $\alpha\in\mathbb{R}$ there could be distinct operators of the form $\mathscr{H}_\alpha$, that is, distinct domains of self-adjointness for $W_\lambda^{-1}(T_\lambda+\alpha\mathbbm{1})$ in $H^{-\frac{1}{2}}_{W_\lambda}(\mathbb{R}^3)$ (Corollary \ref{cor:globalTMSext}), in analogy with the familiar existence of a variety of distinct domains of self-adjointness in $L^2(0,1)$ for the same differential operator $-\frac{\ud^2}{\ud x^2}$.

  \begin{remark}\label{rem:remonsym}
   The reasonings that led to Theorem \ref{thm:globalTMSext} have an obvious counterpart for Ter-Martirosyan Skornyakov \emph{symmetric} extensions of $\mathring{H}$. 
   \begin{itemize}
    \item[(i)] Lemma \ref{lem:generalTMSext} is equally valid when $\mathcal{A}_\lambda$ is only assumed to be symmetric in $H^{-\frac{1}{2}}_{W_\lambda}(\mathbb{R}^3)$ and the corresponding $\mathring{H}_{\mathcal{A}_\lambda}$ is a Ter-Martirosyan Skornyakov symmetric extension of $\mathring{H}$, based on the observations made in Remark \ref{rem:TMSsym}. 
    \item[(ii)] By means of such `symmetric version' of Lemma \ref{lem:generalTMSext}, the proof of Theorem \ref{thm:globalTMSext} can be straightforwardly adjusted so as to establish that:
    
    \noindent \emph{$\mathscr{H}$ is a Ter-Martirosyan Skornyakov symmetric extension of $\mathring{H}$ with inverse scattering length $\alpha\in\mathbb{R}$ if and only if there exists a subspace $\mathcal{D}\subset H^{-\frac{1}{2}}(\mathbb{R}^3)$ such that, for one and hence for all $\lambda>0$, $\mathcal{D}$ is dense in $H^{-\frac{1}{2}}(\mathbb{R}^3)$, $(T_\lambda+\alpha\mathbbm{1})\mathcal{D}\subset H^{\frac{1}{2}}(\mathbb{R}^3)$, the operator $\mathcal{A}_\lambda:=3 W_\lambda^{-1}(T_\lambda+\alpha\mathbbm{1})$ is symmetric in $H^{-\frac{1}{2}}_{W_\lambda}(\mathbb{R}^3)$ on the domain $\mathcal{D}$, and $\mathscr{H}=\mathring{H}_{\mathcal{A}_\lambda}$.}    
%
   \end{itemize}
  \end{remark}

  \subsection{Symmetry and self-adjointness of the TMS parameter}~

  As emerged in Subsect.~\ref{sec:generalitiesTMSext}, the operator \eqref{eq:Alfinally} is the correct Birman operator labelling symmetric or self-adjoint TMS extensions of $\mathring{H}$ in terms of the general parametrisation provided by Theorem \ref{thm:generalclassification} (and Remarks \ref{rem:TMSsym} and \ref{rem:remonsym}).

%
%

  The symmetry or self-adjointness, in the respective Hilbert spaces, of the auxiliary operators $\mathcal{A}_\lambda$ and $T_\lambda$ on the domain $\mathcal{D}$ are closely related (albeit deceptively, in a sense), as we shall now discuss.

  To avoid ambiguities, let us reserve the standard notation $T^*$, $\overline{T}$, etc., for the adjoint of $T$, its operator closure, and so on, with respect to the underlying $L^2$-space (as done for $\mathring{H}^*$ as an operator on $\cH_{\mathrm{b}}$), which in this context shall be $L^2(\mathbb{R}^3)$, and let us write instead $\mathcal{A}_\lambda^\star$, $\overline{\mathcal{A}_\lambda}^\lambda$, $\perp_\lambda$, etc., with reference to $H^{-\frac{1}{2}}_{W_\lambda}(\mathbb{R}^3)$.

  \begin{lemma}\label{lem:symsym}
   Let $\lambda>0$, $\alpha\in\mathbb{R}$, and let $\mathcal{D}$ be a dense subspace of $L^2(\mathbb{R}^3)$ such that $(T_\lambda+\alpha\mathbbm{1})\mathcal{D}\subset H^{\frac{1}{2}}(\mathbb{R}^3)$. Consider both $\mathcal{A}_\lambda:=3W_\lambda^{-1}(T_\lambda+\alpha\mathbbm{1})$ and $T_\lambda$ as operators with domain $\mathcal{D}$. Then
   \[
    \mathcal{A}_\lambda\;\subset\;\mathcal{A}_\lambda^\star\qquad \Leftrightarrow \qquad T_\lambda\;\subset\;T_\lambda^*\,,
   \]
  that is, the symmetry of $\mathcal{A}_\lambda$ in $H^{-\frac{1}{2}}_{W_\lambda}(\mathbb{R}^3)$ is equivalent to the symmetry of $T_\lambda$ in $L^2(\mathbb{R}^3)$.  
  \end{lemma}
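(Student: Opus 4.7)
The plan is to reduce the symmetry test for $\mathcal{A}_\lambda$ in $H^{-\frac{1}{2}}_{W_\lambda}(\mathbb{R}^3)$ to the symmetry test for $T_\lambda$ in $L^2(\mathbb{R}^3)$ by unwinding the two $H^{-\frac{1}{2}}_{W_\lambda}$-inner products through the defining identity \eqref{eq:W-scalar-product}, namely $\langle u,v\rangle_{H^{-\frac{1}{2}}_{W_\lambda}}=\langle u,W_\lambda v\rangle_{H^{-\frac{1}{2}},H^{\frac{1}{2}}}$. The crucial preliminary remark is that, under the standing hypotheses $\mathcal{D}\subset L^2(\mathbb{R}^3)$ and $(T_\lambda+\alpha\mathbbm{1})\mathcal{D}\subset H^{\frac{1}{2}}(\mathbb{R}^3)\subset L^2(\mathbb{R}^3)$, every duality pairing that will appear below has one entry in $L^2$ and the other in $H^{\frac{1}{2}}\subset L^2$, so it collapses to the ordinary $L^2$ scalar product.

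Concretely, for $\xi,\eta\in\mathcal{D}$ I would first use $W_\lambda\mathcal{A}_\lambda=3(T_\lambda+\alpha\mathbbm{1})$ from \eqref{eq:Alfinally} to compute
\begin{equation*}
\langle \xi,\mathcal{A}_\lambda\eta\rangle_{H^{-\frac{1}{2}}_{W_\lambda}}\;=\;\langle \xi,W_\lambda\mathcal{A}_\lambda\eta\rangle_{H^{-\frac{1}{2}},H^{\frac{1}{2}}}\;=\;3\,\langle \xi,(T_\lambda+\alpha\mathbbm{1})\eta\rangle_{L^2(\mathbb{R}^3)}\,,
\end{equation*}
and, invoking the conjugate symmetry of $\langle\cdot,\cdot\rangle_{H^{-\frac{1}{2}}_{W_\lambda}}$ as a bona fide Hilbert scalar product (Lemma \ref{lem:Wlambdaproperties}(i), which identifies it with $\langle u_\cdot,u_\cdot\rangle_\cH$), dually
\begin{equation*}
\langle \mathcal{A}_\lambda\xi,\eta\rangle_{H^{-\frac{1}{2}}_{W_\lambda}}\;=\;\overline{\langle \eta,\mathcal{A}_\lambda\xi\rangle_{H^{-\frac{1}{2}}_{W_\lambda}}}\;=\;3\,\overline{\langle \eta,(T_\lambda+\alpha\mathbbm{1})\xi\rangle_{L^2(\mathbb{R}^3)}}\;=\;3\,\langle (T_\lambda+\alpha\mathbbm{1})\xi,\eta\rangle_{L^2(\mathbb{R}^3)}\,.
\end{equation*}
This little circumlocution through conjugate symmetry is needed because $\mathcal{A}_\lambda\xi$ sits a priori only in $H^{-\frac{1}{2}}(\mathbb{R}^3)$, not in $L^2$, so the pairing $\langle \mathcal{A}_\lambda\xi,W_\lambda\eta\rangle_{H^{-\frac{1}{2}},H^{\frac{1}{2}}}$ cannot be collapsed to an $L^2$-inner product \emph{on the left entry}; the detour transfers the one-in-$L^2$ factor to the right slot, where $(T_\lambda+\alpha\mathbbm{1})\xi\in H^{\frac{1}{2}}$ does permit the collapse.

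Subtracting the two displays and using $\alpha\in\mathbb{R}$ yields
\begin{equation*}
\langle \xi,\mathcal{A}_\lambda\eta\rangle_{H^{-\frac{1}{2}}_{W_\lambda}}-\langle \mathcal{A}_\lambda\xi,\eta\rangle_{H^{-\frac{1}{2}}_{W_\lambda}}\;=\;3\bigl(\langle \xi,T_\lambda\eta\rangle_{L^2}-\langle T_\lambda\xi,\eta\rangle_{L^2}\bigr)
\end{equation*}
for every $\xi,\eta\in\mathcal{D}$, from which the equivalence $\mathcal{A}_\lambda\subset\mathcal{A}_\lambda^\star\Leftrightarrow T_\lambda\subset T_\lambda^*$ is immediate by running the implication in either direction. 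The only real care demanded — presumably the point the author's hint ``deceptively, in a sense'' is flagging — is the strict bookkeeping of which ambient Hilbert space hosts each bracket at each passage, together with the verification that the $L^2$-collapse of the $(H^{-\frac{1}{2}},H^{\frac{1}{2}})$-pairing is genuine and not merely formal, a verification afforded precisely by the hypotheses $\mathcal{D}\subset L^2$ and $(T_\lambda+\alpha\mathbbm{1})\mathcal{D}\subset H^{\frac{1}{2}}$.
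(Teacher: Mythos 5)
Your proof is correct and rests on the same mechanism as the paper's own proof: the identity $\langle\xi,\mathcal{A}_\lambda\eta\rangle_{H^{-\frac{1}{2}}_{W_\lambda}}=3\langle\xi,(T_\lambda+\alpha\mathbbm{1})\eta\rangle_{L^2}$ obtained by unwinding \eqref{eq:W-scalar-product}, with the hypotheses guaranteeing that the duality pairing collapses to the $L^2$ scalar product. The paper is merely terser, evaluating only the diagonal case $\xi=\eta$ and invoking the equivalence of symmetry with the reality of the quadratic form on a dense domain, whereas you carry the full sesquilinear identity through conjugate symmetry; both routes are sound (only the one-line remark that density of $\mathcal{D}$ in $L^2$ gives density in $H^{-\frac{1}{2}}_{W_\lambda}(\mathbb{R}^3)$, so that $\mathcal{A}_\lambda^\star$ exists, deserves to be stated explicitly as the paper does).
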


  \begin{proof}
   $\mathcal{D}$ is dense in $L^2(\mathbb{R}^3)$ and hence in $H^{-\frac{1}{2}}(\mathbb{R}^3)\cong H^{-\frac{1}{2}}_{W_\lambda}(\mathbb{R}^3)$.  
   Owing to \eqref{eq:W-scalar-product},
   \[
    \langle \xi, \mathcal{A}_\lambda\xi\rangle_{H^{-\frac{1}{2}}_{W_\lambda}}\;=\;3\langle\xi,(T_\lambda+\alpha\mathbbm{1})\xi\rangle_{L^2}\qquad\forall\xi\in\mathcal{D}\,.
   \]
  Therefore, the reality of the l.h.s.~is equivalent to the reality of the r.h.s.
  \end{proof}

  \begin{lemma}\label{lem:two-selfadj-problems}
   Let $\lambda>0$, $\alpha\in\mathbb{R}$, and let $\mathcal{D}$ be a dense subspace of $L^2(\mathbb{R}^3)$ such that $(T_\lambda+\alpha\mathbbm{1})\mathcal{D}\subset H^{\frac{1}{2}}(\mathbb{R}^3)$. Consider both $\mathcal{A}_\lambda:=3W_\lambda^{-1}(T_\lambda+\alpha\mathbbm{1})$ and $T_\lambda$ as operators with domain $\mathcal{D}$. Assume that $T_\lambda\,=\,T_\lambda^*$. Then,
   \begin{itemize}
    \item[(i)] $\mathcal{D}(\mathcal{A}_\lambda^\star)\cap L^2(\mathbb{R}^3)\,=\,\mathcal{D}(\mathcal{A}_\lambda)=\mathcal{D}$;
    \item[(ii)] $\mathcal{A}_\lambda\,=\,\mathcal{A}_\lambda^\star$ if and only if $\mathcal{D}(\mathcal{A}_\lambda^\star)\subset L^2(\mathbb{R}^3)$.
   \end{itemize}
  \end{lemma}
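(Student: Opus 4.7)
The plan is to unpack the defining identity of the $H^{-\frac{1}{2}}_{W_\lambda}$-adjoint of $\mathcal{A}_\lambda$ into an $L^2$-statement about the adjoint of $T_\lambda|_\mathcal{D}$, and then exploit the hypothesis $T_\lambda=T_\lambda^*$. First I would fix $\xi\in\mathcal{D}(\mathcal{A}_\lambda^\star)$ with $\mathcal{A}_\lambda^\star\xi=\zeta\in H^{-\frac{1}{2}}_{W_\lambda}(\mathbb{R}^3)$, so that by definition
\[
 \langle \xi,\mathcal{A}_\lambda\eta\rangle_{H^{-\frac{1}{2}}_{W_\lambda}}\;=\;\langle\zeta,\eta\rangle_{H^{-\frac{1}{2}}_{W_\lambda}}\qquad\forall\eta\in\mathcal{D}\,.
\]
Using \eqref{eq:W-scalar-product}, the plug-in $\mathcal{A}_\lambda=3W_\lambda^{-1}(T_\lambda+\alpha\mathbbm{1})$, and the symmetry relation $\langle\zeta,W_\lambda\eta\rangle_{H^{-\frac{1}{2}},H^{\frac{1}{2}}}=\overline{\langle\eta,W_\lambda\zeta\rangle_{H^{-\frac{1}{2}},H^{\frac{1}{2}}}}$ inherited from Lemma \ref{lem:Wlambdaproperties}(i) (which is just $\langle u_\zeta,u_\eta\rangle_\cH=\overline{\langle u_\eta,u_\zeta\rangle_\cH}$), the identity becomes
\[
 3\langle \xi,(T_\lambda+\alpha\mathbbm{1})\eta\rangle_{H^{-\frac{1}{2}},H^{\frac{1}{2}}}\;=\;\overline{\langle \eta,W_\lambda\zeta\rangle_{H^{-\frac{1}{2}},H^{\frac{1}{2}}}}\qquad\forall\eta\in\mathcal{D}\,.
\]

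The crucial observation is then that when $\xi\in L^2(\mathbb{R}^3)$ the pairings collapse to ordinary $L^2$-scalar products: the left-hand side equals $3\langle\xi,(T_\lambda+\alpha\mathbbm{1})\eta\rangle_{L^2}$ because $\xi,(T_\lambda+\alpha\mathbbm{1})\eta\in L^2$, and the right-hand side equals $\langle W_\lambda\zeta,\eta\rangle_{L^2}$ because $\eta\in L^2$ and $W_\lambda\zeta\in H^{\frac{1}{2}}\subset L^2$. Hence
\[
 \langle\xi,(T_\lambda+\alpha\mathbbm{1})\eta\rangle_{L^2}\;=\;\big\langle\tfrac{1}{3}W_\lambda\zeta,\eta\big\rangle_{L^2}\qquad\forall\eta\in\mathcal{D}\,,
\]
which is precisely the statement that $\xi$ belongs to $\mathcal{D}((T_\lambda|_\mathcal{D}+\alpha\mathbbm{1})^*)$ in $L^2(\mathbb{R}^3)$, with adjoint image $\tfrac{1}{3}W_\lambda\zeta-\alpha\xi$. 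The hypothesis $T_\lambda=T_\lambda^*$ then forces $\xi\in\mathcal{D}$, proving $\mathcal{D}(\mathcal{A}_\lambda^\star)\cap L^2(\mathbb{R}^3)\subset\mathcal{D}$. For the reverse inclusion, Lemma \ref{lem:symsym} (using that $T_\lambda=T_\lambda^*$ in particular implies $T_\lambda\subset T_\lambda^*$) gives $\mathcal{A}_\lambda\subset\mathcal{A}_\lambda^\star$, so $\mathcal{D}=\mathcal{D}(\mathcal{A}_\lambda)\subset\mathcal{D}(\mathcal{A}_\lambda^\star)\cap L^2(\mathbb{R}^3)$, completing part (i).

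Part (ii) would be a direct consequence of part (i): the implication $\mathcal{A}_\lambda=\mathcal{A}_\lambda^\star\Rightarrow\mathcal{D}(\mathcal{A}_\lambda^\star)=\mathcal{D}\subset L^2(\mathbb{R}^3)$ is immediate, while under the assumption $\mathcal{D}(\mathcal{A}_\lambda^\star)\subset L^2(\mathbb{R}^3)$ part (i) yields $\mathcal{D}(\mathcal{A}_\lambda^\star)=\mathcal{D}(\mathcal{A}_\lambda^\star)\cap L^2(\mathbb{R}^3)=\mathcal{D}=\mathcal{D}(\mathcal{A}_\lambda)$; combined with the symmetry $\mathcal{A}_\lambda\subset\mathcal{A}_\lambda^\star$ from Lemma \ref{lem:symsym}, this forces $\mathcal{A}_\lambda=\mathcal{A}_\lambda^\star$. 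The main obstacle in this plan is keeping the three distinct pairings straight -- the $H^{-\frac{1}{2}}_{W_\lambda}$ scalar product, the $H^{-\frac{1}{2}}$/$H^{\frac{1}{2}}$ duality, and the $L^2$ inner product -- and in particular carefully justifying the step where, once $\xi$ is assumed to lie in $L^2(\mathbb{R}^3)$, all three can be identified on the relevant vectors; everything else is essentially a bookkeeping exercise leveraging the self-duality of $W_\lambda$ and the self-adjointness of $T_\lambda$ on $\mathcal{D}$.
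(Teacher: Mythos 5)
Your proposal is correct and follows essentially the same route as the paper: both reduce membership in $\mathcal{D}(\mathcal{A}_\lambda^\star)$ for an $L^2$-element to membership in $\mathcal{D}(T_\lambda^*)$ by converting the $H^{-\frac{1}{2}}_{W_\lambda}$-pairing into the $L^2$-pairing via \eqref{eq:W-scalar-product}, and then invoke $T_\lambda=T_\lambda^*$, with part (ii) following from part (i) together with the symmetry $\mathcal{A}_\lambda\subset\mathcal{A}_\lambda^\star$ from Lemma \ref{lem:symsym}. The only (immaterial) difference is that the paper expresses adjoint-domain membership through the boundedness of the functional $\xi\mapsto\langle\eta,\mathcal{A}_\lambda\xi\rangle_{H^{-\frac{1}{2}}_{W_\lambda}}$, whereas you use the equivalent representing-vector formulation.
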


  \begin{proof} Clearly (ii) follows from (i). Concerning (i), the inclusion $\mathcal{D}(\mathcal{A}_\lambda^\star)\cap L^2(\mathbb{R}^3)\supset\mathcal{D}(\mathcal{A}_\lambda)$ is obvious. Let now $\eta\in\mathcal{D}(\mathcal{A}_\lambda^\star)\cap L^2(\mathbb{R}^3)$. Then, for some $c_\eta>0$,
   \[
   \Big| \langle\eta,\mathcal{A}_\lambda\xi\rangle_{H^{-\frac{1}{2}}_{W_\lambda}}\Big|\;\leqslant\; c_\eta\,\|\xi\|_{L^2}^2\qquad\forall\xi\in\mathcal{D}=\mathcal{D}(\mathcal{A}_\lambda)\,.
  \]
    Equivalently, owing to \eqref{eq:W-scalar-product},
  \[
   \big| \langle\eta,(T_\lambda+\alpha\mathbbm{1})\xi\rangle_{L^2}\big|\;\leqslant\; {\textstyle\frac{1}{3}}c_\eta\,\|\xi\|_{L^2}^2\qquad\forall\xi\in\mathcal{D}=\mathcal{D}(T_\lambda)\,.
  \]
   Therefore, $\eta\in\mathcal{D}(T_\lambda^*)=\mathcal{D}(T_\lambda)=\mathcal{D}(\mathcal{A}_\lambda)$.  
  \end{proof}

 \subsection{TMS extensions in sectors of definite angular momentum}\label{sec:definite-ell-general}~

 As the maps $\xi\mapsto T_\lambda\xi$, $\xi\mapsto W_\lambda\xi$, $\xi\mapsto W_\lambda^{-1}\xi$ all commute with the rotations in $\mathbb{R}^3$ (Subsect.~\ref{sec:Tlambdaoperator}), and so too does therefore the map $\xi\mapsto W_\lambda^{-1}T_\lambda$, then the TMS parameter $\mathcal{A}_\lambda=3W_\lambda^{-1}(T_\lambda+\alpha\mathbbm{1})$ is naturally reduced in each sector of definite angular momentum.

 More precisely, with respect to the decomposition \eqref{eq:bigdecomp}-\eqref{eq:xihatangularexpansion}, and following the same reasoning therein, one then has
\begin{equation}\label{eq:WlambdaWlambdaell}
 \widehat{W_\lambda\xi}\;=\;\sum_{\ell=0}^\infty \widehat{W_\lambda^{(\ell)}\xi^{(\ell)}}\,,
\end{equation}
where each $W_\lambda^{(\ell)}$ is non-trivial only radially. Moreover,
 \begin{equation}
  \begin{split}
   \langle \xi,\eta\rangle_{H^{-\frac{1}{2}}_{W_\lambda}}\;&=\; \int_{\mathbb{R}^3} \overline{\,\widehat{\xi}(\pp)}\, \big(\widehat{W_\lambda\eta}\big)(\pp)\,\ud\pp \\
   &=\;\sum_{\ell=0}^{\infty}\int_{\mathbb{R}^3} \overline{\,\widehat{\xi^{(\ell)}}(\pp)}\, \big(\widehat{W_\lambda^{(\ell)}\eta^{(\ell)}}\big)(\pp)\,\ud\pp\,,
  \end{split}
 \end{equation}
where
 \begin{equation}\label{eq:Wellsp}
   \begin{split}
   &\int_{\mathbb{R}^3} \overline{\,\widehat{\xi^{(\ell)}}(\pp)}\, \big(\widehat{W_\lambda^{(\ell)}\eta^{(\ell)}}\big)(\pp)\,\ud\pp \\
   &\;=\;\sum_{n=-\ell}^{\ell}\bigg(\int_{\mathbb{R}^+}\!\ud p\,p^2\,\frac{3\pi^2}{\sqrt{{\textstyle\frac{3}{4}}p^2+\lambda}\,}\,\overline{f_{\ell,n}^{(\xi)}(p)}\,f_{\ell,n}^{(\eta)}(p) \\
   &\quad +12\pi\!\iint_{\mathbb{R}^+\times\mathbb{R}^+}\ud p\,\ud q\,p^2q^2\,\overline{f_{\ell,n}^{(\xi)}(p)}\,f_{\ell,n}^{(\eta)}(q)\!\int_{-1}^1\ud t\,\frac{P_\ell(t)}{(p^2+q^2+p\,q \,t+\lambda)^2}\bigg) \\
   &\;\equiv\;\sum_{n=-\ell}^{\ell}\big\langle f_{\ell,n}^{(\xi)},f_{\ell,n}^{(\eta)} \big\rangle_{W_\lambda^{(\ell)}}\,,
    \end{split}
  \end{equation}
in complete analogy to \eqref{eq:xiTxipre}-\eqref{eq:xiTxi}.

As the scalar product $\langle\cdot,W_\lambda\cdot\rangle_{H^{-\frac{1}{2}},H^{\frac{1}{2}}}$ is equivalent to the ordinary $H^{-\frac{1}{2}}$-scalar product, so is the scalar product $\langle\cdot,\cdot\rangle_{W_\lambda^{(\ell)}}$ defined by \eqref{eq:Wellsp} equivalent to the ordinary scalar product in $ L^2(\mathbb{R}^+,(1+p^2)^{-\frac{1}{2}} p^2\ud p) $. The latter is therefore a Hilbert space also when equipped with $\langle\cdot,\cdot\rangle_{W_\lambda^{(\ell)}}$, in which case we shall denote it with $L^2_{W_\lambda^{(\ell)}}(\mathbb{R}^+)$. One thus has the canonical Hilbert space isomorphism
\begin{equation}\label{eq:RpRpW}
 L^2(\mathbb{R}^+,(1+p^2)^{-\frac{1}{2}} p^2\ud p)\;\cong\;L^2_{W_\lambda^{(\ell)}}(\mathbb{R}^+)\,.
\end{equation}
By means of \eqref{eq:RpRpW} one re-writes \eqref{eq:bigdecomp} as
\begin{equation}\label{eq:bigdecompW}
 \begin{split}
 H^{-\frac{1}{2}}_{W_\lambda}(\mathbb{R}^3)\;\cong\;H^{-\frac{1}{2}}(\mathbb{R}^3)\;&\cong\;\bigoplus_{\ell=0}^\infty \Big( L^2_{W_\lambda^{(\ell)}}(\mathbb{R}^+) \otimes \mathrm{span}\big\{ Y_{\ell,n}\,|\,n=-\ell,\dots,\ell \big\} \Big)  \\
 &\equiv\;\bigoplus_{\ell=0}^\infty \,H^{-\frac{1}{2}}_{W_\lambda,\ell}(\mathbb{R}^3) \,.
 \end{split}
\end{equation}
The expansion \eqref{eq:xihatangularexpansion} of a generic $\xi\in H^{-\frac{1}{2}}(\mathbb{R}^3)$ is equivalently referred to the ordinary decomposition or the $\lambda$-decomposition of the space \eqref{eq:bigdecompW}.

 With respect to \eqref{eq:bigdecompW} $\mathcal{A}_\lambda$ is reduced as
 \begin{equation}\label{eq:Alamdbareducedell}
  \mathcal{A}_{\lambda}\;=\;\bigoplus_{\ell=0}^{\infty}\,\mathcal{A}_{\lambda}^{(\ell)}\;=\;\bigoplus_{\ell=0}^{\infty}\,3 W_\lambda^{-1}(T_\lambda^{(\ell)}+\alpha\mathbbm{1})\,.
 \end{equation}
 The problem of finding a domain $\mathcal{D}$ of symmetry or of self-adjointness for $\mathcal{A}_\lambda$ with respect to $H^{-\frac{1}{2}}_{W_\lambda}(\mathbb{R}^3)$ is tantamount as finding a domain $\mathcal{D}_\ell$ of symmetry or of self-adjointness for $\mathcal{A}_\lambda^{(\ell)}$ with respect to $H^{-\frac{1}{2}}_{W_\lambda,\ell}(\mathbb{R}^3)$ for each $\ell\in\mathbb{N}_0$. This is the object of Sect.~\ref{sec:higherell} and \ref{sec:lzero}.

 The subspace of $\mathcal{D}(\mathcal{A}_\lambda)$ consisting of elements $g$ with charge $\xi\in\mathcal{D}_\ell$ is sometimes referred to as the charge domain of the TMS (symmetric or self-adjoint) extension of $\mathring{H}$ in the $\ell$-th sector of definite angular momentum.

 \section{Sectors of higher angular momenta}\label{sec:higherell}

 In the modelling of the bosonic trimer with zero-range interaction, all the relevant physics is expected in the sector of zero angular momentum, since in each two-body channel particles undergo a low-energy, and hence essentially an $s$-wave scattering.

 In this respect, the qualification of the quantum Hamiltonian is somewhat arbitrary in the sectors of higher (non-zero) angular momentum, as in practice experimental observations do not involve states in such sectors. For instance one could simply consider a Hamiltonian where for each $\ell\in\mathbb{N}$ the Birman parameter of formula \eqref{eq:domDHA} has domain $\mathcal{D}_\ell=\{0\}$ and value `$\mathcal{A}_\lambda=\infty$' on it, namely the Friedrichs extension of $\mathring{H}$ in those sectors, whereas only the $\ell=0$ is defined non-trivially. This would model a total absence of interaction at higher angular momenta.

 A more typical choice is to define a model that in all $\ell$-sectors, not only $\ell=0$, is characterised by the physical TMS asymptotics with inverse scattering length, and to do so by making a somewhat canonical construction for $\ell\neq 0$, and a non-trivial one for $\ell=0$. We present such programme in this Section for non-zero $\ell$. The analysis of $\ell=0$ is deferred to Sect.~\ref{sec:lzero}.
 

 \subsection{$T_\lambda$-estimates}\label{sec:Tlambdaestimates}~
 
 We import here a set of useful estimates established in the already mentioned work \cite{CDFMT-2012} by Correggi, Dell'Antonio, Finco, Michelangeli, and Teta.

 For given $\lambda>0$ and $\ell\in\mathbb{N}$ let us introduce the shorthands 
 \begin{equation}
   \begin{split}
    \Phi_\lambda[f,g]\;&:=\;2\pi^2\!\int_{\mathbb{R}^+}\!\ud p\,p^2\,\sqrt{{\textstyle\frac{3}{4}}p^2+\lambda}\,\overline{f(p)}\,g(p) \\
    \Psi_{\lambda,\ell}[f,g]\;&:=\;2\pi\!\iint_{\mathbb{R}^+\times\mathbb{R}^+}\ud p\,\ud q\,p^2q^2\,\overline{f(p)}\,g(q)\!\int_{-1}^1\ud t\,\frac{P_\ell(t)}{p^2+q^2+p \,q \,t+\lambda}
   \end{split}
 \end{equation}
 and 
  \begin{equation}
    \begin{split}
   \Phi_\lambda[f]\;&:=\;\Phi_\lambda[f,f] \\
   \Psi_{\lambda,\ell}[f]\;&:=\;\Psi_{\lambda,\ell}[f,f]
  \end{split}
  \end{equation}
 so that \eqref{eq:xiTxipre}-\eqref{eq:xiTxi} now read
  \begin{equation}\label{eq:xiTxi-updated}
   \int_{\mathbb{R}^3} \overline{\,\widehat{\xi}(\pp)}\, \big(\widehat{T_\lambda\eta}\big)(\pp)\,\ud\pp\;=\;\sum_{\ell=0}^\infty\sum_{n=-\ell}^\ell\Big( \Phi_\lambda\big[f_{\ell,n}^{(\xi)},f_{\ell,n}^{(\eta)}\big]-2\Psi_{\lambda,\ell}\big[f_{\ell,n}^{(\xi)},f_{\ell,n}^{(\eta)}\big]\Big)\,.
  \end{equation}

  \begin{lemma}
   Let $\lambda>0$ and $\ell\in\mathbb{N}$. Let $f:\mathbb{R}\to\mathbb{C}$ make the quantities below finite.
   \begin{itemize}
    \item[(i)] One has
    \begin{equation}\label{eq:Psilambdaordering}
     \begin{split}
    0\;\leqslant\;\Psi_{\lambda,\ell}[f]\;\leqslant\;\Psi_{0,\ell}[f] & \qquad\textrm{for even $\ell$} \\
    \Psi_{0,\ell}[f]\;\leqslant\;\Psi_{\lambda,\ell}[f]\;\leqslant\; 0 & \qquad\textrm{for odd $\ell$}.
   \end{split}
    \end{equation}
    \item[(ii)] One has 
    \begin{equation}\label{eq:PhiPsilambdazero}
     \begin{split}
      \Phi_0[f]\;&=\;\pi^2\sqrt{3}\int_\mathbb{R}\ud s\,|f^\sharp(s)|^2 \\
      \Psi_{0,\ell}[f]\;&=\;\int_\mathbb{R}\ud s\,S_\ell(s)\,|f^\sharp(s)|^2
     \end{split}
    \end{equation}
    where
    \begin{equation}
     f^\sharp(s)\;:=\;\frac{1}{\sqrt{2\pi}}\int_{\mathbb{R}}\,\ud x\,e^{-\ii k x}\,e^{2x}f(e^x)
    \end{equation} 
    and 
    \begin{equation}\label{eq:Sell}
     S_\ell(s)\;:=\;2\pi^2\int_{-1}^1\ud t\,P_\ell(t)\,\frac{\sinh(s\arccos\frac{t}{2})}{\sin(\arccos\frac{t}{2})\,\sinh\pi s}\,.
    \end{equation}
    \item[(iii)] $S_\ell:\mathbb{R}\to\mathbb{R}$ is a smooth even function, strictly monotone on $\mathbb{R}^\pm$, and such that
    \begin{equation}\label{eq:Slzeroordering}
     \begin{split}
      0\;\leqslant\;S_{\ell+2}(s)\;\leqslant\; S_{\ell}(s)\;\leqslant\;S_{\ell}(0) & \qquad\textrm{for even $\ell$} \\
      S_{\ell}(0)\;\leqslant\;S_{\ell}(s)\;\leqslant\;S_{\ell+2}(s)\;\leqslant\;0   & \qquad\textrm{for odd $\ell$}
     \end{split}
    \end{equation}
   \end{itemize}      
  \end{lemma}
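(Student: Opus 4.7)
The plan is to prove the three parts in the order (ii), (iii), (i), since (iii) will rest on the explicit formula derived in (ii), and (i) will rest on the sign information coming from (iii).

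For (ii), the first move is the Mellin-type substitution $p=e^x$, $q=e^y$, which trivialises the scaling: the measure $p^2\,\ud p$ becomes $e^{3x}\,\ud x$, and the function $f^\sharp$ is precisely the standard Fourier transform on $\mathbb{R}$ of $F(x):=e^{2x}f(e^x)$. For $\Phi_0$ the integrand reduces to $|F(x)|^2$ and Plancherel yields the claim immediately. For $\Psi_{0,\ell}$ the key is that the denominator factorises as $e^{2x}+e^{2y}+e^{x+y}t = 2e^{x+y}(\cosh(x-y)+t/2)$, so after the Jacobian absorbs the $e^{x+y}$ the integrand depends on $x,y$ only through $\overline{F(x)}F(y)$ and a function of the difference $x-y$. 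This is a convolution, and Plancherel gives $\Psi_{0,\ell}[f]=\pi\sqrt{2\pi}\int|f^\sharp(s)|^2\widehat{K_\ell}(s)\,\ud s$, with $K_\ell(u):=\int_{-1}^1 P_\ell(t)/(\cosh u+t/2)\,\ud t$. I would then compute the Fourier transform of $1/(\cosh u+\cos\varphi)$, $\varphi\in(0,\pi)$, by closing the contour in a horizontal strip and summing the residues at $u=-i(\pi\mp\varphi)+2\pi i n$; this classical calculation produces $2\pi\sinh(s\varphi)/(\sin\varphi\sinh\pi s)$. Substituting $\varphi=\arccos(t/2)$ inside the $t$-integral that defines $\widehat{K_\ell}$ then recovers the stated formula for $S_\ell$.

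For (iii), evenness of $S_\ell$ is visible at once from the joint $s\to-s$ invariance of $\sinh(s\varphi)/\sinh\pi s$, and the smoothness at $s=0$ follows because this ratio extends to a real-analytic function of $s$ with value $\varphi/\pi$ at the origin. To establish strict monotonicity on $\mathbb{R}^\pm$ I would change variable to $\varphi=\arccos(t/2)\in[\pi/3,2\pi/3]$, rewriting
\[
 S_\ell(s)\;=\;4\pi^2\int_{\pi/3}^{2\pi/3}P_\ell(2\cos\varphi)\,\frac{\sinh(s\varphi)}{\sinh\pi s}\,\ud\varphi\,,
\]
and differentiate under the integral. A direct computation shows that for each fixed $\varphi\in(0,\pi)$ the auxiliary function $g_\varphi(s):=\varphi\cosh(s\varphi)\sinh\pi s-\pi\sinh(s\varphi)\cosh\pi s$ satisfies $g_\varphi(0)=0$ and $g_\varphi'(s)=(\varphi^2-\pi^2)\sinh(s\varphi)\sinh\pi s<0$ for $s>0$, so $\partial_s[\sinh(s\varphi)/\sinh\pi s]<0$ on $\mathbb{R}^+$. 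The sign $S_\ell(s)\geqslant 0$ (resp.\ $\leqslant 0$) for even (resp.\ odd) $\ell$, together with the ordering $S_{\ell+2}\preceq S_\ell$, I would extract either by exploiting the positivity of the contour-representation kernel (which isolates the $\ell$-dependence in a structurally monotone way), or by propagating explicit low-$\ell$ formulas such as $S_0(s)=4\pi^2\sinh(\pi s/6)/(s\cosh(\pi s/2))$ via the three-term Legendre recurrence; the values at $s=0$ reduce to Beta-type integrals of $P_\ell$ that one can evaluate directly.

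For (i), the comparison with the $\lambda=0$ case proceeds via the Laplace representation $(p^2+q^2+pqt+\lambda)^{-1}=\int_0^\infty e^{-u(p^2+q^2+pqt+\lambda)}\ud u$. Substituting and differentiating in $\lambda$,
\[
 \partial_\lambda\Psi_{\lambda,\ell}[f]\;=\;-2\pi\!\int_0^\infty\! u e^{-\lambda u}\iint p^2q^2\overline{f(p)}f(q)e^{-u(p^2+q^2)}\!\!\int_{-1}^1\! P_\ell(t)\,e^{-pqut}\,\ud t\,\ud p\,\ud q\,\ud u\,.
\]
The inner $t$-integral equals $2(-1)^\ell i_\ell(pqu)$, with $i_\ell$ the modified spherical Bessel function; the power series $i_\ell(x)=\frac{\sqrt\pi}{2}\sum_{k\geqslant 0}(x/2)^{2k+\ell}/(k!\Gamma(k+\ell+3/2))$ exhibits the $(p,q)$-kernel $p^2q^2 e^{-u(p^2+q^2)}i_\ell(pqu)$ as a convergent sum of rank-one positive-definite terms $h_k(p)h_k(q)$, whence the double integral against $\overline{f(p)}f(q)$ is non-negative. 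Hence $(-1)^\ell\partial_\lambda\Psi_{\lambda,\ell}[f]\leqslant 0$, and combining this with the sign of $\Psi_{0,\ell}[f]$ from (iii) and with $\lim_{\lambda\to\infty}\Psi_{\lambda,\ell}[f]=0$ (dominated convergence) yields the sandwich estimates in (i).

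I expect the main obstacle to be the full ordering inside part (iii), specifically the inequalities $0\leqslant S_{\ell+2}\leqslant S_\ell$ (and their reversal for odd $\ell$): since $P_\ell(2\cos\varphi)$ is not of definite sign on $[\pi/3,2\pi/3]$, the required monotonicity hides inside a delicate cancellation, and I would expect the cleanest route to go through a residue/contour manipulation in the spirit of Danilov rather than a direct pointwise argument.
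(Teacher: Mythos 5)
The paper does not actually prove this lemma: all three parts are imported verbatim from \cite{CDFMT-2012} (Lemmas 3.2, 3.3, 3.5 respectively), so any self-contained derivation is by construction a different route. Your parts (ii) and (i) are sound. For (ii), the substitution $p=e^x$, the reduction of $\Psi_{0,\ell}$ to a convolution against $K_\ell(u)=\int_{-1}^1 P_\ell(t)(\cosh u+t/2)^{-1}\,\ud t$, and the residue computation of the Fourier transform of $(\cosh u+\cos\varphi)^{-1}$ all check out, and the constants recombine exactly into the stated $S_\ell$. Your part (i) is in fact cleaner than you advertise: since $(-1)^\ell\partial_\lambda\Psi_{\lambda,\ell}[f]\leqslant 0$ (via the Laplace representation and the positivity of the Taylor coefficients of $i_\ell$, which exhibits the kernel as a sum of rank-one terms $|\!\int p^{2+2k+\ell}e^{-up^2}f(p)\,\ud p|^2$) and $\Psi_{\lambda,\ell}[f]\to 0$ as $\lambda\to+\infty$, the sandwich \eqref{eq:Psilambdaordering} follows from monotonicity in $\lambda$ alone — you do not need the sign of $\Psi_{0,\ell}$ from (iii) at all, so (i) is independent of the problematic part.

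The genuine gap is in (iii), and it is exactly where you locate it. The derivative computation $\partial_s[\sinh(s\varphi)/\sinh(\pi s)]<0$ for $s>0$ does \emph{not} yield strict monotonicity of $S_\ell$ on $\mathbb{R}^+$ for $\ell\geqslant 1$, because the weight $P_\ell(2\cos\varphi)$ changes sign on $[\pi/3,2\pi/3]$; the argument as written only settles $\ell=0$. Likewise the two decisive chains $0\leqslant S_{\ell+2}(s)\leqslant S_\ell(s)\leqslant S_\ell(0)$ (even $\ell$) and their odd counterparts — which are precisely what Lemma \ref{lem:xiTxi-equiv-H12} consumes downstream — are left as two alternative ``I would'' strategies (positivity of a contour kernel, or propagation through the Legendre three-term recurrence), neither of which is carried out, and neither of which is obviously routine: the sign alternation in $\ell$ and the interlacing in $\ell\mapsto\ell+2$ are the genuinely delicate content of \cite[Lemma 3.5]{CDFMT-2012}. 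As it stands, your writeup proves (i), (ii), and the evenness/smoothness part of (iii), but the monotonicity, sign, and ordering statements of (iii) remain unproved.
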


  \begin{proof}
   Part (i) follows from \cite[Lemma 3.2]{CDFMT-2012}. Part (ii) from \cite[Lemma 3.3]{CDFMT-2012}. Part (iii) from \cite[Lemma 3.5]{CDFMT-2012}.   
  \end{proof}

  The values of $S_\ell(s)$ that will be relevant in the present analysis are
  \begin{equation}\label{eq:Sellzerospecialvalues}
   \begin{split}
    S_0(0)\;&=\;\frac{\,2\pi^3}{3}\;>\;0 \\
    S_1(0)\;&=\;-8\pi\Big(1-\frac{\pi}{2\sqrt{3}}\Big)\;<\;0 \\
    S_2(0)\;&=\;\frac{\,\pi^2}{3}(5\pi-9\sqrt{3})\;>\;0\,,
   \end{split}
  \end{equation}
 as one easily computes from \eqref{eq:Sell} and \eqref{def_Legendre}.

  By means of the estimates above, one obtains the following important bounds.

  \begin{lemma}\label{lem:xiTxi-equiv-H12}
   Let $\lambda>0$ and let $\xi\in H^{\frac{1}{2}}(\mathbb{R}^3)$.
   Then
   \begin{eqnarray}
    \int_{\mathbb{R}^3} \overline{\,\widehat{\xi}(\pp)}\, \big(\widehat{T_\lambda\xi}\big)(\pp)\,\ud\pp\!\!&\leqslant&\!\!\kappa^+\cdot 2\pi^2\!\int_{\mathbb{R}^3}\sqrt{\frac{3}{4}\pp^2+\lambda}\,|\widehat{\xi}(\pp)|^2\,\ud\pp	\label{eq:xiTxiFromAbove} \\
    \int_{\mathbb{R}^3} \overline{\,\widehat{\xi}(\pp)}\, \big(\widehat{T_\lambda\xi}\big)(\pp)\,\ud\pp\!\!&\geqslant&\!\!\kappa^-\cdot 2\pi^2\!\int_{\mathbb{R}^3}\sqrt{\frac{3}{4}\pp^2+\lambda}\,|\widehat{\xi}(\pp)|^2\,\ud\pp	\label{eq:xiTxiFromBelow}
   \end{eqnarray}
  where
  \begin{equation}
  \begin{split}
   \kappa^+\;&=\;\frac{16}{\pi\sqrt{3}}-\frac{5}{3} \\
    \kappa^-\;&=\;
   \begin{cases}
    -\displaystyle\Big(\frac{4\pi}{3\sqrt{3}}-1\Big) & \textrm{ if $\xi$ is non-trivial on $H^{\frac{1}{2}}_{\ell=0}(\mathbb{R}^3)$} \\
    \quad 7-\frac{10\pi}{3\sqrt{3}} & \textrm{ if }\;\xi\in\bigoplus_{\ell=1}^\infty H^{\frac{1}{2}}_\ell(\mathbb{R}^3)\,.
   \end{cases}
  \end{split}
  \end{equation}
 In particular, if $\xi\perp  H^{\frac{1}{2}}_{\ell=0}(\mathbb{R}^3)$, then $\kappa^->0$ and
 \begin{equation}
  \int_{\mathbb{R}^3} \overline{\,\widehat{\xi}(\pp)}\, \big(\widehat{T_\lambda\xi}\big)(\pp)\,\ud\pp\;\approx\;\|\xi\|_{H^{\frac{1}{2}}}^2\qquad (\ell\neq 0)
 \end{equation}
  in the sense of equivalence of norms (with $\lambda$-dependent multiplicative constants).
  \end{lemma}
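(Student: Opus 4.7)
The plan is to reduce both inequalities to sector-wise bounds. Writing $\xi=\sum_{\ell,n}\xi^{(\ell)}_n$ along the decomposition \eqref{eq:bigdecomp}-\eqref{eq:xihatangularexpansion}, the representation \eqref{eq:xiTxi-updated} gives
\[
 \int_{\mathbb{R}^3}\overline{\widehat{\xi}(\pp)}\,\bigl(\widehat{T_\lambda\xi}\bigr)(\pp)\,\ud\pp\;=\;\sum_{\ell,n}\Bigl(\Phi_\lambda\bigl[f^{(\xi)}_{\ell,n}\bigr]-2\Psi_{\lambda,\ell}\bigl[f^{(\xi)}_{\ell,n}\bigr]\Bigr),
\]
while the right-hand side of \eqref{eq:xiTxiFromAbove}-\eqref{eq:xiTxiFromBelow} is precisely $\sum_{\ell,n}\Phi_\lambda[f^{(\xi)}_{\ell,n}]$. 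I would therefore fix $\ell$ and look for constants $\kappa^\pm_\ell$ satisfying
\[
 \kappa^-_\ell\,\Phi_\lambda[f]\;\leqslant\;\Phi_\lambda[f]-2\Psi_{\lambda,\ell}[f]\;\leqslant\;\kappa^+_\ell\,\Phi_\lambda[f]
\]
for every admissible radial profile $f$, and finally take $\kappa^+=\sup_\ell\kappa^+_\ell$ and $\kappa^-=\inf_\ell\kappa^-_\ell$ over whichever sectors the charge $\xi$ populates.

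Within a fixed sector, the parity-based monotonicity \eqref{eq:Psilambdaordering} allows the reduction to $\lambda=0$: the quantity $-2\Psi_{\lambda,\ell}[f]$ is nonpositive for even $\ell$ and nonnegative for odd $\ell$, and in absolute value is bounded by $2|\Psi_{0,\ell}[f]|$. The Mellin representation \eqref{eq:PhiPsilambdazero} then converts $\Psi_{0,\ell}$ into an explicit weighted integral of $|f^\sharp|^2$ and estimates it against $\Phi_0[f]=\pi^2\sqrt{3}\int|f^\sharp|^2\,\ud s$ by the factor $\sup_s|S_\ell(s)|/(\pi^2\sqrt{3})$; the inequality $\Phi_0[f]\leqslant\Phi_\lambda[f]$ is automatic since $\sqrt{\tfrac{3}{4}p^2+\lambda}\geqslant\sqrt{\tfrac{3}{4}p^2}$. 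The monotonicity \eqref{eq:Slzeroordering} in turn identifies $\sup_s|S_\ell(s)|$ with $S_0(0)$ for $\ell=0$, with $S_2(0)$ for even $\ell\geqslant 2$ (via the ordering $S_{\ell+2}(0)\leqslant S_\ell(0)$), and with $|S_1(0)|$ for odd $\ell$.

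Plugging in the numerical values \eqref{eq:Sellzerospecialvalues} then produces: for odd $\ell$, $\kappa^+_\ell=1+\frac{2|S_1(0)|}{\pi^2\sqrt{3}}=\frac{16}{\pi\sqrt{3}}-\frac{5}{3}$ and $\kappa^-_\ell=1$; for even $\ell\geqslant 2$, $\kappa^+_\ell=1$ and $\kappa^-_\ell=1-\frac{2S_2(0)}{\pi^2\sqrt{3}}=7-\frac{10\pi}{3\sqrt{3}}>0$; for $\ell=0$, $\kappa^+_0=1$ and $\kappa^-_0=1-\frac{2S_0(0)}{\pi^2\sqrt{3}}=-\bigl(\frac{4\pi}{3\sqrt{3}}-1\bigr)<0$. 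Since $\frac{16}{\pi\sqrt{3}}-\frac{5}{3}>1$, the upper bound is governed uniformly by the odd-$\ell$ value, which is the declared $\kappa^+$, and taking the infimum of the lower-bound constants over the populated sectors reproduces the two alternatives in the statement of $\kappa^-$. The final norm equivalence when $\xi\perp H^{\frac{1}{2}}_{\ell=0}(\mathbb{R}^3)$ follows at once, because then $\kappa^->0$ and the weight $\sqrt{\tfrac{3}{4}\pp^2+\lambda}$ is equivalent to $\sqrt{\pp^2+1}$ up to $\lambda$-dependent constants by \eqref{eq:lambda-equiv-1}. The only genuinely fiddly step in executing this plan is the arithmetic consolidation of the $S_\ell(0)/(\pi^2\sqrt{3})$ ratios into the clean closed-form constants $\kappa^\pm$; the conceptual core---parity-based sign control of $\Psi_{\lambda,\ell}$ combined with the Mellin-based extraction of $\sup_s|S_\ell(s)|$---is otherwise a direct chain of pointwise comparisons.
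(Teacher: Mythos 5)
Your proposal is correct and follows essentially the same route as the paper's proof: both rest on the expansion \eqref{eq:xiTxi-updated}, the parity-based ordering \eqref{eq:Psilambdaordering}, the Mellin formulas \eqref{eq:PhiPsilambdazero} with the monotonicity \eqref{eq:Slzeroordering}, and the special values $S_0(0)$, $S_1(0)$, $S_2(0)$ from \eqref{eq:Sellzerospecialvalues}, arriving at the same constants $\kappa^\pm$. The only difference is bookkeeping — you establish two-sided sector-wise bounds and then take $\sup_\ell$/$\inf_\ell$, whereas the paper manipulates the full sum $\sum_{\ell,n}\Psi_{\lambda,\ell}$ directly by discarding the favourably-signed parity class — and this changes nothing of substance.
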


  \begin{proof}
   Expanding $\xi$ as in \eqref{eq:xihatangularexpansion} and using \eqref{eq:xiTxi-updated} one has   
    \[\tag{*}\label{eq:xtxisums}
   \int_{\mathbb{R}^3} \overline{\,\widehat{\xi}(\pp)}\, \big(\widehat{T_\lambda\xi}\big)(\pp)\,\ud\pp\;=\;\sum_{\ell=0}^\infty\sum_{n=-\ell}^\ell\Big( \Phi_\lambda\big[f_{\ell,n}^{(\xi)}\big]-2\Psi_{\lambda,\ell}\big[f_{\ell,n}^{(\xi)}\big]\Big)\,.
  \]
   Owing to \eqref{eq:Psilambdaordering}, \eqref{eq:PhiPsilambdazero}, \eqref{eq:Slzeroordering}, and \eqref{eq:Sellzerospecialvalues},
   \[
    \begin{split}
     &\sum_{\ell=0}^\infty\sum_{n=-\ell}^\ell\Psi_{\lambda,\ell}\big[f_{\ell,n}^{(\xi)}\big]\;\;\geqslant\;\sum_{\substack{  \ell\in\mathbb{N}_0 \\ \ell\textrm{ odd} }}\sum_{n=-\ell}^\ell\Psi_{\lambda,\ell}\big[f_{\ell,n}^{(\xi)}\big]\;\geqslant\;\sum_{\substack{  \ell\in\mathbb{N}_0 \\ \ell\textrm{ odd} }}\sum_{n=-\ell}^\ell\Psi_{0,\ell}\big[f_{\ell,n}^{(\xi)}\big] \\
     &\;=\;\sum_{\substack{  \ell\in\mathbb{N}_0 \\ \ell\textrm{ odd} }}\sum_{n=-\ell}^\ell\int_\mathbb{R}\ud s\,S_\ell(s)\,\big|\big(f_{\ell,n}^{(\xi)}\big)^\sharp(s)\big|^2\;\geqslant\;S_1(0)\sum_{\substack{  \ell\in\mathbb{N}_0 \\ \ell\textrm{ odd} }}\sum_{n=-\ell}^\ell\int_\mathbb{R}\ud s\,\big|\big(f_{\ell,n}^{(\xi)}\big)^\sharp(s)\big|^2 \\
     &=\;-\frac{\,8\pi(1-\frac{\pi}{2\sqrt{3}})\,}{\pi^2\sqrt{3}}\sum_{\substack{  \ell\in\mathbb{N}_0 \\ \ell\textrm{ odd} }}\sum_{n=-\ell}^\ell\Phi_0\big[f_{\ell,n}^{(\xi)}\big]\;\geqslant\;-\frac{4}{3}\Big(\frac{2\sqrt{3}}{\pi}-1\Big)\sum_{\ell=0}^\infty\sum_{n=-\ell}^\ell\Phi_0\big[f_{\ell,n}^{(\xi)}\big] \\
     &\geqslant\;-\frac{4}{3}\Big(\frac{2\sqrt{3}}{\pi}-1\Big)\sum_{\ell=0}^\infty\sum_{n=-\ell}^\ell\Phi_\lambda\big[f_{\ell,n}^{(\xi)}\big]\,.
    \end{split}
   \]
    Plugging this into \eqref{eq:xtxisums} yields \eqref{eq:xiTxiFromAbove}. Analogously,
      \[
    \begin{split}
     &\sum_{\ell=0}^\infty\sum_{n=-\ell}^\ell\Psi_{\lambda,\ell}\big[f_{\ell,n}^{(\xi)}\big]\;\;\leqslant\;\sum_{\substack{  \ell\in\mathbb{N}_0 \\ \ell\textrm{ even} }}\sum_{n=-\ell}^\ell\Psi_{\lambda,\ell}\big[f_{\ell,n}^{(\xi)}\big]\;\leqslant\;\sum_{\substack{  \ell\in\mathbb{N}_0 \\ \ell\textrm{ even} }}\sum_{n=-\ell}^\ell\Psi_{0,\ell}\big[f_{\ell,n}^{(\xi)}\big] \\
     &\;=\;\sum_{\substack{  \ell\in\mathbb{N}_0 \\ \ell\textrm{ even} }}\sum_{n=-\ell}^\ell\int_\mathbb{R}\ud s\,S_\ell(s)\,\big|\big(f_{\ell,n}^{(\xi)}\big)^\sharp(s)\big|^2\;\leqslant\;S_{0}(0)\sum_{\substack{  \ell\in\mathbb{N}_0 \\ \ell\textrm{ even} }}\sum_{n=-\ell}^\ell\int_\mathbb{R}\ud s\,\big|\big(f_{\ell,n}^{(\xi)}\big)^\sharp(s)\big|^2 \\
     &=\;\frac{\,\frac{2\pi^3}{3}\,}{\pi^2\sqrt{3}}\sum_{\substack{  \ell\in\mathbb{N}_0 \\ \ell\textrm{ even} }}\sum_{n=-\ell}^\ell\Phi_0\big[f_{\ell,n}^{(\xi)}\big]\;\leqslant\;\frac{2\pi}{3\sqrt{3}}\sum_{\ell=0}^\infty\sum_{n=-\ell}^\ell\Phi_0\big[f_{\ell,n}^{(\xi)}\big]\;\leqslant\;\frac{2\pi}{3\sqrt{3}}\sum_{\ell=0}^\infty\sum_{n=-\ell}^\ell\Phi_\lambda\big[f_{\ell,n}^{(\xi)}\big]\,,
    \end{split}
   \]
    which combined with \eqref{eq:xtxisums} yields \eqref{eq:xiTxiFromBelow} in the general case. 
    In the particular case when $\xi$ has no $\ell=0$ component the previous computation becomes
    \[
     \begin{split}
       &\sum_{\ell=1}^\infty\sum_{n=-\ell}^\ell\Psi_{\lambda,\ell}\big[f_{\ell,n}^{(\xi)}\big]\;\;\leqslant\;S_{2}(0)\sum_{\substack{  \ell\in\mathbb{N} \\ \ell\textrm{ even} }}\sum_{n=-\ell}^\ell\int_\mathbb{R}\ud s\,\big|\big(f_{\ell,n}^{(\xi)}\big)^\sharp(s)\big|^2 \\
       &=\;\frac{\,\frac{\:\pi^2}{3}(5\pi-9\sqrt{3})\,}{\pi^2\sqrt{3}}\sum_{\substack{  \ell\in\mathbb{N} \\ \ell\textrm{ even} }}\sum_{n=-\ell}^\ell\Phi_0\big[f_{\ell,n}^{(\xi)}\big]\;\leqslant\;\frac{\,5\pi-9\sqrt{3}\,}{3\sqrt{3}}\sum_{\ell=1}^\infty\sum_{n=-\ell}^\ell\Phi_\lambda\big[f_{\ell,n}^{(\xi)}\big]
     \end{split}
    \]
    and plugging the latter estimate into \eqref{eq:xtxisums}, where now the $\ell=0$ summands are absent, one obtains \eqref{eq:xiTxiFromBelow} for this case.
    \end{proof}

  \subsection{Self-adjointness for $\ell\geqslant 1$}\label{sec:selfadj-ellnotzero}~

  Let us discuss a domain of self-adjointness for the TMS parameter $\mathcal{A}_\lambda$ in $H^{-\frac{1}{2}}_{W_\lambda,\ell}(\mathbb{R}^3)$ when $\ell\in\mathbb{N}$.

  It is convenient to realise first $\mathcal{A}_\lambda$ as a symmetric operator and then construct canonically a self-adjoint realisation of it.

  Let us set
  \begin{equation}\label{eq:lambdaalpha}
    \lambda_\alpha\;:=\;
    \begin{cases}
     \;\;0 & \textrm{ if }\;\alpha\geqslant 0 \\
     \alpha^2/(2\pi^2\kappa^-)^2& \textrm{ if }\;\alpha<0\qquad \big(\kappa^-=7-\frac{10\pi}{3\sqrt{3}}\big)\,.
    \end{cases}
  \end{equation}

  \begin{lemma}\label{lem:Atildenot0}
   For $\lambda>0$, $\alpha\in\mathbb{R}$, and $\ell\in\mathbb{N}$, let
  \begin{equation}\label{eq:Dtildeell}
   \widetilde{\mathcal{D}}_\ell\;:=\; H_\ell^{\frac{3}{2}}(\mathbb{R}^3)
  \end{equation}
  and 
  \begin{equation}\label{eq:Alambdatildenot0}
   \begin{split}
       \widetilde{\mathcal{A}_{\lambda}^{(\ell)}}\;&:=\;3W_\lambda^{-1}\big(T_\lambda^{(\ell)}+\alpha\mathbbm{1}\big) \\
       \mathcal{D}\big( \widetilde{\mathcal{A}_{\lambda}^{(\ell)}}\big)\;&:=\; \widetilde{\mathcal{D}}_\ell\,.
   \end{split}
  \end{equation}
  One has the following.
  \begin{itemize}
   \item[(i)] $ \widetilde{\mathcal{A}_{\lambda}^{(\ell)}}$ is a densely defined symmetric operator in $H^{-\frac{1}{2}}_{W_\lambda,\ell}(\mathbb{R}^3)$.
   \item[(ii)] If $\lambda>\lambda_\alpha$, then $\mathfrak{m}\big(\widetilde{\mathcal{A}_{\lambda}^{(\ell)}}\big)>0$, i.e., $\widetilde{\mathcal{A}_{\lambda}^{(\ell)}}$ has strictly positive lower bound.
  \end{itemize}
  \end{lemma}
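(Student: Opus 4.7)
The plan is to bundle together the three technical facts about $T_\lambda$ already assembled earlier in the excerpt, namely the sector-wise $H^s\to H^{s-1}$ bound of Lemma \ref{lem:Tlambdaproperties}(iii), the symmetry identity of Lemma \ref{lem:Tlambdaproperties}(v), and the sharp quadratic-form bound of Lemma \ref{lem:xiTxi-equiv-H12} in the subspace complementary to the $\ell=0$ sector. These three ingredients correspond directly to the three substatements to establish: well-definedness of $\widetilde{\mathcal{A}_\lambda^{(\ell)}}$ into $H^{-\frac{1}{2}}_{W_\lambda,\ell}(\mathbb{R}^3)$, symmetry in that Hilbert space, and a strictly positive lower bound for $\lambda>\lambda_\alpha$.

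For part (i) I would first check that $\widetilde{\mathcal{A}_\lambda^{(\ell)}}$ takes values in the right space. At $s=\frac{3}{2}$ and for $\ell\geqslant 1$, Lemma \ref{lem:Tlambdaproperties}(iii) gives $T_\lambda^{(\ell)}\xi\in H^{\frac{1}{2}}_\ell(\mathbb{R}^3)$ for any $\xi\in H^{\frac{3}{2}}_\ell(\mathbb{R}^3)$; adding $\alpha\xi$ preserves this regularity, and applying $W_\lambda^{-1}:H^{\frac{1}{2}}\to H^{-\frac{1}{2}}$ from Lemma \ref{lem:Wlambdaproperties}(ii) lands in $H^{-\frac{1}{2}}_\ell\cong H^{-\frac{1}{2}}_{W_\lambda,\ell}$. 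Density of $H^{\frac{3}{2}}_\ell$ in $H^{-\frac{1}{2}}_{W_\lambda,\ell}$ is immediate from the norm equivalence encoded in \eqref{eq:W-scalar-product} and the standard density of smooth compactly supported functions. For symmetry I would use \eqref{eq:W-scalar-product} to rewrite
\[
\langle\xi,\widetilde{\mathcal{A}_\lambda^{(\ell)}}\eta\rangle_{H^{-\frac{1}{2}}_{W_\lambda}}\;=\;3\langle\xi,(T_\lambda^{(\ell)}+\alpha\mathbbm{1})\eta\rangle_{H^{-\frac{1}{2}},H^{\frac{1}{2}}}
\]
and invoke the exchange identity \eqref{eq:Tlambdaexchange} of Lemma \ref{lem:Tlambdaproperties}(v), which is available at $s=\frac{3}{2}\geqslant\frac{1}{2}$.

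The heart of the proof is the lower bound in (ii). Working in the sector $\xi\perp H^{\frac{1}{2}}_{\ell=0}(\mathbb{R}^3)$, where Lemma \ref{lem:xiTxi-equiv-H12} yields $\kappa^-=7-\tfrac{10\pi}{3\sqrt{3}}>0$, and estimating $\sqrt{\tfrac{3}{4}\pp^2+\lambda}\geqslant\sqrt{\lambda}$, I would obtain
\[
\langle\xi,\widetilde{\mathcal{A}_\lambda^{(\ell)}}\xi\rangle_{H^{-\frac{1}{2}}_{W_\lambda}}\;=\;3\langle\xi,T_\lambda^{(\ell)}\xi\rangle_{L^2}+3\alpha\|\xi\|_{L^2}^2\;\geqslant\;3\bigl(2\pi^2\kappa^-\sqrt{\lambda}+\alpha\bigr)\|\xi\|_{L^2}^2\,.
\]
The definition of $\lambda_\alpha$ in \eqref{eq:lambdaalpha} is precisely calibrated so that the prefactor $2\pi^2\kappa^-\sqrt{\lambda}+\alpha$ becomes strictly positive exactly when $\lambda>\lambda_\alpha$ (trivial for $\alpha\geqslant 0$, equivalent to $\sqrt{\lambda}>|\alpha|/(2\pi^2\kappa^-)$ when $\alpha<0$). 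The continuous embedding $L^2(\mathbb{R}^3)\hookrightarrow H^{-\frac{1}{2}}(\mathbb{R}^3)\cong H^{-\frac{1}{2}}_{W_\lambda}(\mathbb{R}^3)$ then upgrades this to a bound by $\|\xi\|^2_{H^{-\frac{1}{2}}_{W_\lambda}}$, giving $\mathfrak{m}(\widetilde{\mathcal{A}_\lambda^{(\ell)}})>0$.

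I do not anticipate a genuine obstacle here, since the heavy analytic work has already been packaged into Lemma \ref{lem:xiTxi-equiv-H12}, where the positivity of $\kappa^-$ away from the $\ell=0$ sector is decisive (and is precisely why the story in Section \ref{sec:lzero} for $\ell=0$ will be fundamentally different). The only delicate point is the book-keeping of norms: the sharp $T_\lambda$-estimate is naturally expressed in $L^2$, whereas $\mathfrak{m}$ is measured against $\langle\cdot,\cdot\rangle_{H^{-\frac{1}{2}}_{W_\lambda}}$; fortunately the chain of inclusions $H^{\frac{1}{2}}\hookrightarrow L^2\hookrightarrow H^{-\frac{1}{2}}\cong H^{-\frac{1}{2}}_{W_\lambda}$ runs in the right direction for a one-sided conversion, so only harmless multiplicative constants are produced.
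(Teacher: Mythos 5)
Your proposal is correct and follows essentially the same route as the paper: part (i) rests on the mapping property $T_\lambda^{(\ell)}H^{\frac{3}{2}}_\ell\subset H^{\frac{1}{2}}_\ell=\mathrm{ran}\,W_\lambda^{(\ell)}$ together with the rewriting $\langle\xi,\widetilde{\mathcal{A}_\lambda^{(\ell)}}\eta\rangle_{H^{-\frac{1}{2}}_{W_\lambda}}=3\langle\xi,(T_\lambda^{(\ell)}+\alpha\mathbbm{1})\eta\rangle_{L^2}$ (the paper packages this last step as Lemma \ref{lem:symsym}, while you invoke \eqref{eq:Tlambdaexchange} directly, which is the same computation), and part (ii) is exactly the paper's chain of estimates via \eqref{eq:xiTxiFromBelow}, the bound $\sqrt{\tfrac{3}{4}\pp^2+\lambda}\geqslant\sqrt{\lambda}$, and the passage from the $L^2$-norm to the equivalent $H^{-\frac{1}{2}}_{W_\lambda}$-norm. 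No gaps.
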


  \begin{proof}
   (i) Obviously $\widetilde{\mathcal{D}}_\ell$ is dense in $H^{-\frac{1}{2}}_{W_\lambda,\ell}(\mathbb{R}^3)$. Moreover, $(T_\lambda^{(\ell)}+\alpha\mathbbm{1}\big)\widetilde{\mathcal{D}}_\ell\subset H^{\frac{1}{2}}_\ell(\mathbb{R}^3)$ (Lemma \ref{lem:Tlambdaproperties}(iii)) and $H^{\frac{1}{2}}_\ell(\mathbb{R}^3)=\mathrm{ran}W_{\lambda}^{(\ell)}$ (Lemma \ref{lem:Wlambdaproperties}(ii)), therefore \eqref{eq:Alambdatildenot0} is a well-posed definition for a densely defined operator in $H^{-\frac{1}{2}}_{W_\lambda,\ell}(\mathbb{R}^3)$. The map $\widetilde{\mathcal{D}}_\ell\ni\xi\mapsto T_\lambda^{(\ell)}\xi$ is densely defined and symmetric in $L^2(\mathbb{R}^3)$ (Lemma \ref{lem:Tlambdaproperties}(i)). All assumptions of Lemma \ref{lem:symsym} are then satisfied in the $\ell$-th sector: one then concludes that $ \widetilde{\mathcal{A}_{\lambda}^{(\ell)}}$ is symmetric in $H^{-\frac{1}{2}}_{W_\lambda,\ell}(\mathbb{R}^3)$.   
   
   (ii) For $\xi\in\widetilde{\mathcal{D}}_\ell$ we find
   \[
    \begin{split}
     \frac{1}{3}\big\langle\xi, \widetilde{\mathcal{A}_{\lambda}^{(\ell)}}\xi\big\rangle_{H^{-\frac{1}{2}}_{W_\lambda}}\;&=\;\big\langle\xi,(T_\lambda^{(\ell)}+\alpha\mathbbm{1}\big)\big\rangle_{L^2} \\
     &\geqslant\;2\pi^2\kappa^-\!\int_{\mathbb{R}^3}\sqrt{\frac{3}{4}\pp^2+\lambda}\,|\widehat{\xi}(\pp)|^2\,\ud\pp+\alpha\|\xi\|_{L^2}^2 \\
     &\geqslant\;(2\pi^2\kappa^-\sqrt{\lambda}+\alpha)\|\xi\|_{L^2}^2\;\geqslant\;c_\lambda(2\pi^2\kappa^-\sqrt{\lambda}+\alpha)\|\xi\|^2_{H^{-\frac{1}{2}}_{W_\lambda}}
    \end{split}
   \]
   for some $c_\lambda>0$, having used \eqref{eq:W-scalar-product} in the first step, \eqref{eq:xiTxiFromBelow} in the second, and the isomorphism $H^{-\frac{1}{2}}(\mathbb{R}^3)\cong H^{-\frac{1}{2}}_{W_\lambda}(\mathbb{R}^3)$ in the last. Thus, $\mathfrak{m}(\widetilde{\mathcal{A}_{\lambda}^{(\ell)}})\geqslant 3c_\lambda(2\pi^2\kappa^-\sqrt{\lambda}+\alpha)=6\pi^2c_\lambda\kappa^-(\sqrt{\lambda}-\sqrt{\lambda_\alpha})$ and the thesis follows.   
  \end{proof}

  Being densely defined, symmetric, and lower semi-bounded,  $\widetilde{\mathcal{A}_{\lambda}^{(\ell)}}$ has its Friedrichs self-adjoint extension. That will be our final TMS parameter $\mathcal{A}_{\lambda}^{(\ell)}$. 
%
  
  \begin{proposition}\label{prop:Alambdaellnot0}
   Let $\alpha\in\mathbb{R}$, $\lambda>\lambda_\alpha$, and $\ell\in\mathbb{N}$. Define 
   \begin{equation}\label{eq:domainDell}
    \mathcal{D}_\ell\;:=\;\big\{\xi\in H_\ell^{\frac{1}{2}}(\mathbb{R}^3)\,\big|\,T_\lambda^{(\ell)}\xi\in H_\ell^{\frac{1}{2}}(\mathbb{R}^3)\big\}\,.
   \end{equation}
   The operator 
   \begin{equation}\label{AFop-ellnot0}
    \begin{split}
     \mathcal{D}\big(\mathcal{A}_{\lambda}^{(\ell)}\big)\;&:=\;\mathcal{D}_\ell \\
     \mathcal{A}_{\lambda}^{(\ell)}\;&:=\;3 W_\lambda^{-1}\big(T_\lambda^{(\ell)}+\alpha\mathbbm{1}\big)\,.
    \end{split}
   \end{equation}
   is the Friedrichs extension of $\widetilde{\mathcal{A}_{\lambda}^{(\ell)}}$ with respect to $H^{-\frac{1}{2}}_{W_\lambda,\ell}(\mathbb{R}^3)$ and therefore is self-adjoint in such space.  
   Its sesquilinear form is
   \begin{equation}\label{AFform-ellnot0}
    \begin{split}
     \mathcal{D}\big[\mathcal{A}_{\lambda}^{(\ell)}\big]\;&=\;H_\ell^{\frac{1}{2}}(\mathbb{R}^3) \\
     \mathcal{A}_{\lambda}^{(\ell)}[\eta,\xi]\;&=\;3\big\langle\eta,\big(T_\lambda^{(\ell)}+\alpha\mathbbm{1}\big)\xi \big\rangle_{H^{\frac{1}{2}},H^{-\frac{1}{2}}}\,.
    \end{split}
   \end{equation}
  \end{proposition}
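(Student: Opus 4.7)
The plan is to apply the Friedrichs construction to the densely defined, strictly positive symmetric operator $\widetilde{\mathcal{A}_\lambda^{(\ell)}}$ provided by Lemma \ref{lem:Atildenot0}, whose strict positivity (for $\lambda>\lambda_\alpha$) guarantees that $\widetilde{\mathcal{A}_\lambda^{(\ell)}}[\,\cdot\,]$ is a bona fide positive quadratic form on $\widetilde{\mathcal{D}}_\ell=H^{\frac{3}{2}}_\ell(\mathbb{R}^3)$, and then to identify explicitly (i) the completion of $\widetilde{\mathcal{D}}_\ell$ in the associated form norm, and (ii) the resulting operator domain through the standard duality characterization of the Friedrichs extension in the two-norm pair $(H^{\frac{1}{2}}_\ell,\,H^{-\frac{1}{2}}_{W_\lambda,\ell})$.

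For (i), I would invoke Lemma \ref{lem:xiTxi-equiv-H12}: restricted to $\ell\geqslant 1$ one has $\kappa^->0$ and hence the two-sided bound
$$
\langle\xi,T_\lambda^{(\ell)}\xi\rangle_{L^2}\;\approx\;\int_{\mathbb{R}^3}\!\sqrt{{\textstyle\frac{3}{4}}\pp^2+\lambda}\;|\widehat{\xi}(\pp)|^2\,\ud\pp\;\approx\;\|\xi\|_{H^{\frac{1}{2}}}^2
$$
with $\lambda$-dependent multiplicative constants. For $\alpha\geqslant 0$ this immediately yields $\widetilde{\mathcal{A}_\lambda^{(\ell)}}[\xi]\approx\|\xi\|_{H^{\frac{1}{2}}}^2$ on $\widetilde{\mathcal{D}}_\ell$. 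For $\alpha<0$, to absorb the negative $L^2$-piece $3\alpha\|\xi\|_{L^2}^2$, one performs a convex splitting $\int\!\sqrt{\frac{3}{4}\pp^2+\lambda}|\widehat\xi|^2\geqslant\varepsilon\cdot c_\lambda\|\xi\|^2_{H^{\frac{1}{2}}}+(1-\varepsilon)\sqrt\lambda\|\xi\|_{L^2}^2$, where the threshold $\lambda>\lambda_\alpha$ is precisely what allows to choose $\varepsilon>0$ so small that $(1-\varepsilon)\cdot 6\pi^2\kappa^-\sqrt\lambda+3\alpha>0$, recovering again $\widetilde{\mathcal{A}_\lambda^{(\ell)}}[\xi]\gtrsim\|\xi\|^2_{H^{\frac{1}{2}}}$. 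Together with the trivial upper bound, the form norm is equivalent to $\|\cdot\|_{H^{\frac{1}{2}}}$ on $\widetilde{\mathcal{D}}_\ell$, so by density of $H^{\frac{3}{2}}_\ell$ in $H^{\frac{1}{2}}_\ell$ the form closure has domain $\mathcal{D}[\mathcal{A}_\lambda^{(\ell)}]=H^{\frac{1}{2}}_\ell(\mathbb{R}^3)$ and, by the continuity $T_\lambda^{(\ell)}:H^{\frac{1}{2}}_\ell\to H^{-\frac{1}{2}}_\ell$ (Lemma \ref{lem:Tlambdaproperties}(iii)), the form evaluation extends by continuity to \eqref{AFform-ellnot0}.

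For (ii), the Friedrichs extension's operator domain consists of those $\xi\in\mathcal{D}[\mathcal{A}_\lambda^{(\ell)}]=H^{\frac{1}{2}}_\ell$ for which the anti-linear functional $\eta\mapsto\mathcal{A}_\lambda^{(\ell)}[\eta,\xi]$ is continuous in the $H^{-\frac{1}{2}}_{W_\lambda,\ell}$-norm. By \eqref{eq:W-scalar-product} this amounts to the existence of $\zeta\in H^{-\frac{1}{2}}_{W_\lambda,\ell}$ with
$$
3\langle\eta,(T_\lambda^{(\ell)}+\alpha\mathbbm{1})\xi\rangle_{H^{\frac{1}{2}},H^{-\frac{1}{2}}}\;=\;\langle\eta,W_\lambda\zeta\rangle_{H^{-\frac{1}{2}},H^{\frac{1}{2}}}\qquad\forall\eta\in H^{\frac{1}{2}}_\ell,
$$
i.e., to $3(T_\lambda^{(\ell)}+\alpha\mathbbm{1})\xi=W_\lambda\zeta$ in the dual sense. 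Since $W_\lambda:H^{-\frac{1}{2}}_\ell\to H^{\frac{1}{2}}_\ell$ is a bijection (Lemma \ref{lem:Wlambdaproperties}(ii)), such a $\zeta$ exists iff $(T_\lambda^{(\ell)}+\alpha\mathbbm{1})\xi\in H^{\frac{1}{2}}_\ell$ which, as $\alpha\xi\in H^{\frac{1}{2}}_\ell$ is automatic, is equivalent to $T_\lambda^{(\ell)}\xi\in H^{\frac{1}{2}}_\ell$, and in that case $\zeta=3W_\lambda^{-1}(T_\lambda^{(\ell)}+\alpha\mathbbm{1})\xi$. This is precisely the characterization of $\mathcal{D}_\ell$ in \eqref{eq:domainDell} together with the action \eqref{AFop-ellnot0}.

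The main delicate point is the coercivity of the form in the $H^{\frac{1}{2}}$-norm when $\alpha<0$, as Lemma \ref{lem:Atildenot0}(ii) alone only delivers positivity at the $L^2$-level; the upgrade to $H^{\frac{1}{2}}$-coercivity requires the splitting trick above, whose viability is secured by the threshold $\lambda_\alpha$. Once the form domain is identified with $H^{\frac{1}{2}}_\ell$, the rest is a mechanical unfolding of the Friedrichs theorem combined with the bijectivity of $W_\lambda$.
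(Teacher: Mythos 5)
Your proof is correct and follows essentially the same route as the paper: Friedrichs construction on $\widetilde{\mathcal{A}_{\lambda}^{(\ell)}}$, identification of the form domain as $H^{\frac{1}{2}}_\ell(\mathbb{R}^3)$ via the norm equivalence of Lemma \ref{lem:xiTxi-equiv-H12}, and recovery of the operator domain through the Riesz/duality characterisation combined with the bijectivity of $W_\lambda$. Your explicit convex-splitting argument for the $H^{\frac{1}{2}}$-coercivity when $\alpha<0$ is a welcome elaboration of a step the paper's proof only asserts by citing Lemmas \ref{lem:xiTxi-equiv-H12} and \ref{lem:Atildenot0}(ii), and it correctly exhibits the role of the threshold $\lambda>\lambda_\alpha$.
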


  \begin{proof}
   The definition \eqref{AFop-ellnot0} is well posed, as $\big(T_\lambda^{(\ell)}+\alpha\mathbbm{1}\big)\mathcal{D}_\ell\subset H^{\frac{1}{2}}_\ell(\mathbb{R}^3)=\mathrm{ran} W_\lambda$.  
   Lemma \ref{lem:xiTxi-equiv-H12} and the fact that $\widetilde{\mathcal{A}_{\lambda}^{(\ell)}}$ has strictly positive lower bound (Lemma \ref{lem:Atildenot0}(ii)) imply that the map
   \[
    \xi\;\mapsto\;\|\xi\|_{\mathcal{A}}\;:=\;\big\langle \xi,\widetilde{\mathcal{A}_{\lambda}^{(\ell)}}\xi\big\rangle_{H^{-\frac{1}{2}}_{W_\lambda}}^{\frac{1}{2}}\;=\;\Big( 3\big\langle\xi,\big(T_\lambda^{(\ell)}+\alpha\mathbbm{1}\big)\xi\big\rangle_{L^2}\Big)^{\frac{1}{2}}
   \]
   is a norm, and is actually equivalent to the $H^{\frac{1}{2}}$-norm. Let us temporarily denote by $\mathcal{A}_F$ the Friedrichs extension of $\widetilde{\mathcal{A}_{\lambda}^{(\ell)}}$ with respect to $H^{-\frac{1}{2}}_{W_\lambda,\ell}(\mathbb{R}^3)$.

   As prescribed by the Friedrichs construction, $\mathcal{A}_F$ has form domain
   \[
    \mathcal{D}[\mathcal{A}_F]\;=\;\overline{\mathcal{D}\big( \widetilde{\mathcal{A}_{\lambda}^{(\ell)}}\big)}^{\|\,\|_{\mathcal{A}	}}\;=\;\overline{H^{\frac{3}{2}}(\mathbb{R}^3)}^{\|\,\|_{H^{\frac{1}{2}}}}\;=\;H^{\frac{1}{2}}(\mathbb{R}^3)
   \]
   and for $\xi,\eta\in H_\ell^{\frac{1}{2}}(\mathbb{R}^3)$
   \[
    \mathcal{A}_F[\eta,\xi]\;=\;\lim_{n\to\infty}\big\langle\eta_n, \widetilde{\mathcal{A}_{\lambda}^{(\ell)}}\xi_n\big\rangle_{H^{-\frac{1}{2}}_{W_\lambda}}\;=\;3\lim_{n\to\infty}\big\langle\eta_n,\big(T_\lambda^{(\ell)}+\alpha\mathbbm{1}\big)\xi_n \big\rangle_{L^2}
   \]
  for any two sequences $(\xi_n)_n$ and $(\eta_n)_n$ in $H_\ell^{\frac{3}{2}}(\mathbb{R}^3)$ such that $\xi_n\to\xi$ and $\eta_n\to\eta$ in the $\|\,\|_{\mathcal{A}}$-norm, namely in $H_\ell^{\frac{1}{2}}(\mathbb{R}^3)$. Now, interpreting
  \[
   \begin{split}
    \big\langle\eta_n,\big(T_\lambda^{(\ell)}+\alpha\mathbbm{1}\big)\xi_n \big\rangle_{L^2}\;=\;\big\langle\eta_n,\big(T_\lambda^{(\ell)}+\alpha\mathbbm{1}\big)\xi_n \big\rangle_{H^{\frac{1}{2}},H^{-\frac{1}{2}}}
   \end{split}
  \]
  and using the fact that $T_\lambda^{(\ell)}+\alpha\mathbbm{1}$ is a bounded $H_\ell^{\frac{1}{2}}\to H_\ell^{-\frac{1}{2}}$ map (Lemma \ref{lem:Tlambdaproperties}(ii)), we see that $(T_\lambda^{(\ell)}+\alpha\mathbbm{1}\big)\xi_n\to (T_\lambda^{(\ell)}+\alpha\mathbbm{1}\big)\xi$ in $H_\ell^{-\frac{1}{2}}(\mathbb{R}^3)$ and therefore
  \[
   \mathcal{A}_F[\eta,\xi]\;=\;3\lim_{n\to\infty}\big\langle\eta_n,\big(T_\lambda^{(\ell)}+\alpha\mathbbm{1}\big)\xi_n \big\rangle_{H^{\frac{1}{2}},H^{-\frac{1}{2}}}\;=\;3\big\langle\eta,\big(T_\lambda^{(\ell)}+\alpha\mathbbm{1}\big)\xi \big\rangle_{H^{\frac{1}{2}},H^{-\frac{1}{2}}}\,.
  \]
 Formula \eqref{AFform-ellnot0} is thus proved.

 Next, the operator $\mathcal{A}_F$ is derived from its quadratic form in the usual manner, that is,
 \[
  \begin{split}
   \mathcal{D}(\mathcal{A}_F)\;&=\;\left\{
   \xi\in\mathcal{D}[\mathcal{A}_F]\,\left|\!
   \begin{array}{c}
    \exists\,\zeta_\xi\in H^{-\frac{1}{2}}_{W_\lambda,\ell}(\mathbb{R}^3)\textrm{ such that } \\
    \langle\eta,\zeta_\xi\rangle_{H^{-\frac{1}{2}}_{W_\lambda}}\,=\,\mathcal{A}_F[\eta,\xi]\;\;\;\forall\eta\in\mathcal{D}[\mathcal{A}_F]
   \end{array}
   \!\!\right.\right\} \\
   \mathcal{A}_F\,\xi\;&=\;\zeta_\xi
  \end{split}
 \]
 This means, owing to \eqref{eq:W-scalar-product} and \eqref{AFform-ellnot0}, that $\xi\in \mathcal{D}(\mathcal{A}_F)$ if and only if $\xi$ is a $H_\ell^{\frac{1}{2}}$-function with
 \[
  \big\langle\eta, W_\lambda^{(\ell)}\zeta_\xi-3\big(T_\lambda^{(\ell)}+\alpha\mathbbm{1}\big)\xi\big\rangle_{H^{\frac{1}{2}},H^{-\frac{1}{2}}}\;=\;0\qquad \forall \eta\in H_\ell^{\frac{1}{2}}(\mathbb{R}^3)\,.
 \]
 for some $\zeta_\xi\in H_\ell^{-\frac{1}{2}}(\mathbb{R}^3)$, and therefore equivalently
 \[
  \xi\in H_\ell^{\frac{1}{2}}(\mathbb{R}^3)\qquad\textrm{ and }\qquad 3\big(T_\lambda^{(\ell)}+\alpha\mathbbm{1}\big)\xi\;=\;W_\lambda^{(\ell)}\zeta_\xi\,.
 \]
 The second condition, owing to the $H^{-\frac{1}{2}}_\ell\to H^{\frac{1}{2}}_\ell$ bijectivity  of $W_\lambda^{(\ell)}$ (Lemma \ref{lem:Wlambdaproperties}(ii)), is tantamount as $\big(T_\lambda^{(\ell)}+\alpha\mathbbm{1}\big)\xi\in H_\ell^{\frac{1}{2}}(\mathbb{R}^3)$, and moreover $\zeta_\xi=3 W_\lambda^{-1}\big(T_\lambda^{(\ell)}+\alpha\mathbbm{1}\big)\xi$. Formula \eqref{AFop-ellnot0} is proved.   
  \end{proof}

 It is instructive to remark that whereas on the domain $\mathcal{D}_\ell$ the operator $\mathcal{A}_\lambda^{(\ell)}=3W_\lambda^{-1}\big(T_\lambda^{(\ell)}+\alpha\mathbbm{1}\big)$ is self-adjoint with respect to $H^{-\frac{1}{2}}_{W_\lambda,\ell}(\mathbb{R}^3)$, and therefore $T_\lambda^{(\ell)}$ on the same domain is symmetric with respect to $L^2(\mathbb{R}^3)$ (Lemma \ref{lem:symsym}), \emph{however} $T_\lambda^{(\ell)}$ is \emph{not} self-adjoint in $L^2_\ell(\mathbb{R}^3)$.

 \begin{lemma}\label{lem:exampleMinloswrong}
   Let $\lambda>0$, and $\ell\in\mathbb{N}$. The operator
   \begin{equation}
    \begin{split}
     \mathcal{D}\big(\mathsf{T}^{(\ell)}_\lambda\big)\;&:=\;\mathcal{D}_\ell\;=\;\big\{\xi\in H_\ell^{\frac{1}{2}}(\mathbb{R}^3)\,\big|\, T_{\lambda}^{(\ell)}\xi\in H_\ell^{\frac{1}{2}}(\mathbb{R}^3)\big\} \\
     \mathsf{T}^{(\ell)}_\lambda\xi\;&:=\; T_\lambda\xi\qquad\forall\xi\in\mathcal{D}\big(\mathsf{T}^{(\ell)}_\lambda\big)
    \end{split}
   \end{equation}
 is densely defined and symmetric with respect to the Hilbert space $L^2_\ell(\mathbb{R}^3)$. However, it is not self-adjoint.
  \end{lemma}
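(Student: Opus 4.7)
By Lemma~\ref{lem:Tlambdaproperties}(iii) applied with $s=\frac{3}{2}$, the map $T_\lambda^{(\ell)}$ sends $H^{\frac{3}{2}}_\ell(\mathbb{R}^3)$ into $H^{\frac{1}{2}}_\ell(\mathbb{R}^3)$, so $H^{\frac{3}{2}}_\ell(\mathbb{R}^3)\subset\mathcal{D}_\ell$; the density of $\mathcal{D}_\ell$ in $L^2_\ell(\mathbb{R}^3)$ follows at once. For symmetry, given $\xi,\eta\in\mathcal{D}_\ell\subset H^{\frac{1}{2}}_\ell(\mathbb{R}^3)$, part (v) of Lemma~\ref{lem:Tlambdaproperties} yields the identity $\int\overline{\widehat\xi}\,\widehat{T_\lambda\eta}\,\ud\pp=\int\overline{\widehat{T_\lambda\xi}}\,\widehat\eta\,\ud\pp$; since the defining constraint of $\mathcal{D}_\ell$ forces $T_\lambda^{(\ell)}\xi,T_\lambda^{(\ell)}\eta\in H^{\frac{1}{2}}_\ell\subset L^2_\ell$, both pairings are genuine $L^2$ inner products, and $\langle\xi,\mathsf{T}_\lambda^{(\ell)}\eta\rangle_{L^2}=\langle\mathsf{T}_\lambda^{(\ell)}\xi,\eta\rangle_{L^2}$.

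\textbf{A proper self-adjoint extension.} To rule out self-adjointness the plan is to construct a self-adjoint \emph{strict} extension of $\mathsf{T}_\lambda^{(\ell)}$ in $L^2_\ell(\mathbb{R}^3)$, since a self-adjoint operator admits no proper self-adjoint restrictions. Set $\widetilde{\mathsf{T}}_\lambda^{(\ell)}:=T_\lambda^{(\ell)}|_{H^{\frac{3}{2}}_\ell(\mathbb{R}^3)}$, which is densely defined and symmetric in $L^2_\ell(\mathbb{R}^3)$ by the argument above; by Lemma~\ref{lem:xiTxi-equiv-H12}, with the lower-bound constant $\kappa^-=7-\frac{10\pi}{3\sqrt{3}}>0$ valid in all sectors $\ell\in\mathbb{N}$, $\widetilde{\mathsf{T}}_\lambda^{(\ell)}$ is in fact nonnegative, so its Friedrichs extension $\mathsf{T}_F$ in $L^2_\ell(\mathbb{R}^3)$ exists and is self-adjoint. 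An argument parallel to the one used in the proof of Proposition~\ref{prop:Alambdaellnot0}, but in the $L^2$-setting and using Lemma~\ref{lem:Tlambdaproperties}(iii) with $s=\frac{1}{2}$ to give $T_\lambda^{(\ell)}$ a distributional meaning on $H^{\frac{1}{2}}_\ell$, identifies $\mathcal{D}[\mathsf{T}_F]=H^{\frac{1}{2}}_\ell(\mathbb{R}^3)$ and $\mathcal{D}(\mathsf{T}_F)=\{\xi\in H^{\frac{1}{2}}_\ell(\mathbb{R}^3)\,|\,T_\lambda^{(\ell)}\xi\in L^2_\ell(\mathbb{R}^3)\}$ with $\mathsf{T}_F\xi=T_\lambda^{(\ell)}\xi$. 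Plainly $\mathsf{T}_\lambda^{(\ell)}\subset\mathsf{T}_F$, and the job reduces to establishing the strict inclusion $\mathcal{D}_\ell\subsetneq\mathcal{D}(\mathsf{T}_F)$.

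\textbf{Strictness of the inclusion.} The technical core of the argument is to exhibit $\xi\in H^{\frac{1}{2}}_\ell(\mathbb{R}^3)$ with $T_\lambda^{(\ell)}\xi\in L^2_\ell\setminus H^{\frac{1}{2}}_\ell$. I would take a radial profile $\widehat\xi(\pp)=(1+|\pp|^2)^{-\alpha/2}Y_{\ell,0}(\Omega_{\pp})$ with exponent $\alpha\in(\frac{5}{2},3)$: then $\xi\in H^{\frac{1}{2}}_\ell$ (the critical threshold being $\alpha>2$), and the multiplicative piece $2\pi^2\sqrt{\frac{3}{4}|\pp|^2+\lambda}\,\widehat\xi(\pp)$ decays as $|\pp|^{1-\alpha}$, which by direct scaling lies in $L^2\setminus H^{\frac{1}{2}}$ precisely on this range of $\alpha$. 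The main obstacle, calling for a Mellin-type asymptotic analysis in the spirit of Lemma~\ref{lem:xiTxi-equiv-H12} and its auxiliary functions $S_\ell$, is to verify that the integral piece of $\widehat{T_\lambda^{(\ell)}\xi}$ does not cancel the leading $|\pp|^{1-\alpha}$ decay of the multiplicative piece: both contributions have the same power-law behaviour, with coefficients depending on $\alpha$ and $\ell$, and one has to rule out accidental cancellation for generic $\alpha\in(\frac{5}{2},3)$. Granting this, the full $T_\lambda^{(\ell)}\xi$ remains in $L^2_\ell\setminus H^{\frac{1}{2}}_\ell$, so $\mathcal{D}_\ell\subsetneq\mathcal{D}(\mathsf{T}_F)$ and $\mathsf{T}_\lambda^{(\ell)}$ cannot be self-adjoint.
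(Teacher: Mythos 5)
Your overall architecture coincides with the paper's. The paper realises the ``maximal'' operator $\mathsf{Q}_\lambda^{(\ell)}$ with domain $\{\xi\in H^{\frac{1}{2}}_\ell(\mathbb{R}^3)\,|\,T_\lambda^{(\ell)}\xi\in L^2_\ell(\mathbb{R}^3)\}$ as the self-adjoint operator associated with the closed, coercive form $q^{(\ell)}_\lambda[\xi]=\langle\xi,T_\lambda^{(0)}\xi\rangle_{H^{\frac{1}{2}},H^{-\frac{1}{2}}}$ on $H^{\frac{1}{2}}_\ell(\mathbb{R}^3)$; this is exactly your Friedrichs extension $\mathsf{T}_F$, since the form closure of the restriction to $H^{\frac{3}{2}}_\ell(\mathbb{R}^3)$ is precisely that form. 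The conclusion then rests, in both versions, on $\mathsf{T}^{(\ell)}_\lambda\subsetneq\mathsf{T}_F=\mathsf{T}_F^*$. Your density and symmetry arguments are correct and match the paper's.

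The one step you leave genuinely open is the strictness of the inclusion $\mathcal{D}_\ell\subsetneq\mathcal{D}(\mathsf{T}_F)$: you reduce it to a non-cancellation claim between the multiplicative and integral parts of $T_\lambda^{(\ell)}$ on a power-law profile and then write ``granting this''. That asymptotic verification is delicate and, more importantly, unnecessary. A clean argument using only facts already established: the sesquilinear form $(\eta,\xi)\mapsto\langle\eta,T_\lambda^{(\ell)}\xi\rangle_{H^{\frac{1}{2}},H^{-\frac{1}{2}}}$ is bounded on $H^{\frac{1}{2}}_\ell(\mathbb{R}^3)\times H^{\frac{1}{2}}_\ell(\mathbb{R}^3)$ by Lemma \ref{lem:Tlambdaproperties}(iii) and coercive by Lemma \ref{lem:xiTxi-equiv-H12}, so by Lax--Milgram the map $T_\lambda^{(\ell)}:H^{\frac{1}{2}}_\ell(\mathbb{R}^3)\to H^{-\frac{1}{2}}_\ell(\mathbb{R}^3)$ is a bijection. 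Picking any $\eta_0\in L^2_\ell(\mathbb{R}^3)\setminus H^{\frac{1}{2}}_\ell(\mathbb{R}^3)$, its preimage $\xi_0\in H^{\frac{1}{2}}_\ell(\mathbb{R}^3)$ satisfies $T_\lambda^{(\ell)}\xi_0=\eta_0\in L^2_\ell\setminus H^{\frac{1}{2}}_\ell$, hence $\xi_0\in\mathcal{D}(\mathsf{T}_F)\setminus\mathcal{D}_\ell$. (The paper simply calls the strict inclusion ``evident''; the above is the argument behind that claim, and it is the one you should substitute for your unfinished example.)
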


  \begin{proof}
   We already argued prior to stating the Lemma that $\mathsf{T}^{(\ell)}_\lambda$ is densely defined and symmetric in $L^2_\ell(\mathbb{R}^3)$. In fact, the symmetry property
   \[
    \big\langle\eta,\mathsf{T}^{(\ell)}_\lambda\xi \big\rangle_{L^2}\;=\;\big\langle\mathsf{T}^{(\ell)}_\lambda\eta,\xi \big\rangle_{L^2}\qquad\forall\xi,\eta\in \mathcal{D}_\ell
   \]
  also follows directly from Lemma \ref{lem:Tlambdaproperties}(v), because $\xi,\eta\in H^{\frac{1}{2}}(\mathbb{R}^3)$ and $\mathsf{T}^{(\ell)}_\lambda\xi,\mathsf{T}^{(\ell)}_\lambda\eta\in  H^{\frac{1}{2}}(\mathbb{R}^3)\subset L^2(\mathbb{R}^3)$.

  With respect to $L^2_\ell(\mathbb{R}^3)$ the quadratic form
   \[
   \begin{split}
    \mathcal{D}\big(q^{(\ell)}_\lambda\big)\;&:=\;H^{\frac{1}{2}}_\ell(\mathbb{R}^3) \\
    q^{(\ell)}_\lambda[\xi]\;&:=\;\big\langle\xi,T_\lambda^{(\ell)}\xi\big\rangle_{H^{\frac{1}{2}},H^{-\frac{1}{2}}}\;\approx\;\|\xi\|_{H^{\frac{1}{2}}}^2
   \end{split}
   \]
  is densely defined, coercive and hence lower semi-bounded with strictly positive lower bound (as follows from Lemma \ref{lem:xiTxi-equiv-H12}), and closed (because $ \mathcal{D}\big(q^{(\ell)}_\lambda\big)$ is obviously closed with respect to the norm induced by the form, namely the $H^{\frac{1}{2}}$-norm). As such, $q^{(\ell)}$ is the quadratic form of the self-adjoint operator
    \[
  \begin{split}
   \mathcal{D}(\mathsf{Q}_\lambda^{(\ell)})\;&=\;\left\{
   \xi\in\mathcal{D}\big(q^{(\ell)}_\lambda\big)\,\left|\!
   \begin{array}{c}
    \exists\,\zeta_\xi\in L^2_\ell(\mathbb{R}^3)\textrm{ such that } \\
    \langle\eta,\zeta_\xi\rangle_{L^2}\,=\,q^{(\ell)}_\lambda[\eta,\xi]\;\;\;\forall\eta\in\mathcal{D}\big(q^{(\ell)}_\lambda\big)
   \end{array}
   \!\!\right.\right\} \\
   \mathsf{Q}_\lambda^{(\ell)}\,\xi\;&=\;\zeta_\xi\,.
  \end{split}
 \]
  Equivalently, $\xi\in \mathcal{D}\big(\mathsf{Q}_\lambda^{(\ell)}\big)$ if and only if $\xi$ is an $H^{\frac{1}{2}}_\ell$-function such that
  \[
   \big\langle\eta,\zeta_\xi-T_{\lambda}^{(\ell)}\xi\big\rangle_{H^{\frac{1}{2}},H^{-\frac{1}{2}}}\;=\;0\qquad\forall\eta\in H_\ell^{\frac{1}{2}}(\mathbb{R}^3)
  \]
 for some $\zeta_\xi\in L^2_\ell(\mathbb{R}^3)$, and therefore equivalently
 \[
  \xi\in H_\ell^{\frac{1}{2}}(\mathbb{R}^3)\qquad\textrm{ and }\qquad T_{\lambda}^{(\ell)}\xi\,=\,\zeta_\xi\,.
 \]
 The second condition above is tantamount as $ T_{\lambda}^{(\ell)}\xi\in L^2_\ell(\mathbb{R}^3)$. In conclusion,
    \[
  \begin{split}
   \mathcal{D}\big(\mathsf{Q}_\lambda^{(\ell)}\big)\;&=\;\big\{\xi\in H_\ell^{\frac{1}{2}}(\mathbb{R}^3)\,\big|\, T_{\lambda}^{(\ell)}\xi\in L^2_\ell(\mathbb{R}^3)\big\}\\
   \mathsf{Q}_\lambda^{(\ell)}\,\xi\;&=\;T_{\lambda}^{(\ell)}\xi\,.
  \end{split}
 \]

 At this point it is clear that
 \[
  \mathsf{T}^{(\ell)}_\lambda\;\subset\;\mathsf{Q}_\lambda^{(\ell)}\;=\;\big(\mathsf{Q}_\lambda^{(\ell)}\big)^*\,.
 \]
 The lack of self-adjointness of $\mathsf{T}^{(\ell)}_\lambda$ is then evident from the strict inclusion $\mathcal{D}\big(\mathsf{T}^{(\ell)}_\lambda\big)\varsubsetneq\mathcal{D}\big(\mathsf{Q}_\lambda^{(\ell)}\big)$.  
\end{proof}

 \section{Sector of zero angular momentum}\label{sec:lzero}

  The problem of finding a domain $\mathcal{D}_0$ of self-adjointness in the Hilbert space $H^{-\frac{1}{2}}_{W_\lambda,\ell=0}(\mathbb{R}^3)$ for the operator $3W_\lambda^{-1}\big(T_\lambda^{(\ell=0)}+\alpha\mathbbm{1}\big)$ (in the following we shall shorten the full `$\ell=0$' superscript), is more subtle than the analogous problem for $\ell\in\mathbb{N}$ (Subsect.~\ref{sec:selfadj-ellnotzero}), and so too is the quest for a domain $\widetilde{\mathcal{D}_0}$ of sole symmetry.

  This is related with the fact that no Sobolev space $H^s_{\ell=0}(\mathbb{R}^3)$ is entirely mapped by $T_\lambda$ into $H^{\frac{1}{2}}(\mathbb{R}^3)$ (Remark \ref{rem:Tl-failstomap}), so $\widetilde{\mathcal{D}_0}$ cannot be a standard Sobolev space (as opposite to when $\ell\neq 0$: Lemma \ref{lem:Atildenot0}). A related difficulty, that emerges indirectly from the discussion of Lemma \ref{lem:xiTxi-equiv-H12}, is the fact that when $\ell=0$ the map
  \[
   \xi\;\longmapsto\;\int_{\mathbb{R}^3} \overline{\,\widehat{\xi}(\pp)}\, \big(\widehat{T_\lambda\xi}\big)(\pp)\,\ud\pp
  \]
  does not induce any longer an equivalent $H^{\frac{1}{2}}$-norm (see Remark \ref{rem:noequivH12norm} below). In fact, we shall see that any reasonable choice of a domain $\widetilde{\mathcal{D}_0}$ of symmetry for  $W_\lambda^{-1}T_\lambda^{(\ell=0)}$ makes it an unbounded below operator, unlike the lower semi-boundedness of $\mathcal{A}_\lambda^{(\ell)}$ when $\ell\neq 0$ (Lemma \ref{lem:Atildenot0}(ii)).

  These difficulties require an improved analysis that will be presented in this Section. They are also the source of various past mistakes leading to ill-posed models: Section \ref{sec:illposed} discusses such perspective.

  We shall follow the same conceptual path as in Sect.~\ref{sec:higherell}. First we discuss the symmetric case (symmetric realisation of $W_\lambda^{-1}T_\lambda^{(\ell=0)}$ and hence Ter-Martirosyan Skornyakov symmetric extension of $\mathring{H}$), then the self-adjoint case  (self-adjoint $W_\lambda^{-1}T_\lambda^{(\ell=0)}$ and hence self-adjoint TMS extension). For each two steps, an amount of technical preparation is needed.

  Here we opt to discuss explicitly only a special scenario, in fact the physically most relevant one: zero-range interaction with infinite scattering length, hence $\alpha=0$. This is the regime of \emph{unitarity} that we presented in the introduction. 

  \subsection{Mellin-like transformations}~

  For fixed $\lambda>0$, to each charge of interest $\xi\in H_{\ell=0}^{s}(\mathbb{R}^3)$, written according to \eqref{eq:xihatangularexpansion} as
  \begin{equation}\label{eq:0xi}
   \widehat{\xi}(\pp)\;=\;\frac{1}{\sqrt{4\pi}}f(|\pp|)\,,\qquad \pp\equiv|\pp|\Omega_{\pp}\,,\qquad f\in L^2(\mathbb{R}^+,(1+p^2)^sp^2\ud p)\,,
  \end{equation}
   we shall associate an odd, measurable function $\theta:\mathbb{R}\to\mathbb{C}$ defined by
  \begin{equation}\label{ftheta-1}
   \begin{split}
    \theta(x)\;&:=\;
    \begin{cases}
     \lambda f\big(\frac{2\sqrt{\lambda}}{\sqrt{3}}\sinh x\big)\sinh x\cosh x & \textrm{ if }x\geqslant 0 \\
     -\theta(-x) & \textrm{ if }x< 0
    \end{cases} \\
    x\;&:=\;\log\bigg(\sqrt{\frac{3p^2}{4\lambda}}+\sqrt{\frac{3p^2}{4\lambda}+1}\bigg)\,,\qquad p\,:=\,|\pp|\,.
   \end{split}
  \end{equation}
  The inverse transformation is
  \begin{equation}\label{ftheta-2}
   \begin{split}
    f(p)\;&=\;\frac{\,\theta\Big(\log\Big(\sqrt{\frac{3p^2}{4\lambda}}+\sqrt{\frac{3p^2}{4\lambda}+1}\Big)\Big)\,}{\sqrt{\frac{3}{4}p^2\,}\,\sqrt{\frac{3}{4}p^2+\lambda}} \\
    p\;&=\;\frac{2\sqrt{\lambda}}{\sqrt{3}}\,\sinh x\qquad \textrm{ for }x\geqslant 0\,.
   \end{split}
  \end{equation}
   The above change of variable $p\leftrightarrow x$ is a homeomorphism on $\mathbb{R}^+$, with also
   \begin{equation}\label{ftheta-3-eq:pxchangevar}
    \sinh x\;=\;\sqrt{\frac{3}{4}p^2\,}\,,\quad\sqrt{\lambda}\cosh x\;=\;\sqrt{\frac{3}{4}p^2+\lambda\,}\,,\quad\ud p\;=\;\frac{2\sqrt{\lambda}}{\sqrt{3}}\,\cosh x\,\ud x\,.
   \end{equation}
 It is for later convenience that the induced function $\theta$ on $\mathbb{R}^+$ has been extended by odd parity over the whole real line. 

 We shall refer to the function $\theta$ defined by \eqref{eq:0xi}-\eqref{ftheta-1} as the \emph{re-scaled radial component associated with the charge $\xi$ and with parameter $\lambda$}. When such correspondence need be emphasized, we shall write $\theta^{(\xi)}$.

   Using \eqref{ftheta-1}-\eqref{ftheta-3-eq:pxchangevar} and the fact that $1+p^2\sim\frac{3}{4}p^2+\lambda$, in the sense that each side is controlled from above and from below by the other with some $\lambda$-dependent constant, a straightforward computation gives
   \begin{equation}\label{eq:mellinnorms}
    \big\|\xi\big\|_{H^s(\mathbb{R}^3)}^2\;\approx\; 
    \big\|(\cosh x)^{s-\frac{1}{2}}\theta\big\|_{L^2(\mathbb{R})}^2
   \end{equation}
   in the sense of equivalence of norms (with $\lambda$-dependent multiplicative constants).

  Let us introduce further definitions and properties that are going to be useful in the course of the present discussion.

   The function $\theta$ having odd parity on $\mathbb{R}$, one has the identities
   \begin{equation}\label{eq:h-identities}
    \begin{split}
     &\int_{\mathbb{R}^+}\ud x\,\theta(x)\Big(\log \frac{\,2\cosh(x+y)-1\,}{\,2\cosh(x+y)+1\,}+\log \frac{\,2\cosh(x-y)+1\,}{\,2\cosh(x-y)-1\,}\Big) \\
     &=\;\int_{\mathbb{R}}\ud x\,\theta(x) \log \frac{\,2\cosh(x+y)-1\,}{\,2\cosh(x+y)+1\,}\;=\;\int_{\mathbb{R}}\ud x\,\theta(x) \log \frac{\,2\cosh(x-y)+1\,}{\,2\cosh(x-y)-1\,}\,.
    \end{split}
   \end{equation}

  Moreover (see, e.g., \cite[I.1.9.(50)]{Erdelyi-Tables1}),
  \begin{equation}\label{eq:logfourier}
   \Big(\log \frac{\,2\cosh x+1\,}{\,2\cosh x-1\,}\Big){\!\!\textrm{\huge ${\,}^{\widehat{\,}}$\normalsize}}\,(s)\;=\;\sqrt{2\pi}\,\frac{\sinh\frac{\pi}{6}s}{\,s\,\cosh \frac{\pi}{2}s}\,.
  \end{equation}
  By means of \eqref{eq:logfourier}, taking the Fourier transform in the following convolution yields
   \begin{equation}\label{eq:fourierconvolution}
     \bigg(\int_{\mathbb{R}}\ud x\,\theta(y) \log \frac{\,2\cosh(x-y)+1\,}{\,2\cosh(x-y)-1\,}\bigg){\!\!\textrm{\huge ${\,}^{\widehat{\,}}$\normalsize}}\,(s)\;=\;2\pi\,\widehat{\theta}(s)\,\frac{\sinh\frac{\pi}{6}s}{\,s\,\cosh \frac{\pi}{2}s}\,.
   \end{equation}
   Here and in the following the $s$-dependence in $\widehat{\theta}(s)$ is only symbolic, to indicate that the object $\widehat{\theta}$ is a distribution on test functions of $s\in\mathbb{R}$. Of course in special cases $\widehat{\theta}(s)$ may well be an ordinary function.

  Let $\gamma$ be the distribution on $\mathbb{R}$ defined by
  \begin{equation}\label{eq:gamma-distribution}
   \widehat{\gamma}(s)\;:=\;1-\frac{8}{\sqrt{3}}\,\frac{\sinh\frac{\pi}{6}s}{\,s\,\cosh \frac{\pi}{2}s}\,.
  \end{equation}
  The function $\mathbb{R}\ni s\mapsto\widehat{\gamma}(s)$ is smooth, even, strictly monotone increasing (resp., decreasing) for $s>0$ (resp., $s<0$), with values in $[1-\frac{4\pi}{3\sqrt{3}},1)$, asymptotically approaching $1$ as $s\to\pm\infty$, and with absolute minimum $ \widehat{\gamma}(0)=-(\frac{4\pi}{3\sqrt{3}}-1)$. The equation $\widehat{\gamma}(s)=0$ has thus simple roots $s=\pm s_0$, with $s_0\approx 1.0062$. We also define
  \begin{equation}\label{eq:gammaplus}
   \widehat{\gamma}_+(s)\;:=\;\frac{1}{\,(s-s_0)(s+s_0)\,}\, \widehat{\gamma}(s)\,.	
  \end{equation}
  $\widehat{\gamma}_+$ is therefore strictly positive, smooth, even, monotone to zero decreasing for $s>0$ with $s^{-2}$ decay, and with absolute maximum  $\widehat{\gamma}_+(0)=s_0^{-2}(\frac{4\pi}{3\sqrt{3}}-1)$ (Figure \ref{fig:gammagammaplus}).

\begin{figure}[t!]
\includegraphics[width=8cm]{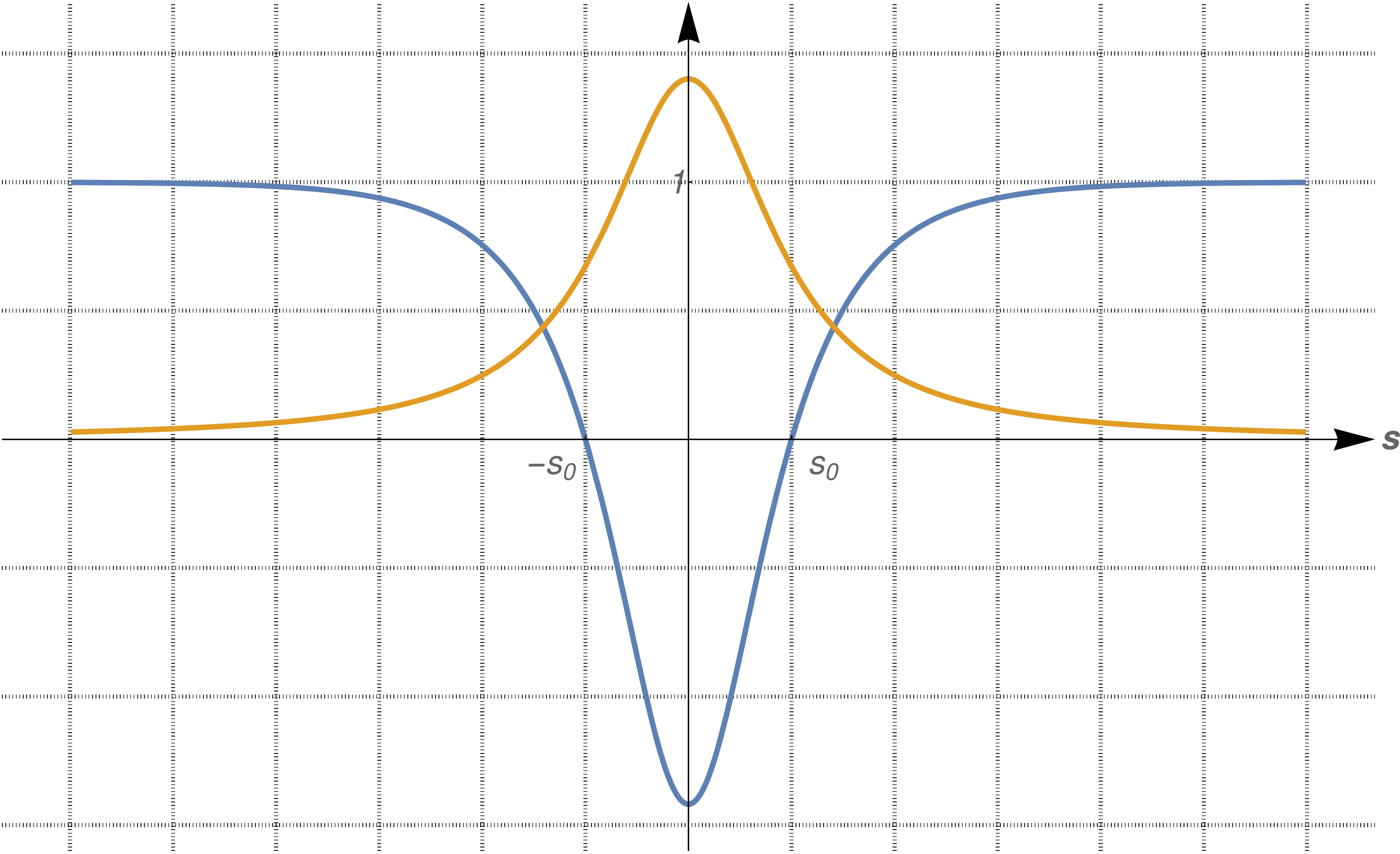}
\caption{Plot of the functions $\widehat{\gamma}(s)$ (blue) and $\widehat{\gamma}_+(s)$ (orange), defined respectively in \eqref{eq:gamma-distribution} and \eqref{eq:gammaplus}.}\label{fig:gammagammaplus}
\end{figure}

   Further quantities of interest involving $\xi$ are conveniently expressed in terms of the auxiliary function $\theta$ or its (one-dimensional) Fourier transform $\widehat{\theta}$.

   \begin{lemma}\label{lem:xithetaidentities}
    Let $\lambda>0$, $s\in\mathbb{R}$, and $\xi$ be as in \eqref{eq:0xi}. One has the identities  
    \begin{equation}\label{eq:Tlxi-theta}
     \big(\widehat{T_\lambda^{(0)}\xi}\big)(\pp)\;=\;\frac{1}{\sqrt{4\pi}\,|\pp|}\,\frac{4\pi^2}{\sqrt{3}\,}\Big(\theta(x)-\frac{4}{\pi\sqrt{3}}\int_{\mathbb{R}}\ud y\,\theta(y) \log \frac{\,2\cosh(x-y)+1\,}{\,2\cosh(x-y)-1\,}\Big),
    \end{equation}
        \begin{equation}\label{eq:Tlambda0xi-with-theta}
     \begin{split}
      \big\|T_\lambda^{(0)}\xi\big\|_{H^{s}(\mathbb{R}^3)}^2\;&\approx\;\int_{\mathbb{R}}\ud x\,(\cosh x)^{1+2s}\,\bigg|\,\theta(x)-\frac{4}{\pi\sqrt{3}}\int_{\mathbb{R}}\ud y\,\theta(y) \log \frac{\,2\cosh(x-y)+1\,}{\,2\cosh(x-y)-1\,}\bigg|^2,
     \end{split}
    \end{equation}
   and 
    \begin{equation}\label{eq:xiTxi-with-theta}
     \int_{\mathbb{R}^3} \overline{\,\widehat{\xi}(\pp)}\, \big(\widehat{T_\lambda^{(0)}\xi}\big)(\pp)\,\ud\pp\;=\;\frac{\,8\pi^2}{3\sqrt{3}}\int_{\mathbb{R}}\ud s\, \widehat{\gamma}(s)\,|\widehat{\theta}(s)|^2\,,
    \end{equation}
    with $x$ and $\theta$ given by \eqref{ftheta-1}, and $\gamma$ given by \eqref{eq:gamma-distribution}.
    In \eqref{eq:Tlxi-theta} it is understood that $x\geqslant 0$, and \eqref{eq:Tlambda0xi-with-theta} is meant as an equivalence of norms (with $\lambda$-dependent multiplicative constant).    
   \end{lemma}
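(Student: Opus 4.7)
The three identities all follow by unfolding $T_\lambda^{(0)}$ through the hyperbolic change of variable \eqref{ftheta-1} and exploiting the oddness of $\theta$. My plan is therefore to carry out the explicit computation in three stages, each re-using the previous.

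First, I would specialise Lemma \ref{lem:Tlambdadecomposition}(i) to $\ell=0$ (so $P_0\equiv 1$) and perform the elementary angular integral
\[
\int_{-1}^1\frac{\ud t}{p^2+q^2+pq\,t+\lambda}\;=\;\frac{1}{pq}\log\frac{p^2+q^2+pq+\lambda}{p^2+q^2-pq+\lambda}\,.
\]
The key algebraic step is the factorisation
\[
p^2+q^2\pm pq+\lambda\;=\;\frac{\lambda}{3}\bigl[2\cosh(x\pm y)\mp 1\bigr]\bigl[2\cosh(x\mp y)\pm 1\bigr]
\]
under $p=\tfrac{2\sqrt\lambda}{\sqrt 3}\sinh x$, $q=\tfrac{2\sqrt\lambda}{\sqrt 3}\sinh y$, which is verified by expanding $\sinh^2 x+\sinh^2 y\pm\sinh x\sinh y$ via $\cosh(x\pm y)$. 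Consequently the log splits as
\[
\log\frac{p^2+q^2+pq+\lambda}{p^2+q^2-pq+\lambda}
\;=\;\log\frac{2\cosh(x+y)-1}{2\cosh(x+y)+1}+\log\frac{2\cosh(x-y)+1}{2\cosh(x-y)-1}\,,
\]
and the oddness of $\theta$ combined with \eqref{eq:h-identities} collapses the integral over $\mathbb{R}^+$ in $y$ into a single integral over $\mathbb{R}$. Using \eqref{ftheta-3-eq:pxchangevar} to rewrite $\sqrt{\tfrac34 p^2+\lambda}\,f(p)=\theta(x)/(\sqrt\lambda\sinh x)$ and tracking the prefactors then yields precisely \eqref{eq:Tlxi-theta}.

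Next, for \eqref{eq:Tlambda0xi-with-theta}, I would observe that $T_\lambda^{(0)}\xi$ is itself $\ell=0$ with radial profile
\[
\tilde f(p)\;=\;\frac{4\pi^2}{\sqrt 3\,p}\,\Theta(x)\,,\qquad
\Theta(x)\,:=\,\theta(x)-\tfrac{4}{\pi\sqrt 3}\!\int_{\mathbb{R}}\theta(y)\log\tfrac{2\cosh(x-y)+1}{2\cosh(x-y)-1}\,\ud y
\]
(which is also odd by the same symmetry argument). Applying \eqref{eq:mellinnorms} with $\tilde f$ in place of $f$, and using $(1+p^2)^s\approx \lambda^s(\cosh x)^{2s}$ together with $\ud p=\tfrac{2\sqrt\lambda}{\sqrt 3}\cosh x\,\ud x$, the Jacobian gives exactly the weight $(\cosh x)^{1+2s}$ and extends the integration to $\mathbb{R}$ via parity of $|\Theta|^2$.

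Finally, for \eqref{eq:xiTxi-with-theta}, I would compute $\langle \widehat\xi,\widehat{T_\lambda^{(0)}\xi}\rangle$ in polar coordinates as $\int_0^\infty\overline{f(p)}\tilde f(p)\,p^2\,\ud p$, change variable $p\mapsto x$, and again pass from $\mathbb{R}^+$ to $\mathbb{R}$ by evenness of $\overline\theta\,\Theta$; the constant $\tfrac{8\pi^2}{3\sqrt 3}$ emerges from collecting the prefactors. Parseval then reduces the problem to computing the Fourier multiplier of the integral operator $\theta\mapsto\Theta$, and this is exactly the content of \eqref{eq:fourierconvolution} combined with \eqref{eq:gamma-distribution}: one obtains $\widehat\Theta(s)=\widehat\gamma(s)\widehat\theta(s)$, so $\int_{\mathbb{R}}\overline\theta\,\Theta\,\ud x=\int_{\mathbb{R}}\widehat\gamma(s)|\widehat\theta(s)|^2\,\ud s$, completing \eqref{eq:xiTxi-with-theta}.

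The only genuinely non-trivial step is the hyperbolic factorisation of $p^2+q^2\pm pq+\lambda$; everything else is bookkeeping of constants and Jacobians, together with two invocations of the oddness of $\theta$ to symmetrise integrals from $\mathbb{R}^+$ to $\mathbb{R}$.
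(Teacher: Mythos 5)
Your proposal is correct and follows essentially the same route as the paper: specialise the $\ell=0$ formulas of Lemma \ref{lem:Tlambdadecomposition}, perform the hyperbolic substitution \eqref{ftheta-1}--\eqref{ftheta-3-eq:pxchangevar} with the factorisation that splits the logarithm into the $(x+y)$ and $(x-y)$ pieces, symmetrise to $\mathbb{R}$ via odd parity and \eqref{eq:h-identities}, and conclude \eqref{eq:xiTxi-with-theta} by Parseval together with \eqref{eq:fourierconvolution}--\eqref{eq:gamma-distribution} (your shortcut of computing the pairing directly from \eqref{eq:Tlxi-theta} rather than from \eqref{eq:xiTxi} is exactly what the paper does later in Lemma \ref{lem:regularisedlemma1}). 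One bookkeeping slip: as written, your $\pm/\mp$ factorisation gives the same product for both sign choices; the lower-sign case should read $p^2+q^2-pq+\lambda=\frac{\lambda}{3}\,(2\cosh(x+y)+1)(2\cosh(x-y)-1)$ --- the log-splitting identity you actually use is nevertheless the correct one.
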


 \begin{proof}
  Specialising formula \eqref{eq:fellsector} of Lemma \ref{lem:Tlambdadecomposition}(i) with the Legendre polynomial $P_0\equiv 1$ gives
  \[
   \big(\widehat{T_\lambda^{(0)}\xi}\big)(\pp)\;=\;\frac{1}{\sqrt{4\pi}\,}\,\frac{1}{p}\Big(2\pi^2pf(p)\sqrt{\frac{3}{4}p^2+\lambda\,}\,-\,4\pi\int_{\mathbb{R}^+}\!\ud q\,q f(q)\,\log\frac{\,p^2+q^2+pq+\lambda\,}{p^2+q^2-pq+\lambda}\Big)\,.
  \]
  With $p=\frac{2\sqrt{\lambda}}{\sqrt{3}}\sinh x$ and $q=\frac{2\sqrt{\lambda}}{\sqrt{3}}\sinh y$ one has 
  \[\tag{*}\label{logppqq}
   \begin{split}
    \frac{\,p^2+q^2+pq+\lambda\,}{p^2+q^2-pq+\lambda}\;&=\;\frac{\,\sinh^2 x+\sinh^2y+\sinh x\,\sinh y+\frac{3}{4}\,}{\,\sinh^2 x+\sinh^2y-\sinh x\,\sinh y+\frac{3}{4}\,} \\
    &=\;\frac{\,2\cosh(x+y)-1\,}{\,2\cosh(x+y)+1\,}\,\frac{\,2\cosh(x-y)+1\,}{\,2\cosh(x-y)-1\,}\,.
   \end{split}
  \]
 Using the latter identity and \eqref{ftheta-2} one then finds
  \[
   \sqrt{4\pi}\,|\pp|\,\big(\widehat{T_\lambda^{(0)}\xi}\big)(\pp)\;=\;\frac{4\pi^2}{\sqrt{3}\,}\Big(\theta(x)-\frac{4}{\pi\sqrt{3}}\int_{\mathbb{R}^+}\!\ud y\,\theta(y)\big(A_x(y)+B_x(y)\big)\Big)\,,
  \]
  where
  \[
   A_x(y)\;:=\;\log \frac{\,2\cosh(x+y)-1\,}{\,2\cosh(x+y)+1\,}\,,\qquad B_x(y)\;:=\;\log \frac{\,2\cosh(x-y)+1\,}{\,2\cosh(x-y)-1\,}\,.
  \]
  Combining this with \eqref{eq:h-identities} yields \eqref{eq:Tlxi-theta}. 
  
  Next, by means of \eqref{ftheta-3-eq:pxchangevar} and \eqref{eq:Tlxi-theta} we find
   \[
    \begin{split}
      &\big\|T_\lambda^{(0)}\xi\big\|_{H^{s}(\mathbb{R}^3)}^2 \\
      &\approx\;\int_{\mathbb{R}^+}\ud p\,p^2\,\Big(\frac{3}{4}p^2+\lambda\Big)^s\,\bigg|\frac{1}{p}\,\Big(\theta(x)-\frac{4}{\pi\sqrt{3}}\int_{\mathbb{R}}\ud y\,\theta(y) \log \frac{\,2\cosh(x-y)+1\,}{\,2\cosh(x-y)-1\,}\Big)\bigg|^2 \\
      &=\frac{\,2\lambda^{\frac{1}{2}+s}}{\sqrt{3}}\int_{\mathbb{R}^+}\ud x\,(\cosh x)^{1+2s}\,\bigg|\,\theta(x)-\frac{4}{\pi\sqrt{3}}\int_{\mathbb{R}}\ud y\,\theta(y) \log \frac{\,2\cosh(x-y)+1\,}{\,2\cosh(x-y)-1\,}\bigg|^2.
    \end{split}
   \]
   Owing to the odd parity of $\theta$ and to \eqref{eq:h-identities}, the integrand function above is invariant under change of variable $x\mapsto -x$, therefore the last line can be re-written as
   \[
    \frac{\,\lambda^{\frac{1}{2}+s}}{\sqrt{3}}\int_{\mathbb{R}}\ud x\,(\cosh x)^{1+2s}\,\bigg|\,\theta(x)-\frac{4}{\pi\sqrt{3}}\int_{\mathbb{R}}\ud y\,\theta(y) \log \frac{\,2\cosh(x-y)+1\,}{\,2\cosh(x-y)-1\,}\bigg|^2 
   \]
  This gives \eqref{eq:Tlambda0xi-with-theta}.

   Concerning \eqref{eq:xiTxi-with-theta}, specialising formulas \eqref{eq:xiTxipre}-\eqref{eq:xiTxi} of Lemma \ref{lem:Tlambdadecomposition}(ii) with the Legendre polynomial $P_0\equiv 1$ gives
    \begin{equation*}
   \begin{split}
    &\frac{1}{2\pi^2}\int_{\mathbb{R}^3} \overline{\,\widehat{\xi}(\pp)}\, \big(\widehat{T_\lambda^{(0)}\xi}\big)(\pp)\,\ud\pp\;=\;\int_{\mathbb{R}^+}\!\ud p\,\sqrt{{\textstyle\frac{3}{4}}p^2+\lambda}\,|p f(p)|^2 \\
   &\qquad -\frac{2}{\pi}\iint_{\mathbb{R}^+\times\mathbb{R}^+}\ud p\,\ud q\, (p \overline{f(p)})\,(q f(q))\,\log\frac{\,p^2+q^2+pq+\lambda\,}{p^2+q^2-pq+\lambda}
    \end{split}
  \end{equation*}
   The first summand in the r.h.s.~above can be re-written as
   \[
    \int_{\mathbb{R}^+}\!\ud p\,\sqrt{{\textstyle\frac{3}{4}}p^2+\lambda}\,|p f(p)|^2\;=\;\frac{8}{\,3\sqrt{3}}\int_{\mathbb{R}^+}\ud x\,|\theta(x)|^2\;=\;\frac{4}{\,3\sqrt{3}}\int_{\mathbb{R}}\ud x\,|\theta(x)|^2\,,
   \]
  having used \eqref{ftheta-2}-\eqref{ftheta-3-eq:pxchangevar} in the first step and the odd parity of $\theta$ in the second. Analogously, and using also \eqref{logppqq} and \eqref{eq:h-identities}, the second summand becomes
  \[
   \begin{split}
    & \frac{2}{\pi}\iint_{\mathbb{R}^+\times\mathbb{R}^+}\ud p\,\ud q\, (p \overline{f(p)})\,(q f(q))\,\log\frac{\,p^2+q^2+pq+\lambda\,}{p^2+q^2-pq+\lambda} \\
    &=\;\frac{32}{9\pi}\iint_{\mathbb{R}^+\times\mathbb{R}^+}\ud x\,\ud y\,\overline{\theta(x)}\,\theta(y)\Big(\log \frac{\,2\cosh(x+y)-1\,}{\,2\cosh(x+y)+1\,}+\log \frac{\,2\cosh(x-y)+1\,}{\,2\cosh(x-y)-1\,}\Big) \\
     &=\;\frac{16}{9\pi}\iint_{\mathbb{R}\times\mathbb{R}}\ud x\,\ud y\,\overline{\theta(x)}\,\theta(y)\log \frac{\,2\cosh(x-y)+1\,}{\,2\cosh(x-y)-1\,}\,.
   \end{split}
  \]
  Thus,
      \begin{equation*}
   \begin{split}
    \frac{1}{2\pi^2}\int_{\mathbb{R}^3} \overline{\,\widehat{\xi}(\pp)}\, &\big(\widehat{T_\lambda^{(0)}\xi}\big)(\pp)\,\ud\pp\;=\;\frac{4}{\,3\sqrt{3}}\bigg(\int_{\mathbb{R}}\ud x\,|\theta(x)|^2-\\
   &\qquad -\frac{4}{\pi\sqrt{3}}\iint_{\mathbb{R}\times\mathbb{R}}\ud x\,\ud y\,\overline{\theta(x)}\,\theta(y)\log \frac{\,2\cosh(x-y)+1\,}{\,2\cosh(x-y)-1\,}\bigg)\,.
    \end{split}
  \end{equation*}
  Applying Parseval's identity in both summands of the above r.h.s., and using \eqref{eq:fourierconvolution} in the second summand, one gets
  \[
   \frac{1}{2\pi^2}\int_{\mathbb{R}^3} \overline{\,\widehat{\xi}(\pp)}\, \big(\widehat{T_\lambda^{(0)}\xi}\big)(\pp)\,\ud\pp\;=\;\frac{4}{\,3\sqrt{3}}\bigg(\int_{\mathbb{R}}\ud s\,|\widehat{\theta}(s)|^2-\frac{8}{\sqrt{3}}\int_{\mathbb{R}}\ud s\,|\widehat{\theta}(s)|^2\frac{\sinh\frac{\pi}{6}s}{\,s\,\cosh \frac{\pi}{2}s}\bigg).
  \]
  This, and the definition \eqref{eq:gamma-distribution}, finally prove \eqref{eq:xiTxi-with-theta}.   
  \end{proof}

  \begin{remark}\label{rem:noequivH12norm}
   As the function $\widehat{\gamma}$ attains both positive and negative values, formula \eqref{eq:xiTxi-with-theta} shows that the pairing $\int_{\mathbb{R}^3} \overline{\,\widehat{\xi}(\pp)}\, \big(\widehat{T_\lambda^{(0)}\xi}\big)(\pp)\,\ud\pp$ is \emph{not equivalent} to the $L^2$-norm of the associated re-scaled radial function $\theta^{(\xi)}$, and therefore (owing to \eqref{eq:mellinnorms}) does not induce an equivalent $H^{\frac{1}{2}}$-norm, in contrast with the analogous properties in the sectors with $\ell\neq 0$ (Lemma \ref{lem:xiTxi-equiv-H12}).   
  \end{remark}

  \begin{lemma}\label{lem:xiWxi-zero}
   Let $\lambda>0$ and let $\xi_1,\xi_2$ be spherically symmetric functions with re-scaled radial components $\theta_1,\theta_2$ respectively, according to the definition \eqref{eq:0xi}-\eqref{ftheta-1}. Then 
   \begin{equation}\label{eq:xiWxi-zero}
    \begin{split}
      &\int_{\mathbb{R}^3} \overline{\,\widehat{\xi}_1(\pp)}\, \big(\widehat{W_\lambda^{(0)}\xi_2}\big)(\pp)\,\ud\pp\;=\;\frac{4\pi^2}{\,\lambda\sqrt{3}\,}\int_{\mathbb{R}}\frac{\,\overline{\theta_1}(x)\,\theta_2(x)}{\,(\cosh x)^2}\,\ud x \\
      &\qquad +\frac{32\pi}{\lambda}\iint_{\mathbb{R}\times\mathbb{R}}\frac{\overline{\theta_1(x)}\,\theta_2(y)}{\,(2\cosh(x+y)+1)\,(2\cosh(x-y)-1)\,}\,\ud x\,\ud y
    \end{split}
   \end{equation}
   and also
   \begin{equation}\label{eq:xiWxi-zero-FOURIER}
    \begin{split}
      &\int_{\mathbb{R}^3} \overline{\,\widehat{\xi}_1(\pp)}\, \big(\widehat{W_\lambda^{(0)}\xi_2}\big)(\pp)\,\ud\pp\;=\;\frac{2\pi^2}{\,\lambda\sqrt{3}\,}\int_{\mathbb{R}}\Big(\frac{s}{\,\sinh\frac{\pi}{2}s}*\overline{\widehat{\theta}_1}\Big)(s)\,\widehat{\theta}_2(s)\,\ud s \\
      &\qquad +\frac{16\pi}{3\lambda}\iint_{\mathbb{R}\times\mathbb{R}}\overline{\widehat{\theta}_1(s)}\,\widehat{\theta}_2(t)\,\frac{\sinh\frac{\pi}{6}(s+t)}{\sinh \frac{\pi}{2}(s+t)}\,\frac{\sinh\frac{\pi}{3}(s-t)}{\sinh \frac{\pi}{2}(s-t)}\,\ud s\,\ud t\,.
    \end{split}
   \end{equation}
   In the double integrals above the order of integration is not specified, tacitly understanding that $\theta_1
   \theta_2$ (or $\widehat{\theta}_1\widehat{\theta}_2$) is sufficiently integrable, depending on the applications.
  \end{lemma}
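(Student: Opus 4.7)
The plan is to reduce both identities to one-dimensional integrations in the logarithmic-hyperbolic variable $x$ introduced in \eqref{ftheta-1}. I would start from the defining formula \eqref{eq:Wlambda}, insert $\widehat{\xi_j}(\pp)=f_j(|\pp|)/\sqrt{4\pi}$, and carry out the angular integrations afforded by the spherical symmetry. The multiplicative part of $W_\lambda$ yields $3\pi^2\int_0^\infty p^2\,\overline{f_1(p)}\,f_2(p)/\sqrt{\tfrac34 p^2+\lambda}\,\ud p$; substituting \eqref{ftheta-2}--\eqref{ftheta-3-eq:pxchangevar} and using the oddness of $\theta_j$ to extend from $\mathbb{R}^+$ to $\mathbb{R}$ produces precisely the first summand of \eqref{eq:xiWxi-zero}. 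For the convolution part I would use $\int_{-1}^1 (a+bt)^{-2}\,\ud t=2/(a^2-b^2)$ and then the factorization
\begin{equation*}
 p^2+q^2+\lambda\pm pq\;=\;\tfrac{\lambda}{3}\bigl(2\cosh(x+y)\pm 1\bigr)\bigl(2\cosh(x-y)\mp 1\bigr)\,,
\end{equation*}
derivable by exactly the same manipulations as \eqref{logppqq} (expand via $\sinh^2 x=(\cosh 2x-1)/2$ and use $\cosh 2x+\cosh 2y=2\cosh(x+y)\cosh(x-y)$). Collecting the Jacobian and measure factors reduces the cross term to an integral on $(\mathbb{R}^+)^2$ with kernel $\sinh x\,\sinh y\cdot[(4\cosh^2(x+y)-1)(4\cosh^2(x-y)-1)]^{-1}$.

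The key algebraic step for matching the form of \eqref{eq:xiWxi-zero} is the identity
\begin{equation*}
 K(x,y)-K(-x,y)\;=\;\frac{8\sinh x\sinh y}{(4\cosh^2(x+y)-1)(4\cosh^2(x-y)-1)}\,,
\end{equation*}
where $K(x,y):=[(2\cosh(x+y)+1)(2\cosh(x-y)-1)]^{-1}$, which follows in one line from a common-denominator computation and $\cosh(x+y)-\cosh(x-y)=2\sinh x\sinh y$. Splitting $\iint_{\mathbb{R}^2}\overline{\theta_1(x)}\theta_2(y)K(x,y)\,\ud x\,\ud y$ into the four quadrants and using the oddness of $\theta_j$ together with $K(-x,-y)=K(x,y)$ and $K(x,-y)=K(-x,y)$, one collapses the full-plane integral to $2\iint_{(\mathbb{R}^+)^2}\overline{\theta_1}\,\theta_2\,[K(x,y)-K(-x,y)]\,\ud x\,\ud y$; the displayed identity then recognizes this as exactly $16$ times the quadrant integral just obtained, which (after accounting for the overall constant) yields \eqref{eq:xiWxi-zero}.

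For \eqref{eq:xiWxi-zero-FOURIER} I would apply Plancherel to each summand of \eqref{eq:xiWxi-zero}. The first requires only the classical Fourier transform of $\cosh^{-2}x$, equal to $\sqrt{\pi/2}\,s/\sinh(\pi s/2)$, after which Parseval and the commutativity/evenness of the kernel give the first line. For the second I would change variables $u=x+y$, $v=x-y$, which tensorizes the kernel as $K_+(u)K_-(v)$ with $K_+(u)=\sinh(u/2)/\sinh(3u/2)$ and $K_-(v)=\cosh(v/2)/\cosh(3v/2)$ (closed forms obtained by multiplying numerator and denominator of $1/(2\cosh u\pm 1)$ by $2\sinh(u/2)$, resp.\ $2\cosh(v/2)$). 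The 2D Fourier transform then factorizes, and the one-dimensional transforms
\begin{equation*}
 \widehat{K_+}(\alpha)\,=\,\sqrt{\tfrac{2\pi}{3}}\,\frac{\sinh(\pi\alpha/3)}{\sinh(\pi\alpha)}\,,\qquad \widehat{K_-}(\alpha)\,=\,\sqrt{\tfrac{2\pi}{3}}\,\frac{\sinh(2\pi\alpha/3)}{\sinh(\pi\alpha)}
\end{equation*}
follow from the standard Gradshteyn--Ryzhik evaluations of $\int_0^\infty \sinh(ax)\cos(cx)/\sinh(bx)\,\ud x$ and its cosh-over-cosh counterpart with $a=1/2$, $b=3/2$, combined with the elementary $1+2\cosh(2\pi\alpha/3)=\sinh(\pi\alpha)/\sinh(\pi\alpha/3)$; reassembling reproduces the second summand of \eqref{eq:xiWxi-zero-FOURIER}.

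The substantive obstacle, beyond careful bookkeeping of Jacobians, measure factors, and Fourier conventions, is the quadrant-to-plane reduction described in the second paragraph: one must recognize the specific antisymmetric combination of $K$'s produced by the oddness of $\theta_j$ as the exact quadrant kernel furnished by the convolution part of $W_\lambda$, and the algebraic identity displayed there is precisely what bridges the two.
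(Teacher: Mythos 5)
Your argument is correct and follows essentially the same route as the paper's proof: the paper likewise specialises \eqref{eq:Wellsp} to $P_0\equiv 1$, performs the $t$-integration, passes to the variables $x,y$ via \eqref{ftheta-2}--\eqref{ftheta-3-eq:pxchangevar}, uses the odd parity of $\theta_1,\theta_2$ to trade the quadrant integral for the full-plane one (your antisymmetrisation identity for $K$ is exactly the paper's partial-fraction step $\frac{1}{A-pq}-\frac{1}{A+pq}=\frac{2pq}{A^{2}-(pq)^{2}}$ read in the $x,y$ variables), and obtains \eqref{eq:xiWxi-zero-FOURIER} from the same table transforms of $(\cosh x)^{-2}$ and $(2\cosh x\pm1)^{-1}$. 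One harmless slip: your displayed factorisation should read $p^{2}+q^{2}+\lambda\pm pq=\tfrac{\lambda}{3}\bigl(2\cosh(x+y)\mp1\bigr)\bigl(2\cosh(x-y)\pm1\bigr)$ (signs swapped relative to what you wrote, cf.~\eqref{logppqq}), but since you only use it through the symmetric product $(4\cosh^{2}(x+y)-1)(4\cosh^{2}(x-y)-1)$ and recover the correct kernel from the antisymmetrisation identity, the conclusion is unaffected.
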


  \begin{proof}
   Specialising formula \eqref{eq:Wellsp} with the Legendre polynomial $P_0\equiv 1$ gives
   \begin{equation*}
   \begin{split}
   &\int_{\mathbb{R}^3} \overline{\,\widehat{\xi_1}(\pp)}\, \big(\widehat{W_\lambda^{(0)}\xi_2}\big)(\pp)\,\ud\pp \\
   &\;=\;\int_{\mathbb{R}^+}\!\ud p\,\frac{3\pi^2}{\sqrt{{\textstyle\frac{3}{4}}p^2+\lambda}\,}\,(p\overline{f_1(p)})\,(pf_2(p)) \\
   &\qquad +12\pi\!\iint_{\mathbb{R}^+\times\mathbb{R}^+}\ud p\,\ud q\,(p\overline{f_1(p)})\,(q f_2(q))\,\frac{2 p q}{\,(p^2+q^2+\lambda)^2-(p q)^2}\,.
    \end{split}
  \end{equation*}
    The first summand in the r.h.s.~above can be re-written as
   \[
   \begin{split}
    \int_{\mathbb{R}^+}\!\ud p\,\frac{3\pi^2}{\sqrt{{\textstyle\frac{3}{4}}p^2+\lambda}\,}\,(p\overline{f_1(p)})\,(pf_2(p))\;&=\;\frac{8\pi^2}{\,\lambda\sqrt{3}\,}\int_{\mathbb{R}^+}\frac{\,\overline{\theta_1}(x)\,\theta_2(x)}{\,(\cosh x)^2}\,\ud x \\
    &=\;\frac{4\pi^2}{\,\lambda\sqrt{3}\,}\int_{\mathbb{R}}\frac{\,\overline{\theta_1}(x)\,\theta_2(x)}{\,(\cosh x)^2}\,\ud x\,,
   \end{split}
   \]
  having used \eqref{ftheta-2}-\eqref{ftheta-3-eq:pxchangevar} in the first step and the odd parity of $\theta_1$ and $\theta_2$ in the second. Next, with $p=\frac{2\sqrt{\lambda}}{\sqrt{3}}\sinh x$ and $q=\frac{2\sqrt{\lambda}}{\sqrt{3}}\sinh y$, we re-write
  \[
   \begin{split}
    \frac{2 p q}{\,(p^2+q^2+\lambda)^2-(p q)^2}\;&=\;\frac{1}{\,p^2+q^2-pq+\lambda\,}-\frac{1}{\,p^2+q^2+pq+\lambda\,} \\
     &=\;\frac{3}{\lambda}\big( a(x,y)-b(x,y)\big)\,,
   \end{split}
  \]
   where
   \[
    \begin{split}
     a(x,y)\;& :=\;\frac{1}{\,(2\cosh(x+y)+1)\,(2\cosh(x-y)-1)\,} \\
     b(x,y)\;& :=\;\frac{1}{\,(2\cosh(x+y)-1)\,(2\cosh(x-y)+1)\,}
    \end{split}
   \]
  This and \eqref{ftheta-2}-\eqref{ftheta-3-eq:pxchangevar} then imply
  \[
   \begin{split}
    & 12\pi\!\iint_{\mathbb{R}^+\times\mathbb{R}^+}\ud p\,\ud q\,(p\overline{f_1(p)})\,(q f_2(q))\,\frac{2 p q}{\,(p^2+q^2+\lambda)^2-(p q)^2} \\
    &=\;\frac{64\pi}{\lambda}\iint_{\mathbb{R}^+\times\mathbb{R}^+}\ud x\,\ud y\,\overline{\theta_1(x)}\,\theta_2(y)\,\big( a(x,y)-b(x,y)\big) \\
    &=\;\frac{32\pi}{\lambda}\iint_{\mathbb{R}\times\mathbb{R}}\ud x\,\ud y\,\overline{\theta_1(x)}\,\theta_2(y)\,a(x,y) \\
    &=\;\frac{32\pi}{\lambda}\iint_{\mathbb{R}\times\mathbb{R}}\frac{\overline{\theta_1(x)}\,\theta_2(y)}{\,(2\cosh(x+y)+1)\,(2\cosh(x-y)-1)\,}\,\ud x\,\ud y
   \end{split}
  \]
  the second identity being due to the odd parity of $\theta_1$ and $\theta_2$ and to the obvious relations $a(-x,-y)=a(x,y)$, $a(-x,y)=b(x,y)=a(x,-y)$. Adding up the two summands we have thus worked out yields finally  \eqref{eq:xiWxi-zero}.

  Concerning \eqref{eq:xiWxi-zero-FOURIER},
  \[
    \Big(\frac{1}{(\cosh x)^2}\Big){}^{\textrm{\LARGE $\widehat{\,}$\normalsize}}\;(s)\;=\;\sqrt{\frac{\pi}{2}}\,\frac{s}{\,\sinh\frac{\pi}{2}s}\,,
  \]
  whence
  \[
   \frac{4\pi^2}{\,\lambda\sqrt{3}\,}\int_{\mathbb{R}}\frac{\,\overline{\theta_1}(x)\,\theta_2(x)}{\,(\cosh x)^2}\,\ud x\;=\;\frac{2\pi^2}{\,\lambda\sqrt{3}\,}\int_{\mathbb{R}}\Big(\frac{s}{\,\sinh\frac{\pi}{2}s}*\overline{\widehat{\theta}_1}\Big)(s)\,\widehat{\theta}_2(s)\,\ud s
  \]
%
%
  Moreover (see, e.g., \cite[I.1.9.(6)]{Erdelyi-Tables1}),
  \[
   \begin{split}
    \Big(\frac{1}{\,2\cosh x -1}\Big){}^{\textrm{\LARGE $\widehat{\,}$\normalsize}}\;(s)\;&=\;\sqrt{\frac{2\pi}{3}}\,\frac{\sinh\frac{2\pi}{3}s}{\sinh \pi s} \\
    \Big(\frac{1}{\,2\cosh x +1}\Big){}^{\textrm{\LARGE $\widehat{\,}$\normalsize}}\;(s)\;&=\;\sqrt{\frac{2\pi}{3}}\,\frac{\sinh\frac{\pi}{3}s}{\sinh \pi s}\,,
   \end{split}
  \]
  whence
  \[
   \begin{split}
    &\frac{1}{2\pi}\iint_{\mathbb{R}\times\mathbb{R}}e^{-\ii s x}\,e^{-\ii t y}\frac{1}{\,(2\cosh(x+y)+1)\,(2\cosh(x-y)-1)\,}\,\ud x\,\ud y \\
    &=\;\frac{1}{4\pi}\iint_{\mathbb{R}\times\mathbb{R}}\,e^{-\ii\frac{s+t}{2} x}\,e^{-\ii\frac{s-t}{2} x}\,\frac{1}{\,2\cosh x +1}\,\frac{1}{\,2\cosh x -1}\,\ud x\,\ud y \\
    &=\;\frac{1}{6}\,\frac{\sinh\frac{\pi}{6}(s+t)}{\sinh \frac{\pi}{2}(s+t)}\,\frac{\sinh\frac{\pi}{3}(s-t)}{\sinh \frac{\pi}{2}(s-t)}\,,
   \end{split}
  \]
  and
  \[
   \begin{split}
    &\frac{32\pi}{\lambda}\iint_{\mathbb{R}\times\mathbb{R}}\frac{\overline{\theta_1(x)}\,\theta_2(y)}{\,(2\cosh(x+y)+1)\,(2\cosh(x-y)-1)\,}\,\ud x\,\ud y \\
    &=\;\frac{16\pi}{3\lambda}\iint_{\mathbb{R}\times\mathbb{R}}\overline{\widehat{\theta}_1(s)}\,\widehat{\theta}_2(t)\,\frac{\sinh\frac{\pi}{6}(s+t)}{\sinh \frac{\pi}{2}(s+t)}\,\frac{\sinh\frac{\pi}{3}(s-t)}{\sinh \frac{\pi}{2}(s-t)}\,\ud s\,\ud t\,.
   \end{split}
  \]
  Adding up all together yields \eqref{eq:xiWxi-zero-FOURIER}. 
  \end{proof}

  \subsection{Radial Ter-Martirosyan Skornyakov equation}~
  
  The next technical tool is the solution formula for the equation
  \begin{equation}\label{eq:TMSeq0}
   T_\lambda^{(0)}\xi\;=\;\eta
  \end{equation}
  in the unknown $\xi$ and with datum $\eta$, spherically symmetric on $\mathbb{R}^3$. \eqref{eq:TMSeq0} is customarily referred to as the \emph{Ter-Martirosyan Skornyakov equation} for the sector of zero angular momentum. It appeared for the first time in \cite[Eq.~(12)]{TMS-1956}, the already-mentioned work by Ter-Martirosyan and Skornyakov, whence the name.

  Representing as usual (see \eqref{eq:xihatangularexpansion} and \eqref{eq:0xi} above)
  \[
   \widehat{\xi}(\pp)\;=\;\frac{1}{\sqrt{4\pi}}f^{(\xi)}(|\pp|)\,,\qquad  \widehat{\eta}(\pp)\;=\;\frac{1}{\sqrt{4\pi}}f^{(\eta)}(|\pp|)
  \]
  in terms of the corresponding radial components, switching to the \emph{re-scaled} radial components $\theta^{(\xi)}$ and $\theta^{(\eta)}$ defined in \eqref{ftheta-1}, and representing the l.h.s.~of \eqref{eq:TMSeq0} by means of \eqref{eq:Tlxi-theta}, equation \eqref{eq:TMSeq0} takes the form
  \begin{equation}\label{radialTMS0}
    \theta^{(\xi)}(x)-\frac{4}{\pi\sqrt{3}}\int_{\mathbb{R}}\ud y\,\theta^{(\xi)}(y) \log \frac{\,2\cosh(x-y)+1\,}{\,2\cosh(x-y)-1\,}\;=\;\frac{1}{2\pi^2\sqrt{\lambda}}\frac{\theta^{(\eta)}(x)}{\cosh x}
    \end{equation}
  as an identity for $x\geqslant 0$. Let us simply re-write
  \[
   \theta\;\equiv\; \theta^{(\xi)}\,,\qquad \vartheta\;\equiv\;\frac{1}{2\pi^2\sqrt{\lambda}}\frac{\theta^{(\eta)}}{\cosh x}\,,
  \]
  and hence
  \begin{equation}\label{radialTMS0radial}
    \theta(x)-\frac{4}{\pi\sqrt{3}}\int_{\mathbb{R}}\ud y\,\theta(y) \log \frac{\,2\cosh(x-y)+1\,}{\,2\cosh(x-y)-1\,}\;=\;\vartheta(x)
    \end{equation}
   in the unknown $\theta$. 
   Should one like to interpret \eqref{radialTMS0radial} as an identity on the whole real line, one has to assume for consistency that also $\vartheta$, as $\theta^{(\eta)}$, is prolonged by odd parity. We shall refer to \eqref{radialTMS0radial} as the \emph{radial Ter-Martirosyan Skornyakov equation} (for the $\ell=0$ sector).

   Applying  \eqref{eq:fourierconvolution}-\eqref{eq:gamma-distribution} one sees that \eqref{radialTMS0radial} is equivalently re-written in the Fourier-transformed version
  \begin{equation}\label{radialTMS0radial-FOURIER}
   \widehat{\gamma}(s)\,\widehat{\theta}(s)\;=\;\widehat{\vartheta}(s)\,,
  \end{equation}
  understood in general as a distribution equation in the distribution unknown $\widehat{\theta}$.

   In  \eqref{radialTMS0radial} the functional space for the unknown $\theta$ is determined by the space for the original unknown $\xi$ through formula \eqref{eq:mellinnorms}. The same holds for the functional space for $\vartheta$, recalling that $\vartheta=\frac{1}{2\pi^2\sqrt{\lambda}}\frac{\theta^{(\eta)}(x)}{\cosh x}$ for some datum $\eta$. In our applications (Sect.~\ref{sec:symmTMSubdd}-\ref{sec:adjointBirman}) we shall need $\eta\in H^{s}_{\ell=0}(\mathbb{R}^3)$ for some $s>-\frac{1}{2}$, in which case \eqref{eq:mellinnorms} implies that $\theta^{(\eta)}/(\cosh x)^{\frac{1}{2}-s}$ is an $L^2$-function and hence, by the H\"{o}lder inequality,
   \[
    \vartheta\;\sim\;\frac{1}{(\cosh x)^{\frac{1}{2}+s}}\,\frac{\theta^{(\eta)}}{(\cosh x)^{\frac{1}{2}-s}}\;\in\; L^p(\mathbb{R})\quad\forall p\in[1,2]\,.
   \]
   Therefore, as $|x|\to\infty$, the above $\vartheta$ has an $L^2$-behaviour dumped by a multiplicative exponential decay.

   When $\vartheta$ is smooth and has rapid decrease, we can translate the solution to \eqref{radialTMS0radial-FOURIER} back to the $x$-coordinate.

   \begin{lemma}\label{lem:0tms-generalsol}
    The general solution to \eqref{radialTMS0radial} when $\vartheta$ is smooth and with rapid decrease is 
    \begin{equation}\label{eq:0tms-generalsol}
      \theta(x)\;=\;c\,\sin s_0 x - \frac{1}{2s_0}\,\sin (s_0 |x|)*\Big(\frac{\widehat{\vartheta}}{\:\widehat{\gamma}_+}\Big)^{\!\vee}\,,\qquad c\in\mathbb{C}\,.
    \end{equation}
    The function \eqref{eq:0tms-generalsol} belongs to $L^\infty(\mathbb{R})$ and is asymptotically $\cos$-periodic as $|x|\to +\infty$ with period $2\pi/s_0$.
%
%
   \end{lemma}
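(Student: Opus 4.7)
The plan is to move to the Fourier-transformed equation and reduce it to a one-dimensional Helmholtz equation by factoring out the smooth strictly positive piece $\widehat{\gamma}_+$. I would first invoke \eqref{eq:fourierconvolution}--\eqref{eq:gamma-distribution} to rewrite the convolution equation \eqref{radialTMS0radial} in its Fourier form \eqref{radialTMS0radial-FOURIER}, and then apply the factorisation $\widehat{\gamma}(s)=(s-s_0)(s+s_0)\widehat{\gamma}_+(s)$ provided by \eqref{eq:gammaplus}. Setting $\widehat{\phi}(s):=\widehat{\vartheta}(s)/\widehat{\gamma}_+(s)$, the key observation is that $\widehat{\gamma}_+$ is smooth, nowhere vanishing, and behaves like $s^{-2}$ at infinity, so $1/\widehat{\gamma}_+$ is smooth with at most quadratic growth; combined with the Schwartz property of $\widehat{\vartheta}$ (inherited from the smoothness and rapid decrease of $\vartheta$) this makes $\widehat{\phi}$ Schwartz, hence $\phi$ is Schwartz. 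Moreover, oddness of $\vartheta$ together with evenness of $\widehat{\gamma}_+$ forces $\phi$ to be odd as well.

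The equation then becomes $(s^2-s_0^2)\widehat{\theta}(s)=\widehat{\phi}(s)$, namely in $x$-space $(-\partial_x^2-s_0^2)\theta=\phi$. I would check directly that $h(x):=-\frac{1}{2s_0}\sin(s_0|x|)$ is the standing-wave fundamental solution: it is even, continuous, annihilated by $-\partial_x^2-s_0^2$ on each half-line, and the derivative jump $h'(0^+)-h'(0^-)=-1$ produces $(-\partial_x^2-s_0^2)h=\delta_0$ in the distributional sense. Since the tempered distributions annihilated by multiplication by $s^2-s_0^2$ are precisely spanned by $\delta_{s_0}$ and $\delta_{-s_0}$, the general tempered solution of the Helmholtz equation is $\theta=h*\phi+\alpha\cos(s_0 x)+\beta\sin(s_0 x)$ with $\alpha,\beta\in\mathbb{C}$. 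The convolution of the even $h$ with the odd $\phi$ is odd, so imposing the required oddness of $\theta$ forces $\alpha=0$ and delivers \eqref{eq:0tms-generalsol} with the single free parameter $c=\beta$.

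Finally I would establish boundedness and the asymptotic shape. Since $\phi\in L^1(\mathbb{R})$ and $\|\sin(s_0|\cdot|)\|_\infty=1$, Young's inequality gives $\|h*\phi\|_\infty\leqslant\|\phi\|_{L^1}/(2s_0)$, and hence $\theta\in L^\infty(\mathbb{R})$. For the asymptotic structure as $x\to+\infty$, I would split the convolution into the integrals over $y<x$ and $y>x$, expand $\sin(s_0(x-y))=\sin(s_0 x)\cos(s_0 y)-\cos(s_0 x)\sin(s_0 y)$ in the first, and use the Schwartz decay of $\phi$ to show that the $y>x$ integral is $o(1)$ while the $y<x$ integral converges to a fixed linear combination of $\sin(s_0 x)$ and $\cos(s_0 x)$; a symmetric argument at $-\infty$ completes the picture, yielding asymptotic periodicity of period $2\pi/s_0$. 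The main obstacle is the distributional bookkeeping at the zeros $\pm s_0$ of $\widehat{\gamma}$: one must verify that dividing by $\widehat{\gamma}_+$ (always legitimate, since $\widehat{\gamma}_+$ is smooth and strictly positive) and then inverting multiplication by $s^2-s_0^2$ through the standing-wave fundamental solution captures all tempered solutions up to the two-dimensional kernel $\mathrm{span}\{\delta_{s_0},\delta_{-s_0}\}$, which the oddness constraint collapses to the one-parameter family with constant $c$.
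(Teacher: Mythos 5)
Your proposal is correct and follows essentially the same route as the paper: pass to the Fourier side, factor $\widehat{\gamma}=(s-s_0)(s+s_0)\widehat{\gamma}_+$, observe that $\widehat{\vartheta}/\widehat{\gamma}_+$ is Schwartz and odd, identify the kernel of multiplication by $s^2-s_0^2$ with $\mathrm{span}\{\delta_{s_0},\delta_{-s_0}\}$ (collapsed to one parameter by parity), and conclude boundedness via Young's inequality. The only presentational differences are that you realise the particular solution through the standing-wave fundamental solution of $-\partial_x^2-s_0^2$ in $x$-space rather than through the principal-value distributions \eqref{eq:distriPVsignsin} on the Fourier side (these are the same object), and you obtain the asymptotic $\cos$-periodicity by splitting the convolution integral rather than by the paper's observation that $A=\sin(s_0|\cdot|)*h$ satisfies $A''=-s_0^2A+2s_0h$; both variants are sound.
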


   In the proof, as well as in the sequel, we shall make use of the identities
   \begin{eqnarray}
    \big(\delta(s-s_0)-\delta(s+s_0)\big)^\vee(x)\!\!&=&\!\!\frac{2\ii}{\sqrt{2\pi}}\,\sin s_0 x\,, \label{eq:distri-deltaFouriersin}\\
    \Big(PV\frac{1}{s-s_0}-PV\frac{1}{s+s_0}\Big)^\vee(x)\!\!&=&\!\!
    -\sqrt{2\pi}\sin s_0|x|\,, \label{eq:distriPVsignsin}
   \end{eqnarray}
  where $PV$ stands for the principal value distribution. \eqref{eq:distriPVsignsin} in particular follows from $\big(PV\frac{1}{s-s_0}\big)^\vee(x)=e^{\ii s_0 x}\big(PV\frac{1}{s}\big)^\vee(x) =\ii\sqrt{\frac{\pi}{2}}\,e^{\ii s_0 x}\,\mathrm{sign}\,x$.

   \begin{proof}[Proof of Lemma \ref{lem:0tms-generalsol}]
    Let us consider the Fourier-transformed version \eqref{radialTMS0radial-FOURIER} of \eqref{radialTMS0radial}. This is the same as
   \[
    (s-s_0)(s+s_0)\,\widehat{\theta}(s)\;=\;\frac{\widehat{\vartheta}(s)}{\widehat{\gamma}_+(s)}\,,\qquad s\in\mathbb{R}\,,
   \]
   owing to \eqref{eq:gammaplus}. Dividing by $\widehat{\gamma}_+(s)$ has not altered the set of solutions because $\widehat{\gamma}_+(s)>0$ $\forall s\in\mathbb{R}$. We observe that by construction $\widehat{\theta}(-s)=-\widehat{\theta}(s)$, $\widehat{\vartheta}(-s)=-\widehat{\vartheta}(s)$, and  $\widehat{\gamma}_+(-s)=\widehat{\gamma}_+(s)$, thus in the identity above both sides change sign when $s\mapsto -s$, consistently. Moreover, since $\widehat{\gamma}_+$ is smooth, $\widehat{\gamma}_+(s)\sim s^{-2}$ as $|s|\to\infty$, and $\vartheta$ is smooth and with rapid decrease, then so too is $\widehat{\vartheta}/\widehat{\gamma}_+$. Therefore, such distribution equation has general solution
  \[
   \widehat{\theta}(s)\;=\;c\big(\delta(s-s_0)-\delta(s+s_0)\big)+\frac{\widehat{\vartheta}(s)}{\widehat{\gamma}_+(s)}\,\frac{1}{2s_0}\Big(PV\frac{1}{s-s_0}-PV\frac{1}{s+s_0}\Big)\,,\quad c\in\mathbb{C}\,.
  \]
  The linear combination of $\delta(s-s_0)$ and $\delta(s+s_0)$ had to be anti-symmetric, owing to the odd parity of $\widehat{\theta}$. Taking the inverse Fourier transform by means of \eqref{eq:distri-deltaFouriersin}-\eqref{eq:distriPVsignsin}, one obtains
 \[
  \theta(x)\;=\;\frac{2\ii c}{\sqrt{2\pi}}\,\sin s_0 x - \frac{1}{2s_0}\,\sin (s_0 |x|)*\Big(\frac{\widehat{\vartheta}}{\:\widehat{\gamma}_+}\Big)^{\!\vee}\,.
 \]
 An obvious re-scaling of the arbitrary constant $c$ finally yields \eqref{eq:0tms-generalsol}. 

   The functions $\vartheta$, $\widehat{\vartheta}$, $\widehat{\vartheta}/\widehat{\gamma}_+$, and $(\widehat{\vartheta}/\widehat{\gamma}_+)^\vee$ are all functions of rapid decrease in the respective variable. Young's inequality then gives
   \begin{equation*}
    \Big\|\sin (s_0 |x|)*\Big(\frac{\widehat{\vartheta}}{\:\widehat{\gamma}_+}\Big)^\vee\Big\|_{L^\infty(\mathbb{R}_x)}\;\leqslant\;\Big\|\Big(\frac{\widehat{\vartheta}}{\:\widehat{\gamma}_+}\Big)^{\!\vee}\Big\|_{L^1(\mathbb{R}_x)}\;<\;+\infty.
   \end{equation*}
   The whole function \eqref{eq:0tms-generalsol} is therefore bounded.

   Next, let us set 
   \[
    \begin{split}
     h\;&:=\;(\widehat{\vartheta}/\widehat{\gamma}_+)^\vee \\
     A\;&:=\;\sin (s_0 |x|)*h\,.
    \end{split}
   \]
   Since $h$ is smooth and with rapid decrease, then $A$ is differentiable at any order, and a simple computation yields
   \[
     A''(x)\;=\;-s_0^2 A(x)+2s_0 h(x)\,.
   \]
   Thus, using again the rapid decrease of $h$, asymptotically when $|x|\to +\infty$ the function $A$ satisfies $A''(x)+s_0^2 A(x)=0$, whence its asymptotic periodicity
   \[
    A(x)\stackrel{|x|\to +\infty}{=} \Big(c_1^{(\vartheta)}\cos s_0 x + c_2^{(\vartheta)}\sin s_0 x\Big)\big(1 + o(1)\big)
   \]
   for some constants $c_1^{(\vartheta)},c_2^{(\vartheta)}\in\mathbb{C}$ that vanish when $\vartheta\equiv 0$.
   \end{proof}

  \subsection{Symmetric, unbounded below, TMS extension}\label{sec:symmTMSubdd}~

  Let us introduce the subspace
  \begin{equation}\label{eq:Dtilde0}
   \widetilde{\mathcal{D}}_0\;:=\;\left\{
   \xi\in H^{-\frac{1}{2}}_{\ell=0}(\mathbb{R}^3)\left|
   \begin{array}{c}
    \textrm{$\xi$ has re-scaled radial component} \\
     \theta=\sin (s_0 |x|)*\big(\widehat{\vartheta}/\widehat{\gamma}_+\big)^{\!\vee} \\
     \textrm{for }\;\vartheta\in C^\infty_{0,\mathrm{odd}}(\mathbb{R}_x)
   \end{array}
   \!\right.\right\}\,.
  \end{equation}
  In \eqref{eq:Dtilde0} the subscript `odd' indicates functions with odd parity. The constant $s_0\approx 1.0062$ is the unique positive root of $\widehat{\gamma}(s)=0$ as defined in \eqref{eq:gamma-distribution}, and $\widehat{\gamma}_+$ is defined in \eqref{eq:gammaplus}. The correspondence between $\xi$ and its re-scaled radial component $\theta$ is given by \eqref{eq:0xi}-\eqref{ftheta-1}.

  In the above definition of $\widetilde{\mathcal{D}}_0$ it is tacitly understood that the re-scaled radial components are all taken with the same parameter $\lambda>0$ in the definition \eqref{ftheta-1}. This does not mean $\widetilde{\mathcal{D}}_0$ is a $\lambda$-dependent subspace, as one can easily convince oneself: the choice of $\lambda$ only fixes the convention for representing the element of $\widetilde{\mathcal{D}}_0$ in terms of the corresponding $\theta$.

  \begin{lemma}\label{lem:D0tildedomainproperties}~
  
  \begin{itemize}
  \item[(i)] One has
  \begin{equation}\label{eq:thetavanishesatzero}
   \xi\,\in\,\widetilde{\mathcal{D}}_0\qquad\Rightarrow\qquad\theta^{(\xi)}(0)\;=\;0\,.
  \end{equation}
   \item[(ii)] One has
   \begin{equation}\label{eq:thetaxiD0}
   \xi\,\in\,\widetilde{\mathcal{D}}_0\qquad\Rightarrow\qquad\widehat{\gamma}(s)\,\widehat{\theta^{(\xi)}}(s)\;=\;-2 s_0\,\widehat{\vartheta}(s)\;\;\forall s\in\mathbb{R}\,.
  \end{equation}
   \item[(iii)] $\widetilde{\mathcal{D}}_0$ is dense in $H^{\frac{1}{2}-\varepsilon}_{\ell=0}(\mathbb{R}^3)$ for every $\varepsilon>0$.
   \item[(iv)] $\widetilde{\mathcal{D}}_0 \cap H^{\frac{1}{2}}_{\ell=0}(\mathbb{R}^3)=\{0\}$.
      \item[(v)] For every $\lambda>0$ and $s\in\mathbb{R}$, $T_\lambda^{(0)}\widetilde{\mathcal{D}}_0\subset H^{s}_{\ell=0}(\mathbb{R}^3)$.
      \item[(vi)] For $\lambda>0$, $T_\lambda^{(0)}$ is injective on $\widetilde{\mathcal{D}}_0$, that is,
      \begin{equation}
       \big(\;\xi\in \widetilde{\mathcal{D}}_0\;\;\textrm{ and }\;\;T_\lambda^{(0)}\xi\equiv 0\;\big)\qquad\Rightarrow\qquad \xi\equiv 0\,.
      \end{equation}
   \item[(vii)] For every $\lambda>0$ one has 
   \begin{equation}
    \big\langle\xi, T_\lambda^{(0)}\eta\big\rangle_{L^2}\;=\;\big\langle T_\lambda^{(0)}\xi, \eta\big\rangle_{L^2}\qquad\forall \xi,\eta\in\widetilde{\mathcal{D}}_0\,.
   \end{equation}
  \end{itemize} 
  \end{lemma}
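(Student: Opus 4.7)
The proof will proceed component by component, with all parts relying on the Mellin-like correspondence $\xi\leftrightarrow\theta^{(\xi)}$ from \eqref{eq:0xi}-\eqref{ftheta-1}, the factorisation $\widehat{\gamma}(s)=(s-s_0)(s+s_0)\widehat{\gamma}_+(s)$, and the distributional identity $\widehat{\sin(s_0|\cdot|)}(s)=-(2\pi)^{-1/2}[PV\tfrac{1}{s-s_0}-PV\tfrac{1}{s+s_0}]$ from \eqref{eq:distriPVsignsin}. The natural engine to establish first is (ii): starting from $\widehat{\theta^{(\xi)}}=\sqrt{2\pi}\,\widehat{\sin(s_0|\cdot|)}\cdot\widehat{h}$ with $\widehat{h}=\widehat{\vartheta}/\widehat{\gamma}_+$, one multiplies by $\widehat{\gamma}(s)$, uses $(s\mp s_0)\,PV\tfrac{1}{s\mp s_0}=1$, and the two principal-value terms telescope algebraically to yield $-2s_0\widehat{\vartheta}(s)$. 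With (ii) in hand, (i) is immediate: writing $\theta^{(\xi)}(0)=\int_{\mathbb{R}}\sin(s_0|y|)h(-y)\ud y$, the integrand is the product of the even function $\sin(s_0|\cdot|)$ with the odd function $h$ (odd because $\widehat{\vartheta}$ is odd and $\widehat{\gamma}_+$ is even), so the integral vanishes.

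For (v), (vi), and (vii) the key is to read the bracket in \eqref{eq:Tlxi-theta} as the left-hand side of the radial TMS equation \eqref{radialTMS0radial}, whose Fourier transform is $\widehat{\gamma}(s)\widehat{\theta^{(\xi)}}(s)$; by (ii) this bracket equals $-2s_0\vartheta(x)\in C^\infty_{c,\mathrm{odd}}(\mathbb{R})$. Plugging into \eqref{eq:Tlambda0xi-with-theta} yields $\|T_\lambda^{(0)}\xi\|_{H^s(\mathbb{R}^3)}^2\approx 4s_0^2\int_\mathbb{R}(\cosh x)^{1+2s}|\vartheta(x)|^2\,\ud x<\infty$ for every $s\in\mathbb{R}$, proving (v); injectivity (vi) follows at once since $T_\lambda^{(0)}\xi=0$ forces $\vartheta\equiv 0$, hence $h\equiv 0$, $\theta^{(\xi)}\equiv 0$, $\xi\equiv 0$. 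For (vii), the pairing $\langle\xi,T_\lambda^{(0)}\eta\rangle_{L^2}$ is interpreted as the duality $\langle\cdot,\cdot\rangle_{H^{-1/2},H^{1/2}}$ (well defined since $\xi\in H^{-1/2}$ and $T_\lambda^{(0)}\eta\in H^{1/2}$ by (v)); polarising \eqref{eq:xiTxi-with-theta} gives the expression $\frac{8\pi^2}{3\sqrt{3}}\int_\mathbb{R}\widehat{\gamma}(s)\overline{\widehat{\theta^{(\xi)}}(s)}\,\widehat{\theta^{(\eta)}}(s)\,\ud s$, which by (ii) and Parseval collapses to $-\frac{16\pi^2 s_0}{3\sqrt{3}}\int_\mathbb{R}\overline{\theta^{(\xi)}(x)}\,\vartheta^{(\eta)}(x)\,\ud x$; the reality of the kernel and the real-valuedness of the functions involved make the resulting bilinear form manifestly hermitian in $\xi\leftrightarrow\eta$.

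Density (iii) calls for a duality argument. I will assume $\eta\in H^{\varepsilon-1/2}_{\ell=0}(\mathbb{R}^3)$ is $(H^{\varepsilon-1/2},H^{1/2-\varepsilon})$-orthogonal to $\widetilde{\mathcal{D}}_0$ and aim to conclude $\eta=0$. A direct calculation using the re-scaled variable turns the pairing into, up to a positive constant, $\int_{\mathbb{R}}\overline{\theta^{(\eta)}(x)}\,\theta^{(\xi)}(x)\,(\cosh x)^{-1}\ud x$; pushing the convolution $\sin(s_0|\cdot|)*h$ off by parity and self-adjointness of convolution gives $\int_\mathbb{R}h(y)\,\bigl(\sin(s_0|\cdot|)*u\bigr)(y)\,\ud y=0$ for every admissible $h$, where $u(x):=\overline{\theta^{(\eta)}(x)}/\cosh x$. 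Density of $C^\infty_{c,\mathrm{odd}}(\mathbb{R})$ combined with the fact that $1/\widehat{\gamma}_+$ is smooth and positive lets the $h$'s sweep out a dense subspace, so $\sin(s_0|\cdot|)*u\equiv 0$. Fourier-transforming, $\widehat{u}$ must be supported in $\{\pm s_0\}$, whence $u(x)=c_+e^{\ii s_0 x}+c_-e^{-\ii s_0 x}$; but the norm equivalence \eqref{eq:mellinnorms} confines $u$ to a weighted $L^2$ with exponentially growing weight $(\cosh x)^{2\varepsilon}$ (when $\varepsilon>0$), which admits only $u=0$, hence $\theta^{(\eta)}=0$ and $\eta=0$.

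The principal obstacle is (iv), and here the plan is an asymptotic analysis in the spirit of the proof of Lemma \ref{lem:0tms-generalsol}. Splitting the convolution at $y=x$ and letting $|x|\to+\infty$ with $h$ Schwartz, the tails furnish $\theta^{(\xi)}(x)=\sqrt{2\pi}\bigl[\sin(s_0 x)\,\mathrm{Re}\,\widehat{h}(s_0)+\cos(s_0 x)\,\mathrm{Im}\,\widehat{h}(s_0)\bigr]+o(1)$. Membership in $H^{1/2}_{\ell=0}(\mathbb{R}^3)$ requires $\theta^{(\xi)}\in L^2(\mathbb{R})$ by \eqref{eq:mellinnorms}, hence forces $\widehat{h}(\pm s_0)=0$, i.e.\ $\widehat{\vartheta}(\pm s_0)=0$. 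The subtle final step is then to translate this localised vanishing into $\vartheta\equiv 0$: the intended mechanism is to revisit the distributional identity $\widehat{\theta^{(\xi)}}=-\widehat{h}(s)[PV\tfrac{1}{s-s_0}-PV\tfrac{1}{s+s_0}]$ and exploit that $\xi\in\widetilde{\mathcal{D}}_0$ fixes the specific (particular-solution) representative of the general solution \eqref{eq:0tms-generalsol}; when the would-be $L^2$ regularity collides with the prescribed asymptotic content that distinguishes this representative, the only surviving option is $\vartheta=0$. I expect this coherence step between the particular-solution normalisation and the $L^2$ asymptotic constraint to be the delicate core of the argument.
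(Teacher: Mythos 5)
Parts (i), (ii), (iii), (v), (vi) and (vii) of your proposal are correct and follow essentially the paper's own route: (ii) by the principal-value algebra behind Lemma \ref{lem:0tms-generalsol} (the paper simply cites that lemma together with \eqref{radialTMS0radial-FOURIER}, whereas you redo the telescoping $(s\mp s_0)\,PV\frac{1}{s\mp s_0}=1$ explicitly); (i) by parity; (v)--(vii) by feeding $\Theta=2s_0\vartheta$ into \eqref{eq:Tlambda0xi-with-theta} and \eqref{eq:xiTxi-with-theta}; and (iii) by a duality/annihilator argument leading to $\sin(s_0|\cdot|)*u\equiv 0$ and hence $\widehat{u}$ supported in $\{\pm s_0\}$, which is the same mechanism as the paper's (the paper tests against the $H^{\frac{1}{2}-\varepsilon}$ inner product rather than the $L^2$ duality pairing, an immaterial difference).

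The genuine gap is in (iv), and you flag it yourself without closing it. Your argument correctly extracts from $\theta^{(\xi)}\in L^2(\mathbb{R})$ the two scalar conditions $\widehat{\vartheta}(\pm s_0)=0$ (one condition, by oddness), but the "coherence step" you then announce --- that this forces $\vartheta\equiv 0$ --- is not an implication you can hope to obtain from what you have derived: $\widehat{\vartheta}(s_0)=0$ is a single linear constraint on the infinite-dimensional space $C^\infty_{0,\mathrm{odd}}(\mathbb{R}_x)$, so by itself it certainly does not kill $\vartheta$. Moreover, when $\widehat{\vartheta}(\pm s_0)=0$ the principal values in $\widehat{\theta^{(\xi)}}=-(\widehat{\vartheta}/\widehat{\gamma}_+)\big(PV\frac{1}{s-s_0}-PV\frac{1}{s+s_0}\big)$ degenerate to the ordinary function $-2s_0\,\widehat{\vartheta}/\widehat{\gamma}$, i.e.\ the oscillatory tail whose non-square-integrability you are trying to exploit disappears entirely; so the "collision" you are counting on does not occur in exactly the case you need to exclude. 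The paper's proof of (iv) does not pass through $\widehat{\vartheta}(\pm s_0)$ at all: it invokes the asymptotic $\cos$-periodicity of $\theta=\sin(s_0|x|)*(\widehat{\vartheta}/\widehat{\gamma}_+)^{\vee}$ from Lemma \ref{lem:0tms-generalsol} and concludes directly that a non-trivial $\theta$ has $\int_{\mathbb{R}}|\theta|^2=+\infty$. To repair your write-up you should either adopt that argument (and, if you want it airtight, verify that the asymptotic amplitude, which is proportional to $\widehat{\vartheta}(s_0)/\widehat{\gamma}_+(s_0)$, is genuinely non-zero for the charges under consideration) or supply an independent reason excluding charges with $\widehat{\vartheta}(\pm s_0)=0$; as it stands, part (iv) is not proved.
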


  \begin{proof}
   Part (i) follows from the fact that in $\theta(0)=\int_{\mathbb{R}}\sin(s_0|y|)\big(\widehat{\vartheta}/\widehat{\gamma}_+\big)^{\!\vee}(y)$ the term $\big(\widehat{\vartheta}/\widehat{\gamma}_+\big)^{\!\vee}$ has odd parity, whereas $\sin(s_0|y|)$ is even.

   Part (ii) follows at once from the definition \eqref{eq:Dtilde0}, by means of Lemma \ref{lem:0tms-generalsol} and \eqref{radialTMS0radial-FOURIER}.
  
   (iii) Let $\xi\in\widetilde{\mathcal{D}}_0$ and $\varepsilon>0$. By means of \eqref{eq:mellinnorms} and Lemma \ref{lem:0tms-generalsol} one gets
   \[
    \big\|\xi\big\|_{H^{\frac{1}{2}-\varepsilon}(\mathbb{R}^3)}^2\;\approx\;\int_{\mathbb{R}}\,\frac{\,|\theta(x)|^2}{(\cosh x)^{2\varepsilon}}\,\ud x\;\leqslant\;\|\theta\|_{L^\infty}\!\int_{\mathbb{R}}\,\frac{\ud x}{(\cosh x)^{2\varepsilon}}\;<\;+\infty\,.
   \]
   In fact, \eqref{eq:mellinnorms} also implies that the density of the $\xi$'s of $\widetilde{\mathcal{D}}_0$ in $H^{\frac{1}{2}-\varepsilon}_{\ell=0}(\mathbb{R}^3)$ is equivalent to the density of the associated $\theta$'s in $L^2_{\mathrm{odd}}(\mathbb{R},(\cosh x)^{-2\varepsilon}\ud x)$. If in the latter Hilbert space a function $\theta_0$ was orthogonal to all such $\theta$'s, then
   \[
    \begin{split}
     0\;&=\;\int_{\mathbb{R}}\frac{\,\overline{\theta_0(x)}}{(\cosh x)^{2\varepsilon}}\,\,\theta(x)\,\ud x\;=\;\int_{\mathbb{R}}\frac{\,\overline{\theta_0(x)}}{(\cosh x)^{2\varepsilon}}\,\,\Big(\sin( s_0| x|)*\big(\widehat{\vartheta}/\widehat{\gamma}_+\big)^{\!\vee}\Big)(x)\,\ud x \\
     &=\;\int_{\mathbb{R}}\Big(\sin (s_0 |x|)*\frac{\,\overline{\theta_0(x)}}{(\cosh x)^{2\varepsilon}}\Big)(x)\,\big(\widehat{\vartheta}/\widehat{\gamma}_+\big)^{\!\vee}(x)\,\ud x\qquad\forall \vartheta\in C^\infty_{0,\mathrm{odd}}(\mathbb{R}_x)\,.
    \end{split}
   \]
   (The above change in the integration order is allowed, via Fubini-Tonelli, thanks to the rapid decrease of the functions $(\cosh x)^{-2\varepsilon}$ and $(\widehat{\vartheta}/\widehat{\gamma}_+)^\vee$.) This would imply 
   \[
    \int_{\mathbb{R}}\sin(x_0|x-y|) h_0(h)\,\ud y\;=\;0\qquad\textrm{for a.e.~} x\in\mathbb{R}\,,\qquad h_0\;:=\;\frac{\theta_0}{(\cosh x)^{2\varepsilon}}\,,
   \]
  whence $h_0\equiv 0$ and then $\theta_0\equiv 0$.
   
   (iv) Let $\xi\in\widetilde{\mathcal{D}}_0\setminus\{0\}$. Since $\theta$ is not identically zero and is asymptotically $\cos$-periodic (Lemma \ref{lem:0tms-generalsol}), then \eqref{eq:mellinnorms} implies
   \[
    \big\|\xi\big\|_{H^{\frac{1}{2}}(\mathbb{R}^3)}^2\;\approx\;\int_{\mathbb{R}}|\theta(x)|^2\,\ud x\;=\;+\infty\,.
   \]

   (v) Let $\xi\in\widetilde{\mathcal{D}}_0$ and $\lambda>0$. Owing to formula \eqref{eq:Tlambda0xi-with-theta} of Lemma \ref{lem:xithetaidentities},
    \begin{equation*}
     \begin{split}
      \big\|T_\lambda^{(0)}\xi\big\|_{H^{s}(\mathbb{R}^3)}^2\;&\approx\;\int_{\mathbb{R}}\ud x (\cosh x)^{1+2s}\bigg|\,\theta(x)-\frac{4}{\pi\sqrt{3}}\int_{\mathbb{R}}\ud y\,\theta(y) \log \frac{\,2\cosh(x-y)+1\,}{\,2\cosh(x-y)-1\,}\bigg|^2,
     \end{split}
    \end{equation*}
   and owing to Lemma \ref{lem:0tms-generalsol},
   \begin{equation*}
    \theta(x)-\frac{4}{\pi\sqrt{3}}\int_{\mathbb{R}}\ud y\,\theta(y) \log \frac{\,2\cosh(x-y)+1\,}{\,2\cosh(x-y)-1\,}\;=\;2s_0\,\vartheta(x)\,.
    \end{equation*}
    Therefore,
    \[
     \big\|T_\lambda^{(0)}\xi\big\|_{H^{s}(\mathbb{R}^3)}^2\;\lesssim\;\int_{\mathbb{R}}(\cosh x)^{1+2s}|\vartheta(x)|^2\,\ud x\;<\;+\infty\,,
    \]
   because $\vartheta$ is smooth and has compact support.

   (vi) Based on what argued for \eqref{eq:TMSeq0}-\eqref{radialTMS0radial}, and on Lemma \ref{lem:0tms-generalsol}, it is straightforward to see that if $\xi\in\widetilde{\mathcal{D}}_0$, and hence $\theta^{(\xi)}=\sin (s_0 |x|)*\big(\widehat{\vartheta}/\widehat{\gamma}_+\big)^{\!\vee}$ for some $\vartheta\in C^\infty_{0,\mathrm{odd}}(\mathbb{R}_x)$, then the re-scaled radial function $\theta^{(\eta)}$ of the charge $\eta:=T_\lambda^{(0)}\xi$ is
   \[
    \theta^{(\eta)}\;=\;-4\pi^2s_0\sqrt{\lambda}\,(\cosh x)\,\vartheta\,.
   \]
   Therefore, if $T_\lambda^{(0)}\xi\equiv 0$, then $\theta^{(\eta)}\equiv 0$, thus also $\vartheta\equiv 0$, implying that $\xi\equiv 0$.
   
   (vii) Owing to \eqref{eq:xiTxi-with-theta},
   \begin{equation*}
     \int_{\mathbb{R}^3} \overline{\,\widehat{\xi}(\pp)}\, \big(\widehat{T_\lambda^{(0)}\eta}\big)(\pp)\,\ud\pp\,=\,\frac{\,8\pi^2}{3\sqrt{3}}\int_{\mathbb{R}}\ud s\, \widehat{\gamma}(s)\,\overline{\widehat{\theta^{(\xi)}}(s)}\,\widehat{\theta^{(\eta)}}(s)\,=\,\int_{\mathbb{R}^3} \overline{\big(\widehat{T_\lambda^{(0)}\xi}\big)(\pp)}\,\widehat{\eta}(\pp)\,\ud\pp
    \end{equation*}
   as an identity between \emph{finite} quantities. Indeed,
   \[
   \begin{split}
    \int_{\mathbb{R}}\ud s\, \widehat{\gamma}(s)\,\overline{\widehat{\theta^{(\xi)}}(s)}\,\widehat{\theta^{(\eta)}}(s)\;&=\;-2s_0\int_{\mathbb{R}}\ud x\,\overline{\,\vartheta^{(\xi)}(x)}\,\theta^{(\eta)}(x) \\
    &=\;-2s_0\int_{\mathbb{R}}\ud x\,\overline{\,\vartheta^{(\xi)}(x)}\,\Big(\sin (s_0 |x|)*\big(\widehat{\vartheta^{(\eta)}}/\widehat{\gamma}_+\big)^{\!\vee}\Big)(x)
   \end{split}
   \]
  (having applied part (ii) in the first identity and \eqref{eq:Dtilde0} in the second), whence
  \[
   \begin{split}
    \Big| \int_{\mathbb{R}}\ud s\, \widehat{\gamma}(s)\,\overline{\widehat{\theta^{(\xi)}}(s)}\,\widehat{\theta^{(\eta)}}(s)\Big|\;&\leqslant\;\big\|\vartheta^{(\xi)}\big\|_{L^1}\Big\|\sin (s_0 |x|)*\big(\widehat{\vartheta^{(\eta)}}/\widehat{\gamma}_+\big)^{\!\vee}\Big\|_{L^\infty} \\
   &\leqslant\;\big\|\vartheta^{(\xi)}\big\|_{L^1}\Big\|\Big(\frac{\widehat{\vartheta^{(\eta)}}}{\:\widehat{\gamma}_+}\Big)^{\!\vee}\Big\|_{L^1}\;<\;+\infty
   \end{split}
  \]
 (again by Young's inequality).
%
  \end{proof}

  \begin{remark}\label{rem:whensymmetric2}
    Lemma \ref{lem:D0tildedomainproperties} supplements the picture previously emerged from Lemma \ref{lem:Tlambdaproperties}(i) and (v), and discussed in Remark \ref{rem:whensymmetric}, concerning the validity of the identity
     \begin{equation*}
 \int_{\mathbb{R}^3} \overline{\,\widehat{\xi}(\pp)}\, \big(\widehat{T_\lambda\eta}\big)(\pp)\,\ud\pp\;=\;\int_{\mathbb{R}^3} \overline{\,\widehat{T_\lambda\xi}(\pp)}\, \widehat{\eta}(\pp)\,\ud\pp\,.
 \end{equation*}
    Indeed, such a symmetry property was previously established for $\xi,\eta\in H^{\frac{1}{2}}(\mathbb{R}^3)$, whereas Lemma \ref{lem:D0tildedomainproperties}(v) now guarantees it also for $\xi,\eta\in\widetilde{\mathcal{D}}_0$, namely a non-$H^{\frac{1}{2}}$ domain (as shown by Lemma \ref{lem:D0tildedomainproperties}(iii)).     
  \end{remark}

  \begin{remark}
   Along the same line of the previous Remark, Lemma \ref{lem:D0tildedomainproperties} also supplements the analysis of the problem, considered in Lemma \ref{lem:Tlambdaproperties}(ii)-(iii) and Remark \ref{rem:Tl-failstomap}, of finding charges $\xi\in H^{-\frac{1}{2}}_{\ell=0}(\mathbb{R}^3)$ satisfying $T^{(0)}_\lambda\xi\in H^{\frac{1}{2}}(\mathbb{R}^3)$, which is in turn crucial to construct $W_\lambda^{-1}T^{(0)}_\lambda$. As observed already, high regularity of $\xi(\yy)$ (hence fast decay of $\widehat{\xi}(\pp)$) is of no avail in the sector $\ell=0$. Now Lemma \ref{lem:D0tildedomainproperties} implies that the good feature for having $T_\lambda\xi\in H^{\frac{1}{2}}(\mathbb{R}^3)$ is a suitable \emph{oscillation} of $\widehat{\xi}(\pp)$ combined with some $|\pp|$-decay compatible with $H^{-\frac{1}{2}^-}$-regularity (which, loosely speaking, corresponds to some localisation of $\xi(\yy)$ close to $\yy=0$). This is seen by the fact that $\theta^{(\xi)}$ must satisfy \eqref{eq:thetaxiD0}, hence (Lemma \ref{lem:0tms-generalsol}) $\theta^{(\xi)}(x)$ is $\cos$-periodic in $x$, with consequent (non-periodic) oscillation in $\widehat{\xi}(\pp)$ via \eqref{eq:0xi} and \eqref{ftheta-2}.
  \end{remark}

  \begin{lemma}\label{lem:expectationunbounded}
   Let $\lambda>0$. Then
   \begin{equation}\label{eq:expectationunbounded}
    \inf_{\xi\in \widetilde{\mathcal{D}}_0}\frac{\;\langle\xi,T_\lambda^{(0)}\xi\rangle_{L^2}}{\;\;\|\xi\|^2_{H^{-\frac{1}{2}}}}\;=\;-\infty\,.
   \end{equation}   
  \end{lemma}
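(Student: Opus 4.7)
The strategy is to construct an explicit sequence $(\xi_n)\subset\widetilde{\mathcal{D}}_0$ whose re-scaled radial components $\theta_n$ are "slowly varying" on scale $n$, so that $\|\theta_n\|_{L^\infty}$ stays uniformly bounded in $n$ while the integral $\int\vartheta_n\theta_n\,\ud x$ controlling the numerator grows linearly. Fix a non-trivial $\varphi\in C^{\infty}_{0,\mathrm{odd}}(\mathbb{R})$ and set $\vartheta_n(x):=\varphi(x/n)$, so $\widehat{\vartheta_n}(s)=n\,\widehat{\varphi}(ns)$ is a Schwartz bump of width $\sim 1/n$ around $s=0$; let $\xi_n\in\widetilde{\mathcal{D}}_0$ be the corresponding charge at the fixed $\lambda$, with $\theta_n=\sin(s_0|\cdot|)*g_n$ and $g_n=(\widehat{\vartheta_n}/\widehat{\gamma}_+)^{\vee}$. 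Heuristically, $\widehat{\theta_n}$ then concentrates its mass near $s=0$, where $\widehat{\gamma}$ attains its negative minimum.

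The central technical step is the uniform-in-$x$ asymptotic
\[
\theta_n(x)\;=\;\frac{2\,\varphi(x/n)}{s_0\,\widehat{\gamma}_+(0)}\;+\;O(1/n)\,.
\]
To establish it, I rescale $\sigma=ns$ in the Fourier representation of $g_n$ and Taylor-expand $1/\widehat{\gamma}_+(\sigma/n)$ around $0$; since $\widehat{\gamma}_+$ is smooth, strictly positive, and even, this yields $g_n(y)=\varphi(y/n)/\widehat{\gamma}_+(0)+r_n(y)$ with $\|r_n\|_{L^\infty}=O(1/n^2)$ (and $\|r_n\|_{L^1}=O(1/n)$ since $r_n$ lives on scale $n$). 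Plugging into the odd-parity representation
\[
\theta_n(x)=-2\cos(s_0 x)\!\!\int_0^{x}\!\!\sin(s_0 y)g_n(y)\,\ud y-2\sin(s_0 x)\!\!\int_x^{+\infty}\!\!\cos(s_0 y)g_n(y)\,\ud y\qquad(x>0)
\]
and integrating by parts once in each integral, using $\varphi(0)=0$ (oddness of $\varphi$) and the compactness of $\mathrm{supp}\,\varphi$, the boundary terms at $y=x$ produce $-\cos(s_0 x)\varphi(x/n)/s_0$ and $-\sin(s_0 x)\varphi(x/n)/s_0$, which combine through $\cos^{2}(s_0 x)+\sin^{2}(s_0 x)=1$ into the stated leading term; the remaining oscillatory integrals, after a further integration by parts, are $O(1/n)$ uniformly in $x$, and the contribution from $r_n$ to $\theta_n$ is $\leq\|r_n\|_{L^1}=O(1/n)$ by Young's inequality. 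In particular $\|\theta_n\|_{L^\infty}=O(1)$ uniformly in $n$.

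With this asymptotic in hand, I combine it with the identity derived inside the proof of Lemma~\ref{lem:D0tildedomainproperties}(vii),
\[
\langle\xi_n,T_{\lambda}^{(0)}\xi_n\rangle_{L^2}=-\frac{16\pi^{2}s_0}{3\sqrt{3}}\int_{\mathbb{R}}\vartheta_n(x)\theta_n(x)\,\ud x,
\]
substituting and using $\|\varphi(\cdot/n)\|_{L^{2}}^{2}=n\|\varphi\|_{L^{2}}^{2}$ to obtain $\langle\xi_n,T_{\lambda}^{(0)}\xi_n\rangle_{L^2}\sim -C_1 n$ with $C_1=32\pi^{2}\|\varphi\|_{L^{2}}^{2}/(3\sqrt{3}\,\widehat{\gamma}_+(0))>0$; simultaneously \eqref{eq:mellinnorms} and the uniform bound yield $\|\xi_n\|_{H^{-\frac{1}{2}}}^{2}\approx\int_{\mathbb{R}}|\theta_n|^{2}/\cosh^{2}x\,\ud x\leq\|\theta_n\|_{L^\infty}^{2}\int_{\mathbb{R}}\cosh^{-2}\ud x=O(1)$. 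The ratio in \eqref{eq:expectationunbounded} is therefore bounded above by a negative multiple of $n$ and diverges to $-\infty$. The main obstacle is the uniform bound $\|\theta_n\|_{L^\infty}=O(1)$: Young's inequality alone gives only $\|\theta_n\|_{L^\infty}\leq\|g_n\|_{L^1}=O(n)$, which would make the denominator grow like $n^{2}$ and the ratio tend to $0$. The argument therefore depends essentially on the algebraic cancellation between the two oscillatory integrals through $\cos^{2}+\sin^{2}=1$, on $\varphi(0)=0$ (which kills the boundary term at $y=0$ in the integration by parts), and on the evenness of $\widehat{\gamma}_+$ at $0$ (which suppresses the leading correction in $g_n$).
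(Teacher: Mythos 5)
Your proof is correct, but it takes a genuinely different route from the paper's. The paper fixes an arbitrary $\xi\in\widetilde{\mathcal{D}}_0$ and dilates its re-scaled radial component, $\theta_\varepsilon(x):=\theta(\varepsilon x)$; the numerator then behaves like $\frac{8\pi^2}{3\sqrt 3}\,\varepsilon^{-1}\!\int\widehat{\gamma}(\varepsilon s)|\widehat{\theta}(s)|^2\ud s\to-\infty$ (using $\int|\widehat{\theta}|^2=+\infty$ from Lemma \ref{lem:D0tildedomainproperties}(iv) and $\widehat{\gamma}(0)<0$), while the denominator tends to $0$ by $\theta(0)=0$ and dominated convergence. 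You instead dilate the \emph{datum} $\vartheta$, which has two consequences: your trial sequence is manifestly inside $\widetilde{\mathcal{D}}_0$ as defined in \eqref{eq:Dtilde0} (whereas it is not immediate that a dilation at the level of $\theta$ preserves the structural form $\sin(s_0|\cdot|)*(\widehat{\vartheta}/\widehat{\gamma}_+)^{\vee}$ with $\vartheta\in C^\infty_{0,\mathrm{odd}}$, so your construction is actually the more scrupulous one on this point); and you must pay for it with the uniform asymptotic $\theta_n(x)=2\varphi(x/n)/(s_0\widehat{\gamma}_+(0))+O(1/n)$, which you obtain correctly — the real-space splitting of the convolution, the double integration by parts, the cancellation via $\cos^2+\sin^2=1$, and the use of $\varphi(0)=0$ and of the evenness of $\widehat{\gamma}_+$ all check out, and your constant $C_1$ agrees with what \eqref{eq:xiTxi-with-theta} combined with $\widehat{\gamma}\,\widehat{\theta}=-2s_0\widehat{\vartheta}$ gives. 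Both arguments exploit the same mechanism, namely concentration of $\widehat{\theta}$ near $s=0$ where $\widehat{\gamma}$ attains its strictly negative minimum; yours trades the paper's brevity for an explicit, self-contained trial sequence with quantitative rates (numerator $\sim-C_1 n$, denominator $O(1)$ via \eqref{eq:mellinnorms}).
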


  \begin{proof}
   Let $\xi\in \widetilde{\mathcal{D}}_0$, and let $\theta$ be its re-scaled radial component \eqref{eq:0xi}-\eqref{ftheta-1}. 
   The numerator in \eqref{eq:expectationunbounded} is indeed finite and real (Lemma \ref{lem:D0tildedomainproperties}(vii), Remark \ref{rem:whensymmetric2}).
   For $\varepsilon>0$ let $\xi_\varepsilon$ be the element of $\widetilde{\mathcal{D}}_0$ whose re-scaled radial component $\theta_\varepsilon$ is defined by $\theta_\varepsilon(x):=\theta(\varepsilon x)$. Then $\widehat{\theta}_\varepsilon(s)=\varepsilon^{-1}\widehat{\theta}(s/\varepsilon)$.
   Therefore,
   \[
    \begin{split}
     \lim_{\varepsilon\downarrow 0}\,\big\langle\xi_\varepsilon,T_\lambda^{(0)}\xi_\varepsilon\big\rangle_{L^2}\;&=\;\frac{\,8\pi^2}{3\sqrt{3}}\,\lim_{\varepsilon\downarrow 0}\,\frac{1}{\,\varepsilon^2}\int_{\mathbb{R}}\ud s\, \widehat{\gamma}(s)\,|\widehat{\theta}(s/\varepsilon)|^2 \\
     &=\;\frac{\,8\pi^2}{3\sqrt{3}}\,\lim_{\varepsilon\downarrow 0}\,\frac{1}{\varepsilon}\int_{\mathbb{R}}\ud s\, \widehat{\gamma}(\varepsilon s)\,|\widehat{\theta}(s)|^2\;=\;-\infty\,,
    \end{split}
   \]
%
   having used \eqref{eq:xiTxi-with-theta} in the first identity, and the fact that $\int_{\mathbb{R}}\ud s|\widehat{\theta}(s)|^2=+\infty$ (which follows from Lemma \ref{lem:D0tildedomainproperties}(iv) and \eqref{eq:mellinnorms}) and $ \widehat{\gamma}(0)=1-\frac{4\pi}{3\sqrt{3}}<0$. On the other hand,
    \[
     \|\xi_\varepsilon\|^2_{H^{-\frac{1}{2}}}\;\approx\;\int_{\mathbb{R}}\frac{|\theta_\varepsilon(x)|^2}{(\cosh x)^2}\,\ud x\;=\;\int_{\mathbb{R}}\frac{|\theta(\varepsilon x)|^2}{(\cosh x)^2}\,\ud x\;\xrightarrow[]{\varepsilon\downarrow 0}\;0\,,
    \]
   owing to \eqref{eq:mellinnorms}, \eqref{eq:thetavanishesatzero}, and dominated convergence. As a consequence,
   \[
    \lim_{\varepsilon\downarrow 0}\frac{\;\langle\xi_\varepsilon,T_\lambda^{(0)}\xi_\varepsilon\rangle_{L^2}}{\;\;\|\xi_\varepsilon\|^2_{H^{-\frac{1}{2}}}}\;=\;-\infty\,,
   \] 
    which proves \eqref{eq:expectationunbounded}.  
  \end{proof}

  For $\lambda>0$, let us now consider the operator
    \begin{equation}\label{eq:Alambdatilde-0}
   \begin{split}
       \widetilde{\mathcal{A}_{\lambda}^{(0)}}\;&:=\;3W_\lambda^{-1}T_\lambda^{(0)} \\
       \mathcal{D}\big( \widetilde{\mathcal{A}_{\lambda}^{(0)}}\big)\;&:=\; \widetilde{\mathcal{D}}_0
   \end{split}
  \end{equation}
  with respect to the Hilbert space $H^{-\frac{1}{2}}_{W_\lambda,\ell=0}(\mathbb{R}^3)$.
 The definition \eqref{eq:Alambdatilde-0} is well-posed, owing to Lemma \ref{lem:D0tildedomainproperties}(v) and Lemma \ref{lem:Wlambdaproperties}(ii). The operator $\widetilde{\mathcal{A}_{\lambda}^{(0)}}$ is the counterpart, for the sector of angular momentum $\ell=0$, of the operator $\widetilde{\mathcal{A}_{\lambda}^{(\ell)}}$ defined in \eqref{eq:Alambdatildenot0}.

  \begin{lemma}\label{lem:symmetricAtilde0} With respect to the Hilbert space $H^{-\frac{1}{2}}_{W_\lambda,\ell=0}(\mathbb{R}^3)$, the operator $\widetilde{\mathcal{A}_{\lambda}^{(0)}}$ is densely defined, symmetric, and not semi-bounded.
  \end{lemma}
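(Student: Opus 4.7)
The proof proceeds by unpacking each of the three claims and reducing it to a lemma already established in the preceding subsection, via the defining identity $\langle\xi,\eta\rangle_{H^{-\frac{1}{2}}_{W_\lambda}}=\langle u_\xi,u_\eta\rangle_{\cH}$ together with the key formula
\[
 \big\langle\xi,\widetilde{\mathcal{A}_{\lambda}^{(0)}}\eta\big\rangle_{H^{-\frac{1}{2}}_{W_\lambda}}\;=\;3\big\langle\xi, W_\lambda^{-1}W_\lambda W_\lambda^{-1}T_\lambda^{(0)}\eta\big\rangle_{H^{-\frac{1}{2}},H^{\frac{1}{2}}}\!\!\cdot \;,
\]
which, upon using \eqref{eq:W-scalar-product} and the bijectivity of $W_\lambda^{(0)}:H^{-\frac{1}{2}}_{\ell=0}\to H^{\frac{1}{2}}_{\ell=0}$, simplifies to $3\langle\xi,T_\lambda^{(0)}\eta\rangle_{L^2}$ whenever both $\xi\in H^{-\frac{1}{2}}_{\ell=0}(\mathbb{R}^3)$ and $T_\lambda^{(0)}\eta\in H^{\frac{1}{2}}_{\ell=0}(\mathbb{R}^3)$ (this is precisely the setting for $\eta\in\widetilde{\mathcal{D}}_0$, thanks to Lemma \ref{lem:D0tildedomainproperties}(v)).

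For the \emph{density} claim, I invoke Lemma \ref{lem:D0tildedomainproperties}(iii), according to which $\widetilde{\mathcal{D}}_0$ is dense in $H^{\frac{1}{2}-\varepsilon}_{\ell=0}(\mathbb{R}^3)$ for every $\varepsilon>0$; since $H^{\frac{1}{2}-\varepsilon}_{\ell=0}(\mathbb{R}^3)$ is in turn dense in $H^{-\frac{1}{2}}_{\ell=0}(\mathbb{R}^3)$ (by a standard truncation/mollification in the Sobolev scale), and $H^{-\frac{1}{2}}_{\ell=0}(\mathbb{R}^3)\cong H^{-\frac{1}{2}}_{W_\lambda,\ell=0}(\mathbb{R}^3)$ with equivalent norms (consequence of \eqref{eq:W-scalar-product} and \eqref{eq:equivalencenorms}), the conclusion follows. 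For the \emph{symmetry} claim, for $\xi,\eta\in\widetilde{\mathcal{D}}_0$ the computation above reduces the task to checking $\langle\xi,T_\lambda^{(0)}\eta\rangle_{L^2}=\langle T_\lambda^{(0)}\xi,\eta\rangle_{L^2}$, which is precisely the content of Lemma \ref{lem:D0tildedomainproperties}(vii) (both sides being finite by the same Lemma).

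For the \emph{absence of semi-boundedness}, the plan is to exhibit a sequence of charges in $\widetilde{\mathcal{D}}_0$ whose Rayleigh quotient tends to $-\infty$. Applying the identity above with $\eta=\xi$ and invoking the norm equivalence $\|\cdot\|_{H^{-\frac{1}{2}}}\approx\|\cdot\|_{H^{-\frac{1}{2}}_{W_\lambda}}$, we get
\[
 \frac{\big\langle\xi,\widetilde{\mathcal{A}_{\lambda}^{(0)}}\xi\big\rangle_{H^{-\frac{1}{2}}_{W_\lambda}}}{\|\xi\|^2_{H^{-\frac{1}{2}}_{W_\lambda}}}\;\approx\;\frac{3\,\langle\xi,T_\lambda^{(0)}\xi\rangle_{L^2}}{\|\xi\|^2_{H^{-\frac{1}{2}}}}\qquad \forall\xi\in\widetilde{\mathcal{D}}_0\setminus\{0\}\,,
\]
with multiplicative constants that are bounded away from $0$ and $\infty$. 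Lemma \ref{lem:expectationunbounded} then provides, by a rescaling argument $\theta\mapsto\theta(\varepsilon\cdot)$, a one-parameter family $(\xi_\varepsilon)_{\varepsilon>0}\subset\widetilde{\mathcal{D}}_0$ along which the r.h.s.\ diverges to $-\infty$ as $\varepsilon\downarrow 0$; accordingly, the l.h.s.\ does the same, and thus $\mathfrak{m}(\widetilde{\mathcal{A}_{\lambda}^{(0)}})=-\infty$.

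The only delicate step is the symmetry, because the target space $H^{-\frac{1}{2}}_{W_\lambda}(\mathbb{R}^3)$ is \emph{not} $L^2(\mathbb{R}^3)$ and elements of $\widetilde{\mathcal{D}}_0$ fail to lie in $H^{\frac{1}{2}}(\mathbb{R}^3)$ (Lemma \ref{lem:D0tildedomainproperties}(iv)), so neither the naive $L^2$-symmetry of $T_\lambda^{(0)}$ (Lemma \ref{lem:Tlambdaproperties}(i), which requires $s\geqslant 1$) nor the duality-symmetry (Lemma \ref{lem:Tlambdaproperties}(v), which requires $s\geqslant\tfrac12$) applies directly. This is exactly the gap filled by Lemma \ref{lem:D0tildedomainproperties}(vii), whose proof already exploited the explicit $\sin(s_0|x|)$-convolution structure of the elements of $\widetilde{\mathcal{D}}_0$ together with Young's inequality to pair one factor against an $L^1$-bound and the other against an $L^\infty$-bound. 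All remaining technicalities are routine bookkeeping.
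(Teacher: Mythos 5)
Your proposal is correct and follows essentially the same route as the paper: density via Lemma \ref{lem:D0tildedomainproperties}(iii) and the isomorphism $H^{-\frac{1}{2}}_{W_\lambda,\ell=0}(\mathbb{R}^3)\cong H^{-\frac{1}{2}}_{\ell=0}(\mathbb{R}^3)$, symmetry via the reduction $\langle\xi,\widetilde{\mathcal{A}_{\lambda}^{(0)}}\eta\rangle_{H^{-\frac{1}{2}}_{W_\lambda}}=3\langle\xi,T_\lambda^{(0)}\eta\rangle_{L^2}$ together with Lemma \ref{lem:D0tildedomainproperties}(vii), and unboundedness from below via Lemma \ref{lem:expectationunbounded}. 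The only cosmetic difference is that the paper certifies symmetry by the reality of the diagonal form $\frac{8\pi^2}{\sqrt{3}}\int_{\mathbb{R}}\widehat{\gamma}(s)\,|\widehat{\theta}^{(\xi)}(s)|^2\,\ud s$, whereas you use the equivalent off-diagonal exchange identity, both resting on the same lemma.
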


  \begin{proof}
   The density of $\widetilde{\mathcal{A}_{\lambda}^{(0)}}$ follows from Lemma \ref{lem:D0tildedomainproperties}(iii) and the canonical Hilbert space isomorphism  $H^{-\frac{1}{2}}_{W_\lambda,\ell=0}(\mathbb{R}^3)\cong H^{-\frac{1}{2}}(\mathbb{R}^3)$. 
      Symmetry follows from the finiteness and reality of
   \[
    \big\langle \xi,\widetilde{\mathcal{A}_{\lambda}^{(0)}}\xi\big\rangle_{H^{-\frac{1}{2}}_{W_\lambda}}\;=\;3\big\langle\xi,T_\lambda^{(0)}\xi\big\rangle_{L^2}\;=\;\frac{\,8\pi^2}{\sqrt{3}}\int_{\mathbb{R}}\ud s\, \widehat{\gamma}(s)\,\big|\widehat{\theta}^{(\xi)}(s)\big|^2\qquad\forall\xi\in\widetilde{\mathcal{D}}_0\,,
   \]
   having used \eqref{eq:Alambdatilde-0} and \eqref{eq:W-scalar-product} in the first identity, and \eqref{eq:xiTxi-with-theta} in the second. (The finiteness of the above quantities is argued precisely as done in the proof of Lemma \ref{lem:D0tildedomainproperties}(vii).)   
      The unboundedness of $\widetilde{\mathcal{A}_{\lambda}^{(0)}}$ from above is obvious, and from below it follows directly from Lemma \ref{lem:expectationunbounded}, using again the isomorphism  $H^{-\frac{1}{2}}_{W_\lambda,\ell=0}(\mathbb{R}^3)\cong H^{-\frac{1}{2}}(\mathbb{R}^3)$.   
  \end{proof}

  Analogously to the operator $\widetilde{\mathcal{A}_{\lambda}^{(\ell)}}$ from \eqref{eq:Alambdatildenot0} when $\ell\in\mathbb{N}$, $\widetilde{\mathcal{A}_{\lambda}^{(0)}}$ is a well-defined labelling (Birman) operator for a Ter-Martirosyan Skornyakov \emph{symmetric} extension of our original operator of interest $\mathring{H}$, with inverse scattering length $\alpha=0$, of course only as far as the $\ell=0$ sector is concerned. Indeed, based on Lemma \ref{lem:symmetricAtilde0}, the considerations of Remark \ref{rem:remonsym}(ii) apply. Moreover, as found for $\widetilde{\mathcal{A}_{\lambda}^{(\ell)}}$, we shall see that  $\widetilde{\mathcal{A}_{\lambda}^{(0)}}$ is not self-adjoint in $H^{-\frac{1}{2}}_{W_\lambda,\ell=0}(\mathbb{R}^3)$, and therefore it does not identify a Ter-Martirosyan Skornyakov \emph{self-adjoint} extension of $\mathring{H}$ (with $\ell=0$).

  \emph{Unlike} $\widetilde{\mathcal{A}_{\lambda}^{(\ell)}}$, however, $\widetilde{\mathcal{A}_{\lambda}^{(0)}}$ is not bounded from below, hence does not admit a Friedrichs extension: the identification of its self-adjoint extension(s), if any, requires a different approach than Proposition \ref{prop:Alambdaellnot0}, discussed in the following Subsection \ref{sec:multiselfadj-0}.

  Moreover, the Ter-Martirosyan Skornyakov \emph{symmetric} extension of $\mathring{H}$ identified by $\widetilde{\mathcal{A}_{\lambda}^{(0)}}$ for the sector $\ell=0$ is itself unbounded from below (and above) on the bosonic space $\cH_\mathrm{b}$. This follows from the computation made in Remark \ref{rem:TMSsym}, namely
  \[
   \big\langle g , (H_{\mathcal{A}_{\lambda}^{(0)}} +\lambda\mathbbm{1})g\big\rangle_{\cH_\mathrm{b}}\;=\;\big\langle \xi,\widetilde{\mathcal{A}_{\lambda}^{(0)}}\xi\big\rangle_{H^{-\frac{1}{2}}_{W_\lambda}}
  \]
  for functions $g=u_\xi^\lambda$ (with an innocent abuse of notation in writing $H_{\mathcal{A}_{\lambda}^{(0)}}$, as we are only referring here to the Hamiltonian on functions with $\xi$-charge in the sector $\ell=0$): when $\xi\in\widetilde{\mathcal{D}}_0$ the expectation 
     \[
 \frac{\big\langle g , H_{\mathcal{A}_{\lambda}^{(0)}}g\big\rangle_{\cH_\mathrm{b}}}{\|g\|_{\cH_\mathrm{b}}^2}\;=\; \frac{\big\langle \xi,\widetilde{\mathcal{A}_{\lambda}^{(0)}}\xi\big\rangle_{H^{-\frac{1}{2}}_{W_\lambda}}}{\;\|\xi\|^2_{H^{-\frac{1}{2}}_{W_\lambda}}}-\lambda
 \]
  can be made, at fixed $\lambda$, arbitrarily negative, owing to \eqref{eq:expectationunbounded}.

  \begin{remark}\label{rem:merepurpose}
   For the mere purpose of realising $W_\lambda^{-1}T_\lambda^{(0)}$ as a densely defined and symmetric operator in $H^{-\frac{1}{2}}_{W_\lambda,\ell=0}(\mathbb{R}^3)$, one could have selected the \emph{larger} domain 
   \begin{equation}\label{eq:Dolarger}
    \widetilde{\mathcal{D}}_0^\prime\;:=\;\widetilde{\mathcal{D}}_0\dotplus\mathrm{span}\{\xi_{s_0}\}\,,
   \end{equation}
  where $\xi_{s_0}$ has re-scaled radial component
  \begin{equation}
   \theta_{s_0}(x)\;:=\;\sin s_0 x
  \end{equation}
  for some tacitly declared $\lambda>0$. All conclusions of Lemma \ref{lem:D0tildedomainproperties} remain straightforwardly valid for $\widetilde{\mathcal{D}}_0^\prime$ but for the injectivity of $T_\lambda^{(0)}$ on $\widetilde{\mathcal{D}}_0^\prime$, for 
  \begin{equation}
   T_\lambda^{(0)}\xi_{s_0}\;=\;0\,,
  \end{equation}
 as follows from (the proof of) Lemma \ref{lem:0tms-generalsol}. The conclusions of Lemma \ref{lem:expectationunbounded} apply to $\widetilde{\mathcal{D}}_0^\prime$ as well, and therefore one has a version of Lemma \ref{lem:symmetricAtilde0} also for the operator
       \begin{equation}\label{eq:Alambdatilde-0prime}
   \begin{split}
       \widetilde{\mathcal{B}_{\lambda}^{(0)}}\;&:=\;3W_\lambda^{-1}T_\lambda^{(0)} \\
       \mathcal{D}\big( \widetilde{\mathcal{B}_{\lambda}^{(0)}}\big)\;&:=\; \widetilde{\mathcal{D}}_0^\prime\,.
   \end{split}
  \end{equation}
  \emph{The issue with such $\widetilde{\mathcal{B}_{\lambda}^{(0)}}$}, as compared to $\widetilde{\mathcal{A}_{\lambda}^{(0)}}$, is that it is symmetric on a domain that is \emph{too large} for the problem of finding self-adjoint operators of Ter-Martirosyan Skornyakov type: the operator $\mathring{H}_{\widetilde{\mathcal{A}_{\lambda}^{(0)}}}$ (in the sector $\ell=0$) admits self-adjoint TMS extensions on $L^2_{\mathrm{b}}(\mathbb{R}^3\times\mathbb{R}^3\,\ud\yy_1,\ud\yy_2)$, the operator $\mathring{H}_{\widetilde{\mathcal{B}_{\lambda}^{(0)}}}$ does not. We shall complete this discussion in Remark \ref{rem:choicedomains}.  
  \end{remark}

  \subsection{Adjoint of the Birman parameter}\label{sec:adjointBirman}~

  Here we characterise the operator $ \big(\widetilde{\mathcal{A}_{\lambda}^{(0)}}\big)^\star$.  We recall that `$\star$' indicates the adjoint with respect to $H^{-\frac{1}{2}}_{W_\lambda,\ell=0}(\mathbb{R}^3)$, whereas `$*$' is reserved for the adjoint in $L^2(\mathbb{R}^3)$.

  \begin{lemma}\label{lem:deficiency1-1}
   Let $\lambda>0$. The densely defined and symmetric operator $\widetilde{\mathcal{A}_{\lambda}^{(0)}}$ defined in \eqref{eq:Alambdatilde-0} on the Hilbert space $H^{-\frac{1}{2}}_{W_\lambda,\ell=0}(\mathbb{R}^3)$ has deficiency indices $(1,1)$. Thus, for every $\mu>0$ there exist two non-zero functions $\xi_{\ii\mu},\xi_{-\ii\mu}\in H^{-\frac{1}{2}}_{W_\lambda,\ell=0}(\mathbb{R}^3)$ such that
   \begin{equation}\label{eq:twodefsub}
  \begin{split}
   \ker\Big(\big(\widetilde{\mathcal{A}_{\lambda}^{(0)}}\big)^\star-\ii\mu\mathbbm{1}\Big)\;&=\;\mathrm{span}\{\xi_{\ii\mu}\} \\
   \ker\Big(\big(\widetilde{\mathcal{A}_{\lambda}^{(0)}}\big)^\star+\ii\mu\mathbbm{1}\Big)\;&=\;\mathrm{span}\{\xi_{-\ii\mu}\}\,.
  \end{split}
  \end{equation}
  Moreover, $\big(\widetilde{\mathcal{A}_{\lambda}^{(0)}}\big)^\star$ acts on such eigenvectors as
  \begin{equation}\label{eq:A0tildestareigenv}
   \big(\widetilde{\mathcal{A}_{\lambda}^{(0)}}\big)^\star\xi_{\pm \ii \mu}\;=\;3 W_\lambda^{-1}T_\lambda^{(0)}\,	\xi_{\pm \ii \mu}\;=\;\pm \ii\mu\,\xi_{\pm \ii \mu}\,.
  \end{equation}
  \end{lemma}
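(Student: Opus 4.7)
I would reduce the deficiency problem to the explicit construction of eigenvectors of $\big(\widetilde{\mathcal{A}_{\lambda}^{(0)}}\big)^\star$ for eigenvalues $\pm\ii\mu$ via the rescaled radial representation and the Mellin-type analysis already developed. Using $\langle a,b\rangle_{H^{-\frac{1}{2}}_{W_\lambda}} = \langle a, W_\lambda b\rangle_{H^{-\frac{1}{2}},H^{\frac{1}{2}}}$ (Lemma \ref{lem:Wlambdaproperties}) together with the definition \eqref{eq:Alambdatilde-0}, the eigenvalue condition $\big(\widetilde{\mathcal{A}_{\lambda}^{(0)}}\big)^\star\xi_{\pm\ii\mu} = \pm\ii\mu\,\xi_{\pm\ii\mu}$ becomes the distributional identity
\[
3\,\langle \xi_{\pm\ii\mu}, T_\lambda^{(0)}\eta\rangle_{H^{-\frac{1}{2}},H^{\frac{1}{2}}} \;=\; \pm\ii\mu\,\langle \xi_{\pm\ii\mu}, W_\lambda\eta\rangle_{H^{-\frac{1}{2}},H^{\frac{1}{2}}} \qquad \forall\, \eta \in \widetilde{\mathcal{D}}_0,
\]
equivalently $3\,T_\lambda^{(0)}\xi_{\pm\ii\mu} = \pm\ii\mu\, W_\lambda\,\xi_{\pm\ii\mu}$ tested against $\widetilde{\mathcal{D}}_0$.

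Next, I would translate this into a scalar equation for the rescaled radial component $\theta_{\pm\ii\mu} := \theta^{(\xi_{\pm\ii\mu})}$ defined by \eqref{eq:0xi}--\eqref{ftheta-1}. The explicit formula \eqref{eq:Tlxi-theta} together with the Fourier-side expressions \eqref{eq:xiTxi-with-theta} and \eqref{eq:xiWxi-zero-FOURIER} reduces the equation to a one-dimensional linear integral equation for $\theta_{\pm\ii\mu}$ on $\mathbb{R}$. Following the technique of Lemma \ref{lem:0tms-generalsol}, I would propose an ansatz $\theta_{\pm\ii\mu}(x) \propto \sin(z_\pm|x|)$ with complex $z_\pm$ satisfying $\im z_\pm > 0$, in analogy with the real roots $\pm s_0$ of $\widehat{\gamma}$ that produce the zero modes of Remark \ref{rem:merepurpose}; for each sign one expects exactly one such $z_\pm$, characterised as a simple root of an explicit transcendental ``shifted-$\widehat{\gamma}$'' equation. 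The positivity of $\im z_\pm$ forces exponential decay of $\theta_{\pm\ii\mu}$ as $|x|\to\infty$, placing it in $L^2(\mathbb{R},(\cosh x)^{-2}\,\ud x)$ and hence, via \eqref{eq:mellinnorms}, placing $\xi_{\pm\ii\mu}$ in $H^{-\frac{1}{2}}_{\ell=0}(\mathbb{R}^3)$.

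Finally, I would verify that $T_\lambda^{(0)}\xi_{\pm\ii\mu}\in H^{\frac{1}{2}}(\mathbb{R}^3)$ (so that \eqref{eq:A0tildestareigenv} literally makes sense in $H^{-\frac{1}{2}}_{W_\lambda,\ell=0}(\mathbb{R}^3)$), establish the linear independence of $\xi_{\ii\mu}$ and $\xi_{-\ii\mu}$ (whose rescaled radials have complex-conjugate oscillation/decay rates), and use the simplicity of $z_\pm$ as a root of the shifted-$\widehat{\gamma}$ equation to conclude uniqueness up to scalar in each eigenspace. Together these yield $\dim\ker\big(\big(\widetilde{\mathcal{A}_{\lambda}^{(0)}}\big)^\star \mp \ii\mu\,\mathbbm{1}\big) = 1$, hence deficiency indices $(1,1)$.

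The main obstacle is that $W_\lambda$ is not diagonal in the Fourier variable of $\theta$ (cf.\ \eqref{eq:xiWxi-zero-FOURIER}), so the reduction to a scalar equation in the second step is not a straightforward algebraic manipulation. I expect the cleanest route to be asymptotic matching at $|x|\to\infty$: the $W_\lambda$-kernel decays exponentially there, so both sides of the reduced equation are dominated by the oscillation/decay rate of $\theta$, which pins $z_\pm$ down as the root of an explicit transcendental condition shifting $\widehat{\gamma}(s)$ by an imaginary amount proportional to $\mu$. Controlling the error terms away from this asymptotic regime, and ensuring that the candidate $\xi_{\pm\ii\mu}$ genuinely lies in the domain of the adjoint and not merely in a larger distributional extension, is the most delicate technical point.
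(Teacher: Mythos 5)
Your first two steps coincide with the paper's: identify $\ker\big(\big(\widetilde{\mathcal{A}_{\lambda}^{(0)}}\big)^\star\mp\ii\mu\mathbbm{1}\big)$ with $\mathrm{ran}\big(\widetilde{\mathcal{A}_{\lambda}^{(0)}}\pm\ii\mu\mathbbm{1}\big)^{\perp_\lambda}$, unfold the $W_\lambda$-scalar product, and use \eqref{eq:xiTxi-with-theta} and \eqref{eq:xiWxi-zero-FOURIER} to arrive at the one-dimensional integral equation \eqref{eq:eigenequation-thetaplusF} for $\widehat{\theta}_+$. The gap is in your third step: the ansatz $\theta_{\pm\ii\mu}(x)\propto\sin(z_\pm|x|)$ with $\im z_\pm>0$ and exponential decay is wrong. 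The deficiency elements do \emph{not} decay; their re-scaled radial components are bounded and asymptotically $\cos$-periodic with the \emph{real} frequency $s_0$ (up to an $O(\mu/\lambda)$ bounded correction), and on the Fourier side $\widehat{\theta}_{\pm\ii\mu}$ is a genuine distribution with singular support at the real roots $\pm s_0$ of $\widehat{\gamma}$ — a $\delta(s-s_0)-\delta(s+s_0)$ pair plus principal-value terms — exactly as in Lemma \ref{lem:0tms-generalsol}. There is no ``shifted-$\widehat{\gamma}$'' transcendental equation whose complex root pins down the solution: the perturbation $\ii\mu W_\lambda^{(0)}$ is an integral (smoothing) operator in $s$, so it enters as an inhomogeneity resolved by partial fractions against $(s-s_0)(s+s_0)$, not as a displacement of the zero of $\widehat{\gamma}$ into the complex plane. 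Indeed, if $\widehat{\theta}$ were analytic in a strip (equivalently $\theta$ exponentially decaying), evaluating \eqref{eq:eigenequation-thetaplusF} at $s=\pm s_0$ forces the smoothed right-hand side to vanish there, and for $\mu/\lambda$ small the resulting fixed-point equation is contractive on that constrained subspace, yielding only $\theta\equiv 0$; by constancy of the deficiency indices in $\mu$ this rules out your candidate solutions altogether.

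This is not a repairable detail: the bounded log-periodic behaviour $\widehat{\xi}_{\pm\ii\mu}(\pp)\sim(\sin(s_0\log|\pp|)+W\cos(s_0\log|\pp|))/\pp^2$ established in Lemma \ref{lem:adjointasymtotics} is precisely what makes $\xi_{\pm\ii\mu}$ lie in $H^{-\frac{1}{2}}$ but not in $H^{\frac{1}{2}}$, and it is the input for the $\beta$-parametrisation of the self-adjoint extensions in Proposition \ref{prop:Aellzero-selfadj} and for the Thomas--Efimov spectrum. Your exponentially decaying candidates would place the deficiency elements in $H^{\frac{1}{2}}$ and erase the oscillatory asymptotics on which the rest of the construction depends. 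The correct route, which the paper follows, is to solve \eqref{eq:eigenequation-thetaplusF} as a distributional equation: write it as $\widehat{\theta}_+=\widehat{\theta}_0+\ii\frac{\mu}{\lambda}\mathcal{L}\widehat{\theta}_+$ with $\widehat{\theta}_0=c(\delta(s-s_0)-\delta(s+s_0))$ the (one-dimensional, by odd parity) solution space of the pivot equation $\widehat{\gamma}\,\widehat{\theta}_+=0$, and argue via the iterative expansion that the perturbation does not change the dimension, giving deficiency indices $(1,1)$.
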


  \begin{corollary}\label{cor:A0tildeadj}
   Under the above assumptions,
   \begin{equation}\label{eq:A0tildestar}
   \begin{split}
    \mathcal{D}\big(\big(\widetilde{\mathcal{A}_{\lambda}^{(0)}})^\star\big)\;&=\;\widetilde{\mathcal{D}}_{0}' \dotplus\mathrm{span}\{\xi_{\ii\mu}\}\dotplus \mathrm{span}\{\xi_{-\ii\mu}\} \\
     \big(\widetilde{\mathcal{A}_{\lambda}^{(0)}}\big)^\star\;&=\;3 W_\lambda^{-1}T_\lambda^{(0)}\,,
   \end{split}
  \end{equation}
  where $\widetilde{\mathcal{D}}_{0}'$ is the domain of the operator closure of $ \widetilde{\mathcal{A}_{\lambda}^{(0)}}$ with respect to $H^{-\frac{1}{2}}_{W_\lambda,\ell=0}(\mathbb{R}^3)$, that is, $\widetilde{\mathcal{D}}_{0}'$ is the closure of $\widetilde{\mathcal{D}}_{0}$ in the graph norm of  $\widetilde{\mathcal{A}_{\lambda}^{(0)}}$.
  \end{corollary}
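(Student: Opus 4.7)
My strategy exploits the real structure of $T_\lambda^{(0)}$ and $W_\lambda$ to reduce the problem to analyzing a single deficiency subspace, and then identifies the solutions through the re-scaled radial Fourier representation introduced in Subsection~\ref{sec:Tlambdaoperator}.

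First, I would observe that $T_\lambda^{(0)}$ (Lemma~\ref{lem:Tlambdaproperties}) and $W_\lambda$ (Lemma~\ref{lem:Wlambdaproperties}) have real-valued integral kernels, hence commute with complex conjugation on the underlying $L^2$/Sobolev scales. The same is therefore true of $\widetilde{\mathcal{A}_{\lambda}^{(0)}}=3W_\lambda^{-1}T_\lambda^{(0)}$ on $H^{-\frac12}_{W_\lambda,\ell=0}(\mathbb{R}^3)$, making it a real symmetric operator in the sense of von Neumann. Its deficiency indices therefore satisfy $n_+=n_-$, and one needs only to establish $n_+=\dim\ker\bigl((\widetilde{\mathcal{A}_{\lambda}^{(0)}})^\star-\ii\mu\mathbbm{1}\bigr)=1$ for one (hence any) $\mu>0$, with $\xi_{-\ii\mu}=\overline{\xi_{\ii\mu}}$.

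Second, I would unfold the defining identity of the adjoint via the scalar-product formula \eqref{eq:W-scalar-product}. The condition $(\widetilde{\mathcal{A}_{\lambda}^{(0)}})^\star\xi=\ii\mu\,\xi$ becomes
\begin{equation*}
3\,\langle\xi,T_\lambda^{(0)}\eta\rangle_{H^{-\frac12},H^{\frac12}}\;=\;\ii\mu\,\langle\xi,W_\lambda\eta\rangle_{H^{-\frac12},H^{\frac12}}\qquad\forall\eta\in\widetilde{\mathcal{D}}_0,
\end{equation*}
which, by density of $\widetilde{\mathcal{D}}_0$ in $H^{-\frac12}_{\ell=0}(\mathbb{R}^3)$ (Lemma~\ref{lem:D0tildedomainproperties}(iii)) and by the $H^{\frac12}$-regularity of $T_\lambda^{(0)}\eta$ (Lemma~\ref{lem:D0tildedomainproperties}(v)) and of $W_\lambda\eta$ (Lemma~\ref{lem:Wlambdaproperties}), is equivalent to the distributional identity $3T_\lambda^{(0)}\xi=\ii\mu\,W_\lambda\xi$ over $H^{-\frac12}_{\ell=0}(\mathbb{R}^3)$. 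Transferring it to the re-scaled radial Fourier profile $\Theta=\widehat{\theta^{(\xi)}}$ via \eqref{eq:xiTxi-with-theta}--\eqref{eq:xiWxi-zero-FOURIER}, the left-hand side reduces (up to a weight) to multiplication by $\widehat\gamma(s)$, whereas the right-hand side corresponds to a non-local convolution-type operator with explicit $\sinh$-kernels. Solving the resulting linear integral equation by a perturbative analysis anchored at $\mu=0$, where by Lemma~\ref{lem:0tms-generalsol} and Remark~\ref{rem:merepurpose} the kernel is one-dimensional and spanned by the profile $\theta_{s_0}(x)=\sin s_0 x$, one obtains a unique (up to scalar multiple) solution $\xi_{\ii\mu}$. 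Its $\theta$-representative is an asymptotically oscillatory function whose $H^{-\frac12}_{\ell=0}$-norm, computed through the equivalence \eqref{eq:mellinnorms} with the $(\cosh x)^{-2}$-weight, is finite (while the $H^{s}_{\ell=0}$-norm diverges for $s>-\tfrac12$, as in Lemma~\ref{lem:D0tildedomainproperties}(iv)). This yields the eigenvalue identity \eqref{eq:A0tildestareigenv}.

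The main obstacle is precisely the non-locality of $W_\lambda$ in the Fourier-Mellin variable: the equation $3T_\lambda^{(0)}\xi=\ii\mu W_\lambda\xi$ is a genuine integral equation, not a pointwise algebraic one. I expect to handle it either by a Fredholm/perturbation argument using that $\ii\mu W_\lambda$ acts as a relatively compact correction to the "diagonal" operator $T_\lambda^{(0)}$ on suitable weighted Fourier-Mellin scales (thereby preserving the kernel dimension from the $\mu=0$ case), or by an explicit construction paralleling the solution formula \eqref{eq:0tms-generalsol}, with the real zeros $\pm s_0$ of $\widehat\gamma$ replaced by the complex roots of the corresponding characteristic equation. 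Once $n_+=n_-=1$ is established, Corollary~\ref{cor:A0tildeadj} follows from the standard von Neumann decomposition of $\mathcal{D}((\widetilde{\mathcal{A}_{\lambda}^{(0)}})^\star)$ as the direct sum of $\widetilde{\mathcal{D}}_0'$ (the graph-norm closure of $\widetilde{\mathcal{D}}_0$) and the two deficiency subspaces, with the action $3W_\lambda^{-1}T_\lambda^{(0)}$ being consistent on each summand: by construction on $\widetilde{\mathcal{D}}_0'$, and by the eigenvalue relation on $\mathrm{span}\{\xi_{\pm\ii\mu}\}$.
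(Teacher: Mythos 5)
Your argument is correct and follows essentially the same route as the paper: the corollary itself is, in the paper, just the von Neumann decomposition $\mathcal{D}(A^\star)=\mathcal{D}(\overline{A})\dotplus\ker(A^\star-\ii\mu\mathbbm{1})\dotplus\ker(A^\star+\ii\mu\mathbbm{1})$ applied to Lemma \ref{lem:deficiency1-1}, and your re-derivation of the deficiency indices $(1,1)$ — unfolding the adjoint condition to $3T_\lambda^{(0)}\xi=\ii\mu W_\lambda^{(0)}\xi$, passing to the re-scaled radial Fourier profile where this becomes the pivot equation $\widehat{\gamma}\,\widehat{\theta}=0$ perturbed by a $\mu/\lambda$-small convolution operator with one-dimensional unperturbed solution space spanned by $\sin s_0x$ — is precisely the content of the paper's proof of that lemma. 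The only cosmetic difference is that you invoke the conjugation (real-kernel) symmetry to get $n_+=n_-$ at the outset, whereas the paper runs the same computation for both signs of $\mu$ and mentions von Neumann's conjugation criterion only parenthetically.
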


  In practice we do not need an explicit characterisation of $\widetilde{\mathcal{D}}_{0}'$. This will be clear in due time from the usage of Lemma \ref{lem:adjointzero} in the proof of Theorem \ref{thm:spectralanalysis}.
   
    We also determine convenient asymptotics for the elements of $\big(\widetilde{\mathcal{A}_{\lambda}^{(0)}}\big)^\star$.

  \begin{lemma}\label{lem:adjointasymtotics}
   Let $\lambda>0$ and $z\in\mathbb{C}\setminus\mathbb{R}$ (in practice $z=\ii\mu$ for $\mu\in\mathbb{R}\setminus\{0\}$, so as to cover both deficiency subspaces \eqref{eq:twodefsub}). Let $\xi\in\ker\big(\big(\widetilde{\mathcal{A}_{\lambda}^{(0)}}\big)^\star-z\mathbbm{1}\big)$.
   Then, 
   \begin{equation}\label{eq:adjointasymtotics}
   \begin{split}
    \widehat{\xi}(\pp)\;&=\;A_\xi\frac{\,\sin({s_0\log|\pp|})+W_{\lambda,\xi}(z)\,\cos({s_0\log|\pp|})\,}{\pp^2}(1+o(1)) \\
    &\qquad\textrm{as }|\pp|\to+\infty\,,\;|z|/\lambda\to 0
   \end{split}
   \end{equation}
    for two constants $A_\xi,W_{\lambda,\xi}(z)\in\mathbb{C}$ with
    \begin{equation}\label{eq:Wconstproperty}
     W_{\lambda,\xi}(\overline{z})\;=\;\overline{W_{\lambda,\xi}(z)}\,.
    \end{equation}
 Here $s_0\approx 1.0062$ is the unique positive root of $\widehat{\gamma}(s)=0$ as defined in \eqref{eq:gamma-distribution}.
   \end{lemma}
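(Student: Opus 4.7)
The plan is to reduce the eigenvalue equation to a distributional problem in the Mellin-like variable $s$ introduced in \eqref{ftheta-1} and then read off the asymptotics. By Corollary \ref{cor:A0tildeadj}, any $\xi\in\ker\bigl((\widetilde{\mathcal A_\lambda^{(0)}})^\star-z\mathbbm{1}\bigr)$ satisfies $T_\lambda^{(0)}\xi=\tfrac{z}{3}\,W_\lambda\xi$. Writing $\xi$ via its re-scaled radial component $\theta=\theta^{(\xi)}$ as in \eqref{eq:0xi}-\eqref{ftheta-1} and using Lemmas \ref{lem:xithetaidentities} and \ref{lem:xiWxi-zero} (in particular the Fourier identities \eqref{radialTMS0radial-FOURIER} and \eqref{eq:xiWxi-zero-FOURIER}), this equation translates, in the Fourier variable $s$ dual to $x$, into a distributional equation of the form
\[
 \widehat{\gamma}(s)\,\widehat{\theta}(s)\;=\;\tfrac{z}{\lambda}\,(\mathsf K\widehat\theta)(s),
\]
where $\mathsf K$ is a bounded convolution-type operator on suitable weighted $L^2$-spaces, whose integral kernel is read off explicitly from \eqref{eq:xiWxi-zero-FOURIER}, and which is independent of $z$ and $\lambda$ (the prefactor $z/\lambda$ arises from the $\lambda$-scaling built into \eqref{ftheta-1}).

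To leading order as $|z|/\lambda\to 0$ the equation reduces to $\widehat\gamma(s)\widehat\theta(s)=0$, whose odd-parity distributional solutions are spanned by $\delta(s-s_0)-\delta(s+s_0)$, corresponding to $\theta(x)\propto\sin(s_0 x)$. Including the $z$-correction, I invoke Lemma \ref{lem:0tms-generalsol} applied to the self-consistent inhomogeneity $\vartheta=-\tfrac{z}{2s_0\lambda}(\mathsf K\widehat\theta)^{\vee}$ (of order $|z|/\lambda$): the general solution takes the form
\[
 \theta(x)\;=\;c\,\sin(s_0 x)\;-\;\tfrac{1}{2s_0}\,\sin(s_0|x|)*\bigl(\widehat\vartheta/\widehat\gamma_+\bigr)^{\!\vee}(x).
\]
By the very computation at the end of the proof of Lemma \ref{lem:0tms-generalsol}, the convolution term is asymptotically $c_1^{(\vartheta)}\cos(s_0 x)+c_2^{(\vartheta)}\sin(s_0 x)+o(1)$ as $|x|\to+\infty$, with both constants depending analytically on $z$. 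Collecting the two contributions, I obtain $\theta(x)=A\bigl(\sin(s_0 x)+W\cos(s_0 x)\bigr)(1+o(1))$ as $|x|\to+\infty$, for some $A,W\in\mathbb C$.

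Finally, I unravel the change of variables \eqref{ftheta-2}: for $|\pp|\to+\infty$ one has $x=\log|\pp|+\tfrac{1}{2}\log(3/\lambda)+o(1)$ and $\sqrt{\tfrac{3}{4}p^2}\sqrt{\tfrac{3}{4}p^2+\lambda}=\tfrac{3}{4}p^2(1+o(1))$, so the prefactor in \eqref{ftheta-2} equals $\tfrac{4}{3\pp^2}(1+o(1))$. Using the trigonometric decomposition $\sin(s_0 x)=\cos\beta\sin(s_0\log|\pp|)+\sin\beta\cos(s_0\log|\pp|)$ (with $\beta=\tfrac{s_0}{2}\log(3/\lambda)$), and similarly for $\cos(s_0 x)$, and absorbing the resulting constant phases into redefined coefficients $A_\xi$ and $W_{\lambda,\xi}(z)$, one arrives at precisely \eqref{eq:adjointasymtotics}. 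The reality relation \eqref{eq:Wconstproperty} then follows by noting that the integral kernels of $T_\lambda^{(0)}$ and $W_\lambda$ are real-valued in position space, so $\bar\xi\in\ker\bigl((\widetilde{\mathcal A_\lambda^{(0)}})^\star-\bar z\mathbbm{1}\bigr)$ whenever $\xi\in\ker\bigl((\widetilde{\mathcal A_\lambda^{(0)}})^\star-z\mathbbm{1}\bigr)$; since $\xi$ is spherically symmetric one has $\widehat{\bar\xi}(\pp)=\overline{\widehat\xi(\pp)}$, and comparison of the asymptotics of the two sides yields $\overline{W_{\lambda,\xi}(z)}=W_{\lambda,\xi}(\bar z)$.

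The main technical obstacle I anticipate is the rigorous justification of the second step: the solutions $\theta$ are only bounded (not $L^1$ or $L^2$) on $\mathbb R$, hence the equation for $\widehat\theta$ must be interpreted in a carefully weighted distributional sense (the natural space being dictated, via \eqref{eq:mellinnorms}, by the $H^{-1/2}$-regularity of $\xi$). Concretely, one has to verify that $\mathsf K$ has the right mapping properties so that $\widehat\vartheta/\widehat\gamma_+$ is regular enough for Lemma \ref{lem:0tms-generalsol} to be applicable, and then close the argument self-consistently in $z$ by a contraction/fixed-point scheme. Once this technical point is settled, the remaining asymptotic analysis as $|\pp|\to+\infty$ is routine.
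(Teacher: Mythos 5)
Your derivation of the main asymptotics follows essentially the same route as the paper: reduce the eigenvector equation to the Fourier-transformed radial equation $\widehat{\gamma}(s)\,\widehat{\theta}(s)=\frac{z}{\lambda}(\mathsf{K}\widehat{\theta})(s)$ (the paper's \eqref{eq:eigenequation-thetaplusF}), solve it to leading order in $z/\lambda$ around the pivot solution $\delta(s-s_0)-\delta(s+s_0)$, invoke the asymptotic $\cos$-periodicity argument of Lemma \ref{lem:0tms-generalsol}, and undo the change of variables \eqref{ftheta-2}. The technical obstacle you flag at the end — that Lemma \ref{lem:0tms-generalsol} is stated for smooth, rapidly decreasing $\vartheta$, whereas your self-consistent inhomogeneity $\vartheta=-\frac{z}{2s_0\lambda}(\mathsf{K}\widehat{\theta})^{\vee}$ depends on the unknown — is resolved in the paper not by a fixed-point scheme in weighted spaces but more simply: one iterates once, replacing $\widehat{\theta}$ by the zeroth-order solution $\widehat{\theta}_0$ inside $\mathsf{K}$ (equivalently the paper's $\mathcal{L}$), and computes $\mathcal{L}\widehat{\theta}_0$ \emph{explicitly} as $-\widehat{\Lambda}(s)\bigl(PV\frac{1}{s-s_0}-PV\frac{1}{s+s_0}\bigr)$ with a closed-form $\widehat{\Lambda}$ that is manifestly smooth, rapidly decreasing, purely imaginary and odd; this is enough because the statement only claims the asymptotics in the joint regime $|\pp|\to+\infty$, $|z|/\lambda\to 0$, so $O((z/\lambda)^2)$ remainders in the iterative expansion may be discarded. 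You should make this one-step iteration explicit rather than leaving it as an anticipated difficulty. Where you genuinely diverge is the reality relation \eqref{eq:Wconstproperty}: the paper obtains it by tracking that the first-order correction $\sin(s_0|\cdot|)*\Lambda$ is real-valued (since $\Lambda$ is real), so the coefficient $a_\xi$ in $W_{\lambda,\xi}(z)=\frac{\sin(s_0\log\sqrt{3/\lambda})+\frac{z}{\lambda}a_\xi\cos(\cdots)}{\cos(\cdots)-\frac{z}{\lambda}a_\xi\sin(\cdots)}$ is real; your argument via the conjugation symmetry $\xi\mapsto\bar\xi$ between the two deficiency subspaces (valid because $T_\lambda^{(0)}$ and $W_\lambda^{(0)}$ have real kernels and $\xi$ is spherically symmetric, so $\widehat{\bar\xi}=\overline{\widehat{\xi}}$), combined with one-dimensionality of the deficiency subspaces from Lemma \ref{lem:deficiency1-1}, is correct and arguably cleaner, since it does not depend on the explicit form of the correction term.
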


  \begin{proof}[Proof of Lemma \ref{lem:deficiency1-1}] Let $\mu>0$. Let $\xi_+\in H^{-\frac{1}{2}}_{W_\lambda,\ell=0}(\mathbb{R}^3)$ with
  \[
   \xi_+\;\in\;\ker\Big(\big(\widetilde{\mathcal{A}_{\lambda}^{(0)}}\big)^\star-\ii\mu\mathbbm{1}\Big)\;=\;\mathrm{ran}\big(\widetilde{\mathcal{A}_{\lambda}^{(0)}})+\ii\mu\mathbbm{1}\big)^{\perp_\lambda}\,,
  \]
  and let $\theta_+$ be the re-scaled radial component of $\xi_+$. Then
  \[\tag{$\star$}\label{eq:tagstar}
  \begin{split}
   0\;&=\;\big\langle\xi_+,\big(\widetilde{\mathcal{A}_{\lambda}^{(0)}})+\ii\mu\mathbbm{1}\big)\xi\big\rangle_{H^{-\frac{1}{2}}_{W_\lambda}} \\
   &=\;3\big\langle \xi_+,T_\lambda^{(0)}\xi\big\rangle_{H^{-\frac{1}{2}},H^{\frac{1}{2}}}+\ii\mu\big\langle \xi_+,W_\lambda^{(0)}\xi\big\rangle_{H^{-\frac{1}{2}},H^{\frac{1}{2}}}\qquad\forall\xi\in\widetilde{\mathcal{D}}_0\,.
  \end{split}
  \]
  Here we applied \eqref{eq:W-scalar-product} as usual, together with \eqref{eq:Alambdatilde-0} and Lemmas \ref{lem:Wlambdaproperties}(ii) and  \ref{lem:D0tildedomainproperties}(v).

   The two duality products appearing in the r.h.s.~above have been computed in (the proof of) Lemma \ref{lem:xithetaidentities} and in Lemma \ref{lem:xiWxi-zero}:
   \begin{equation*}
   \begin{split}
    \big\langle \xi_+,&T_\lambda^{(0)}\xi\big\rangle_{H^{-\frac{1}{2}},H^{\frac{1}{2}}}\;=\;\frac{\;8\pi^2}{\,3\sqrt{3}}\bigg(\int_{\mathbb{R}}\overline{\theta_+(x)}\,\theta(x)\,\ud x\\
   &\quad -\frac{4}{\pi\sqrt{3}}\iint_{\mathbb{R}\times\mathbb{R}}\overline{\theta_+(x)}\,\theta(y)\log \frac{\,2\cosh(x-y)+1\,}{\,2\cosh(x-y)-1\,}\,\ud x\,\ud y\bigg) \\
   \big\langle \xi_+,&W_\lambda^{(0)}\xi\big\rangle_{H^{-\frac{1}{2}},H^{\frac{1}{2}}}\;=\;\frac{4\pi^2}{\,\lambda\sqrt{3}\,}\int_{\mathbb{R}}\frac{\,\overline{\theta_+}(x)\,\theta(x)}{\,(\cosh x)^2}\,\ud x \\
   &\quad +\frac{32\pi}{\lambda}\iint_{\mathbb{R}\times\mathbb{R}}\frac{\overline{\theta_+(x)}\,\theta(y)}{\,(2\cosh(x+y)+1)\,(2\cosh(x-y)-1)\,}\,\ud x\,\ud y\,.
    \end{split}
  \end{equation*}
  Here $\theta$ denotes the re-scaled radial component of $\xi\in\widetilde{\mathcal{D}}_0$.

  We can also re-write the above quantities after taking the Fourier transform, by means of \eqref{eq:xiTxi-with-theta} and \eqref{eq:xiWxi-zero-FOURIER}:
  \[
   \begin{split}
     \big\langle \xi_+,T_\lambda^{(0)}\xi\big\rangle_{H^{-\frac{1}{2}},H^{\frac{1}{2}}}\;&=\;\frac{\;8\pi^2}{\,3\sqrt{3}}\int_{\mathbb{R}}\widehat{\gamma}(s)\,\overline{\widehat{\theta}_+(s)}\,\widehat{\theta}(s)\,\ud s\, \\
     \big\langle \xi_+,W_\lambda^{(0)}\xi\big\rangle_{H^{-\frac{1}{2}},H^{\frac{1}{2}}}\;&=\;\frac{2\pi^2}{\,\lambda\sqrt{3}\,}\int_{\mathbb{R}}\Big(\frac{s}{\,\sinh\frac{\pi}{2}s}*\overline{\widehat{\theta}_+}\Big)(s)\,\widehat{\theta}(s)\,\ud s \\
     &\!\!\!\!\!\!\!\!\!\!\!\!\!\!\!\!\!\!+\frac{16\pi}{3\lambda}\iint_{\mathbb{R}\times\mathbb{R}}\overline{\widehat{\theta}_+(t)}\,\widehat{\theta}(s)\,\frac{\sinh\frac{\pi}{6}(s+t)}{\sinh \frac{\pi}{2}(s+t)}\,\frac{\sinh\frac{\pi}{3}(s-t)}{\sinh \frac{\pi}{2}(s-t)}\,\ud s\,\ud t\,.
   \end{split}
  \]

  Adding them up into the eigenvector equation \eqref{eq:tagstar}, and using the density $\widetilde{\mathcal{D}}_0$ together with Fubini-Tonelli theorem, leads to the following equation in the unknown $\theta_+$:
  \begin{equation}\label{eq:eigenequation-thetaplus}
   \begin{split}
    & \bigg(\theta_+(x)-\frac{4}{\pi\sqrt{3}}\int_{\mathbb{R}}\theta_+(y)\,\log \frac{\,2\cosh(x-y)+1\,}{\,2\cosh(x-y)-1\,}\,\ud y\bigg) \\
    & =\;\frac{ \ii \mu}{ 2 \lambda }\bigg(\frac{\theta_+(x)}{\:(\cosh x)^2}+\frac{8\sqrt{3}}{\pi}\int_{\mathbb{R}}\frac{\theta_+(y)}{\,(2\cosh(x+y)+1)\,(2\cosh(x-y)-1)\,}\,\ud y\bigg)
   \end{split}
  \end{equation}
  (for a.e.~$x\in\mathbb{R}$), or, equivalently, 
  \begin{equation}\label{eq:eigenequation-thetaplusF}
   \begin{split}
      \widehat{\gamma}(s)\,\widehat{\theta}_+(s)\;&= \;\frac{\,\ii\mu}{4\lambda}\Big(\Big(\frac{s}{\,\sinh\frac{\pi}{2}s}*\widehat{\theta}_+\Big)(s) \\
      &\qquad\qquad +\frac{8}{\,\pi\sqrt{3}}\int_{\mathbb{R}}\frac{\sinh\frac{\pi}{6}(s+t)}{\sinh \frac{\pi}{2}(s+t)}\,\frac{\sinh\frac{\pi}{3}(s-t)}{\sinh \frac{\pi}{2}(s-t)}\,\widehat{\theta}_+(t)\,\ud t\Big)\,.
   \end{split}
  \end{equation}

  The integration order's exchange in the double integrals was possible thanks to the fast decay of the integral kernels. This also demonstrates, unfolding \eqref{eq:eigenequation-thetaplusF} backwards, that $\xi_+$ satisfies
  \[
   3T_\lambda^{(0)}\xi_+\;=\;\ii \mu W_\lambda^{(0)}\xi_+\,,
  \]
  therefore $T_\lambda^{(0)}\xi_+\in\mathrm{ran}W_\lambda^{(0)}=H^{\frac{1}{2}}_{\ell=0}(\mathbb{R}^3)$ (Lemma \ref{lem:Wlambdaproperties}(ii)) and 
  \[
   3W_\lambda^{-1}T_\lambda^{(0)}\xi_+\;=\;\ii \mu\xi_+\;=\;\big(\widetilde{\mathcal{A}_{\lambda}^{(0)}}\big)^\star\xi_+\,.
  \]
  Thus, \emph{$\big(\widetilde{\mathcal{A}_{\lambda}^{(0)}}\big)^\star$ acts on the eigenvector $\xi_+$ precisely as $3W_\lambda^{-1}T_\lambda^{(0)}$}.

  Now, on the one hand it is well known that the dimension of the deficiency subspace considered at the beginning of this proof is independent of $\mu>0$, and therefore the dimension of the space of solutions to \eqref{eq:eigenequation-thetaplus} (equivalently, \eqref{eq:eigenequation-thetaplusF}) does not depend on $\mu>0$. (In fact, owing to von Neumann's conjugation criterion \cite[Theorem X.3]{rs2}, such dimension is the same even when one takes instead $\mu<0$, namely when one considers the other deficiency subspace.) On the other hand, mirroring the reasoning of Lemma \ref{lem:0tms-generalsol}'s proof, \eqref{eq:eigenequation-thetaplusF} can be re-written as
  \[
    \widehat{\theta}_+(s)\;=\;\widehat{\theta}_0(s)+ \ii\,\frac{\mu}{\lambda}\,\mathcal{L}\widehat{\theta}_+(s)
   \]
   with
   \[
    \widehat{\theta}_0(s)\;:=\;c\big(\delta(s-s_0)-\delta(s+s_0)\big)\,,\qquad c\in\mathbb{C}
   \]
   and 
   \[
    \begin{split}
     \mathcal{L}\widehat{\theta}_+(s)\;:=\;&\frac{1}{\,8 s_0\widehat{\gamma}_+(s)}\Big(\big({\textstyle\frac{s}{\,\sinh\frac{\pi}{2}s}}*\widehat{\theta}_+\big)(s)+{\textstyle\frac{8}{\,\pi\sqrt{3}}}\int_{\mathbb{R}}{\textstyle\frac{\sinh\frac{\pi}{6}(s+t)}{\sinh \frac{\pi}{2}(s+t)}\,\frac{\sinh\frac{\pi}{3}(s-t)}{\sinh \frac{\pi}{2}(s-t)}}\,\widehat{\theta}_+(t)\,\ud t\Big)\;\times \\
     &\;\times\Big(PV\frac{1}{s-s_0}-PV\frac{1}{s+s_0}\Big)
    \end{split}
   \]
   ($s_0\approx 1.0062$ being the unique positive root of $\widehat{\gamma}(s)=0$).
   
  From this we see that the dimension of the space of solutions to \eqref{eq:eigenequation-thetaplusF} is precisely equal to one, namely it is dictated by the dimensionality of the solution space for the pivot equation $ \widehat{\gamma}\,\widehat{\theta}_+=0$. The previous formulas provide also an iterative expansion of the form
  \[
    \widehat{\theta}_+\;=\;\sum_{k=0}^{N-1}\Big(\frac{\ii\mu}{\lambda}\Big)^k\mathcal{L}^k\widehat{\theta}_0+\Big(\frac{\ii\mu}{\lambda}\Big)^N\mathcal{L}\widehat{\theta}_+\,,\qquad N\in\mathbb{N}
  \]
  in the regime $\mu/\lambda\ll 1$, which does not alter the dimension of the space of solutions.

  In conclusion,
  \[
   \dim \ker\Big(\big(\widetilde{\mathcal{A}_{\lambda}^{(0)}}\big)^\star-\ii\mu\mathbbm{1}\Big)\;=\;\dim \ker\Big(\big(\widetilde{\mathcal{A}_{\lambda}^{(0)}}\big)^\star+\ii\mu\mathbbm{1}\Big)\;=\;1\,,
  \]
  the analysis of the second deficiency subspace being clearly the same as above, upon exchanging $\mu$ with $-\mu$. This proves that $\widetilde{\mathcal{A}_{\lambda}^{(0)}}$ has deficiency indices $(1,1)$.  
  \end{proof}

  \begin{proof}[Proof of Corollary \ref{cor:A0tildeadj}]
   Formula \eqref{eq:A0tildestar} is an immediate consequence of Lemma \ref{lem:deficiency1-1} and \eqref{eq:Alambdatilde-0}, as an application of a standard formula by von Neumann (see, e.g., \cite[Lemma on page 138]{rs2}. For sure $\big(\widetilde{\mathcal{A}_{\lambda}^{(0)}})^\star$ does act on $\widetilde{\mathcal{D}}_{0}'$ as $3 W_\lambda^{-1}T_\lambda^{(0)}$ because this is the action of $\big(\widetilde{\mathcal{A}_{\lambda}^{(0)}})^\star$ both on  $\widetilde{\mathcal{D}}_{0}$ (owing to the definition \eqref{eq:Alambdatilde-0}) and on the deficiency subspaces (owing to \eqref{eq:A0tildestareigenv}).  
  \end{proof}

  \begin{proof}[Proof of Lemma \ref{lem:adjointasymtotics}]
   Let $\theta$ be the re-scaled radial function associated with $\xi$ through \eqref{eq:0xi}-\eqref{ftheta-1}. 
   
   Continuing the discussion from Lemma \ref{lem:deficiency1-1}'s proof (where the present $\theta$ was denoted by $\theta_+$), $\theta$ is the unique solution, up to complex multiplicative constant, to
   \[\tag{i}\label{eq:asymptag2}
    \widehat{\theta}(s)\;=\;\widehat{\theta}_0(s)+ \frac{z}{\lambda}\,\mathcal{L}\widehat{\theta}(s)\,.
   \]
   In the iteration
     \[
    \widehat{\theta}\;=\;\sum_{k=0}^{N-1}\Big(\frac{z}{\lambda}\Big)^k\mathcal{L}^k\widehat{\theta}_0+\Big(\frac{z}{\lambda}\Big)^N\mathcal{L}\widehat{\theta}\,,\qquad N\in\mathbb{N}\,,
  \]
  and in the considered regime $|z|/\lambda\ll 1$, the leading expression for the solution is
  \[\tag{ii}\label{eq:asymptag3new}
   \widehat{\theta}\;=\;\widehat{\theta}_0+ \frac{z}{\lambda}\,\mathcal{L}\widehat{\theta}_0\,,
  \]
  up to $O((z/\lambda)^2)$-corrections as $|z|/\lambda\to 0$.
  
%

  Let us work out \eqref{eq:asymptag3new} choosing explicitly $c=\frac{\sqrt{2\pi}}{2\ii}$ for the convenience of having
  \[
   \widehat{\theta}_0(s)\;=\;\frac{\sqrt{2\pi}}{2\ii}(\delta(s-s_0)-\delta(s+s_0))\qquad\textrm{ and hence }\qquad \theta_0(x)=\sin s_0 x
  \]
  (see \eqref{eq:distri-deltaFouriersin}). This will fix $\theta$, and hence $\xi$, up to a complex multiplicative constant. The computation for $\mathcal{L}\widehat{\theta}_0$ then gives
  \[
   \mathcal{L}\widehat{\theta}_0\;=\;-\widehat{\Lambda}(s)\Big(PV\frac{1}{s-s_0}-PV\frac{1}{s+s_0}\Big)
  \]
  with
  \[
   \begin{split}
    \widehat{\Lambda}(s)\;:=\;&-\frac{1}{\,8 s_0\widehat{\gamma}_+(s)}\Big(\big({\textstyle\frac{s}{\,\sinh\frac{\pi}{2}s}}*\widehat{\theta}_0\big)(s)+{\textstyle\frac{8}{\,\pi\sqrt{3}}}\int_{\mathbb{R}}{\textstyle\frac{\sinh\frac{\pi}{6}(s+t)}{\sinh \frac{\pi}{2}(s+t)}\,\frac{\sinh\frac{\pi}{3}(s-t)}{\sinh \frac{\pi}{2}(s-t)}}\,\widehat{\theta}_0(t)\,\ud t\Big) \\
    =\;&\frac{\ii\sqrt{\pi}}{\,8s_0\widehat{\gamma}_+(s)\sqrt{2}\,}\Big(\frac{s-s_0}{\,\sinh\frac{\pi}{2}(s-s_0)}-\frac{s+s_0}{\,\sinh\frac{\pi}{2}(s+s_0)} \\
    &\qquad\qquad\qquad\qquad-\frac{32}{\pi\sqrt{3}}\,\frac{\sinh\frac{\pi}{6}s_0\,\sinh\frac{\pi}{6}s}{\,\,(1+2\cosh\frac{\pi}{3}(s-s_0))\,(1+2\cosh\frac{\pi}{3}(s+s_0))}\,\Big).
   \end{split}
  \]
  We observe that $\widehat{\Lambda}$ is a smooth and rapidly decreasing function that is purely imaginary and has odd parity (Figure \ref{fig:Lambdafunction}). Then $\Lambda$ is also smooth and rapidly decreasing, and is real-valued and with odd parity.  Thus \eqref{eq:asymptag3new} takes the form
 \[\tag{iii}\label{eq:asymptag3}
   \widehat{\theta}\;=\; \widehat{\theta}_0-\frac{z}{\lambda}\,\widehat{\Lambda}(s)\,\Big(PV\frac{1}{s-s_0}-PV\frac{1}{s+s_0}\Big)\,.
 \]
  
  \begin{figure}[t!]
\includegraphics[width=8cm]{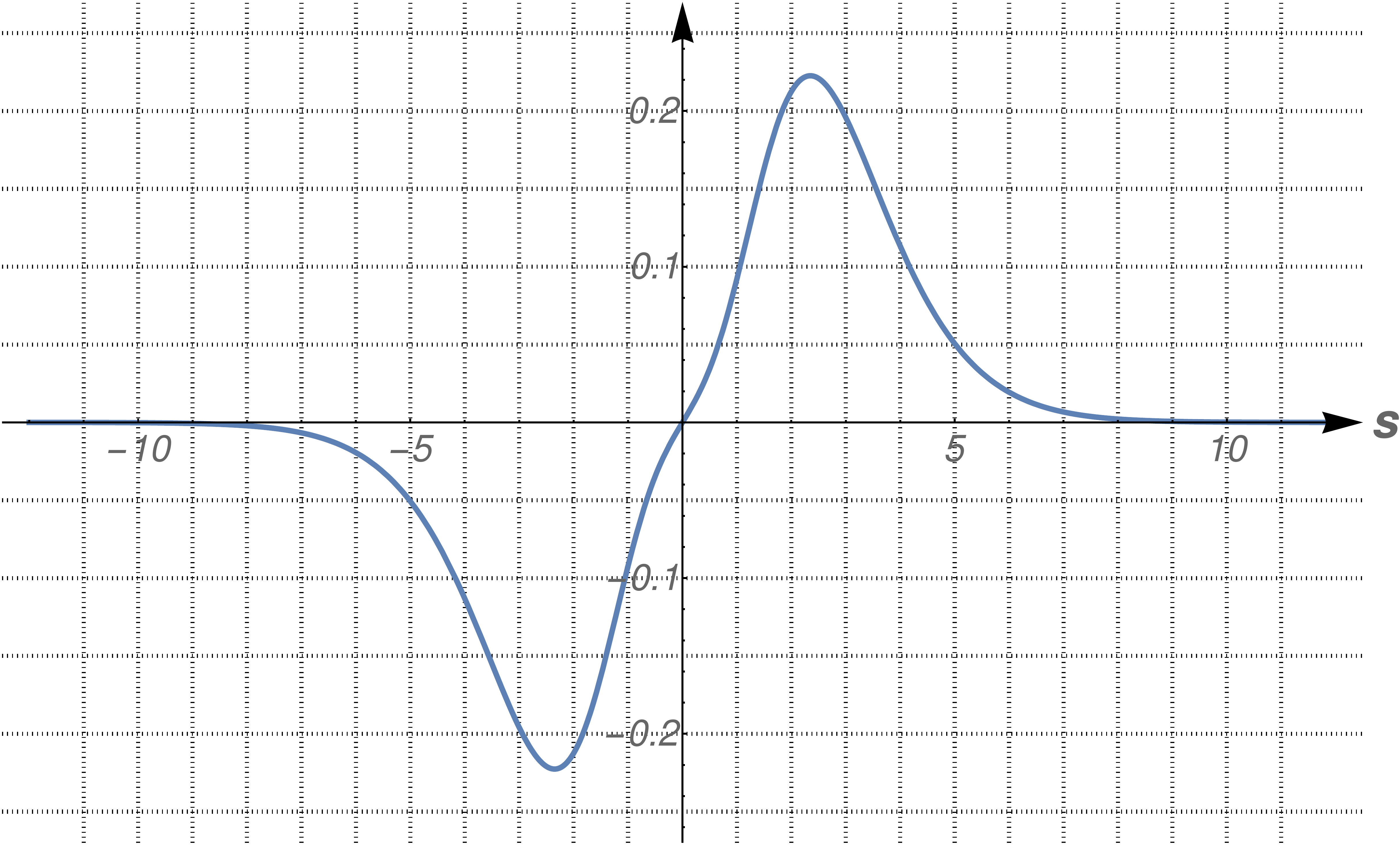}
\caption{Plot of the imaginary part of the function $\widehat{\Lambda}(s)$.}\label{fig:Lambdafunction}
\end{figure}

 The inverse Fourier transform of \eqref{eq:asymptag3} and \eqref{eq:distri-deltaFouriersin}-\eqref{eq:distriPVsignsin} then yield 
 \[\tag{iv}\label{eq:asymptag4}
  \theta(x)\;=\;\sin s_0 x+\frac{z}{\lambda}\,\big(\sin (s_0|\cdot|) *\Lambda\big)(x)\,.
 \]
 For such expression we can repeat the very same reasoning of Lemma \ref{lem:0tms-generalsol}'s proof and deduce that asymptotically as $|x|\to +\infty$ (and $|z|/\lambda\to 0$) 
 \[\tag{v}\label{eq:asymptag5}
  \theta(x)\;=\;\big(\sin s_0 x+\frac{z}{\lambda}\,a_\xi\cos( s_0 x+\sigma_\xi)\big)(1+o(1))
 \]
 for some constants $a_\xi\in\mathbb{R}$ and $\sigma_\xi\in[0,2\pi)$ depending on $\xi$. In particular $a_\xi$ is surely real, because \eqref{eq:asymptag4} expresses a real-valued function.

 With the asymptotics \eqref{eq:asymptag5} for the re-scaled radial function $\theta$, we reconstruct the asymptotics for $\xi$ by means of \eqref{eq:0xi}-\eqref{ftheta-1}. Up to $o(1)$-corrections as $|x|\to +\infty$ (hence $|\pp|\to +\infty)$,
 \[
  \begin{split}
   \sin s_0 x\;&=\;\sin \Big(s_0\log\Big({\textstyle\sqrt{\frac{3\pp^2}{4\lambda}}+\sqrt{\frac{3\pp^2}{4\lambda}+1}}\Big)\Big) \\
   &\approx\;\sin \Big(s_0\log{|\pp|\textstyle\sqrt{\frac{3}{\lambda}}}\Big)\\
   &=\;\sin\big(s_0\log{\textstyle\sqrt{\frac{3}{\lambda}}}\big)\,\cos(s_0\log|\pp|)+\cos\big(s_0\log{\textstyle\sqrt{\frac{3}{\lambda}}}\big)\,\sin(s_0\log|\pp|)
  \end{split}
 \]
 and 
\[
  \begin{split}
   \cos (s_0 x+\sigma_\xi)\;&=\;\cos \Big(s_0\log\Big({\textstyle\sqrt{\frac{3\pp^2}{4\lambda}}+\sqrt{\frac{3\pp^2}{4\lambda}+1}}\Big)+\sigma_\xi\Big) \\
   &\approx\;\cos \Big(s_0\log{|\pp|\textstyle\sqrt{\frac{3}{\lambda}}}\Big) \\
   &=\;\cos\big(s_0\log{\textstyle\sqrt{\frac{3}{\lambda}}}\big)\,\cos(s_0\log|\pp|)-\sin\big(s_0\log{\textstyle\sqrt{\frac{3}{\lambda}}}\big)\,\sin(s_0\log|\pp|)\,.
  \end{split}
 \]
 Plugging the latter identities into \eqref{eq:asymptag5} yields
 \[
  \begin{split}
   \theta(x)\;&=\;\Big(\cos\big(s_0 \log{\textstyle\sqrt{\frac{3}{\lambda}}}\big)- \frac{z}{\lambda} \,a_\xi\,\sin\big(s_0 \log{\textstyle\sqrt{\frac{3}{\lambda}}}\big)\Big)\sin (s_0\log|\pp|) \\
   &\qquad +\Big(\sin\big(s_0 \log{\textstyle\sqrt{\frac{3}{\lambda}}}\big)+ \frac{z}{\lambda} \,a_\xi\,\cos\big(s_0 \log{\textstyle\sqrt{\frac{3}{\lambda}}}\big)\Big)\cos (s_0\log|\pp|)
  \end{split}
 \]
 up to $o(1)$-corrections as $|\pp|\to +\infty$ and $|z|/\lambda\to 0$. Inserting this into \eqref{eq:0xi}-\eqref{ftheta-1}, that is,
  \[
  \widehat{\xi}(\pp)\;=\;\frac{\,\theta\Big(\log\Big(\sqrt{\frac{3\pp^2}{4\lambda}}+\sqrt{\frac{3\pp^2}{4\lambda}+1}\Big)\Big)}{\,\sqrt{3\pi}\,|\pp|\sqrt{\frac{3}{4}\pp^2+\lambda}}\,,
 \]
 yields finally \eqref{eq:adjointasymtotics} with
    \begin{equation*}
    W_{\lambda,\xi}(z)\;:=\;\frac{\,\sin\big(s_0 \log\sqrt{\frac{3}{\lambda}}\big)+\frac{z}{\lambda} \,a_\xi\,\cos\big(s_0 \log\sqrt{\frac{3}{\lambda}}\big)\,}{\,\cos\big(s_0 \log\sqrt{\frac{3}{\lambda}}\big)- \frac{z}{\lambda} \,a_\xi\,\sin\big(s_0 \log\sqrt{\frac{3}{\lambda}}\big)\,}\,.
   \end{equation*}
 In \eqref{eq:adjointasymtotics} we also re-instated the overall multiplicative constant.

 The above expression for $W_{\lambda,\xi}(z)$ shows that $\overline{W_{\lambda,\xi}(z)}=W_{\lambda,\xi}(\overline{z})$, thanks to the fact that $a_\xi\in\mathbb{R}$. Had one expressed the asymptotic periodicity \eqref{eq:asymptag5} above in terms of the sinus function, which amounts in practice to re-define the shift $\sigma_\xi$, one would have come up with an analogous expression for $W_{\lambda,\xi}(z)$ with the same symmetry property \eqref{eq:Wconstproperty}. The proof is thus completed. 
  \end{proof}

  Based on the discussion of this Subsection, it is convenient to introduce the following nomenclature for the charges belonging to the domain of $\big(\widetilde{\mathcal{A}_{\lambda}^{(0)}}\big)^\star$. Fixed $\mu>0$ and representing $\xi\in\mathcal{D}\Big(\widetilde{\mathcal{A}_{\lambda}^{(0)}}\Big)^\star$ through \eqref{eq:A0tildestar} as 
  \[
   \xi\;=\;\widetilde{\xi}+ c_+\xi_{\ii\mu}+c_-\xi_{-\ii\mu}
  \]
  for some $\widetilde{\xi}\in\widetilde{\mathcal{D}}_{0}'$ and $c_\pm\in\mathbb{C}$, we shall refer to
  \begin{equation}\label{eq:xiregxising}
   \xi_{\mathrm{reg}}\;:=\;\widetilde{\xi}\,,\qquad\xi_{\mathrm{sing}}\;:=\;c_+\xi_{\ii\mu}+c_-\xi_{-\ii\mu}
  \end{equation}
 as, respectively, the \emph{regular} and the \emph{singular} component of $\xi$. Regular and singular parts are unambiguously defined because the sum in \eqref{eq:A0tildestar} is direct. 

 To visualize the actual difference in behaviour, take for concreteness $\xi_{\mathrm{reg}}\in\widetilde{\mathcal{D}}_{0}$: then its re-scaled radial function $\theta_{\mathrm{reg}}$ satisfies
 \[
  \widehat{\theta}_{\mathrm{reg}}(s)\;=\;-\frac{\widehat{\vartheta}(s)}{\widehat{\gamma}_+(s)}\,\Big(PV\frac{1}{s-s_0}-PV\frac{1}{s+s_0}\Big)
 \]
 for some $\vartheta\in C^\infty_{0,\mathrm{odd}}(\mathbb{R}_x)$. On the contrary, $\xi_{\mathrm{sing}}$ has the behaviour of $\xi_{\pm\ii\mu}$ and in the course of the proof of Lemma \ref{lem:adjointasymtotics} we showed that the associated re-scaled radial functions $\theta_{\pm\ii\mu}$ satisfy
 \[
   \widehat{\theta}_{\pm\ii\mu}(s)\;\approx\; \frac{\sqrt{2\pi}\,}{2\ii}\big(\delta(s-s_0)-\delta(s+s_0)\big)\mp\frac{\ii\mu}{\lambda}\,\widehat{\Lambda}(s)\Big(PV\frac{1}{s-s_0}-PV\frac{1}{s+s_0}\Big)
 \]
 up to an overall multiplicative constant and up to $O((z/\lambda)^2)$-corrections as $|z|/\lambda\to 0$.

    \subsection{Multiplicity of TMS self-adjoint realisations}\label{sec:multiselfadj-0}~

    The operator $\widetilde{\mathcal{A}_{\lambda}^{(0)}}$ defined in \eqref{eq:Alambdatilde-0} admits a one-real-parameter family of self-adjoint extensions with respect to the Hilbert space $H^{-\frac{1}{2}}_{W_\lambda,\ell=0}(\mathbb{R}^3)$. They are qualified as follows.

    \begin{proposition}\label{prop:Aellzero-selfadj}
     Let $\lambda>0$. The self-adjoint extensions of $\widetilde{\mathcal{A}_{\lambda}^{(0)}}$ on $H^{-\frac{1}{2}}_{W_\lambda,\ell=0}(\mathbb{R}^3)$ form the family
     \begin{equation}
      \big\{\mathcal{A}_{\lambda,\beta}^{(0)}\,\big|\,\beta\in\mathbb{R}\big\}
     \end{equation}
    with
    \begin{equation}\label{eq:Azerolambda}
    \begin{split}
     \mathcal{D}\big(\mathcal{A}_{\lambda,\beta}^{(0)}\big)\;&:=\;\mathcal{D}_{0,\beta} \\
     \mathcal{A}_{\lambda,\beta}^{(0)}\;&:=\;3 W_\lambda^{-1} T_\lambda^{(0)}\,,
   \end{split}
   \end{equation}
   where
   \begin{equation}\label{eq:domainD0beta}
    \mathcal{D}_{0,\beta}\;=\;\left\{
    \begin{array}{c}
     \xi\in\mathcal{D}\Big(\widetilde{\mathcal{A}_{\lambda}^{(0)}}\Big)^{\!\star} \;\;\textrm{with singular part satisfying} \\
     \widehat{\xi}_{\mathrm{sing}}(\pp)\,=\,c\,\displaystyle\frac{\,\cos({s_0\log|\pp|})+\beta \sin({s_0\log|\pp|})\,}{\pp^2}\,(1+o(1)) \\
     \textrm{as $|\pp|\to +\infty$\quad for some $c\in\mathbb{C}$}
    \end{array}
    \right\}.
   \end{equation}     
    \end{proposition}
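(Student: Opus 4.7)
My plan is to apply von Neumann's classification of self-adjoint extensions to the symmetric operator $\widetilde{\mathcal{A}_{\lambda}^{(0)}}$, which by Lemma \ref{lem:deficiency1-1} has deficiency indices $(1,1)$. Fixing any $\mu>0$, let $\xi_{\pm\ii\mu}$ be the generators of the two one-dimensional deficiency subspaces given by \eqref{eq:twodefsub}. Classical von Neumann theory (see, e.g., \cite[Theorem X.2]{rs2}) then asserts that the self-adjoint extensions of $\widetilde{\mathcal{A}_{\lambda}^{(0)}}$ in $H^{-\frac{1}{2}}_{W_\lambda,\ell=0}(\mathbb{R}^3)$ are in bijection with the phases $e^{\ii\varphi}\in U(1)$: each such phase determines the extension whose domain is
\[
 \mathcal{D}_{\varphi}\;=\;\widetilde{\mathcal{D}}_0'\,\dotplus\,\mathbb{C}\,\big(\xi_{\ii\mu}+e^{\ii\varphi}\xi_{-\ii\mu}\big)\,,
\]
with $\widetilde{\mathcal{D}}_0'$ the graph closure of $\widetilde{\mathcal{D}}_0$ (Corollary \ref{cor:A0tildeadj}), and on which the extension acts as $3W_\lambda^{-1}T_\lambda^{(0)}$, in accordance with \eqref{eq:A0tildestareigenv}.

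The second and main step is to translate the abstract phase label $e^{\ii\varphi}$ into the asymptotic parameter $\beta\in\mathbb{R}$ appearing in \eqref{eq:domainD0beta}. By Lemma \ref{lem:adjointasymtotics}, each deficiency eigenvector has the asymptotic behaviour
\[
 \widehat{\xi_{\pm\ii\mu}}(\pp)\;=\;A_{\pm}\,\frac{\sin(s_0\log|\pp|)+W_{\lambda}(\pm\ii\mu)\cos(s_0\log|\pp|)}{\pp^2}\,(1+o(1))
\]
as $|\pp|\to\infty$, with (after normalising $A_+=A_-=A$ and fixing the common prefactor) the conjugation symmetry $W_{\lambda}(-\ii\mu)=\overline{W_\lambda(\ii\mu)}$ from \eqref{eq:Wconstproperty}. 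Adding the two contributions in $\xi_{\ii\mu}+e^{\ii\varphi}\xi_{-\ii\mu}$ and multiplying by the unimodular factor $e^{-\ii\varphi/2}$, one obtains, up to the overall constant $c\in\mathbb{C}$ of \eqref{eq:domainD0beta}, an asymptotic of the form
\[
 \widehat{\xi}_{\mathrm{sing}}(\pp)\;=\;\frac{c}{\pp^2}\,\big(2\cos(\varphi/2)\sin(s_0\log|\pp|)+2\,\mathrm{Re}(e^{-\ii\varphi/2}W_\lambda(\ii\mu))\cos(s_0\log|\pp|)\big)(1+o(1))\,.
\]
The coefficients of $\sin(s_0\log|\pp|)$ and $\cos(s_0\log|\pp|)$ are therefore both real, and their ratio
\[
 \beta\;=\;\frac{\mathrm{Re}\big(e^{-\ii\varphi/2}W_\lambda(\ii\mu)\big)}{\cos(\varphi/2)}\;\in\;\mathbb{R}\cup\{\infty\}
\]
is a real parameter independent of the overall constant $c$. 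Thus $\mathcal{D}_\varphi\subset\mathcal{D}_{0,\beta(\varphi)}$ with $\beta=\beta(\varphi)$ defined above (the value $\beta=\infty$ corresponds to $\varphi=\pi$ and is conventionally encoded by rewriting the asymptotic form with $\sin$ and $\cos$ swapped; it plays the role of a limiting extension that could be incorporated by trivially enlarging the family).

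Finally, I need to show that $\varphi\mapsto\beta(\varphi)$ is a bijection between $U(1)$ and $\mathbb{R}\cup\{\infty\}$, so that the two parametrisations label exactly the same family of extensions. This follows by direct analysis of the rational M\"obius-type formula for $\beta(\varphi)$: writing $w=W_\lambda(\ii\mu)$ and observing that $w\notin\mathbb{R}$ for $\mu>0$ (otherwise the deficiency eigenvectors would already lie in the domain of $\widetilde{\mathcal{A}_{\lambda}^{(0)}}$, contradicting Lemma \ref{lem:deficiency1-1}), the map $\varphi\mapsto(w e^{-\ii\varphi/2}+\bar w e^{\ii\varphi/2})/(e^{-\ii\varphi/2}+e^{\ii\varphi/2})$ is a continuous surjection onto $\mathbb{R}\cup\{\infty\}$ that is injective modulo $2\pi$. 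Combining injectivity with von Neumann's bijection between $U(1)$ and self-adjoint extensions closes the identification $\mathcal{D}_\varphi=\mathcal{D}_{0,\beta(\varphi)}$, and proves that the family $\{\mathcal{A}_{\lambda,\beta}^{(0)}\}_{\beta\in\mathbb{R}}$ exhausts all self-adjoint extensions. I expect the main technical obstacle to be not the abstract von Neumann step, which is standard, but the precise bookkeeping of the asymptotic coefficients: one must keep track of the normalisation constants $A_\pm$ (fixing them to a common $A$ uses the reality structure inherited through Lemma \ref{lem:adjointasymtotics}'s construction from a real-valued $\theta_0$), and one must check that $W_\lambda(\ii\mu)$ is genuinely non-real so that the Möbius map is non-degenerate; both are essentially consequences of the explicit formula for $W_{\lambda,\xi}(z)$ established in the proof of Lemma \ref{lem:adjointasymtotics}.
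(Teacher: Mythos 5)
Your proposal is correct and follows essentially the same route as the paper's own proof: von Neumann's classification applied to the $(1,1)$ deficiency indices from Lemma \ref{lem:deficiency1-1}, followed by translating the unitary phase into the real parameter $\beta$ through the large-momentum asymptotics of the deficiency vectors from Lemma \ref{lem:adjointasymtotics} (your extra check that $\varphi\mapsto\beta(\varphi)$ is a non-degenerate M\"obius map is a welcome addition the paper leaves implicit). One bookkeeping slip: to match the normalisation $\cos(s_0\log|\pp|)+\beta\sin(s_0\log|\pp|)$ of \eqref{eq:domainD0beta} you must take $\beta=\cos(\varphi/2)\big/\mathrm{Re}\big(e^{-\ii\varphi/2}W_\lambda(\ii\mu)\big)$, i.e.\ the reciprocal of the ratio you wrote (the paper's $\beta=(1+e^{\ii\nu})/(w+e^{\ii\nu}\overline{w})$ is exactly this quantity); since either expression sweeps $\mathbb{R}\cup\{\infty\}$ bijectively as the phase varies, this does not affect the substance of the argument.
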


     Each of the $\mathcal{A}_{\lambda,\beta}^{(0)}$ is a legitimate TMS parameter in the sector of zero angular momentum, according to the discussion of Sect.~\ref{sec:TMSextension-section}, precisely as the operator $\mathcal{A}_{\lambda}^{(\ell)}$ defined in \eqref{AFop-ellnot0} is a TMS parameter in the sector $\ell\in\mathbb{N}$.

    \begin{proof}[Proof of Proposition \ref{prop:Aellzero-selfadj}]
     Let $\mu>0$.
     By a standard application of von Neumann's extension theory to the operator $\widetilde{\mathcal{A}_{\lambda}^{(0)}}$ with deficiency subspaces \eqref{eq:twodefsub} and adjoint \eqref{eq:A0tildestareigenv}-\eqref{eq:A0tildestar}, the self-adjoint extensions of $\widetilde{\mathcal{A}_{\lambda}^{(0)}}$ form the family 
      \[
       \big\{\mathcal{A}_{\lambda,U_\nu}^{(0)}\,\big|\,\nu\in[0,2\pi)\big\}\,,
      \]
 where
   \[
    \begin{split}
     \mathcal{D}\big(\mathcal{A}_{\lambda,U_\nu}^{(0)}\big)\;&:=\;\left\{\xi\:=\:\widetilde{\xi}+c(\xi_{\ii\mu}+e^{\ii \nu}\,\xi_{-\ii\mu})\left|
     \begin{array}{c}
      \widetilde{\xi}\in\widetilde{\mathcal{D}}_0' \\
      c\in\mathbb{C}
     \end{array}\!\!
     \right.\right\} \\
     \mathcal{A}_{\lambda,U_\nu}^{(0)}\,\xi\;&:=\;3 W_\lambda^{-1} T_\lambda^{(0)}\xi\;=\;3 W_\lambda^{-1} T_\lambda^{(0)}\widetilde{\xi}+c\big(\ii\mu\xi_{\ii\mu}-\ii\mu e^{\ii\nu}\xi_{-\ii\mu}\big)\,.
    \end{split}
   \]
   We have tacitly and non-restrictively assumed that the functions $\xi_{\pm\ii\mu}$ are normalised in $H^{-\frac{1}{2}}_{W_\lambda,\ell=0}(\mathbb{R}^3)$. The notation $U_\nu$ is to remind that the  map $\xi_{\ii\mu}\mapsto e^{\ii\nu}\xi_{-\ii\mu}$ induces a unitary isomorphism $U_\nu$ between the two deficiency subspaces \eqref{eq:twodefsub}.

   Let us now characterise the $\xi$'s in $ \mathcal{D}\big(\mathcal{A}_{\lambda,U_\nu}^{(0)}\big)$ in terms of the large-$|\pp|$ asymptotics of the corresponding $\widehat{\xi}_{\mathrm{sing}}=c(\widehat{\xi}_{\ii\mu}+e^{\ii \nu}\,\widehat{\xi}_{-\ii\mu})$ (see definition \eqref{eq:xiregxising}). Owing to Lemma \eqref{lem:adjointasymtotics}, at the leading order as $\mu/\lambda\to 0$ and $|\pp|\to +\infty$, and up to an overall multiplicative constant, one has
   \[
   \begin{split}
    |\pp|^{-2}\,\widehat{\xi}_{\mathrm{sing}}(\pp)\;&=\;\sin({s_0\log|\pp|})+w_{\xi,\lambda,\mu}\cos({s_0\log|\pp|}) \\
    &\qquad +e^{\ii\nu}\big(\sin({s_0\log|\pp|})+\overline{w_{\xi,\lambda,\mu}}\,\cos({s_0\log|\pp|})\big)\,,
   \end{split}
   \]
   having set
   \[
    w_{\xi,\lambda,\mu}\;:=\;W_{\lambda,\xi}(\ii\mu)
   \]
 from formula \eqref{eq:adjointasymtotics} and having used the property
  \[
   W_{\lambda,\xi}(-\ii\mu)\;=\;\overline{W_{\lambda,\xi}(\ii\mu)}\;=\;\overline{w_{\xi,\lambda,\mu}}
  \]
 from \eqref{eq:Wconstproperty}. Thus, within such approximation, and suitably re-defining the overall multiplicative constant,
  \[
   \begin{split}
    |\pp|^{-2}\,\widehat{\xi}_{\mathrm{sing}}(\pp)\;&=\;\cos({s_0\log|\pp|})+\frac{1+e^{\ii\nu}}{\,w_{\xi,\lambda,\mu}+e^{\ii\nu}\overline{w_{\xi,\lambda,\mu}}\,}\,\sin({s_0\log|\pp|})\,.
   \end{split}
  \]
 At fixed $\nu$, the above (asymptotic) condition selects all charges from $\mathcal{D}\Big(\widetilde{\mathcal{A}_{\lambda}^{(0)}}\Big)^\star$ that constitute the domain of the `$\nu$-th extension'.
 
 As
 \[
  \beta\;:=\;\frac{1+e^{\ii\nu}}{\,w_{\xi,\lambda,\mu}+e^{\ii\nu}\overline{w_{\xi,\lambda,\mu}}\,}\;\in\mathbb{R}\,,
 \]
 one can switch from $\nu$-parametrisation to the $\beta$-parametrisation, re-defining
 \[
  \mathcal{A}_{\lambda,\beta}^{(0)}\;:=\;\mathcal{A}_{\lambda,U_\nu}^{(0)}\,.
 \]
 This leads to the final thesis.    
    \end{proof}

   \begin{remark}
    It is worth underlying that the overall construction so far has involved \emph{two distinct extension schemes}: the Kre{\u\i}n-Vi\v{s}ik-Birman scheme for the self-adjoint extensions of the operator $\mathring{H}$, classified in Theorem \ref{thm:generalclassification}, and von Neumann's scheme for the self-adjoint extensions of the operator $\widetilde{\mathcal{A}_{\lambda}^{(0)}}$, classified in Proposition \ref{prop:Aellzero-selfadj}. (As $\widetilde{\mathcal{A}_{\lambda}^{(0)}}$ is not semi-bounded, the Kre{\u\i}n-Vi\v{s}ik-Birman scheme is not applicable.) In either case one speaks of singular component of a generic element of the adjoint as that component belonging to the deficiency subspaces, and in either case each extension corresponds to a suitable restriction of the domain of the adjoint. However, in the Kre{\u\i}n-Vi\v{s}ik-Birman scheme such a restriction selects a subspace of the adjoint's domain by means of a constraint between the singular and the regular component of its elements (as commented at the beginning of Sect.~\ref{sec:two-body-short-scale-sing}), whereas in von Neumann's scheme the restriction of self-adjointness is a constraint within the singular components only (see \eqref{eq:xiregxising} and \eqref{eq:domainD0beta} above).    
   \end{remark}

  \section{The canonical model and other well-posed variants}\label{sec:canonicalmodel}

  Merging the findings of Sect.~\ref{sec:higherell} and \ref{sec:lzero} within the general scheme of Sect.~\ref{sec:TMSextension-section} we can finally present a class of models for the bosonic trimer 
  which are mathematically well-posed (i.e., self-adjoint) and physically meaningful (i.e., of Ter-Martirosyan Skornyakov type), and which in a sense are canonical, as we shall comment further.

  \subsection{Canonical model at unitarity and at given three-body parameter}\label{sec:constructioncanonical}~

  Let $\beta\in\mathbb{R}$ and $\lambda>0$.
  With respect to the decomposition \eqref{eq:bigdecompW}, namely
  \begin{equation}\label{eq:bigdecompW2}
 \begin{split}
 H^{-\frac{1}{2}}_{W_\lambda}(\mathbb{R}^3)\;\cong\;\bigoplus_{\ell=0}^\infty \,H^{-\frac{1}{2}}_{W_\lambda,\ell}(\mathbb{R}^3) \,,
 \end{split}
\end{equation}
  let us consider the operator
  \begin{equation}\label{eq:globalAlambda-1}
   \mathcal{A}_{\lambda,\beta}\;:=\; \mathcal{A}_{\lambda,\beta}^{(0)}\:\oplus\: \bigoplus_{\ell=1}^\infty\mathcal{A}_\lambda^{(\ell)}
  \end{equation}
 in the usual sense of direct sum of operators on an orthogonal direct sum of Hilbert spaces. $\mathcal{A}_\lambda^{(\ell)}$, with $\ell\in\mathbb{N}$, is defined in \eqref{AFop-ellnot0}, taking here $\alpha=0$, and $\mathcal{A}_{\lambda,\beta}^{(0)}$ is defined in \eqref{eq:Azerolambda}-\eqref{eq:domainD0beta}. Observe that the condition $\lambda>\lambda_\alpha$ required in the definition \eqref{AFop-ellnot0} is automatically satisfied here, as $\alpha=0$.

  The self-adjointness of each summand in \eqref{eq:globalAlambda-1} with respect to the corresponding Hilbert space $H^{-\frac{1}{2}}_{W_\lambda,\ell}(\mathbb{R}^3)$ is proved, respectively, in Propositions \ref{prop:Alambdaellnot0} and \ref{prop:Aellzero-selfadj}. Therefore, altogether $\mathcal{A}_{\lambda,\beta}$ is self-adjoint on $H^{-\frac{1}{2}}_{W_\lambda}(\mathbb{R}^3)$.

  Upon setting
  \begin{equation}\label{eq:finalDbeta}
   \mathcal{D}_\beta\;:=\;\mathcal{D}_{0,\beta}\;\boxplus\;\op_{k=1}^\infty\mathcal{D}_\ell\,,
  \end{equation}
  the definition \eqref{eq:globalAlambda-1} is equivalent to
  \begin{equation}\label{eq:globalAlambda-2} 
   \begin{split}
    \mathcal{D}(\mathcal{A}_{\lambda,\beta})\;&:=\;\mathcal{D}_\beta \\
    \mathcal{A}_{\lambda,\beta}\;&:=\;3 W_\lambda^{-1}T_\lambda\,.
   \end{split}
  \end{equation}
  Here we use `$\boxplus$' instead of `$\oplus$' to indicate that the sum is orthogonal with respect to the Hilbert space orthogonal direct sum \eqref{eq:bigdecompW2}, but the summands are non-closed subspaces of $H^{-\frac{1}{2}}_{W_\lambda}(\mathbb{R}^3)$. Actually \eqref{eq:finalDbeta} is nothing but the explicit expression for the domain of the direct sum operator \eqref{eq:globalAlambda-1} (with respect to the decomposition \eqref{eq:bigdecompW2}): the domain $\mathcal{D}_{0,\beta}$ of $\mathcal{A}_{\lambda,\beta}^{(0)}$ is defined in \eqref{eq:domainD0beta}, and the $\mathcal{D}_\ell$ of $\mathcal{A}_{\lambda}^{(\ell)}$ is defined in \eqref{eq:domainDell}. Observe that $\mathcal{D}_\beta$ is $\lambda$-independent, because so are its $\ell$-components. The second line of \eqref{eq:globalAlambda-2} is due to the fact that each of the summands in \eqref{eq:globalAlambda-1} is an operator acting on the corresponding $\ell$-sector as $W_\lambda^{(\ell)}T_\lambda^{(\ell)}$ (Propositions \ref{prop:Alambdaellnot0} and \ref{prop:Aellzero-selfadj}), and in turn $T_\lambda$ and $W_\lambda$ are reduced with respect to the decomposition \eqref{eq:bigdecompW2} with component, respectively, $T_\lambda^{(\ell)}$ and $W_\lambda^{(\ell)}$(as seen in \eqref{eq:decompTTell} and \eqref{eq:WlambdaWlambdaell}).

  It is instructive to re-cap what $\mathcal{D}_\beta$ altogether is:
  \begin{equation}\label{eq:Dbetaaltogether}
   \mathcal{D}_\beta\;=\;\left\{ 
   \begin{array}{c}
    \displaystyle\xi=\sum_{\ell=0}^\infty\xi^{(\ell)}\in\;\bigoplus_{\ell=0}^\infty \,H^{-\frac{1}{2}}_{W_\lambda,\ell}(\mathbb{R}^3)\,\cong\, H^{-\frac{1}{2}}_{W_\lambda}(\mathbb{R}^3) \\
    \textrm{such that} \\
    \xi^{(\ell)}\in H^{\frac{1}{2}}_\ell(\mathbb{R}^3)\;\textrm{ and }\; T_\lambda^{(\ell)}\xi^{(\ell)}\in H^{\frac{1}{2}}_\ell(\mathbb{R}^3)\;\textrm{ for }\;\ell\in\mathbb{N}\,, \\
    \xi^{(0)}\in\mathcal{D}\Big(\widetilde{\mathcal{A}_{\lambda}^{(0)}}\Big)^{\!\star} \;\;\textrm{with singular part satisfying} \\
    \widehat{\xi}_{\mathrm{sing}}(\pp)\,=\,c\,\displaystyle\frac{\,\cos({s_0\log|\pp|})+\beta \sin({s_0\log|\pp|})\,}{\pp^2}\,(1+o(1)) \\
     \textrm{as $|\pp|\to +\infty$\quad for some $c\in\mathbb{C}$}
   \end{array}
   \right\},
  \end{equation}
   where the subspace $\mathcal{D}\Big(\widetilde{\mathcal{A}_{\lambda}^{(0)}}\Big)^{\!\star}\subset H^{-\frac{1}{2}}_{W_\lambda,\ell=0}(\mathbb{R}^3)$ is defined in \eqref{eq:A0tildestar}. Moreover, following from the analogous properties of each $\ell$-component,
   \begin{equation}\label{eq:Dbetaproperties}
    \textrm{$\mathcal{D}_\beta$ is dense in $H^{-\frac{1}{2}}_{W_\lambda}(\mathbb{R}^3)$}\qquad\textrm{and}\qquad T_\lambda\mathcal{D}_\beta\;\subset\;H^{\frac{1}{2}}(\mathbb{R}^3)\,.
   \end{equation}

   Being self-adjoint on the deficiency subspace of $\mathring{H}+\lambda\mathbbm{1}$ (more precisely, on a unitarily equivalent version of it), $\mathcal{A}_{\lambda,\beta}$ identifies a self-adjoint extension $\mathring{H}_{\mathcal{A}_{\lambda,\beta}}$ of $\mathring{H}$ in the sense of the general classification of Theorem \ref{thm:generalclassification}.

   In turn, since $\mathcal{A}_{\lambda,\beta}$ acts as $3 W_\lambda^{-1}T_\lambda$, its domain $\mathcal{D}_\beta$ satisfies \eqref{eq:Dbetaproperties}, according to Theorem \ref{thm:globalTMSext} the operator $\mathring{H}_{\mathcal{A}_{\lambda,\beta}}$ is a Ter-Martirosyan Skornyakov extension of $\mathring{H}$, namely a physical extension.

   Such extension can be defined as follows.

   \begin{theorem}\label{thm:H0beta}
    Let $\beta\in\mathbb{R}$ and $\lambda>0$. Define
    \begin{equation}\label{eq:H0betadomaction}
     \begin{split}
      \mathcal{D}(\mathscr{H}_{0,\beta})\;&:=\;\left\{g=\phi^\lambda+u_\xi^\lambda\left|\!
  \begin{array}{c}
   \phi^\lambda\in H^2_\mathrm{b}(\mathbb{R}^3\times\mathbb{R}^3)\,,\;\xi\in\mathcal{D}_\beta\,, \\
   \displaystyle\phi^\lambda(\yy_1,\mathbf{0})\,=\,(2\pi)^{-\frac{3}{2}} (T_\lambda\xi)(\yy_1)
  \end{array}
  \!\!\!\right.\right\} \\
  (\mathscr{H}_{0,\beta}+\lambda\mathbbm{1})g\;&:=\;(-\Delta_{\yy_1}-\Delta_{\yy_2}-\nabla_{\yy_1}\cdot\nabla_{\yy_2}+\lambda\mathbbm{1})\phi^\lambda\,,
     \end{split}
    \end{equation}
    where the subspace $\mathcal{D}_\beta\subset H^{-\frac{1}{2}}(\mathbb{R}^3)$ is given by \eqref{eq:Dbetaaltogether}.
  \begin{itemize}
   \item[(i)] The decomposition of $g$ in terms of $\phi^\lambda$ and $\xi$ is unique, at fixed $\lambda$. The subspace $ \mathcal{D}(\mathscr{H}_{0,\beta})$ is $\lambda$-independent.
   \item[(ii)] $\mathscr{H}_{0,\beta}$ is self-adjoint on $L^2_\mathrm{b}(\mathbb{R}^3\times\mathbb{R}^3,\ud\yy_1,\ud\yy_2)$ and extends $\mathring{H}$ given in \eqref{eq:domHring-initial}.
   \item[(iii)] For each $g\in  \mathcal{D}(\mathscr{H}_{0,\beta})$ one has
   \begin{equation}\label{eq:allBPTMS}
    \begin{split}
     \phi^\lambda(\yy_1,\mathbf{0})\;&=\;(2\pi)^{-\frac{3}{2}} (T_\lambda\xi)(\yy_1)\,, \\
     \int_{\mathbb{R}^3}\widehat{\phi^\lambda}(\pp_1,\pp_2)\,\ud\pp_2\;&=\;(\widehat{T_\lambda\xi})(\pp_1)\,, \\
     \int_{\!\substack{ \\ \\ \pp_2\in\mathbb{R}^3 \\ |\pp_2|<R}}\widehat{g}(\pp_1,\pp_2)\,\ud\pp_2\;&=\;4\pi R\,\widehat{\xi}(\pp_1)+o(1)\qquad\textrm{as }R\to +\infty\,.
    \end{split}
   \end{equation}
  All such conditions are equivalent, and each of them expresses the Bethe-Peierls alias Ter-Martirosyan Skornyakov condition. In particular, the first version of \eqref{eq:allBPTMS} is an identity in $H^{\frac{1}{2}}(\mathbb{R}^3)$. 
  \item[(iv)] $\mathscr{H}_{0,\beta}$ is not semi-bounded.
  \end{itemize}
  \end{theorem}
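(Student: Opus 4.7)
The strategy is to identify $\mathscr{H}_{0,\beta}$ with the Birman-classified extension $\mathring{H}_{\mathcal{A}_{\lambda,\beta}}$, where $\mathcal{A}_{\lambda,\beta}$ is the operator assembled in \eqref{eq:globalAlambda-1}--\eqref{eq:globalAlambda-2}. Its self-adjointness on $H^{-\frac{1}{2}}_{W_\lambda}(\mathbb{R}^3)$ is already secured sector-by-sector: Proposition \ref{prop:Alambdaellnot0} (with $\alpha=0$, so that $\lambda_\alpha = 0$ and the hypothesis $\lambda > \lambda_\alpha$ is automatic) covers $\ell \geq 1$, while Proposition \ref{prop:Aellzero-selfadj} covers $\ell = 0$. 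By construction $\mathcal{A}_{\lambda,\beta}$ acts as $3W_\lambda^{-1}T_\lambda$ on a domain $\mathcal{D}_\beta$ that, owing to \eqref{eq:Dbetaproperties}, is dense in $H^{-\frac{1}{2}}(\mathbb{R}^3)$ and satisfies $T_\lambda \mathcal{D}_\beta \subset H^{\frac{1}{2}}(\mathbb{R}^3)$. Corollary \ref{cor:globalTMSext} specialised to $\alpha = 0$ then asserts that the operator defined by \eqref{eq:HAl-tms-dom}--\eqref{eq:HAl-tms-act} is self-adjoint on $\cH_\mathrm{b}$ and constitutes a Ter-Martirosyan Skornyakov extension of $\mathring{H}$. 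Converting the momentum-space condition $\int_{\mathbb{R}^3}\widehat{\phi^\lambda}(\pp_1,\pp_2)\,\ud\pp_2 = (\widehat{T_\lambda\xi})(\pp_1)$ from \eqref{eq:HAl-tms-dom} into the position-space trace condition $\phi^\lambda(\yy_1,\mathbf{0}) = (2\pi)^{-\frac{3}{2}}(T_\lambda\xi)(\yy_1)$ via the trace formula \eqref{eq:tracemomentum} reproduces the definition \eqref{eq:H0betadomaction}, settling part (ii).

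Part (i) splits into two pieces. The uniqueness at fixed $\lambda$ of the decomposition $g = \phi^\lambda + u_\xi^\lambda$ is the content of Lemma \ref{lem:Hstaretc}(iii). For the $\lambda$-independence of $\mathcal{D}(\mathscr{H}_{0,\beta})$, I will first observe that the subspace $\mathcal{D}_\beta$ in \eqref{eq:Dbetaaltogether} is intrinsically $\lambda$-independent: the asymptotic prescription \eqref{eq:domainD0beta} is universal in $\pp$ (and $s_0$ does not depend on $\lambda$), while for $\ell \geq 1$ the condition $T_\lambda^{(\ell)}\xi \in H^{\frac{1}{2}}_\ell(\mathbb{R}^3)$ is preserved under changing $\lambda \to \lambda'$ thanks to the smoothing estimate $T_{\lambda'}-T_\lambda : H^{-\frac{1}{2}} \to H^{\frac{1}{2}}$ of Lemma \ref{lem:Tlambdaproperties}(iv). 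The change of reference parameter $\lambda \to \lambda'$ then replaces $\phi^\lambda$ with $\phi^{\lambda'} = \phi^\lambda + (u_\xi^\lambda - u_\xi^{\lambda'})$, which remains in $H^2_\mathrm{b}(\mathbb{R}^3\times\mathbb{R}^3)$ by Lemma \ref{lem:uxiproperties}(iv); compatibility of the BP condition at $\lambda'$ then amounts to the auxiliary identity $(u_\xi^\lambda - u_\xi^{\lambda'})|_{\Gamma_1} = (2\pi)^{-\frac{3}{2}}(T_{\lambda'}-T_\lambda)\xi$, which I will read off by applying Lemma \ref{lem:largepasympt-star} separately to $u_\xi^\lambda$ and $u_\xi^{\lambda'}$ (both with $\eta = 0$) and subtracting. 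For part (iii), the passage between the position-space trace formulation and the $\pp_2$-integral formulation is just the trace formula \eqref{eq:tracemomentum}; the equivalence with the large-$R$ asymptotics is Lemma \ref{eq:oneTMSfunction}(iii) at $\alpha = 0$ (itself a direct consequence of \eqref{eq:g-largep2-star}--\eqref{eq:phi-largep2-star} of Lemma \ref{lem:largepasympt-star} once $\eta$ is replaced by $\mathcal{A}_{\lambda,\beta}\xi = 3W_\lambda^{-1}T_\lambda\xi$).

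For part (iv), the key remark is that $\mathcal{A}_{\lambda,\beta}^{(0)}$ is, by construction (Proposition \ref{prop:Aellzero-selfadj}), a self-adjoint extension of $\widetilde{\mathcal{A}_{\lambda}^{(0)}}$, and the latter is symmetric and unbounded from below on $H^{-\frac{1}{2}}_{W_\lambda,\ell=0}(\mathbb{R}^3)$ by Lemma \ref{lem:symmetricAtilde0} (ultimately thanks to Lemma \ref{lem:expectationunbounded}). Since any self-adjoint extension $\tilde{A}$ of a symmetric operator $A$ satisfies $\mathfrak{m}(\tilde{A}) \leq \mathfrak{m}(A)$, one has $\mathfrak{m}(\mathcal{A}_{\lambda,\beta}^{(0)}) = -\infty$ and hence $\mathfrak{m}(\mathcal{A}_{\lambda,\beta}) = -\infty$. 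The iff criterion \eqref{eq:positiveSBiffpositveB-1_Tversion}--\eqref{eq:bounds_mS_mB_Tversion} of Theorem \ref{thm:generalclassification}(ii) then forces $\mathscr{H}_{0,\beta} = \mathring{H}_{\mathcal{A}_{\lambda,\beta}}$ to be unbounded from below.

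The main subtlety I expect is located in part (ii), specifically in matching the trace-type BP identity $\phi^\lambda(\yy_1,\mathbf{0}) = (2\pi)^{-\frac{3}{2}}(T_\lambda\xi)(\yy_1)$ --- which must be read as an equality in $H^{\frac{1}{2}}(\mathbb{R}^3)$, combining the trace theorem applied to $\phi^\lambda \in H^2$ with the prescribed regularity $T_\lambda\xi \in H^{\frac{1}{2}}$ --- with the Birman-parameter constraint $\eta = 3W_\lambda^{-1}T_\lambda\xi$ native to the classification of Theorem \ref{thm:generalclassification}. This requires the clean interplay of the trace formula \eqref{eq:tracemomentum}, the identity \eqref{eq:phi-largep2-star} $\int\widehat{\phi^\lambda}(\pp_1,\pp_2)\,\ud\pp_2 = \frac{1}{3}(\widehat{W_\lambda\eta})(\pp_1)$, and the bijectivity $W_\lambda : H^{-\frac{1}{2}} \to H^{\frac{1}{2}}$ from Lemma \ref{lem:Wlambdaproperties}(ii), which together guarantee the non-trivial compatibility between the definition of $\mathcal{D}_\beta$ and the short-scale structure of the TMS domain \eqref{eq:HAl-tms-dom}.
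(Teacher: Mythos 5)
Your proposal is correct and follows essentially the same route as the paper's own (very terse) proof: identify $\mathscr{H}_{0,\beta}$ with $\mathring{H}_{\mathcal{A}_{\lambda,\beta}}$ and invoke the general TMS classification (Theorem \ref{thm:globalTMSext}, of which Corollary \ref{cor:globalTMSext} is the relevant specialisation), obtain the equivalent BP/TMS formulations from Lemma \ref{eq:oneTMSfunction}(iii), and deduce (iv) from the unboundedness below of the $\ell=0$ Birman parameter (Lemma \ref{lem:symmetricAtilde0}). Your additional details — the explicit $\lambda$-independence argument via Lemmas \ref{lem:uxiproperties}(iv) and \ref{lem:Tlambdaproperties}(iv), and the passage through Theorem \ref{thm:generalclassification}(ii) for (iv) — merely fill in steps the paper delegates to the earlier general statements.
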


   \begin{proof}
    As argued already, the operator $\mathring{H}_{\mathcal{A}_{\lambda,\beta}}$ matches the conditions of Theorem \ref{thm:globalTMSext}(ii) for the considered $\lambda$, therefore it is a Ter-Martirosyan Skornyakov self-adjoint extension of $\mathring{H}$ with inverse scattering length $\alpha=0$. 
    Renaming $\mathring{H}_{\mathcal{A}_{\lambda,\beta}}\equiv\mathscr{H}_{0,\beta}$, Theorem \ref{thm:globalTMSext} guarantees that such $\mathscr{H}_{0,\beta}$ is $\lambda$-independent (only the explicit decomposition of its domain's elements $g$ depends on $\lambda$), with
       \begin{equation*}
  \mathcal{D}(\mathscr{H}_{0,\beta})\;=\;
  \left\{g=\phi^\lambda+u_\xi^\lambda\left|\!
  \begin{array}{c}
   \phi^\lambda\in H^2_\mathrm{b}(\mathbb{R}^3\times\mathbb{R}^3)\,,\;\xi\in\mathcal{D}_\beta\,, \\
   \displaystyle\int_{\mathbb{R}^3}\widehat{\phi^\lambda}(\pp_1,\pp_2)\,\ud\pp_2\,=\,(\widehat{T_\lambda\xi})(\pp_1)
  \end{array}
  \!\!\!\right.\right\}.
  \end{equation*}
  The various BP/TMS conditions for $\mathscr{H}_{0,\beta}$ and their equivalence are then guaranteed by Lemma \ref{eq:oneTMSfunction}(iii), and the unboundedness from below (and above) of $\mathscr{H}_{0,\beta}$ follows from the fact that $\mathscr{H}_{0,\beta}$ extends, in the $\ell=0$ sector, a symmetric operator that is not semi-bounded (see Lemma \ref{lem:symmetricAtilde0} and the observations right after).  
   \end{proof}

   \begin{remark}
    Owing to the bosonic symmetry, if $g\in  \mathcal{D}(\mathscr{H}_{0,\beta})$, then \eqref{eq:allBPTMS} has equivalent versions in the other variables, e.g.,
    \begin{equation}
     \phi^\lambda(\yy,\mathbf{0})\;=\; \phi^\lambda(\mathbf{0},\yy)\;=\;\phi^\lambda(\yy,\yy)\;=\;(2\pi)^{-\frac{3}{2}} (T_\lambda\xi)(\yy)
    \end{equation}
    (see \eqref{eq:Hbosonic}).
   \end{remark}

   \begin{remark}
    In the sectors of definite angular momentum $\ell\in\mathbb{N}$ the Birman operator $\mathcal{A}_{\lambda,\beta}$ labelling the Hamiltonian $\mathring{H}_{\mathcal{A}_{\lambda,\beta}}\equiv\mathscr{H}_{0,\beta}$ is strictly positive, and correspondingly $\mathscr{H}_{0,\beta}$ is lower semi-bounded (Theorem \ref{thm:generalclassification}(ii)). In this case we can express the quadratic form of $\mathscr{H}_{0,\beta}$ according to Theorem \ref{thm:generalclassification}(iii). Explicitly, combining \eqref{eq:HFform}, \eqref{eq:decomposition_of_form_domains_Tversion}, and \eqref{AFform-ellnot0}, we find that for all $g$'s of the form
    \begin{equation}\label{eq:g-for-the-form}
     g\;=\;\phi^\lambda+u_\xi^\lambda\,,\qquad \phi^\lambda\in H^1_{\mathrm{b}}(\mathbb{R}^3\times\mathbb{R}^3)\,,\quad\xi\in H^{\frac{1}{2}}_\ell(\mathbb{R}^3)
    \end{equation}
    for some $\lambda>0$ and some $\ell\in\mathbb{N}$ (thus, excluding $\ell=0$) the evaluation of the quadratic form of $\mathscr{H}_{0,\beta}$ gives
    \begin{equation}\label{eq:formhighsectors}
    \begin{split}
     \mathscr{H}_{0,\beta}[g]\;&=\;\frac{1}{2}\Big(\big\|(\nabla_{\yy_1}+\nabla_{\yy_2})\phi^\lambda\big\|^2_{L^2}+\big\|\nabla_{\yy_1}\phi^\lambda\big\|^2_{L^2}+\big\|\nabla_{\yy_2}\phi^\lambda\big\|^2_{L^2}\Big) \\
     &\qquad+\lambda\Big(\|\phi^\lambda\big\|^2_{L^2}-\big\|\phi^\lambda+u_\xi^\lambda\big\|^2_{L^2}\Big)+3\int_{\mathbb{R}^3} \overline{\,\widehat{\xi}(\pp)}\, \big(\widehat{T_\lambda\xi}\big)(\pp)\,\ud\pp
    \end{split}
    \end{equation}
  (the $L^2$-norms being norms in $L^2(\mathbb{R}^3\times\mathbb{R}^3)$). Of course, the above expression is the same for all $\beta$'s, since the parameter $\beta$ only qualifies the properties of the Hamiltonian $\mathscr{H}_{0,\beta}$ in the sector $\ell=0$. On the $g$'s of \eqref{eq:g-for-the-form} one then has $\mathscr{H}_{0,\beta}[g]\geqslant 0$, and by self-adjointness the form \eqref{eq:formhighsectors} is closed. Through a quadratic form analysis, the form \eqref{eq:formhighsectors} was proposed and proved to be closed and semi-bounded in the recent work \cite{Basti-Teta-2015}. 
   \end{remark}

  The double index in $\mathscr{H}_{0,\beta}$ is to indicate that \emph{two parameters} have been selected in order to identify the operator within the general class of self-adjoint extensions of $\mathring{H}$, namely the parameter $\alpha=0$ in the Ter-Martirosyan Skornyakov condition, and the parameter $\beta\in\mathbb{R}$ in the choice of the charge domain $\mathcal{D}_\beta$. (It is surely of interest to repeat the same analysis for the analogous extensions $\mathscr{H}_{\alpha,\beta}$: as said, from this perspective we only focus here on the unitarity regime $\alpha=0$, which is the physically relevant one.)

  Explicitly, $\alpha=0$ and $\beta$ select the following prescriptions:
  \begin{eqnarray}
  &  &\int_{\!\substack{ \\ \\ \pp_2\in\mathbb{R}^3 \\ |\pp_2|<R}}\widehat{g}(\pp_1,\pp_2)\,\ud\pp_2\stackrel{R\to +\infty}{=} 4\pi R+\,\widehat{\xi}(\pp_1)+o(1)   \qquad\qquad\quad\;\; (\textrm{TMS}_{\alpha=0}) \label{eq:TMSalphazero}\\
  & & \widehat{\xi}^{(0)}_{\mathrm{sing}}(\pp)\stackrel{|\pp|\to +\infty}{=}c\,\displaystyle\frac{\,\cos({s_0\log|\pp|})+\beta \sin({s_0\log|\pp|})\,}{\pp^2}\,(1+o(1)) \quad (\textrm{III}_\beta) \label{eq:IIIbeta}\,.
  \end{eqnarray}
  As discussed in Subsect.~\ref{sec:symmTMSubdd}-\ref{sec:adjointBirman} and Proposition \ref{prop:Aellzero-selfadj}, the TMS condition alone, indicated here with the shorthand $\textrm{TMS}_{\alpha=0}$, is \emph{not} enough to qualify the self-adjointness of the model: an additional $\beta$-driven condition is needed, present only for charges in the $\ell=0$ sector.

  The shorthand $\textrm{III}_\beta$ in \eqref{eq:IIIbeta} is meant to express the following difference. \eqref{eq:TMSalphazero} is a \emph{two-body} condition, constraining the trimer's wave-function when two of the bosons come on top of each other, which is explicitly seen from the first version of \eqref{eq:allBPTMS} or also from its consequence
   \begin{equation*}
 g_{\mathrm{av}}(\yy_1;|\yy_2|)\,\stackrel{|\yy_2|\to 0}{\sim}\frac{1}{|\yy_2|}\,\xi(\yy_1) + o(1)  
 \end{equation*}
  (see \eqref{eq:g-TMS-BP-generic} above). Instead, \eqref{eq:IIIbeta} is interpreted as a \emph{three-body} condition, regulating the behaviour of the trimer's wave-function in the vicinity of the triple coincidence configuration. Some mathematical heuristics on such three-body interpretation is presented in \cite[Remark 2.8]{CDFMT-2015} and \cite[Sect.~8]{MO-2017}, and above all one can see from the spectral analysis that follows (Subsect.~\ref{sec:spectralThomas}) chat such $\beta$ has precisely the role of the three-body parameter introduced by the physicists, on which we commented in the introduction.

  From this perspective, each $\mathscr{H}_{0,\beta}$ is a \emph{canonical} Hamiltonian for the bosonic trimer with zero-range interaction: it is defined by a canonical choice, namely the Friedrichs extension of the TMS parameter, in all sectors $\ell\neq 0$, and by a $\beta$-extension of the TMS parameter in the sector $\ell=0$. In retrospect, also in the latter sector the construction was canonical, in that the choice of the initial domain of symmetry $\widetilde{\mathcal{D}}_0$ (formula \eqref{eq:Dtilde0}) is the natural one guaranteeing the well-posedness condition $T_\lambda^{(0)}\widetilde{\mathcal{D}}_0\subset H^{\frac{1}{2}}_{\ell=0}(\mathbb{R}^3)$ (Lemmas \ref{lem:0tms-generalsol} and \ref{lem:D0tildedomainproperties}(v)).


   \subsection{Spectral analysis and Thomas collapse}\label{sec:spectralThomas}~

   For the Hamiltonian $\mathscr{H}_{0,\beta}$, $\beta\in\mathbb{R}$, we now consider the eigenvalue problem
   \begin{equation}
    \mathscr{H}_{0,\beta}\,g\;=\;E\,g\,,\qquad E<0\,.
   \end{equation}
   As $\mathscr{H}_{0,\beta}$ is a non-trivial self-adjoint extension of the positive symmetric operator $\mathring{H}$, we are thus concerned with the \emph{negative bound states} of $\mathscr{H}_{0,\beta}$.

   \begin{theorem}\label{thm:spectralanalysis}
    Let $\beta\in\mathbb{R}$ and let $\mathscr{H}_{0,\beta}$ the operator introduced in Theorem \ref{thm:H0beta}. The negative eigenvalues of $\mathscr{H}_{0,\beta}$ relative to eigenfunctions with spherically symmetric singular charge constitute the sequence $(E_{\beta,n})_{n\in\mathbb{Z}}$ with
    \begin{equation}\label{eq:EVbeta}
     E_{\beta,n}\;=\;-3\,e^{-\frac{2}{\,s_0}\,\mathrm{arccot}\beta}\,e^{\frac{2\pi}{s_0}n}\,.
    \end{equation}
    The constant $s_0\approx 1.0062$ is the unique positive root of $\widehat{\gamma}(s)=0$ as defined in \eqref{eq:gamma-distribution}.
   Each such eigenvalue is simple and corresponds to an eigenfunction of the form $g_{\beta,n}=u_{\xi_{\beta,n}}^{(-E_{\beta,n})}$ with
    \begin{equation}\label{eq:EFbeta}
    \widehat{\xi}_{\beta,n}(\pp)\;=\;c_{\beta,n}\,\frac{\,\sin s_0\Big(\log\Big(\sqrt{\frac{3\pp^2}{\,4|E_{\beta,n}|\,}}+\sqrt{\frac{3\pp^2}{\,4|E_{\beta,n}|\,}+1}\Big)\Big)}{\,|\pp|\sqrt{\frac{3}{4}\pp^2+|E_{\beta,n}|}}
   \end{equation}
   with normalisation factor $c_{\beta,n}\in\mathbb{C}$.
   \end{theorem}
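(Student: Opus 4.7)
The plan is to reduce the eigenvalue problem $\mathscr{H}_{0,\beta}g = Eg$ with $E<0$ to a scalar transcendental equation for $\lambda := -E$. Starting from the representation $g = \phi^\lambda + u_\xi^\lambda$ of Theorem \ref{thm:H0beta}, the action formula $(\mathscr{H}_{0,\beta}+\lambda)g = (\mathring{H}_F+\lambda)\phi^\lambda$ combined with the demand $(\mathscr{H}_{0,\beta}+\lambda)g = 0$ forces $\phi^\lambda = 0$, since $\mathring{H}_F+\lambda$ is injective on $H^2_{\mathrm{b}}(\mathbb{R}^3\times\mathbb{R}^3)$. The Bethe--Peierls constraint in \eqref{eq:allBPTMS} then collapses to $T_\lambda \xi = 0$, and the restriction to spherically symmetric singular charge reduces this further to the scalar equation $T_\lambda^{(0)}\xi = 0$ with $\xi \in \mathcal{D}_{0,\beta}$.

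Next, I solve $T_\lambda^{(0)}\xi = 0$ by passing to the re-scaled radial function $\theta$ via \eqref{eq:0xi}--\eqref{ftheta-1}, which turns the equation into the radial Ter-Martirosyan Skornyakov equation \eqref{radialTMS0radial} with vanishing datum $\vartheta \equiv 0$. Lemma \ref{lem:0tms-generalsol} then gives $\theta(x) = c\sin(s_0 x)$ for some $c \in \mathbb{C}$, and inverting the change of variable \eqref{ftheta-2} reconstructs $\widehat{\xi}$ directly in the form \eqref{eq:EFbeta}. Note that this zero-mode carries no regular part in the decomposition provided by Corollary \ref{cor:A0tildeadj}: indeed $c\sin(s_0 x)$ is exactly the profile identified in the proof of Lemma \ref{lem:adjointasymtotics} as the leading piece of the deficiency eigenvectors, while any nontrivial $\widetilde{\xi} \in \widetilde{\mathcal{D}}_0'$ with $T_\lambda^{(0)}\widetilde{\xi} = 0$ would (by the very characterization of $\widetilde{\mathcal{D}}_0$ through a non-zero $\vartheta \in C^\infty_{0,\mathrm{odd}}$) have to vanish identically.

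Then I impose the $\beta$-asymptotic condition from \eqref{eq:domainD0beta}. Using
\[
\log\!\Big(\sqrt{{\textstyle\tfrac{3\pp^2}{4\lambda}}}+\sqrt{{\textstyle\tfrac{3\pp^2}{4\lambda}}+1}\Big)\;=\;\log|\pp|+\log{\textstyle\sqrt{\tfrac{3}{\lambda}}}+o(1) \qquad (|\pp|\to+\infty)
\]
and the sine addition formula, together with $|\pp|\sqrt{\tfrac{3}{4}\pp^2+\lambda}\sim\tfrac{\sqrt{3}}{2}\pp^2$, yields
\[
\widehat{\xi}(\pp)\;\sim\;\frac{2c/\sqrt{3}}{\pp^2}\Big(\cos\big(s_0\log{\textstyle\sqrt{\tfrac{3}{\lambda}}}\big)\sin(s_0\log|\pp|)+\sin\big(s_0\log{\textstyle\sqrt{\tfrac{3}{\lambda}}}\big)\cos(s_0\log|\pp|)\Big).
\]
Matching this against \eqref{eq:IIIbeta} requires $\cot\big(s_0\log\sqrt{3/\lambda}\big) = \beta$, equivalently $s_0\log\sqrt{3/\lambda}=\mathrm{arccot}\,\beta+n\pi$ for some $n\in\mathbb{Z}$. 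Solving for $\lambda$ and setting $E_{\beta,n}=-\lambda$ produces \eqref{eq:EVbeta} after relabelling $n\mapsto -n$ (which exhausts the same integer family). Simplicity of each eigenvalue is immediate from the one-dimensionality of $\ker T_\lambda^{(0)}$ inside $\mathcal{D}_{0,\beta}$.

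The main obstacle is the delicate verification that the candidate $\xi_{\beta,n}$ genuinely belongs to $\mathcal{D}_{0,\beta}$ rather than to a strictly larger subspace of $\mathcal{D}\big(\widetilde{\mathcal{A}_\lambda^{(0)}}\big)^\star$: one must check $\xi_{\beta,n}\in H^{-1/2}(\mathbb{R}^3)$ (which follows from the $|\pp|^{-2}$ decay integrated against the $H^{-1/2}$ weight, together with the finite limit as $\pp\to 0$ after factoring $\sin(s_0\log(\cdot))/|\pp|$) and that it carries no regular contribution that could spoil the asymptotic matching. Once this is in place, the transcendental equation above is the entirety of the spectral content in the $\ell=0$ sector, and \eqref{eq:EVbeta}--\eqref{eq:EFbeta} follow.
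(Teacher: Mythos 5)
Your proposal is correct and follows essentially the same route as the paper: force $\phi^\lambda=0$, reduce to $T_\lambda^{(0)}\xi=0$ in the $\ell=0$ sector, solve via the radial Ter-Martirosyan Skornyakov equation with $\vartheta\equiv 0$ to get $\theta=c\sin s_0x$, check the solution is purely singular, and match the $\beta$-asymptotics \eqref{eq:IIIbeta} to obtain $\cot\big(s_0\log\sqrt{3/\lambda}\big)=\beta$ and hence \eqref{eq:EVbeta}. The only difference is that the paper packages the ``no regular part'' step as the abstract Lemma \ref{lem:adjointzero} (orthogonality of $\overline{A}\xi_0$ to the deficiency vectors plus injectivity), whereas you argue it directly from the characterisation of $\widetilde{\mathcal{D}}_0$ — same content, slightly less explicit about the graph-norm closure $\widetilde{\mathcal{D}}_0'$.
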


   We recall that the nomenclature `singular charge' is reserved for the function $\xi$ uniquely associated to $g$ in the general decomposition \eqref{eq:DHstardecomposed} (Lemmas \ref{lem:Hstaretc} and \ref{lem:chargexiofg}).

   \begin{corollary}\label{cor:spectralanalysis}~
    \begin{itemize}
     \item[(i)] Each $\mathscr{H}_{0,\beta}$ admits an infinite sequence of negative bound states with energies $E_{\beta,n}$ accumulating to $-\infty$ as $n\to+\infty$, and accumulating to zero from below as $n\to -\infty$.
     \item[(ii)] Different realisations $\mathscr{H}_{0,\beta_1}$ and $\mathscr{H}_{0,\beta_2}$, namely $\beta_1\neq\beta_2$, have disjoint sequences of negative bound states, but with the same universal geometric law
     \[
      \frac{E_{\beta,n+1}}{E_{\beta,n}}\;=\;\exp \frac{2\pi}{s_0}\;\approx\;515\qquad\forall n\in\mathbb{Z}
     \]
       irrespective of $\beta$.
     \item[(iii)] Denoting by
     \[
      \sigma_{\mathrm{p}}^-(\mathscr{H}_{0,\beta})\;:=\;\{ E_{\beta,n}\,|\,n\in\mathbb{Z}\}
     \]
    the negative point spectrum of $\mathscr{H}_{0,\beta}$ in the sector $\ell=0$, one has
    \[
     \bigcup_{\beta\in\mathbb{R}}\sigma_{\mathrm{p}}^-(\mathscr{H}_{0,\beta})\;=\;\mathbb{R}^-\,.
    \]
    \end{itemize}
   \end{corollary}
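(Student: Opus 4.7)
The corollary is a direct book-keeping consequence of the explicit quantisation formula
\[
 E_{\beta,n}\;=\;-3\,e^{-\frac{2}{s_0}\mathrm{arccot}\beta}\,e^{\frac{2\pi}{s_0}n}\,,\qquad n\in\mathbb{Z}\,,
\]
provided by Theorem \ref{thm:spectralanalysis}, so the plan is simply to read off each of (i)--(iii) from this identity, being careful only about the branch conventions for $\mathrm{arccot}$. Throughout I adopt the standard convention $\mathrm{arccot}:\mathbb{R}\to(0,\pi)$, which is a smooth strictly decreasing bijection; with this choice the prefactor $e^{-\frac{2}{s_0}\mathrm{arccot}\beta}$ ranges strictly in $(e^{-2\pi/s_0},1)$ as $\beta$ varies over $\mathbb{R}$.

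For part (i), the statement is immediate: fixing $\beta$ the multiplicative prefactor is a strictly positive constant, and $E_{\beta,n}/(-3e^{-\frac{2}{s_0}\mathrm{arccot}\beta})=e^{(2\pi/s_0)n}$ tends to $+\infty$ as $n\to +\infty$ and to $0^+$ as $n\to -\infty$, whence $E_{\beta,n}\to -\infty$ and $E_{\beta,n}\to 0^-$ respectively. The negativity of every $E_{\beta,n}$ and the fact that this sequence exhausts the point spectrum of $\mathscr{H}_{0,\beta}$ in the sector $\ell=0$ (relative to spherically symmetric singular charges) are already part of Theorem \ref{thm:spectralanalysis}.

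For part (ii), the ratio $E_{\beta,n+1}/E_{\beta,n}=e^{2\pi/s_0}$ follows by inspection, and is manifestly independent of $\beta$; numerically $e^{2\pi/s_0}\approx 515$. For the disjointness, suppose $\beta_1\neq\beta_2$ and $E_{\beta_1,n}=E_{\beta_2,m}$ for some $n,m\in\mathbb{Z}$. Cancelling the common factor $-3$ and taking logarithms gives
\[
 -\tfrac{2}{s_0}\mathrm{arccot}\beta_1+\tfrac{2\pi}{s_0}n\;=\;-\tfrac{2}{s_0}\mathrm{arccot}\beta_2+\tfrac{2\pi}{s_0}m\,,
\]
that is, $\mathrm{arccot}\beta_2-\mathrm{arccot}\beta_1=\pi(m-n)$. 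Since the left-hand side lies in $(-\pi,\pi)$, the only possibility is $m=n$, and then $\mathrm{arccot}\beta_1=\mathrm{arccot}\beta_2$, contradicting the injectivity of $\mathrm{arccot}$ on $\mathbb{R}$.

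For part (iii), given $E<0$ I invert the quantisation relation: the equation $E=-3\,e^{-\frac{2}{s_0}\mathrm{arccot}\beta}\,e^{\frac{2\pi}{s_0}n}$ rewrites as
\[
 \mathrm{arccot}\beta\;=\;\pi n-\tfrac{s_0}{2}\log\bigl(\tfrac{|E|}{3}\bigr).
\]
Choosing the unique $n\in\mathbb{Z}$ so that the right-hand side falls in the open interval $(0,\pi)$ (possible because $\pi\mathbb{Z}$ tiles $\mathbb{R}$ at length $\pi$) and then defining $\beta:=\cot(\pi n-\tfrac{s_0}{2}\log(|E|/3))\in\mathbb{R}$ yields $E=E_{\beta,n}\in\sigma_{\mathrm p}^-(\mathscr{H}_{0,\beta})$. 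Conversely each $E_{\beta,n}$ is of course negative, so the union equals $\mathbb{R}^-$. There is no genuine ``hard step'': the only minor care needed is to fix the branch of $\mathrm{arccot}$ consistently with the one implicit in Theorem \ref{thm:spectralanalysis}, which I verify matches the parametrisation of $\mathcal{D}_{0,\beta}$ in \eqref{eq:domainD0beta} (the pair $(\cos,\sin)$ of coefficients there is naturally read as $(\cos\nu,\sin\nu)$ with $\beta=\cot\nu$ and $\nu\in(0,\pi)$).
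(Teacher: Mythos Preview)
Your proof is correct and follows exactly the paper's approach: all three parts are read off directly from the explicit eigenvalue formula \eqref{eq:EVbeta}, with the disjointness in (ii) established by the same $\mathrm{arccot}$-injectivity argument the paper uses. Your treatment of (iii) is in fact more explicit than the paper's, which disposes of (i) and (iii) in a single sentence; the only cosmetic quibble is that the \emph{open} intervals $(n\pi,(n+1)\pi)$ do not literally tile $\mathbb{R}$, so a countable set of energies sits at the boundary---but this is a branch-convention artefact that the paper's terse proof equally glosses over.
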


  The presence of an infinite sequence of eigenvalues for the three-body Hamiltonian which accumulate to $-\infty$ is referred to as the `\emph{Thomas effect}', or `\emph{Thomas collapse}', with reference to the phenomenon that, as mentioned in the introduction, was first discovered theoretically by Thomas in 1935 \cite{Thomas1935} through an analysis of the three-body problem in which the Bethe-Peierls contact condition was formally implemented in each two-body channel. The collapse, or `\emph{fall to the centre}', refers to the circumstance that the corresponding three-body wave-function was showed to shrink around the triple coincidence point. This is precisely what can be seen from the eigenfunctions \eqref{eq:EFbeta} (Remark \ref{rem:eigenfunctions} below).

  The presence of an infinite sequence of negative eigenvalues for the three-body Hamiltonian which accumulate  to zero is referred to as the `\emph{Efimov effect}', with reference to the same phenomenon predicted theoretically in the early 1970's by Efimov \cite{Efimov-1971,Efimov-1973} for three-body quantum systems with two-body \emph{resonant} interaction of \emph{finite range}.

  Each Hamiltonian $\mathscr{H}_{0,\beta}$ thus displays both the Thomas and the Efimov effect.

  Moreover, the negative point spectra of the  $\mathscr{H}_{0,\beta}$'s fibre the whole negative half line and their disjoint union fills $\mathbb{R}^-$. This is an indirect signature of the fact that the $\mathscr{H}_{0,\beta}$'s are a one-parameter family of extensions of the same symmetric operator.

  The above properties of the negative point spectra of the $\mathscr{H}_{0,\beta}$'s, significantly formula \eqref{eq:EVbeta}, coincide with those emerging from the formal diagonalisation argument of physicists' `zero-range methods' \cite{Braaten-Hammer-2006,Naidon-Endo-Review_Efimov_Physics-2017} we surveyed in the introduction, \emph{where $\beta$ is precisely the physically grounded `three-body parameter'} \cite[Sect.~4]{Naidon-Endo-Review_Efimov_Physics-2017}. On this basis, as anticipated in the discussion of \eqref{eq:TMSalphazero}-\eqref{eq:IIIbeta}, we too shall refer to $\beta$ as the three-body parameter in the Hamiltonian. In Remark \ref{rem:eigenfunctions} below we will substantiate this nomenclature with rigorous mathematical arguments.

   Prior to proving Theorem \ref{thm:spectralanalysis}, let us single out this simple fact.

   \begin{lemma}\label{lem:adjointzero}
    Let $A$ be a densely defined and symmetric operator on a Hilbert space $\mathfrak{h}$ and assume that $A$ admits self-adjoint extensions, i.e., $\mathrm{dim}\ker(A^*-z\mathbbm{1})=\mathrm{dim}\ker(A^*-\overline{z}\mathbbm{1})>0$ for $z\in\mathbb{C}\setminus\mathbb{R}$. Let $A_U$ be the generic self-adjoint extension of $A$ with the notation of von Neumann's extension scheme, that is, $A_U=A^*|_{\mathcal{D}(A_U)}$ with
    \[
     \mathcal{D}(A_U)\;=\;\mathcal{D}(\overline{A})\dotplus(\mathbbm{1}+U)\ker(A^*-z\mathbbm{1})
    \]
    for some unitary $U:\ker(A^*-z\mathbbm{1})\stackrel{\cong}{\to}\ker(A^*-\overline{z}\mathbbm{1})$. Decompose a generic $\xi\in \mathcal{D}(A_U)$ accordingly as $\xi=\xi_0+c(\xi_+ + U\xi_+)$ for some $\xi_0\in\mathcal{D}(\overline{A})$, $\xi_+\in\ker(A^*-z\mathbbm{1})$, $c\in\mathbb{C}$.
    Assume in addition that $A$ is injective and that for some non-zero $\xi\in \mathcal{D}(A_U)$ one has $A^*\xi=0$. Then $\xi_0=0$.
   \end{lemma}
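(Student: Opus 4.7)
The plan is to apply $A^*$ to the von Neumann decomposition $\xi=\xi_0+c(\xi_++U\xi_+)$ and extract the vanishing of $\xi_0$ by transferring the injectivity of $A$ to its closure $\overline{A}$.

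First I would observe that $\xi_0\in\mathcal{D}(\overline{A})\subset\mathcal{D}(A^*)$ with $A^*\xi_0=\overline{A}\xi_0$, while by construction of the deficiency subspaces $A^*\xi_+=z\xi_+$ and $A^*U\xi_+=\overline{z}\,U\xi_+$. The hypothesis $A^*\xi=0$ then yields the master identity
\[
\overline{A}\,\xi_0 \;=\; -\,c\bigl(z\,\xi_+ + \overline{z}\,U\xi_+\bigr).
\]
The left-hand side lies in $\mathrm{ran}(\overline{A})\subset\ker(A^*)^\perp$, while the right-hand side is a specific combination of vectors living in $\ker(A^*-z\mathbbm{1})$ and $\ker(A^*-\overline{z}\mathbbm{1})$.

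Next I would show that both sides of the master identity must vanish. Since $\xi\in\ker(A^*)$, pairing the identity with $\xi$ gives $\langle\xi,\overline{A}\xi_0\rangle=\langle A^*\xi,\xi_0\rangle=0$, and therefore $c\langle\xi,z\xi_++\overline{z}U\xi_+\rangle=0$. Expanding $\xi=\xi_0+c(\xi_++U\xi_+)$ inside this last inner product, and exploiting that $\xi_+\in\ker(A^*-z\mathbbm{1})$ and $U\xi_+\in\ker(A^*-\overline{z}\mathbbm{1})$ are linearly independent eigenvectors of $A^*$ at the distinct non-real eigenvalues $z$ and $\overline{z}$ (so that these eigenspaces intersect trivially), I would aim to force $c=0$, thereby reducing the master identity to $\overline{A}\xi_0=0$.

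The concluding step then requires the injectivity of the \emph{closure} $\overline{A}$, not merely of $A$. The standard identity $\ker\overline{A}=\mathrm{ran}(A^*)^\perp$ reduces this to the density of $\mathrm{ran}(A^*)$ in $\mathfrak{h}$: a property which at the purely abstract level does not follow automatically from $\ker A=\{0\}$, but which in the applications of the lemma made in this paper -- in particular for $A=\widetilde{\mathcal{A}_\lambda^{(0)}}$ within the proof of Theorem~\ref{thm:spectralanalysis} -- can be read off from the explicit structure of the operator together with Lemma~\ref{lem:D0tildedomainproperties}. The main obstacle I anticipate is precisely this last point: rendering rigorous the passage from the injectivity of $A$ to the injectivity of $\overline{A}$ at the abstract level, rather than invoking the concrete features of the applications; once $\overline{A}$ is known to be injective, $\overline{A}\xi_0=0$ yields $\xi_0=0$ immediately.
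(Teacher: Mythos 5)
Your master identity $\overline{A}\,\xi_0=-c\,(z\xi_++\overline{z}\,U\xi_+)$ is also the paper's starting point, but the strategy you then pursue --- forcing $c=0$ --- cannot succeed and is not how the paper concludes. If $c=0$, the identity gives $\overline{A}\xi_0=0$, hence (granting injectivity of the closure) $\xi_0=0$ and therefore $\xi=0$, contradicting the hypothesis that $\xi$ is non-zero; so in every non-vacuous instance of the lemma one necessarily has $c\neq 0$, and indeed in the application inside the proof of Theorem \ref{thm:spectralanalysis} the whole point is that the kernel elements are \emph{purely singular}, $\xi=\xi_{\mathrm{sing}}\neq 0$. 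Moreover, the single scalar relation you extract, $c\,\langle\xi,z\xi_++\overline{z}U\xi_+\rangle=0$, could only yield $c=0$ if you proved the bracket non-zero, and linear independence of eigenvectors of $A^*$ at $z$ and $\overline{z}$ gives no lower bound on an inner product. The paper's argument is different in exactly this step: it asserts that $\overline{A}\xi_0$ is orthogonal in $\mathfrak{h}$ to $\xi_+$ and to $U\xi_+$ \emph{separately}, hence to the combination $c(z\xi_++\overline{z}U\xi_+)$, which by the master identity is $-\overline{A}\xi_0$ itself; this forces $\|\overline{A}\xi_0\|^2=0$ with no constraint whatsoever on $c$, and injectivity then gives $\xi_0=0$. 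That orthogonality of $\overline{A}\xi_0$ against the deficiency vectors is the one idea your proposal is missing, and without it (or a substitute) the proof does not close.

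Your final caveat is well taken: $\ker A=\{0\}$ does not abstractly imply $\ker\overline{A}=\{0\}$ (one has $\ker\overline{A}=\mathrm{ran}(A^*)^\perp$, which is controlled by the density of $\mathrm{ran}(A^*)$, not by $\ker A$), and the paper passes over this with the parenthetical ``(and hence of $\overline{A}$)''. But flagging the issue and deferring it to the concrete structure of $\widetilde{\mathcal{A}_\lambda^{(0)}}$ is an observation, not a proof; as written, your argument neither establishes $\overline{A}\xi_0=0$ nor the injectivity of $\overline{A}$, so both halves of the conclusion remain open.
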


   \begin{proof}
    By assumption $0=A^*\xi=\overline{A}\xi_0+c(z\xi_+ + \overline{z} U\xi_+)$. Moreover,
    \[
     \langle \overline{A}\xi_0,\xi_+\rangle_{\mathfrak{h}}\;=\;\langle \overline{A}\xi_0,U\xi_+\rangle_{\mathfrak{h}}\;=\;0\,,
    \]
    meaning that $\overline{A}\xi_0$ and $c(z\xi_+ + \overline{z} U\xi_+)$ are orthogonal in $\mathfrak{h}$. Therefore, both such vectors must vanish, and in particular $\overline{A}\xi_0=0$. By injectivity of $A$ (and hence of $\overline{A}$) the conclusion follows.   
   \end{proof}

   \begin{proof}[Proof of Theorem \ref{thm:spectralanalysis}]
    Let us decompose $g\in\mathcal{D}(\mathscr{H}_{0,\beta})$ according to \eqref{eq:H0betadomaction} with decomposition parameter
    \[
     \lambda\;:=\;-E\,,
    \]
    that is, $g=\phi^\lambda+u_\xi^\lambda$. Then \eqref{eq:H0betadomaction}, combined with $ \mathscr{H}_{0,\beta}g=-\lambda g$, implies $\phi^\lambda\equiv 0$ and $T_\lambda\xi\equiv 0$. The eigenfunctions have then necessarily the form $g=u_\xi^\lambda$ for $\xi\in\mathcal{D}_\beta$ such that $T_\lambda\xi= 0$.

    As $T_\lambda$ is reduced with respect to the decomposition \eqref{eq:bigdecompW2} with components $T_\lambda^{(\ell)}$ (as seen in \eqref{eq:decompTTell}), the latter equation is equivalent to the collection of equations $T_\lambda^{(\ell)}\xi^{(\ell)}= 0$, $\ell\in\mathbb{N}_0$.
    We are concerned with eigenfunctions relative to charges $\xi$ belonging to the sector $\ell=0$, namely the physically relevant ones.

    Let us then focus on the problem
    \[
     T_\lambda^{(0)}\xi^{(0)}\;=\; 0\,,\qquad \xi^{(0)}\in\mathcal{D}_{0,\beta}\,,
    \]
    henceforth expressing the unknown $\xi^{(0)}$ simply as $\xi$.
    Such equation, owing to Lemma \ref{lem:0tms-generalsol}, is solved by those $\xi$'s in the subspace $\mathcal{D}_{0,\beta}$ such that the corresponding re-scaled radial function $\theta$'s, in the notation \eqref{eq:0xi}-\eqref{ftheta-1}, have the form
    \[
     \theta(x)\;=\;c\,\sin s_0 x\,,\qquad c\in\mathbb{C}.
    \]
   In this case \eqref{eq:0xi} and \eqref{ftheta-2} give
    \[
  \widehat{\xi}(\pp)\;=\;c\,\frac{\,\sin s_0\Big(\log\Big(\sqrt{\frac{3\pp^2}{4\lambda}}+\sqrt{\frac{3\pp^2}{4\lambda}+1}\Big)\Big)}{\,\sqrt{3\pi}\,|\pp|\sqrt{\frac{3}{4}\pp^2+\lambda}}\,.
 \]

   Now, in order for such $\xi$ to belong to $\mathcal{D}_{0,\beta}$, $\xi$ must only have singular component, that is, $\xi=\xi_{\mathrm{sing}}$ in the notation \eqref{eq:A0tildestar} and \eqref{eq:xiregxising}. This follows from Lemma \ref{lem:adjointzero} applied to the operator $\widetilde{\mathcal{A}_{\lambda}^{(0)}}$ defined in \eqref{eq:Alambdatilde-0} and to its extension $\mathcal{A}_{\lambda,\beta}^{(0)}$ defined in \eqref{eq:Azerolambda}. For the former, injectivity is proved in Lemma \ref{lem:D0tildedomainproperties}(vi) (using also the bijectivity property of $W_\lambda$, Lemma \ref{lem:Wlambdaproperties}(ii)). For the latter, $\Big(\widetilde{\mathcal{A}_{\lambda}^{(0)}}\Big)^\star\xi=\mathcal{A}_{\lambda,\beta}^{(0)}\xi=3W_\lambda^{-1}T_\lambda^{(0)}\xi=0$. Lemma \ref{lem:adjointzero} is then applicable, and yields $\xi-\xi_{\mathrm{sing}}=\xi_{\mathrm{reg}}=0$.

   It then remains to impose that the above generic solution $\xi$ satisfy the asymptotics \eqref{eq:IIIbeta} for the considered $\beta$.

   With simple computations analogous to those made in the proof of Lemma \ref{lem:adjointasymtotics} we find
   \[
    \widehat{\xi}(\pp)\,=\,c'\Big(\cos\big(s_0\log{\textstyle\sqrt{\frac{3}{\lambda}}}\big)\,\frac{\,\sin (s_0\log|\pp|)}{\pp^2}+\sin\big(s_0\log{\textstyle\sqrt{\frac{3}{\lambda}}}\big)\,\frac{\,\cos (s_0\log|\pp|)}{\pp^2}\Big)(1+o(1))
   \]
   as $|\pp|\to +\infty$, for some $c'\in\mathbb{C}$. The comparison with \eqref{eq:IIIbeta} then implies
   \[
    \cos\big(s_0\log{\textstyle\sqrt{\frac{3}{\lambda}}}\big)\;=\;\beta\,\sin\big(s_0\log{\textstyle\sqrt{\frac{3}{\lambda}}}\big)\,.	
   \]
   The latter condition selects the admissible values for $\lambda$, and hence $E=-\lambda$: explicitly, only the values $E_{\beta,n}=-\lambda_n$ with
   \[
    \lambda_n\;=\;3\,e^{-\frac{2}{\,s_0}\,\mathrm{arccot}\beta}\,e^{\frac{2\pi}{s_0}n}\,,\qquad n\in\mathbb{Z}\,.
   \]
   This establishes \eqref{eq:EVbeta}, and moreover it is clear from the above discussion that the corresponding eigenfunctions are all of the form $u_{\xi_n}^{(-\lambda_n)}$ and that each eigenvalue $E_{\beta,n}$ is non-degenerate.    
   \end{proof}

     \begin{proof}[Proof of Corollary \ref{cor:spectralanalysis}]
    Parts (i) and (iii), as well as the geometric formula of part (ii), all follow at once from \eqref{eq:EVbeta} of Theorem \ref{thm:spectralanalysis}. The fact that
    \[
     \sigma_{\mathrm{p}}^-(\mathscr{H}_{0,\beta})\cap\sigma_{\mathrm{p}}^-(\mathscr{H}_{0,\beta'})\;=\;\emptyset\,,\qquad \beta\neq\beta'\,,
    \]
   can be seen as follows. If $E_{\beta,n}=E_{\beta',n'}$ for some $n,n'\in\mathbb{Z}$, then
   \[
   \frac{1}{\pi}\big(\mathrm{arccot}\beta-\mathrm{arccot}\beta'\big)\;=\;k
   \]
  for some $k=n-m\in\mathbb{Z}$, as follows straightforwardly from \eqref{eq:EVbeta}. For the properties of the $\mathrm{arccot}$-function, this is only possible when $k=0$, in which case $\beta=\beta'$.   
   \end{proof}

   \begin{remark}\label{rem:eigenfunctions}
    At given $\beta\in\mathbb{R}$, the eigenfunctions $g_{\beta,n}=u_{\xi_{\beta,n}}^{(-E_{\beta,n})}$ have charges $\xi_{\beta,n}$ that tend more and more to be localised around $\yy=0$ as $E_{\beta,n}\to-\infty$, and on the contrary more and more delocalised in space as $E_{\beta,n}\uparrow 0$. In the former case $g_{\beta,n}(\yy_1,\yy_2)$ is generated by a `charge distribution'
    \begin{equation*}
  \xi_{\beta,n}(\yy_1)\delta(\yy_2)+\delta(\yy_1)\xi_{\beta,n}(\yy_2)+\delta(\yy_1-\yy_2)\xi_{\beta,n}(-\yy_2)
 \end{equation*}
   (up to a multiplicative constant, see \eqref{eq:livingonhyperplanes}) that tends to concentrate at the triple coincidence point $\yy_1=\yy_2=\mathbf{0}$ as $E_{\beta,n}\to-\infty$. This is precisely the fall-to-the-centre phenomenon associated with the Thomas effect. All this can be seen from the explicit expression of the eigenfunctions \eqref{eq:EFbeta}. To visualize it we may consider the radial distribution $\varrho_{\beta,n}$ of the charge $\xi_{\beta,n}$ in momentum coordinate, namely
   \[
    \varrho_{\beta,n}(p)\;=\;\frac{p^2\,|f_{\beta,n}(p)|^2}{\displaystyle\int_0^{+\infty}\!\ud p\,p^2\,|f_{\beta,n}(p)|^2\,}\,,
   \]
   where
   \[
   \begin{split}
    \widehat{\xi}_{\beta,n}(\pp)\;&=\;\frac{1}{\sqrt{4\pi}}\,f_{\beta,n}(|\pp|)\,,\\
    f_{\beta,n}(p)\;&=\;c_{\beta,n}\,\frac{\,2\,\sin s_0\Big(\log\Big(\sqrt{\frac{3\pp^2}{\,4|E_{\beta,n}|\,}}+\sqrt{\frac{3\pp^2}{\,4|E_{\beta,n}|\,}+1}\Big)\Big)}{\,\sqrt{3}\,|\pp|\sqrt{\frac{3}{4}\pp^2+|E_{\beta,n}|}}\,.
   \end{split}
   \]
  Figure \ref{fig:eigenfunctions} shows indeed that the more negative $E_{\beta,n}$ (namely, the larger $n>0$), the more flattened $\varrho_{\beta,n}(p)$, meaning the more localised in space $\xi_{\beta,n}(\yy)$.
   \end{remark}

   \begin{figure}[t!]
\includegraphics[width=8cm]{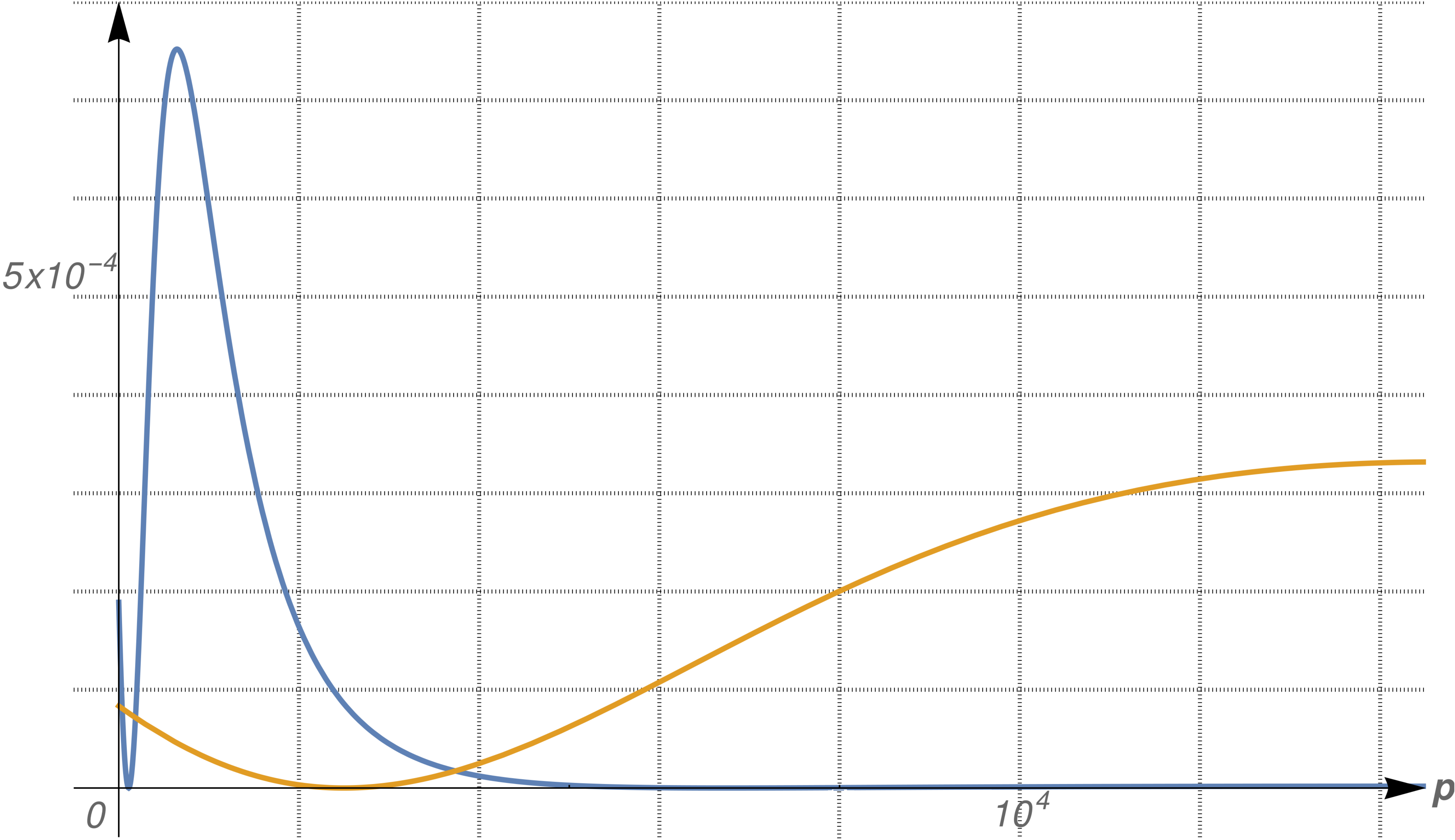}
\caption{Plot of the radial distribution profile $\varrho_{\beta,n}$ for the two charges of the eigenfunctions of $\mathscr{H}_{0,\beta}$, with $\beta=1$, relative to the quantum numbers $n=2$ (blue curve) and $n=3$ (orange curve: this has been multiplicatively magnified by a factor 10, for a clearer rendering). Correspondingly, $E_{\beta,3}<E_{\beta,2}<0$. The charge relative to the more negative eigenvalue is more delocalised in momentum, hence more localised around zero in space. Discussion in Rem.~\ref{rem:eigenfunctions}.}\label{fig:eigenfunctions}
\end{figure}

\begin{remark}
 Whereas Theorem \ref{thm:spectralanalysis} and Corollary \ref{cor:spectralanalysis} focus on the negative point spectrum of the Hamiltonian $\mathscr{H}_{0,\beta}$, it is not difficult to determine that its essential spectrum is precisely
 \begin{equation}
  \sigma_{\mathrm{ess}}(\mathscr{H}_{0,\beta})\;=\;[0,+\infty)\,.
 \end{equation}
 This can be obtained by suitably adjusting to the present setting the reasoning developed in \cite[Sect.~5 and 6]{BMO-2017} in collaboration with Becker and Ottolini. That $[0,+\infty)\subset\sigma_{\mathrm{ess}}(\mathscr{H}_{0,\beta})$ can be seen by means of a Weyl sequence of the same type as the standard Weyl sequences in $H^2_{\mathrm{b}}(\mathbb{R}^3\times\mathbb{R}^3)$ used to show that $\sigma_{\mathrm{ess}}(\mathring{H}_F)=[0,+\infty)$, suitably modified so as to vanish at the coincidence hyperplanes (see \cite[Prop.~5.1]{BMO-2017}). For the opposite inclusion, one can reproduce a version of the Minlos-Yoshitomi decomposition for the present $T_\lambda$, as we did in \cite[Sect.~6]{BMO-2017} for the fermionic $T_\lambda$, based on which by compactness arguments one can show that the spectral projection relative to each interval $[a,b]\subset(-\infty,0)$ is finite-dimensional. 
\end{remark}

 \begin{remark}\label{rem:choicedomains}
 As a follow-up of Remark \ref{rem:merepurpose}, in retrospect we can comment over the choice \eqref{eq:Dtilde0} of the initial charge domain $\widetilde{\mathcal{D}}_0$ used to realise $\widetilde{\mathcal{A}_\lambda^{(0)}}=W_\lambda^{-1}T_\lambda^{(0)}$ as a densely defined and symmetric operator on $H^{-\frac{1}{2}}_{W_\lambda,\ell=0}(\mathbb{R}^3)$. 
 \begin{itemize}
  \item[(i)] As observed already, also on the larger domain $\widetilde{\mathcal{D}}_0'$ from \eqref{eq:Dolarger} could one make $W_\lambda^{-1}T_\lambda^{(0)}$ symmetric on $H^{-\frac{1}{2}}_{W_\lambda,\ell=0}(\mathbb{R}^3)$. In either case, the resulting  $\widetilde{\mathcal{A}_\lambda^{(0)}}$ is an admissible Birman parameter for a symmetric Ter-Martirosyan  Skornyakov extension $\mathscr{H}_0\equiv\mathring{H}_{\widetilde{\mathcal{A}_\lambda^{(0)}}}$ of the initial operator $\mathring{H}$ (in the sector $\ell=0$). The difference is in the smaller ($\widetilde{\mathcal{D}}_0$) or larger ($\widetilde{\mathcal{D}}_0'$) domain of charges, with corresponding re-scaled radial functions of the form 
 \[
  \begin{split}
   \theta\;&=\;\sin (s_0 |x|)*\big(\widehat{\vartheta}/\widehat{\gamma}_+\big)^{\!\vee}\qquad\qquad\qquad\quad\!\textrm{for charges in $\widetilde{\mathcal{D}}_0$}, \\
   \theta\;&=\;c\,\sin s_0 x+\sin (s_0 |x|)*\big(\widehat{\vartheta}/\widehat{\gamma}_+\big)^{\!\vee}\qquad\textrm{for charges in $\widetilde{\mathcal{D}}_0'$},
  \end{split}
 \]
 with $\vartheta\in C^\infty_{0,\mathrm{odd}}(\mathbb{R}_x)$ and $c\in\mathbb{C}$.
 \item[(ii)] Choosing the \emph{larger} domain, one solves the eigenvalue problem
 \[
  \mathscr{H}_0 \,g\;=\;-\lambda g\,,\qquad \lambda>0
 \]
 (in the sector $\ell=0$) with the same reasoning as in the proof of Theorem \ref{thm:spectralanalysis}, and thus finds solutions $g=u_{\xi_\lambda}^{\lambda}$ with
 \[
 \begin{split}
  \theta(x)\;&=\;\sin s_0 x\,, \\
   \widehat{\xi}_\lambda(\pp)\;&=\;\frac{\,\sin s_0\Big(\log\Big(\sqrt{\frac{3\pp^2}{4\lambda}}+\sqrt{\frac{3\pp^2}{4\lambda}+1}\Big)\Big)}{\,\sqrt{3\pi}\,|\pp|\sqrt{\frac{3}{4}\pp^2+\lambda}}\,,
 \end{split}
 \]
 up to an overall multiplicative constant. All such solutions are now admissible, in that all the above $\xi_\lambda$'s belong to $\widetilde{\mathcal{D}}_0'$ irrespective of $\lambda>0$. This proves that $\mathscr{H}_0$ has a continuum of eigenvalues, which is \emph{incompatible with self-adjointness}.
 \item[(iii)] In fact, a laborious but instructive computation (originally alluded to in \cite[Sect.~III.3]{Flamand-Cargese1965}) shows that imposing the orthogonality of any two such $u_{\xi_{\lambda}}^{\lambda}$ and $u_{\xi_{\lambda'}}^{\lambda'}$ in $L^2_\mathrm{b}(\mathbb{R}^3\times\mathbb{R}^3,\ud\yy_1\ud\yy_2)$ does partition $\mathbb{R}^+$ into the disjoint union
 \[
  \mathbb{R}^+\;=\;\bigcup_{\beta\in\mathbb{R}}\sigma_\beta\,,\qquad \sigma_\beta\;:=\;\{\lambda_{\beta,n}=3\,e^{-\frac{2}{\,s_0}\,\mathrm{arccot}\beta}\,e^{\frac{2\pi}{s_0}n}\,|\,n\in\mathbb{Z}\}\,,
 \]
 where each sequence $(-\lambda_{\beta,n})_{n\in\mathbb{Z}}$ is an admissible sequence of simple eigenvalues, with orthogonal eigenfunctions by construction, for a self-adjoint operator on $L^2_\mathrm{b}(\mathbb{R}^3\times\mathbb{R}^3,\ud\yy_1\ud\yy_2)$ (precisely, for the operator $\mathscr{H}_{0,\beta}$ from Theorem \ref{thm:H0beta}).
 \item[(iv)] In connection to (ii), and in view of Theorems \ref{thm:H0beta} and \ref{thm:spectralanalysis}, we see that the inclusion
 \[
  \widetilde{\mathcal{D}}_0\;\subset\; \mathcal{D}_{0,\beta} \;\subset\;\widetilde{\mathcal{D}}_0'
 \] 
 involves three distinct admissible choices for the singular charges domain (in the sector $\ell=0$) for symmetric TMS extensions of $\mathring{H}$, only the second of which produces a self-adjoint extension. 
 \end{itemize}
 \end{remark}

 \subsection{Variants}\label{sec:variants}~
 
 The canonical model(s) $\mathscr{H}_{0,\beta}$, $\beta\in\mathbb{R}$, have variants that do not alter the $\ell=0$ sector construction, where the essential physics takes place.

 As already observed at the beginning of Sect.~\ref{sec:higherell}, there is an amount of arbitrariness in the definition of the trimer's Hamiltonian in sectors of higher angular momentum.

 The construction developed in Sect.~\ref{sec:higherell} is canonical in that it provides the Friedrichs realisation of an operator of Ter-Martirosyan Skornyakov type with inverse negative scattering length $\alpha$. (It has of course also a considerable degree of instructiveness, from the technical point of view.)

 Such construction, combined with the analysis in the sector $\ell=0$ (Sect.~\ref{sec:lzero}) led to the self-adjoint Ter-Martirosyan Skornyakov Hamiltonians of the form $\mathring{H}_{\mathcal{A}_{\lambda,\beta}}$ with Birman parameter $\mathcal{A}_{\lambda,\beta}$ given by \eqref{eq:globalAlambda-1}.

 Equally admissible (self-adjoint and TMS) alternatives are given by modified Birman parameters of the form 
  \begin{equation}
   \mathcal{A}_{\lambda,\beta}\;:=\; \mathcal{A}_{\lambda,\beta}^{(0)}\:\oplus\: \bigoplus_{\ell=1}^\infty\mathcal{A}_{\lambda,\alpha_\ell}^{(\ell)}
  \end{equation}
 where
 \begin{equation}\label{AFop-ellnot0_ell}
    \begin{split}
     \mathcal{D}\big(\mathcal{A}_{\lambda,\alpha_\ell}^{(\ell)}\big)\;&:=\;\mathcal{D}_\ell\;=\;\big\{\xi\in H_\ell^{\frac{1}{2}}(\mathbb{R}^3)\,\big|\,T_\lambda^{(\ell)}\xi\in H_\ell^{\frac{1}{2}}(\mathbb{R}^3)\big\} \\
     \mathcal{A}_{\lambda,\alpha_\ell}^{(\ell)}\;&:=\;3 W_\lambda^{-1}\big(T_\lambda^{(\ell)}+\alpha_\ell\mathbbm{1}\big)\,,
    \end{split}
   \end{equation}
 thus on the same charge domain $\mathcal{D}_\ell$ that guarantees self-adjointness (Proposition \ref{prop:Alambdaellnot0}), but with scattering lengths that depend on the angular sector.

 In fact, it would be physically acceptable also to ignore in the first place the interaction in sectors of non-zero angular momentum, thus focusing on the Hamiltonians of interest only as effective models in the sector $\ell=0$. This is obtained by taking the trivial (Friedrichs) extension of $\mathring{H}$ whenever $\ell\neq 0$: particles in a three-body state with charges that do not belong to the zero angular momentum sector just move with free dynamics. In this case the final Birman parameter's domain, instead of \eqref{eq:finalDbeta}, becomes
 \begin{equation}\label{eq:finalDbetaNOINT}
   \mathcal{D}_{0,\beta}\;\boxplus\;\op_{k=1}^\infty\{0\}\,.
  \end{equation}
 The TMS condition remains only in the sector $\ell=0$. The Hamiltonian is just the free kinetic operator on the other sectors.

\section{Ill-posed models}\label{sec:illposed}

The analysis developed in Sections \ref{sec:generalextscheme} through \ref{sec:canonicalmodel} is deeply inspired by many previous investigations we extensively referred to in the introduction, and yet it is novel in that a number of crucial steps are performed here by thoroughly working out a rigorous operator theoretic scheme.

In the introduction we argued that for three-body quantum systems with contact interaction physical zero-range methods determine eigenfunctions and eigenvalues of a formal Hamiltonian that otherwise remains unqualified. We also argued that mathematical approaches are aimed at constructing a self-adjoint Hamiltonian of Ter-Martirosyan Skornyakov type: first one declares the operator or its quadratic form, then one performs the subsequent spectral analysis on it. Of course, on the physical side there is the advantage of an ultimate agreement check with the experiments.

As a matter of fact, one can track down, through the mathematical literature on the subject, certain recurrent sources of ill-posed models, failing to provide a three-body Hamiltonian that at the same time be self-adjoint and exhibit the Bethe-Peierls / Ter-Martirosyan Skornyakov contact condition.

On the mathematical technical level, the model's well-posedness lies in the correct choice of the domain of self-adjointness, among those domains that in addition reproduce the desired short-scale physical asymptotics. A wrong choice of the (operator or form) domain fails to yield self-adjointness and produces incorrect spectral data. In this informal sense we speak of incomplete or ill-posed models.

In some circumstances an explicit signature of some sort of incompleteness of the mathematical model is the quantitative discordance in the spectral analysis with numerical and experimental evidence from physics. This has been the case significantly for three-body systems with a pair of identical fermions: the recent works \cite{CDFMT-2015,MO-2016,MO-2017} mentioned already in the introduction were essentially aimed at clarifying this perspective, on which we shall further comment in the course of this Section.

In other occurrences the ill-posedness of the model is more subtle and less evident, and the case of the bosonic trimer is typical in this sense.

For clarity of presentation, let us group such occurrences into two categories, discussed, respectively, in Subsect.~\ref{sec:illbc} and \ref{sec:illdomain}.

\subsection{Ill-posed boundary condition}\label{sec:illbc}~

The operator-theoretic programme aims at realising a Hamiltonian of zero-range interaction as a suitable self-adjoint extension of $\mathring{H}$, the free Hamiltonian initially restricted to wave-functions that do not meet the coincidence configuration $\Gamma$ (see \eqref{eq:domHring-initial}), by selecting an extension's domain where instead the wave-functions behave at $\Gamma$ with a precise, physically grounded boundary condition (BP/TMS).

As demonstrated in Sect.~\ref{sec:TMSextension-section} (Theorem \ref{thm:globalTMSext}), such two-fold requirement is possible if and only if one restricts $\mathring{H}^*$ to those functions $g\in\mathcal{D}(\mathring{H}^*)$ with singular charges $\xi$ from a distinguished subspace $\mathcal{D}\subset H^{-\frac{1}{2}}(\mathbb{R}^3)$:
\begin{itemize}
 \item[1.] $\mathcal{D}$ must be dense in $H^{-\frac{1}{2}}(\mathbb{R}^3)$ (for generic self-adjoint extensions of $\mathring{H}$ the charge domain need not be dense: Theorem \ref{thm:generalclassification});
 \item[2.] $\mathcal{D}$ must be mapped by $T_\lambda+\alpha\mathbbm{1}$ into $H^{\frac{1}{2}}(\mathbb{R}^3)$ for some (and hence for all) $\lambda>0$;
 \item[3.] $\mathcal{D}$ must be a domain of self-adjointness for $W_\lambda^{-1}(T_\lambda+\alpha\mathbbm{1})$ in the Hilbert space given by $H^{-\frac{1}{2}}(\mathbb{R}^3)$ equipped with the twisted (equivalent) scalar product $\langle\cdot,W_\lambda\cdot\rangle$ (see \eqref{eq:W-scalar-product}).
\end{itemize}

There are no other possibilities (Theorem \ref{thm:globalTMSext}).

Condition 2.~above makes $W_\lambda^{-1}(T_\lambda+\alpha\mathbbm{1})$ well-posed, because $W_\lambda$ is a \emph{bijection} of $H^{-\frac{1}{2}}(\mathbb{R}^3)$ \emph{onto} $H^{\frac{1}{2}}(\mathbb{R}^3)$ (Lemma \ref{lem:Wlambdaproperties}(ii)), and eventually leads to the desired boundary condition, namely 
\begin{equation}\label{eq:again3/2}
 \begin{split}
  & \textrm{for every $\xi\in\mathcal{D}$ there is $\phi^\lambda\in H^2_{\mathrm{b}}(\mathbb{R}^3\times\mathbb{R}^3)$ with} \\
  & \phi^\lambda(\yy,\mathbf{0})\;=\;(2\pi)^{-\frac{3}{2}} (T_\lambda+\alpha\mathbbm{1})\xi(\yy)\qquad\textrm{for a.e.~$\yy\in\mathbb{R}^3$}\,,
 \end{split}
\end{equation}
or any of the equivalent versions \eqref{eq:g-largep2-TMS0}-\eqref{eq:phi-largep2-star-yversion-TMS0}. From the perspective of \eqref{eq:again3/2} the requirement $(T_\lambda+\alpha\mathbbm{1})\mathcal{D}\subset H^{\frac{1}{2}}(\mathbb{R}^3)$ is needed because by standard trace arguments \eqref{eq:again3/2} is a $H^{\frac{1}{2}}$-identity and would not have sense if $(T_\lambda+\alpha\mathbbm{1})\xi$ had \emph{strictly less} than $H^{\frac{1}{2}}$-regularity.

In a number of past studies the choice of the charge domain $\mathcal{D}$ left instead the boundary condition \eqref{eq:again3/2} ambiguous.

The first semi-rigorous mathematical treatment of the bosonic trimer was given by Minlos and Faddeev in the work \cite{Minlos-Faddeev-1961-1}, and there the choice was (with our current notation) $\widetilde{\mathcal{D}}=\mathcal{F}^{-1}C^\infty_0(\mathbb{R}^3_{\pp})$. That is, a \emph{symmetric} extension of $\mathring{H}$ of Ter-Martirosyan Skornyakov type was suggested as follows: the extension's domain consists of those functions whose singular charges $\xi$ are all those with smooth and compactly supported Fourier transform $\widehat{\xi}$. In fact, the first two seminal works \cite{Minlos-Faddeev-1961-1,Minlos-Faddeev-1961-2} by Minlos and Faddeev had rather the form of very brief announcements with only sketches of the main reasoning and proofs; yet the space of charges was clearly declared therein and moreover, shortly after, Flamand \cite{Flamand-Cargese1965} presented a detailed review of \cite{Minlos-Faddeev-1961-1} with the same explicit domain declaration.

We also mention the subsequent choices $\widetilde{\mathcal{D}}=\mathcal{F}^{-1}C^\infty_0(\mathbb{R}^3_{\pp})$ in \cite{Minlos-1987,Minlos-Shermatov-1989,Minlos-2011-preprint_May_2010,Minlos-2014-I_RusMathSurv}, $\widetilde{\mathcal{D}}=H^1(\mathbb{R}^3)$ in \cite{Minlos-2012-preprint_30sett2011,Minlos-2014-II_preprint-2012,Moser-Seiringer-2017,Figari-Teta-2020}, and $\widetilde{\mathcal{D}}=H^{\frac{3}{2}-\varepsilon}(\mathbb{R}^3)$, $\varepsilon>0$, in \cite{Shermatov-2003}. (The above-mentioned works \cite{Minlos-Shermatov-1989,Shermatov-2003,Minlos-2011-preprint_May_2010,Minlos-2012-preprint_30sett2011,Minlos-2014-I_RusMathSurv,Minlos-2014-II_preprint-2012,Moser-Seiringer-2017} are actually 
focused on the \emph{fermionic} counterpart setting; yet, also in that case one has to face the very same technical problem of providing a well-posed definition of $W_\lambda^{-1}(T_\lambda+\alpha\mathbbm{1})$ and of making the boundary condition \eqref{eq:again3/2} unambiguous, up to non-essential changes of numerical coefficients in $T_\lambda$ and $W_\lambda$ from the bosonic to the fermionic analysis.)

Now, such proposals for $\widetilde{\mathcal{D}}$ are problematic. In the case $\widetilde{\mathcal{D}}=\mathcal{F}^{-1}C^\infty_0(\mathbb{R}^3_{\pp})$, hence $\widetilde{\mathcal{D}}\subset H^s(\mathbb{R}^3)$ $\forall s\in\mathbb{R}$, the sectors $\ell\in\mathbb{N}$ are unambiguously described through analogues of our Lemma \ref{lem:Atildenot0} and Proposition \ref{prop:Alambdaellnot0} (where our choice was $\widetilde{\mathcal{D}}=H^\frac{3}{2}_{\ell}(\mathbb{R}^3)$, namely the lowest Sobolev space that is entirely mapped with continuity by $T_\lambda^{(\ell)}$ into the desired $H^\frac{1}{2}_{\ell}(\mathbb{R}^3)$), and one realises $W_\lambda^{-1}(T_\lambda^{(\ell)}+\alpha\mathbbm{1})$ self-adjointly on the domain $\mathcal{D}_\ell=\big\{\xi\in H_\ell^{\frac{1}{2}}(\mathbb{R}^3)\,\big|\,T_\lambda^{(\ell)}\xi\in H_\ell^{\frac{1}{2}}(\mathbb{R}^3)\big\}$ (Proposition \ref{prop:Alambdaellnot0}). On the contrary, choosing $\widetilde{\mathcal{D}}=H^1_{\ell}(\mathbb{R}^3)$, $\ell\in\mathbb{N}$, poses the problem of whether $(T_\lambda^{(\ell)}+\alpha\mathbbm{1})\widetilde{\mathcal{D}}\subset H^\frac{1}{2}_{\ell}(\mathbb{R}^3)$, which is not true in general.

Moreover, even the most stringent choice $\widetilde{\mathcal{D}}=\mathcal{F}^{-1}C^\infty_0(\mathbb{R}^3_{\pp})$ does not guarantee the well-posedness of the sector $\ell=0$. We already observed (Remark \ref{rem:Tl-failstomap}) that if $\xi\in\mathcal{F}^{-1}C^\infty_0(\mathbb{R}^3_{\pp})$, then $T_\lambda^{(0)}\xi$ belongs to $H^{\frac{1}{2}-\varepsilon}(\mathbb{R}^3)$ $\forall\varepsilon>0$, but not to $H^{\frac{1}{2}}(\mathbb{R}^3)$.

\subsection{Incomplete criterion of self-adjointness}\label{sec:illdomain}~

The next source of ill-posedness may be tracked down in the problem of determining a domain $\mathcal{D}\supset\widetilde{\mathcal{D}}$ of self-adjointness for $W_\lambda^{-1}(T_\lambda+\alpha\mathbbm{1})$ with respect to the Hilbert space $H^{-\frac{1}{2}}_{W_\lambda}(\mathbb{R}^3)$, once a domain $\widetilde{\mathcal{D}}$ of symmetry is selected.

Because of the special form of the scalar product \eqref{eq:W-scalar-product} in $H^{-\frac{1}{2}}_{W_\lambda}(\mathbb{R}^3)$, it is straightforward to see (Lemma \ref{lem:symsym}) that, as long as $\widetilde{\mathcal{D}}$ is dense in $L^2(\mathbb{R}^3)$, the symmetry on $\widetilde{\mathcal{D}}$ of $W_\lambda^{-1}(T_\lambda+\alpha\mathbbm{1})$ with respect to $H^{-\frac{1}{2}}_{W_\lambda}(\mathbb{R}^3)$ is equivalent to the symmetry on $\widetilde{\mathcal{D}}$ of $T_\lambda$ with respect to $L^2(\mathbb{R}^3)$.

Based on such a suggestive property, an amount of previous investigations \cite{Minlos-Faddeev-1961-1,Minlos-Faddeev-1961-2,Flamand-Cargese1965,Minlos-1987,Minlos-Shermatov-1989,Menlikov-Minlos-1991,Menlikov-Minlos-1991-bis,Minlos-TS-1994,Shermatov-2003,Minlos-2011-preprint_May_2010,Minlos-2010-bis,Minlos-2012-preprint_30sett2011,Minlos-2014-I_RusMathSurv,Minlos-2014-II_preprint-2012,Figari-Teta-2020} adopted the claim that, for a dense subspace $\mathcal{D}$ of $L^2(\mathbb{R}^3)$, $W_\lambda^{-1}(T_\lambda+\alpha\mathbbm{1})$ on $\mathcal{D}$ is self-adjoint with respect to $H^{-\frac{1}{2}}_{W_\lambda}(\mathbb{R}^3)$ if and only if $T_\lambda$ on $\mathcal{D}$ is self-adjoint with respect to $L^2(\mathbb{R}^3)$.

In fact, this is not true (Lemma \ref{lem:exampleMinloswrong}) and the link between the two self-adjointness problems is more subtle (Lemma \ref{lem:two-selfadj-problems}).

That the emergent Hamiltonian obtained by realising the Birman parameter $W_\lambda^{-1}(T_\lambda+\alpha\mathbbm{1})$ self-adjointly on $L^2(\mathbb{R}^3)$ (instead of $H^{-\frac{1}{2}}_{W_\lambda}(\mathbb{R}^3)$) yields inconsistencies, has been known for a few years with reference to the \emph{fermionic} problem (a trimer consisting of two identical fermions of mass $m$ and a third particle of different type, and with inter-particle zero-range interaction). In that setting, a quantitative difference emerges between the mass thresholds of self-adjointness in the various $\ell$-sectors computed in \cite{Minlos-2011-preprint_May_2010,Minlos-2012-preprint_30sett2011,Minlos-2014-I_RusMathSurv,Minlos-2014-II_preprint-2012} by solving the self-adjointness problem in $L^2(\mathbb{R}^3)$, and certain spectral mass thresholds having the same conceptual meaning and obtained by formal theoretical computations and numerics within the physicists' zero-range methods \cite{Werner-Castin-2006-PRA,Kartavtsev-Malykh-2007,Castin-Tignone-2011}. The work \cite{CDFMT-2015} in collaboration with Correggi, Dell'Antonio, Figari, and Teta gave a first mathematical explanation of the situation, in the unitary regime $\alpha=0$, by means of a quadratic form construction of self-adjoint Hamiltonians of Ter-Martirosyan Skornyakov type, showing that certain non-$L^2$-charges in $H^{-\frac{1}{2}}(\mathbb{R}^3)$ were needed for a correct domain of self-adjointness. Right after, in our previous works \cite{MO-2016,MO-2017} in collaboration with Ottolini we addressed the same issue, recognising that indeed the correct self-adjointness problem for the Birman parameter is only with respect to the Hilbert space $H^{-\frac{1}{2}}_{W_\lambda}(\mathbb{R}^3)$.

For a three-body systems of \emph{three identical bosons} there is of course no mass parameter, hence no counterpart of the type of inconsistencies described above for the fermionic case.

Moreover, deceptively enough, the study of the self-adjoint extensions of $T_\lambda$ with respect to $L^2(\mathbb{R}^3)$, with initial domain, say, $H^1(\mathbb{R}^3)$, yields conclusions that are qualitatively very similar to the correct analysis of the self-adjoint realisations of $W_\lambda^{-1}T_\lambda^{(0)}$ with respect to $H^{-\frac{1}{2}}_{W_\lambda,\ell=0}(\mathbb{R}^3)$.

More precisely, in analogy to our discussion of Lemma \ref{lem:deficiency1-1}, we can easily check that the $L^2$-computation of the deficiency spaces, namely of the solutions $\xi$ to $T_\lambda^{(0)}\xi=\ii\mu\xi$ in $L^2(\mathbb{R}^3)$ for $\mu>0$, yields 
\begin{equation*}
    \theta_+(x)-\frac{4}{\pi\sqrt{3}}\int_{\mathbb{R}}\theta_+(y)\,\log \frac{\,2\cosh(x-y)+1\,}{\,2\cosh(x-y)-1\,}\,\ud y \;=\;\frac{\ii \mu}{\,2\pi^2\sqrt{\lambda}}\,\frac{\theta_+(x)}{\,\cosh x}
  \end{equation*}
 (see \eqref{radialTMS0} for a comparison). 
 The above homogeneous equation replaces \eqref{eq:eigenequation-thetaplus}, and is equivalent to
 \[
  \widehat{\gamma}(s)\,\widehat{\theta}_+(s)\;=\;\frac{\ii \mu}{\,4\pi^2\sqrt{\lambda}}\,\Big(\frac{1}{\,\cosh\frac{\pi}{2}s}*\widehat{\theta}_+\Big)(s)\,,
 \]
which replaces \eqref{eq:eigenequation-thetaplusF}. By the same reasoning of the proof of Lemma \ref{lem:deficiency1-1}, the latter equation has a unique solution, up to multiplicative prefactor, meaning that the deficiency indices are $(1,1)$. Then, mimicking the proof of Lemma \ref{lem:adjointasymtotics}, one finds a completely analogous large-momentum asymptotics for the singular elements of the adjoint, which leads to a structure of $L^2$-self-adjoint realisations of $T_\lambda^{(0)}$ that mirrors that of Proposition \ref{prop:Aellzero-selfadj}.

Nevertheless, each such domain of $L^2$-self-adjointness for $T_\lambda$ is not enough to guarantee that the corresponding three-body Hamiltonian is self-adjoint.

\section{Regularised models}

The Hamiltonian $\mathscr{H}_{0,\beta}$ constructed as canonical model in Theorem \ref{thm:H0beta} is regarded as \emph{instable}, owing to its infinite sequence of bound state energy levels accumulating to $-\infty$ (Thomas collapse).

In retrospect, this feature is due to the combination of the \emph{zero-range character} of the modelled interaction and the \emph{bosonic symmetry} of the model. As a comparison, the analogous construction for a three-body system with zero-range interaction consisting of two identical fermions and a particle of different type produces a Hamiltonian that is lower semi-bounded in a suitable regime of masses \cite{CDFMT-2012}.

Thus, in order to have a stable model one hypothesis must be removed, among the vanishing of the interaction range and the bosonic symmetry. Such an observation was made by Thomas himself in his work on the tritium  \cite{Thomas1935}, which is remarkable if one considers that at the time of \cite{Thomas1935} neither the precise nature of the nuclear interaction nor the connection between spin and statistics had been understood yet.

This poses the problem of constructing \emph{regularised} models for the bosonic trimer, which do not display the spectral instability analysed in Theorem \ref{thm:spectralanalysis}, and yet describe an interaction of zero range that retains certain spectral features such as the continuous spectrum all above some threshold, or the occurrence of negative eigenvalues accumulating at the continuum threshold (Efimof effect)), or the typical short-range profile of the wave-functions.

Abstractly speaking, one can perform a regularisation with an ad hoc energy cut-off on the canonical model $\mathscr{H}_{0,\beta}$, or also with a modified Hamiltonian in the form of a proper Schr\"{o}dinger operator with two-body potentials of small but finite (i.e., non-zero) effective range.

Somewhat intermediate between such two directions, we discuss here a construction, formerly contemplated by Minlos and Faddeev with no further analysis, of a contact interaction Hamiltonian similar to $\mathscr{H}_{0,\beta}$, but with a regularisation that has the overall effect of switching off the interaction in the vicinity of the triple coincidence configuration. The three identical bosons are allowed by the statistics to occupy that region, in which now the regularisation make them asymptotically free. This removes the instability of the canonical model. (Subsect.~\ref{sec:MFregularisation}-\ref{sec:MinFadzero}).

Further types of regularisations have been proposed, which are conceptually analogous to the idea of Minlos and Faddeev in that they introduce a non-constant, effective scattering length that tends to be suppressed (meaning, no interaction, particles are free) when the three bosons get close to the point of triple coincidence. Whereas the Minlos-Faddeev regularisation implements such idea in position coordinates, one can analogously work in momentum coordinates, making the effective scattering length vanish at large relative momenta. For comparison, we shall discuss such high energy cut-off in Subsect.~\ref{sec:highenergycutoff}

\subsection{Minlos-Faddeev regularisation}\label{sec:MFregularisation}~

This is the ultra-violet regularisation originally proposed in \cite[Sect.~6]{Minlos-Faddeev-1961-1} (see also \cite[Sect.~VI.2]{Flamand-Cargese1965} and \cite{Albe-HK-Wu-1981}), and on which a number of results with the quadratic form approach have been recently announced in \cite{Figari-Teta-2020}.

We shall study it within the operator-theoretic scheme of the present analysis.

In practice, this is a modification of the canonical model (Theorem \ref{thm:H0beta}) along the following line: the ordinary Birman parameter $3W_\lambda^{-1}(T_\lambda+\alpha\mathbbm{1})$, that selects (via Theorems \ref{thm:generalclassification} and \ref{thm:globalTMSext}) self-adjoint extensions of Ter-Martirosyan Skornyakov type of the minimal operator $\mathring{H}$ defined in \eqref{eq:domHring-initial}, is replaced by 
\begin{equation}\label{eq:newBirmanpar}
 3\,W_\lambda^{-1}(T_\lambda+\alpha\mathbbm{1}+K_\sigma)\,,\qquad \sigma\,>\,0\,,
\end{equation}
where, for generic $\sigma\in\mathbb{R}$,
\begin{equation}\label{eq:Ksigmaposition}
\begin{split}
  (K_\sigma\xi)(\yy)\;&:=\;\frac{\,\sigma_0+\sigma\,}{|\yy|}\,\xi(\yy) \\
 \sigma_0\;&:=\;2\pi\sqrt{3}\,\Big(\frac{4\pi}{3\sqrt{3}}-1\Big)\,.
\end{split}
\end{equation}

The motivation is clear from the large momentum asymptotics \eqref{eq:g-largep2-star} valid for generic $g\in\mathcal{D}(\mathring{H})$, namely
 \begin{equation*}
  \int_{\!\substack{ \\ \\ \pp_2\in\mathbb{R}^3 \\ |\pp_2|<R}}\widehat{g}(\pp_1,\pp_2)\,\ud\pp_2\;=\;4\pi R\,\widehat{\xi}(\pp_1)+\Big({\textstyle\frac{1}{3}}(\widehat{W_\lambda\eta})(\pp_1)-(\widehat{T_\lambda\xi})(\pp_1)\Big)+o(1)
 \end{equation*}
 as $R\to +\infty$. Indeed, when a self-adjoint extension is selected, out of the family \eqref{eq:family}, labelled by the Birman parameter \eqref{eq:newBirmanpar} densely defined in $H^{-\frac{1}{2}}(\mathbb{R}^3)$, then in the asymptotics above one has $\eta=3W_\lambda^{-1}(T_\lambda+\alpha\mathbbm{1}+K_\sigma)\xi$ (as prescribed by formula \eqref{eq:domDHA} of Theorem \ref{thm:generalclassification}), whence
 \begin{equation*}
  \int_{\!\substack{ \\ \\ \pp_2\in\mathbb{R}^3 \\ |\pp_2|<R}}\widehat{g}(\pp_1,\pp_2)\,\ud\pp_2\;=\;4\pi R\,\widehat{\xi}(\pp_1)+\alpha\widehat{\xi}(\pp_1)+(\widehat{K_\sigma\xi})(\pp_1)+o(1)\,,
 \end{equation*}
 and also (see Corollary \ref{cor:largepasympt-star})
  \begin{equation*}
 (2\pi)^{\frac{3}{2}} c_g\,g_{\mathrm{av}}(\yy_1;|\yy_2|)\,\stackrel{|\yy_2|\to 0}{=}\,\frac{4\pi}{|\yy_2|}\xi(\yy_1)+\Big(\alpha+ \frac{\,\sigma_0+\sigma\,}{|\yy_1|}\Big)\xi(\yy_1)+ o(1)\,.
 \end{equation*}
 Thus, the new self-adjoint Hamiltonian has a domain of functions that display a modified short-scale asymptotics, as compared to the zero-range Bethe-Peierls condition: the modification consists of the inverse negative scattering length $\alpha$ being replaced by a position-dependent value
 \begin{equation}
  \alpha_{\mathrm{eff}}\;:=\;\alpha+(\sigma_0+\sigma)/|\yy|\,,
 \end{equation}
 where $|\yy|$ is the distance of the third particle from the point towards which the other two are getting closer and closer. 
 Therefore, $\alpha_{\mathrm{eff}}\to +\infty$ when \emph{all three particles} collapse to the same spatial position, meaning that the scattering length vanishes in such limit. As vanishing scattering length means absence of interaction, the overall effect is a three-body regularisation that prevents the collapse of the system along an unbounded sequence of negative energy levels.

 By construction, $K_\sigma$ commutes with the rotations in $\mathbb{R}$ and therefore is reduced as
 \begin{equation}\label{eq:Ksigmarotations}
  K_\sigma\;=\;\bigoplus_{\ell\in\mathbb{N}}K_\sigma^{(\ell)}
 \end{equation}
 with respect to the orthogonal Hilbert space decomposition \eqref{eq:bigdecompW} of $H^{-\frac{1}{2}}_{W_\lambda}(\mathbb{R}^3)$. For all practical purposes (in view also of the discussion of Subsect.~\ref{sec:variants}) it suffices to implement the Minlos-Faddeev regularisation in the sector $\ell=0$, thus only inserting $K_\sigma^{(0)}$ in \eqref{eq:newBirmanpar}, as the canonical model is already stable in the sectors of higher angular momentum. That is the version of the regularisation that we shall study here.

 With definition \eqref{eq:Ksigmaposition} and the above considerations in mind, the same conceptual path of Section \ref{sec:lzero} can be now re-done, adapting it to the new Birman parameter \eqref{eq:newBirmanpar}.

 To this aim, we need updated expressions of the quantities of interest in terms of the re-scaled radial components associated to the charges.
 
 First of all, taking the Fourier transform in \eqref{eq:Ksigmaposition} yields
    \begin{equation}\label{eq:Ksigmamomentum}
     (\widehat{K_\sigma\xi})(\pp)\;=\;\frac{\sigma_0+\sigma}{2\pi^2}\,\int_{\mathbb{R}^3}\frac{\widehat{\xi}(\qq)}{\,|\pp-\qq|^2}\,\ud\qq\,.
    \end{equation}
 
 Let us also introduce two auxiliary functions, namely
 \begin{equation}\label{eq:bigTheta}
  \begin{split}
    \Theta^{(\theta)}_\sigma(x)\;&:=\;\theta(x)-\frac{4}{\pi\sqrt{3}}\int_{\mathbb{R}}\ud y\,\theta(y) \log \frac{\,2\cosh(x-y)+1\,}{\,2\cosh(x-y)-1\,}+ \\
     &\qquad\qquad\qquad +\frac{\,\sigma_0+\sigma\,}{\pi^3\sqrt{3}}\int_{\mathbb{R}}\ud y\,\theta(y)\,\log\Big|\coth\frac{x-y}{2}\Big|
  \end{split}
 \end{equation}
  for given $\theta$, and
  \begin{equation}\label{eq:gamma-s-distribution}
   \widehat{\gamma}_\sigma(s)\;:=\;1+\frac{1}{\,\sqrt{3}\,s\,\cosh\frac{\pi}{2}s}\Big(\frac{\,\sigma_0+\sigma\,}{\pi^2}\,\sinh\frac{\pi}{2}s-8\sinh\frac{\pi}{6}s\Big)\,.
  \end{equation}
  Observe that $\widehat{\gamma}_{-\sigma_0}$ is precisely the function $\widehat{\gamma}$ defined in \eqref{eq:gamma-distribution}. It is easy to see that $\widehat{\gamma}_\sigma$ is a smooth even function of $\mathbb{R}$, that converges asymptotically to 1 as $|s|\to +\infty$, and that for $\sigma\in[0,2\pi\sqrt{3})$ has \emph{absolute} minimum at $s=0$ of magnitude
  \begin{equation}\label{eq:gammasigmazero}
   \widehat{\gamma}_\sigma(0)\;=\;\widehat{\gamma}(0)+\frac{\,\sigma_0+\sigma}{\,2\pi\sqrt{3}\,}\;=\;\frac{\,\sigma}{\,2\pi\sqrt{3}\,}
  \end{equation}
  (see Figure \ref{fig:gammasigma}). Indeed, $\sigma_0=-2\pi\sqrt{3}\,\widehat{\gamma}(0)$.

 \begin{figure}[t!]
\includegraphics[width=8cm]{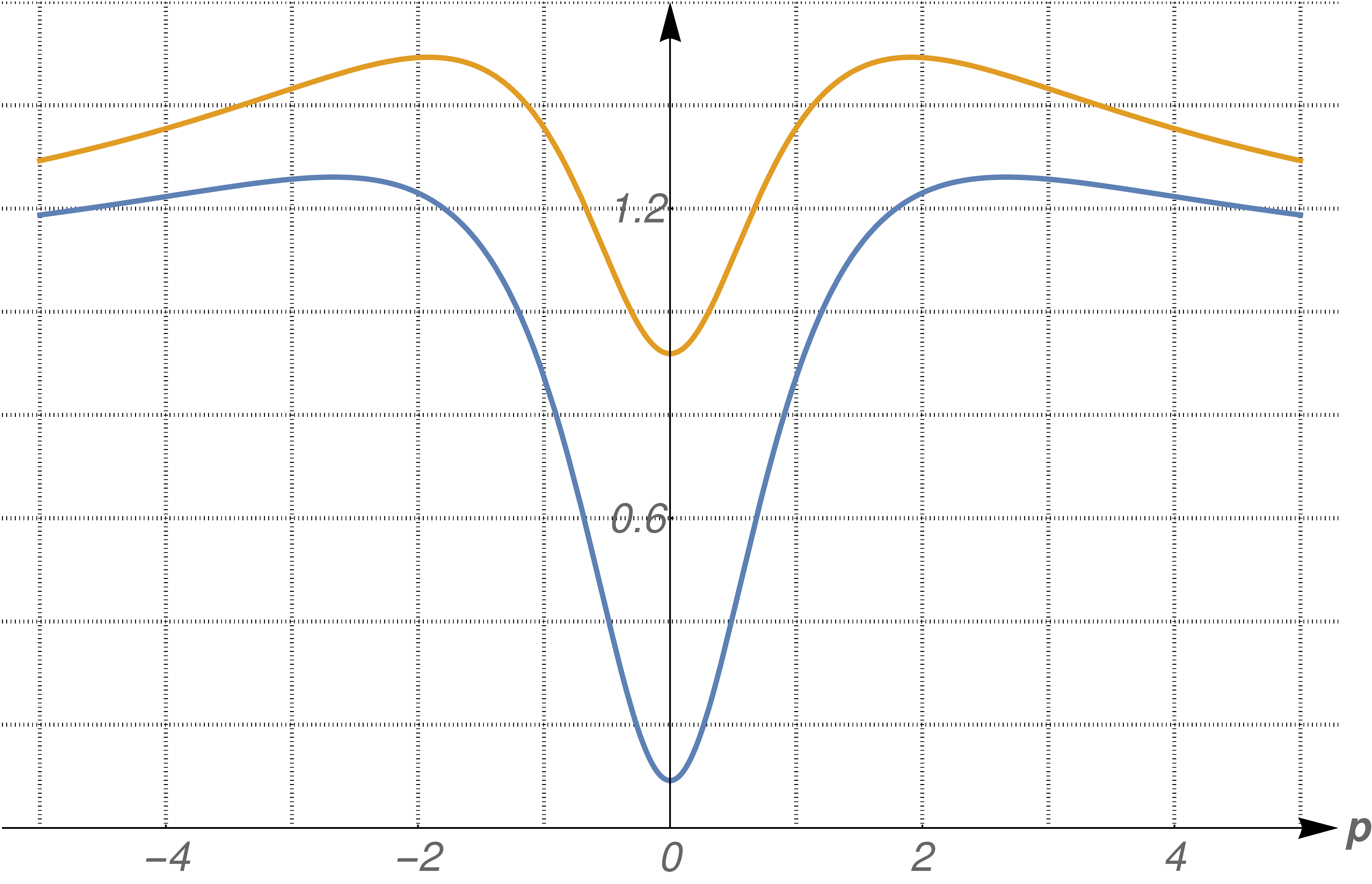}
\caption{Plot of the function $\widehat{\gamma}_\sigma(s)$ defined in \eqref{eq:gamma-s-distribution}, with parameter $\sigma=1$ (blue) and $\sigma=10$ (orange).}\label{fig:gammasigma}
\end{figure}

   Lemma \ref{lem:xithetaidentities} has the following analogue.

  \begin{lemma}\label{lem:regularisedlemma1}
   Let $\lambda>0$, $s,\sigma\in\mathbb{R}$, and $\xi$ be as in \eqref{eq:0xi}. One has the identities
         \begin{eqnarray}
         \big(\big(T_\lambda^{(0)}+K_\sigma^{(0)}\big)\xi\big)\,{\textrm{\large $\widehat{\,}$\normalsize}}\,(\pp)\!\!&=&\!\!\frac{1}{\sqrt{4\pi}\,|\pp|}\,\frac{4\pi^2}{\sqrt{3}\,}\,\Theta^{(\theta)}_\sigma(x)\,, \label{eq:Tlxi-theta-withK}\\
      \big\|\big(T_\lambda^{(0)}+K_\sigma^{(0)}\big)\xi\big\|_{H^{s}(\mathbb{R}^3)}^2\!\!&\approx&\!\!\int_{\mathbb{R}}\ud x\,(\cosh x)^{1+2s}\,\big|\Theta^{(\theta)}_\sigma(x)\big|^2\,, \label{eq:Tlambda0xi-with-theta-andK} \\
      \widehat{\Theta^{(\theta)}_\sigma}(s)\!\!&=&\!\!\widehat{\gamma}_\sigma(s)\,\widehat{\theta}(s)\,, \label{eq:ThetaTransformed} \\
       \int_{\mathbb{R}^3} \overline{\,\widehat{\xi}(\pp)}\, \big(\big(T_\lambda^{(0)}+K_\sigma^{(0)}\big)\xi\big)\,{\textrm{\large $\widehat{\,}$\normalsize}}\,(\pp)\,\ud\pp\!\!&=&\!\!\frac{\,8\pi^2}{3\sqrt{3}}\int_{\mathbb{R}}\ud s\, \widehat{\gamma}_\sigma(s)\,|\widehat{\theta}(s)|^2\,, \label{eq:xiTxi-with-theta-andK}
    \end{eqnarray}
    with $x$ and $\theta$ given by \eqref{ftheta-1}, $\Theta^{(\theta)}_{\sigma}$ given by \eqref{eq:bigTheta} with respect to the present $\theta$, and $\widehat{\gamma}_\sigma$ given by \eqref{eq:gamma-s-distribution}.
    In \eqref{eq:Tlxi-theta-withK} it is understood that $x\geqslant 0$, and \eqref{eq:Tlambda0xi-with-theta-andK} is meant as an equivalence of norms (with $\lambda$-dependent multiplicative constant).   
  \end{lemma}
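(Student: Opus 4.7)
The proof will mirror the structure of Lemma \ref{lem:xithetaidentities}, treating the contribution of $K_\sigma^{(0)}$ as an additive correction to each of the four identities already established for $T_\lambda^{(0)}$. The only genuinely new computation is the one for $K_\sigma^{(0)}$, which reduces to an angular integration plus a logarithmic change of variables.

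First I would pass to momentum representation via \eqref{eq:Ksigmamomentum} and, using the spherical symmetry $\widehat{\xi}(\qq)=(4\pi)^{-1/2}f(|\qq|)$, perform the angular integration
$$\int_{\mathbb{S}^2}\frac{\ud\Omega_{\qq}}{|\pp-\qq|^2}\;=\;\frac{2\pi}{|\pp|\,|\qq|}\,\log\Big|\frac{\,|\pp|+|\qq|\,}{\,|\pp|-|\qq|\,}\Big|\,,$$
which reduces $(K_\sigma^{(0)}\xi)^{\widehat{\,}}$ to a one-dimensional integral. Next, I would switch to the re-scaled radial variables by means of \eqref{ftheta-1}--\eqref{ftheta-3-eq:pxchangevar}: writing $p=\tfrac{2\sqrt{\lambda}}{\sqrt{3}}\sinh x$, $q=\tfrac{2\sqrt{\lambda}}{\sqrt{3}}\sinh y$ and using the identity
$$\log\Big|\frac{\,\sinh x+\sinh y\,}{\,\sinh x-\sinh y\,}\Big|\;=\;\log\Big|\tanh{\textstyle\frac{x+y}{2}}\Big|+\log\Big|\coth{\textstyle\frac{x-y}{2}}\Big|\,,$$
together with the fact that under $y\mapsto -y$ the first log maps to $-\log|\coth\tfrac{x-y}{2}|$ while $\theta$ changes sign, I can merge the two contributions and (upon extending the integral to all of $\mathbb{R}$ by the odd parity of $\theta$) conclude that
$$\big(\widehat{K_\sigma^{(0)}\xi}\big)(\pp)\;=\;\frac{1}{\sqrt{4\pi}\,|\pp|}\,\frac{4\pi^2}{\sqrt{3}}\cdot\frac{\,\sigma_0+\sigma\,}{\pi^3\sqrt{3}}\int_{\mathbb{R}}\theta(y)\,\log\Big|\coth{\textstyle\frac{x-y}{2}}\Big|\,\ud y\,.$$
Summing with \eqref{eq:Tlxi-theta} from Lemma \ref{lem:xithetaidentities} yields \eqref{eq:Tlxi-theta-withK} upon recognizing the definition \eqref{eq:bigTheta} of $\Theta^{(\theta)}_\sigma$. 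Identity \eqref{eq:Tlambda0xi-with-theta-andK} then follows verbatim from the change of variable \eqref{ftheta-3-eq:pxchangevar} applied to $\|\,\cdot\,\|_{H^s(\mathbb{R}^3)}^2$, exactly as in the proof of \eqref{eq:Tlambda0xi-with-theta}.

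For identity \eqref{eq:ThetaTransformed}, the first two summands in \eqref{eq:bigTheta} transform by \eqref{eq:fourierconvolution} and the definition \eqref{eq:gamma-distribution} of $\widehat{\gamma}$ to $\widehat{\gamma}(s)\widehat{\theta}(s)$. For the remaining summand, I would invoke the classical Fourier identity (e.g.\ Gradshteyn--Ryzhik 4.113.2)
$$\Big(\log\big|\coth({\textstyle\frac{\cdot}{2}})\big|\Big){\!\!\textrm{\huge ${\,}^{\widehat{\,}}$\normalsize}}\,(s)\;=\;\sqrt{\frac{\pi}{2}}\,\frac{\tanh\frac{\pi}{2}s}{s}\,,$$
so that the convolution against $\theta$ Fourier-transforms to $\frac{\pi\sinh(\pi s/2)}{\,s\,\cosh(\pi s/2)}\widehat{\theta}(s)$. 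Adding everything up and comparing with the definition \eqref{eq:gamma-s-distribution} of $\widehat{\gamma}_\sigma$ gives exactly $\widehat{\Theta^{(\theta)}_\sigma}=\widehat{\gamma}_\sigma\widehat{\theta}$.

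Finally, \eqref{eq:xiTxi-with-theta-andK} follows by expressing the pairing $\int\overline{\widehat{\xi}}\,(T_\lambda^{(0)}+K_\sigma^{(0)})\xi)^{\widehat{\,}}$ via \eqref{eq:Tlxi-theta-withK}, performing the change of variable \eqref{ftheta-3-eq:pxchangevar} (producing a Jacobian factor $\frac{2\sqrt{\lambda}}{\sqrt{3}}$ and exploiting $f\bar{f}=|\theta|^2/(\lambda\sinh x\cosh x)^2$), extending the integration to $\mathbb{R}$ by the even parity of $|\theta|^2$, and applying Parseval together with \eqref{eq:ThetaTransformed}. The main obstacle is the careful bookkeeping in the derivation of \eqref{eq:Tlxi-theta-withK}: the angular integral produces combinations of $\log|\tanh|$ and $\log|\coth|$ that must be collapsed into the single $\log|\coth\tfrac{x-y}{2}|$ appearing in \eqref{eq:bigTheta}, and the numerical prefactors (already delicate in Lemma \ref{lem:xithetaidentities}) must be tracked so that the $\sigma_0$-shift in \eqref{eq:Ksigmaposition} matches the normalization of $\widehat{\gamma}_\sigma$ in \eqref{eq:gamma-s-distribution}, consistently with the sanity check $\widehat{\gamma}_{-\sigma_0}=\widehat{\gamma}$.
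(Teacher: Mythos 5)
Your proposal is correct and follows essentially the same route as the paper's proof: the angular integration of \eqref{eq:Ksigmamomentum} reducing to the kernel $\log\frac{p+q}{|p-q|}$, the change of variables \eqref{ftheta-1}--\eqref{ftheta-3-eq:pxchangevar} with the factorisation into $\log|\tanh\frac{x+y}{2}|+\log|\coth\frac{x-y}{2}|$, the odd-parity merging onto $\mathbb{R}$, the Fourier transform $\big(\log|\coth\frac{x}{2}|\big)^{\widehat{\,}}(s)=\sqrt{\pi/2}\,\tanh(\frac{\pi}{2}s)/s$, and Parseval for the quadratic-form identity, all with the correct prefactors. The only cosmetic difference is the table reference used for that Fourier transform.
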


  \begin{proof}
   Let us focus on the $K_\sigma$-term. From \eqref{eq:Ksigmamomentum}, in complete analogy with \eqref{eq:fellsector},
   \[
    (\widehat{K_\sigma^{(\ell)}\xi^{(\ell)}})(\pp)\;=\;\frac{\,\sigma_0+\sigma\,}{\pi}\sum_{n=-\ell}^\ell Y_{\ell,n}(\Omega_{\pp})\int_{\mathbb{R}^+}\ud q\,q^2\,f_{\ell,n}^{(\xi)}(q)\int_{-1}^1\ud t\,\frac{P_\ell(t)}{\,\pp^2+q^2-2|\pp|qt\,}\,,
   \]
   where for $\xi$ we took the expansion \eqref{eq:xihatangularexpansion}, namely
   \[
  \widehat{\xi}(\pp)\;=\;\sum_{\ell=0}^\infty\sum_{n=-\ell}^\ell f_{\ell,n}^{(\xi)}(|\pp|) Y_{\ell,n}(\Omega_{\pp})\;=\;\sum_{\ell=0}^\infty\widehat{\xi^{(\ell)}}(\pp)\,.
\]
  Thus, for a spherically symmetric charge $\widehat{\xi}(\pp)=\frac{1}{\sqrt{4\pi}}f(|\pp|)$,
   \[
    (\widehat{K_\sigma^{(0)}\xi})(\pp)\;=\;\frac{1}{\sqrt{4\pi}}\,\frac{1}{p}\,\frac{\,\sigma_0+\sigma\,}{\pi}\int_{\mathbb{R}^+}\ud q\,qf(q)\,\log\frac{p+q}{\,|p-q|\,}\,,\qquad p=|\pp|\,.
   \]
  Let $\theta$ be the re-scaled radial function \eqref{ftheta-1} associated with $\xi$ (and $f$) above. With $p=\frac{2\sqrt{\lambda}}{\sqrt{3}}\,\sinh x$ and $q=\frac{2\sqrt{\lambda}}{\sqrt{3}}\,\sinh y$ one has
  \[
   \frac{p+q}{\,|p-q|\,}\;=\;\frac{\sinh\frac{x+y}{2}\,\cosh\frac{x-y}{2}}{\,|\sinh\frac{x-y}{2}\,\cosh\frac{x+y}{2}|\,}\;=\;\Big|\coth\frac{x-y}{2}\Big|\,\tanh\frac{x+y}{2}\,,
  \]
  which, together with \eqref{ftheta-2}, gives
  \[
   \begin{split}
    \int_{\mathbb{R}^+}\ud q\,qf(q)\,\log\frac{p+q}{\,|p-q|\,}\;&=\;\frac{4}{3}\int_{\mathbb{R}^+}\ud y\,\theta(y)\Big(\log\Big|\tanh\frac{x+y}{2}\Big|+\log\Big|\coth\frac{x-y}{2}\Big|\Big) \\
    &=\;\frac{4}{3}\int_{\mathbb{R}}\ud y\,\theta(y)\,\log\Big|\coth\frac{x-y}{2}\Big|\,,
   \end{split}
  \]
  the last identity following from the odd-parity extension of $\theta$ over $\mathbb{R}$. Therefore,
  \[
    (\widehat{K_\sigma^{(0)}\xi})(\pp)\;=\;\frac{1}{\sqrt{4\pi}}\,\frac{1}{|\pp|}\,\frac{\,4(\sigma_0+\sigma)\,}{3\pi}\int_{\mathbb{R}}\ud y\,\theta(y)\,\log\Big|\coth\frac{x-y}{2}\Big|\,.
   \]
   Combining this with \eqref{eq:Tlxi-theta} yields \eqref{eq:Tlxi-theta-withK}.

  Formula \eqref{eq:Tlambda0xi-with-theta-andK} is proved from \eqref{eq:Tlxi-theta-withK} with the very same reasoning used for the analogous formula \eqref{eq:Tlambda0xi-with-theta} in Lemma \ref{lem:xithetaidentities}.

  Concerning \eqref{eq:ThetaTransformed}, we observe that the first two summands in the expression \eqref{eq:bigTheta} of $\Theta^{(\theta)}_\sigma$ have precisely Fourier transform $\widehat{\gamma}(s)\widehat{\theta}(s)$, as determined already in the proof of Lemma \ref{lem:xithetaidentities}, with $\widehat{\gamma}$ defined in \eqref{eq:gamma-distribution}. Let us focus on the third summand, namely the one with the $\sigma$-dependent pre-factor. One has
  \[
   \begin{split}
    \Big(\log\Big|\coth\frac{x}{2}\Big|\,\Big){\!\!\textrm{\huge ${\,}^{\widehat{\,}}$\normalsize}}\,(s)\;&=\;\sqrt{\frac{2}{\pi}}\int_{\mathbb{R}^+}\ud x\,\cos sx\,\log\coth\frac{x}{2} \\
    &=\;\sqrt{\frac{2}{\pi}}\int_{\mathbb{R}^+}\ud x\,\cos sx\,\big(\log(1+e^{-x})-\log(1-e^{-x})\big)\,,
   \end{split}
  \]
 and  (see, e.g., \cite[I.1.5.(13)-(14)]{Erdelyi-Tables1})
 \[
  \begin{split}
   \int_{\mathbb{R}^+}\ud x\,\cos sx\,\log(1+e^{-x})\;&=\;\frac{1}{\,2s^2}-\frac{\pi}{\,2s}\,\frac{1}{\sinh \pi s}\,, \\
   \int_{\mathbb{R}^+}\ud x\,\cos sx\,\log(1-e^{-x})\;&=\;\frac{1}{\,2s^2}-\frac{\pi}{\,2s}\,\coth \pi s\,,
  \end{split}
 \]
 whence
 \[
  \Big(\log\Big|\coth\frac{x}{2}\Big|\,\Big){\!\!\textrm{\huge ${\,}^{\widehat{\,}}$\normalsize}}\,(s)\;=\;\sqrt{\frac{\pi}{2}}\,\frac{\,\tanh\frac{\pi}{2}s}{s}\,.
 \]
 Therefore, the Fourier transform of the last summand in the expression \eqref{eq:bigTheta} of $\Theta^{(\theta)}_\sigma$ is
 \[
 \frac{\,\sigma_0+\sigma\,}{\pi^3\sqrt{3}}\,\Big(\theta*\log\Big|\coth\frac{x}{2}\Big|\Big){\!\!\textrm{\huge ${\,}^{\widehat{\,}}$\normalsize}}\,(s)\;=\;\frac{\,\sigma_0+\sigma\,}{\,\pi^2\sqrt{3}\,}\,\frac{\,\tanh\frac{\pi}{2}s}{s}\:\widehat{\theta}(s)\,.
 \]
 Adding this term to $\widehat{\gamma}(s)\widehat{\theta}(s)$ yields \eqref{eq:gamma-s-distribution} and hence \eqref{eq:ThetaTransformed}.

 Last, concerning \eqref{eq:xiTxi-with-theta-andK}, in complete analogy with the proof of \eqref{eq:xiTxi-with-theta}, we find
  \[
   \begin{split}
    \int_{\mathbb{R}^3} \overline{\,\widehat{\xi}(\pp)}\, \big(\big(T_\lambda^{(0)}+K_\sigma^{(0)}\big)\xi\big)\,{\textrm{\large $\widehat{\,}$\normalsize}}\,(\pp)\,\ud\pp\;&=\;\int_{\mathbb{R}^+}\ud p\,p^2\,\overline{f(p)}\,\frac{4\pi^2}{\,p\sqrt{3}\,}\,\Theta^{(\theta)}_\sigma(x(p)) \\
    &=\;\frac{8\pi^2}{3\sqrt{3}\,}\int_{\mathbb{R}}\ud x\,\overline{\theta(x)}\,\Theta^{(\theta)}_\sigma(x) \\
    &=\;\frac{\,8\pi^2}{3\sqrt{3}}\int_{\mathbb{R}}\ud s\, \widehat{\gamma}_\sigma(s)\,|\widehat{\theta}(s)|^2
   \end{split}
  \]
  having applied \eqref{eq:Tlxi-theta-withK} in the first identity, \eqref{ftheta-2}-\eqref{ftheta-3-eq:pxchangevar} and the odd parity in the second, and Parseval's identity in the third.
  \end{proof}

  \begin{corollary}\label{cor:equivalentH12norm}
   For $\lambda,\sigma>0$, the map
   \[
    \xi\;\longmapsto\;\bigg(\int_{\mathbb{R}^3} \overline{\,\widehat{\xi}(\pp)}\, \big(\big(T_\lambda^{(0)}+K_\sigma^{(0)}\big)\xi\big)\,{\textrm{\large $\widehat{\,}$\normalsize}}\,(\pp)\,\ud\pp\bigg)^{\!\frac{1}{2}}
   \]
   defines an equivalent norm in $H^{\frac{1}{2}}_{\ell=0}(\mathbb{R}^3)$.
  \end{corollary}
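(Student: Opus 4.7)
The plan is to reduce the claim to a pointwise two-sided estimate on the Fourier multiplier $\widehat{\gamma}_\sigma$ appearing in Lemma \ref{lem:regularisedlemma1}. Formula \eqref{eq:xiTxi-with-theta-andK} rewrites the squared candidate norm as $\tfrac{8\pi^2}{3\sqrt{3}}\int_{\mathbb{R}}\widehat{\gamma}_\sigma(s)\,|\widehat{\theta}(s)|^2\,\ud s$, with $\theta$ the re-scaled radial component of the spherically symmetric charge $\xi$ at shift parameter $\lambda$. Specialising the norm equivalence \eqref{eq:mellinnorms} with $s=\tfrac{1}{2}$ and invoking Parseval then yields
\[
\|\xi\|_{H^{\frac{1}{2}}(\mathbb{R}^3)}^2\;\approx\;\|\theta\|_{L^2(\mathbb{R})}^2\;=\;\int_{\mathbb{R}}|\widehat{\theta}(s)|^2\,\ud s,
\]
so that the corollary is equivalent to the existence of constants $0<c_\sigma\leq C_\sigma<+\infty$ with $c_\sigma\leq\widehat{\gamma}_\sigma(s)\leq C_\sigma$ for every $s\in\mathbb{R}$. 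The upper bound is immediate from \eqref{eq:gamma-s-distribution}: $\widehat{\gamma}_\sigma$ is continuous and even on $\mathbb{R}$, and the correction to the constant $1$ vanishes as $|s|\to\infty$ thanks to the factor $s\cosh\tfrac{\pi}{2}s$ in the denominator, so continuity together with the limit at infinity forces $C_\sigma=\sup_s\widehat{\gamma}_\sigma<+\infty$.

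For the lower bound it suffices to prove strict positivity $\widehat{\gamma}_\sigma(s)>0$ for every $s\in\mathbb{R}$, after which continuity and the limit $\widehat{\gamma}_\sigma(s)\to 1$ at infinity automatically upgrade it to a uniform positive lower bound by a standard compactness argument. I would rewrite
\[
\widehat{\gamma}_\sigma(s)\;=\;\widehat{\gamma}(s)+\frac{\sigma+\sigma_0}{\pi^2\sqrt{3}}\cdot\frac{\tanh\tfrac{\pi}{2}s}{s},
\]
where $(\tanh\tfrac{\pi}{2}s)/s\geq 0$ (interpreted at $s=0$ by continuity as $\pi/2$) and $\sigma+\sigma_0>0$ for every $\sigma>0$, which exhibits $\sigma\mapsto\widehat{\gamma}_\sigma(s)$ as pointwise non-decreasing on $(0,+\infty)$. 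For $\sigma\in(0,2\pi\sqrt{3})$ the text records that the absolute minimum of $\widehat{\gamma}_\sigma$ is attained at $s=0$ and equals $\widehat{\gamma}_\sigma(0)=\sigma/(2\pi\sqrt{3})>0$, giving the desired strict positivity. For $\sigma\geq 2\pi\sqrt{3}$, picking any fixed $\sigma'\in(0,2\pi\sqrt{3})$, the pointwise monotonicity in $\sigma$ delivers $\widehat{\gamma}_\sigma(s)\geq \widehat{\gamma}_{\sigma'}(s)\geq \sigma'/(2\pi\sqrt{3})>0$ on the whole real line.

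The main obstacle I anticipate is not analytical but concerns the tacit input from the text, namely the location of the absolute minimum of $\widehat{\gamma}_\sigma$ at $s=0$ in the regime $\sigma\in(0,2\pi\sqrt{3})$. A fully self-contained proof would require analysing the sign of $\widehat{\gamma}_\sigma'(s)$ from \eqref{eq:gamma-s-distribution}, exploiting evenness and the fact that $\widehat{\gamma}_\sigma(0)<1=\lim_{|s|\to\infty}\widehat{\gamma}_\sigma(s)$ in that regime, together with a convexity or monotone-sign argument on the correction term; here this information is invoked as already established in the paragraph introducing $\widehat{\gamma}_\sigma$.
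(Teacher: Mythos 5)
Your proposal is correct and follows essentially the same route as the paper: the paper's proof likewise combines \eqref{eq:xiTxi-with-theta-andK} with \eqref{eq:mellinnorms} and simply invokes the uniform boundedness and strict positivity of $\widehat{\gamma}_\sigma$ for $\sigma>0$ as recorded in the paragraph where \eqref{eq:gamma-s-distribution} is introduced. Your additional monotonicity-in-$\sigma$ argument, writing $\widehat{\gamma}_\sigma(s)=\widehat{\gamma}(s)+\frac{\sigma_0+\sigma}{\pi^2\sqrt{3}}\,\frac{\tanh\frac{\pi}{2}s}{s}$ to cover the regime $\sigma\geqslant 2\pi\sqrt{3}$ (where the paper's statement about the absolute minimum at $s=0$ is only asserted for $\sigma\in[0,2\pi\sqrt{3})$), is a legitimate and welcome supplement to what the paper leaves implicit.
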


  \begin{proof}
   Because of \eqref{eq:xiTxi-with-theta-andK} and the fact that $\widehat{\gamma}_\sigma$ is uniformly bounded and strictly positive when $\sigma>0$, up to inessential pre-factors each $\xi$ is mapped to $\|\theta^{(\xi)}\|_{L^2(\mathbb{R})}$ and hence to $\|\xi\|_{H^{\frac{1}{2}}(\mathbb{R}^3)}$, owing to \eqref{eq:mellinnorms}.   
  \end{proof}

 \subsection{Regularisation in the sector $\ell=0$}\label{sec:MinFadzero}~

 Let us now make the Birman parameter explicit for a self-adjoint extension of $\mathring{H}$ when the Minlos-Faddeev regularisation is implemented with respect to the canonical construction of extensions of Ter-Martirosyan Skornyakov type. As argued in the previous Subsection, we only need to replace the Birman parameter $\mathcal{A}_{\lambda,\beta}^{(0)}$ (defined in \eqref{eq:Azerolambda}-\eqref{eq:domainD0beta}) of the sector $\ell=0$ with a \emph{regularised} version, that we shall denote by $\mathcal{R}_{\lambda,\sigma}^{(0)}$.

 Then, with such $\mathcal{R}_{\lambda,\sigma}^{(0)}$ at hand, and with all other canonical Birman parameters $\mathcal{A}_\lambda^{(\ell)}$, $\ell\in\mathbb{N}$, we construct the associated self-adjoint Hamiltonian by means of Theorem \ref{thm:generalclassification}.

 To begin with, for $\sigma>0$ we define (in analogy to \eqref{eq:Dtilde0}) the subspace
  \begin{equation}\label{eq:newD0regularised}
   \widetilde{\mathsf{D}}_{0,\sigma}\;:=\;\left\{
   \xi\in H^{-\frac{1}{2}}_{\ell=0}(\mathbb{R}^3)\left|
   \begin{array}{c}
    \textrm{$\xi$ has re-scaled radial component} \\
     \theta=\big(\widehat{\Theta}/\widehat{\gamma}_\sigma)\big)^{\!\vee} \\
     \textrm{for }\;\Theta\in C^\infty_{0,\mathrm{odd}}(\mathbb{R}_x)
   \end{array}
   \!\right.\right\}\,.
  \end{equation}
  Here the subscript `odd' indicates functions with odd parity and $\widehat{\gamma}_\sigma$ is defined in \eqref{eq:gamma-s-distribution}.  The correspondence between $\xi$ and its re-scaled radial component $\theta$ is given by \eqref{eq:0xi}-\eqref{ftheta-1}. It is tacitly understood that the re-scaled radial components are all taken with the same parameter $\lambda>0$ in the definition \eqref{ftheta-1}: this does not mean $\widetilde{\mathsf{D}}_{0,\sigma}$ is a $\lambda$-dependent subspace, as one can easily convince oneself, the choice of $\lambda$ only fixes the convention for representing its elements in terms of the corresponding $\theta$.

  \begin{lemma}\label{lem:DD0properties} Let $\lambda,\sigma>0$.
  \begin{itemize}
   \item[(i)] $\widetilde{\mathsf{D}}_{0,\sigma}$ is dense in $H^{\frac{1}{2}}_{\ell=0}(\mathbb{R}^3)$.
   \item[(ii)] $(T_\lambda^{(0)}+K_\sigma^{(0)})\widetilde{\mathsf{D}}_{0,\sigma}\subset H^{s}_{\ell=0}(\mathbb{R}^3)$ for every $s\in\mathbb{R}$.
  \end{itemize}
  \end{lemma}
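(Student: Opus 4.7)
The plan is to exploit the fact that, when $\sigma>0$, the function $\widehat{\gamma}_\sigma$ from \eqref{eq:gamma-s-distribution} is smooth, even, uniformly bounded above and bounded below by $\widehat{\gamma}_\sigma(0)=\sigma/(2\pi\sqrt{3})>0$, so that $1/\widehat{\gamma}_\sigma$ is itself smooth, even and bounded with bounded derivatives. This is the structural ingredient that was unavailable in the unregularised $\ell=0$ sector (where $\widehat{\gamma}$ vanishes at $\pm s_0$), and it makes the analysis quite transparent.

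For part (ii), I will apply formula \eqref{eq:Tlambda0xi-with-theta-andK} of Lemma \ref{lem:regularisedlemma1}, which reduces the $H^s$-norm estimate to controlling
\[
\int_{\mathbb{R}} (\cosh x)^{1+2s}\,|\Theta^{(\theta)}_\sigma(x)|^2\,\ud x\,.
\]
By formula \eqref{eq:ThetaTransformed} we have $\widehat{\Theta^{(\theta)}_\sigma}=\widehat{\gamma}_\sigma\,\widehat{\theta}$, and by the very definition \eqref{eq:newD0regularised} of $\widetilde{\mathsf{D}}_{0,\sigma}$ we have $\widehat{\theta}=\widehat{\Theta}/\widehat{\gamma}_\sigma$ for some $\Theta\in C^\infty_{0,\mathrm{odd}}(\mathbb{R})$. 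Combining the two, $\widehat{\Theta^{(\theta)}_\sigma}=\widehat{\Theta}$, hence $\Theta^{(\theta)}_\sigma=\Theta$ is smooth and compactly supported, and the integral above is finite for every $s\in\mathbb{R}$. This proves part (ii).

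For part (i), I will use the norm equivalence \eqref{eq:mellinnorms} with $s=\tfrac12$: density of $\widetilde{\mathsf{D}}_{0,\sigma}$ in $H^{\frac12}_{\ell=0}(\mathbb{R}^3)$ is equivalent to density, in $L^2_{\mathrm{odd}}(\mathbb{R})$, of the associated re-scaled radial components $\theta=(\widehat{\Theta}/\widehat{\gamma}_\sigma)^{\!\vee}$ as $\Theta$ ranges over $C^\infty_{0,\mathrm{odd}}(\mathbb{R})$. (Note that each such $\widehat{\Theta}/\widehat{\gamma}_\sigma$ lies in the Schwartz class, so $\theta\in L^2_{\mathrm{odd}}(\mathbb{R})$.) I will argue by duality: if $\theta_0\in L^2_{\mathrm{odd}}(\mathbb{R})$ is orthogonal to every such $\theta$, then by Plancherel
\[
0\;=\;\langle\theta_0,\theta\rangle_{L^2}\;=\;\langle \widehat{\theta_0},\widehat{\Theta}/\widehat{\gamma}_\sigma\rangle_{L^2}\;=\;\langle\widehat{\theta_0}/\widehat{\gamma}_\sigma,\widehat{\Theta}\rangle_{L^2}\qquad\forall\,\Theta\in C^\infty_{0,\mathrm{odd}}(\mathbb{R})\,,
\]
where the manipulation is legitimate because $\widehat{\gamma}_\sigma$ is real, even and bounded away from zero, so $\widehat{\theta_0}/\widehat{\gamma}_\sigma\in L^2_{\mathrm{odd}}(\mathbb{R})$. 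Since $\{\widehat{\Theta}\,|\,\Theta\in C^\infty_{0,\mathrm{odd}}(\mathbb{R})\}$ is dense in $L^2_{\mathrm{odd}}(\mathbb{R})$, we conclude $\widehat{\theta_0}/\widehat{\gamma}_\sigma\equiv 0$, hence $\theta_0\equiv 0$.

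No serious obstacle is expected: the only delicate point is the invertibility and regularity of $\widehat{\gamma}_\sigma$, which is already secured by the strict positivity $\widehat{\gamma}_\sigma\geqslant\sigma/(2\pi\sqrt{3})$ for $\sigma>0$. Everything else reduces to Fourier/Plancherel manipulations and the identity $\Theta^{(\theta)}_\sigma=\Theta$ encoded in the very definition of $\widetilde{\mathsf{D}}_{0,\sigma}$.
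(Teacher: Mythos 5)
Your proof is correct and follows essentially the same route as the paper's: part (ii) via the identity $\Theta^{(\theta)}_\sigma=\Theta\in C^\infty_{0,\mathrm{odd}}(\mathbb{R})$ obtained by combining \eqref{eq:ThetaTransformed} with the definition of $\widetilde{\mathsf{D}}_{0,\sigma}$ and then invoking \eqref{eq:Tlambda0xi-with-theta-andK}, and part (i) via \eqref{eq:mellinnorms} plus a duality argument resting on the uniform strict positivity of $\widehat{\gamma}_\sigma$. The only cosmetic difference is that you carry out the orthogonality step in Fourier variables while the paper transforms back to the $x$-side; both are equivalent.
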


  \begin{proof} (i) For $\xi\in  \widetilde{\mathsf{D}}_{0,\sigma}$, the identity $\widehat{\gamma}_\sigma\widehat{\theta}=\widehat{\Theta}$ and \eqref{eq:mellinnorms} imply
  \[
  \begin{split}
   \|\xi\|_{H^{\frac{1}{2}}(\mathbb{R}^3)}\;\approx\;\|\theta\|_{L^2(\mathbb{R})}\;&=\;\|\widehat{\theta}\|_{L^2(\mathbb{R})}\;\leqslant\;\|\widehat{\gamma}_\sigma^{-1}\|_{L^\infty(\mathbb{R})}\|\widehat{\Theta}\|_{L^2(\mathbb{R})} \\
   &=\;(\widehat{\gamma}_\sigma(0))^{-1}\|\Theta\|_{L^2(\mathbb{R})}\;<\;+\infty\,,
  \end{split}
  \]
  owing to \eqref{eq:gammasigmazero} and to the fact that $\Theta$ is smooth and with compact support.
  Because of \eqref{eq:mellinnorms}, the density of the $\xi$'s of $\widetilde{\mathsf{D}}_{0,\sigma}$ in $H^{\frac{1}{2}}_{\ell=0}(\mathbb{R}^3)$ is equivalent to the density of the associated $\theta$'s in $L^2_{\mathrm{odd}}(\mathbb{R})$. If in the latter Hilbert space a function $\theta_0$ was orthogonal to all such $\theta$'s, then
  \[
   0\;=\;\int_{\mathbb{R}}\overline{\theta_0(x)}\,\theta(x)\,\ud x\;=\;\int_{\mathbb{R}}\overline{\widehat{\theta}_0(s)}\,\frac{\widehat{\Theta}(s)}{\widehat{\gamma}_\sigma(s)}\,\ud s\;=\;\int_{\mathbb{R}}\overline{\big(\widehat{\theta}_0/\widehat{\gamma}_\sigma\big)^{\!\vee}\!(x)}\,\Theta(x)\,\ud x
  \]
  for all $\Theta\in C^\infty_{0,\mathrm{odd}}(\mathbb{R})$: as $\widehat{\gamma}_\sigma$ is uniformly bounded and strictly positive, this implies $\theta_0\equiv 0$.
  
  (ii) On the one hand $\widehat{\gamma}_\sigma\widehat{\theta}=\widehat{\Theta}$ by the assumption that $\xi\in  \widetilde{\mathsf{D}}_{0,\sigma}$, on the other hand $\widehat{\gamma}_\sigma\widehat{\theta}=\widehat{\Theta^{(\theta)}_\sigma}$, owing to \eqref{eq:ThetaTransformed}, whence $\Theta^{(\theta)}_\sigma=\Theta\in C^\infty_{0,\mathrm{odd}}(\mathbb{R})$. Plugging this information into \eqref{eq:Tlambda0xi-with-theta-andK} yields the conclusion.   
  \end{proof}

  Next, for $\lambda,\sigma>0$ we define
  \begin{equation}
   \begin{split}
    \mathcal{D}\big(\widetilde{\mathcal{R}}_{\lambda,\sigma}^{(0)}\big)\;&:=\;\widetilde{\mathsf{D}}_{0,\sigma} \\
    \widetilde{\mathcal{R}}^{(0)}_{\lambda,\sigma}\;&:=\; 3W_\lambda^{-1}\big(T_\lambda^{(0)}+K_\sigma^{(0)}\big)\,.
   \end{split}
  \end{equation}
 Lemma \ref{lem:DD0properties}, together with Corollary \ref{cor:equivalentH12norm}, guarantees that this is a well-posed definition for a densely defined, symmetric, and coercive operator on $H^{-\frac{1}{2}}_{W_\lambda,\ell=0}(\mathbb{R}^3)$. As such, having a strictly positive lower bound, $\widetilde{\mathcal{R}}^{(0)}_{\lambda,\sigma}$ has the Friedrichs extension. That will be our final Birman parameter. In analogy with Proposition \ref{prop:Alambdaellnot0}, we prove the following.

 \begin{proposition}\label{prop:NEW-Birman-param-selfadj}
  Let $\lambda,\sigma>0$. Define
  \begin{equation}
   \mathsf{D}_{0,\sigma}\;:=\;\big\{\xi\in H^{\frac{1}{2}}_{\ell=0}(\mathbb{R}^3)\,\big|\,\big(T_\lambda^{(0)}+K_\sigma^{(0)}\big)\xi\in H^{\frac{1}{2}}_{\ell=0}(\mathbb{R}^3)\big\}\,.
  \end{equation}
  The operator
  \begin{equation}\label{eq:RlambdasigmaOPERATOR}
   \begin{split}
    \mathcal{D}\big(\mathcal{R}_{\lambda,\sigma}^{(0)}\big)\;&:=\;\mathsf{D}_{0,\sigma} \\
    \mathcal{R}_{\lambda,\sigma}^{(0)})\;&:=\;3W_\lambda^{-1}\big(T_\lambda^{(0)}+K_\sigma^{(0)}\big)
   \end{split}
  \end{equation}
 is the Friedrichs extension of $\widetilde{\mathcal{R}}^{(0)}_{\lambda,\sigma}$ with respect to $H^{-\frac{1}{2}}_{W_\lambda,\ell=0}(\mathbb{R}^3)$. Its sesquilinear form is
 \begin{equation}\label{eq:RlambdasigmaFORM}
   \begin{split}
    \mathcal{D}\big[\mathcal{R}_{\lambda,\sigma}^{(0)}\big]\;&=\;H^{\frac{1}{2}}_{\ell=0}(\mathbb{R}^3) \\
    \mathcal{R}_{\lambda,\sigma}^{(0)}[\eta,\xi]\;&=\;3\int_{\mathbb{R}^3} \overline{\,\widehat{\xi}(\pp)}\, \big(\big(T_\lambda^{(0)}+K_{\sigma}^{(0)}\big)\xi\big)\,{\textrm{\large $\widehat{\,}$\normalsize}}\,(\pp)\,.
   \end{split}
\end{equation}
 \end{proposition}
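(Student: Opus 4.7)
The strategy closely parallels the proof of Proposition \ref{prop:Alambdaellnot0}, with the new ingredient being that we now work with the regularised operator $T_\lambda^{(0)}+K_\sigma^{(0)}$ instead of $T_\lambda^{(\ell)}+\alpha\mathbbm{1}$. The replacement is clean because $K_\sigma^{(0)}$ has precisely been calibrated (through the definition \eqref{eq:Ksigmaposition}--\eqref{eq:gamma-s-distribution} and the choice $\sigma>0$) so that the singular behaviour of the spectral density $\widehat{\gamma}$ near $s=0$ is removed and $\widehat{\gamma}_\sigma$ becomes uniformly bounded away from $0$ by \eqref{eq:gammasigmazero}.

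First I would verify that \eqref{eq:RlambdasigmaOPERATOR} is a well-posed definition: by construction, $\xi\in\mathsf{D}_{0,\sigma}$ means $(T_\lambda^{(0)}+K_\sigma^{(0)})\xi\in H^{\frac{1}{2}}_{\ell=0}(\mathbb{R}^3)=\mathrm{ran}\, W_\lambda^{(0)}$ (Lemma \ref{lem:Wlambdaproperties}(ii)), so $W_\lambda^{-1}(T_\lambda^{(0)}+K_\sigma^{(0)})\xi$ makes sense in $H^{-\frac{1}{2}}_{W_\lambda,\ell=0}(\mathbb{R}^3)$. Next, I would check that $\widetilde{\mathcal{R}}_{\lambda,\sigma}^{(0)}$ is densely defined, symmetric and coercive on $H^{-\frac{1}{2}}_{W_\lambda,\ell=0}(\mathbb{R}^3)$: density follows from Lemma \ref{lem:DD0properties}(i), symmetry is an application of Lemma \ref{lem:symsym} (valid once Lemma \ref{lem:DD0properties}(ii) ensures the appropriate mapping into $H^{\frac{1}{2}}$), and coercivity comes immediately from Corollary \ref{cor:equivalentH12norm}: for $\xi\in\widetilde{\mathsf{D}}_{0,\sigma}$,
\[
\big\langle\xi,\widetilde{\mathcal{R}}_{\lambda,\sigma}^{(0)}\xi\big\rangle_{H^{-\frac{1}{2}}_{W_\lambda}}=3\big\langle\xi,(T_\lambda^{(0)}+K_\sigma^{(0)})\xi\big\rangle_{L^2}\approx\|\xi\|_{H^{\frac{1}{2}}}^2\,.
\]
In particular, the graph norm $\|\xi\|_{\mathcal{R}}:=\langle\xi,\widetilde{\mathcal{R}}_{\lambda,\sigma}^{(0)}\xi\rangle^{1/2}$ is equivalent to $\|\xi\|_{H^{\frac{1}{2}}}$.

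Having set this up, I would apply the Friedrichs construction. The form domain of the Friedrichs extension is the completion of $\widetilde{\mathsf{D}}_{0,\sigma}$ in the $\|\cdot\|_{\mathcal{R}}$-norm, which by the equivalence above and the density proved in Lemma \ref{lem:DD0properties}(i) is precisely $H^{\frac{1}{2}}_{\ell=0}(\mathbb{R}^3)$. This establishes the first line of \eqref{eq:RlambdasigmaFORM}. To evaluate the form on the whole $H^{\frac{1}{2}}_{\ell=0}\times H^{\frac{1}{2}}_{\ell=0}$, I would pass to the limit on $\widetilde{\mathsf{D}}_{0,\sigma}$-sequences, using that $T_\lambda^{(0)}+K_\sigma^{(0)}$ is a bounded $H^{\frac{1}{2}}_{\ell=0}\to H^{-\frac{1}{2}}_{\ell=0}$ map. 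For $T_\lambda^{(0)}$ this is Lemma \ref{lem:Tlambdaproperties}(ii); for $K_\sigma^{(0)}$ it follows from \eqref{eq:xiTxi-with-theta-andK}--\eqref{eq:gamma-s-distribution} (the quadratic form is uniformly bounded by $\|\theta\|_{L^2}^2\approx \|\xi\|_{H^{\frac{1}{2}}}^2$) via polarisation. The identity $\mathcal{R}_{\lambda,\sigma}^{(0)}[\eta,\xi]=3\langle\eta,(T_\lambda^{(0)}+K_\sigma^{(0)})\xi\rangle_{H^{\frac{1}{2}},H^{-\frac{1}{2}}}$ then follows by continuity.

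Finally, I would recover the operator from its form by the standard recipe: $\xi\in\mathcal{D}(\mathcal{R}_{\lambda,\sigma}^{(0)})$ iff $\xi\in H^{\frac{1}{2}}_{\ell=0}(\mathbb{R}^3)$ and there exists $\zeta_\xi\in H^{-\frac{1}{2}}_{W_\lambda,\ell=0}(\mathbb{R}^3)$ with $\langle\eta,\zeta_\xi\rangle_{H^{-\frac{1}{2}}_{W_\lambda}}=\mathcal{R}_{\lambda,\sigma}^{(0)}[\eta,\xi]$ for all $\eta\in H^{\frac{1}{2}}_{\ell=0}$; the corresponding action is $\mathcal{R}_{\lambda,\sigma}^{(0)}\xi=\zeta_\xi$. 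Unfolding the scalar product via \eqref{eq:W-scalar-product} and using the $H^{-\frac{1}{2}}_{\ell=0}\to H^{\frac{1}{2}}_{\ell=0}$ bijectivity of $W_\lambda^{(0)}$, the condition becomes $3(T_\lambda^{(0)}+K_\sigma^{(0)})\xi=W_\lambda^{(0)}\zeta_\xi$, which is equivalent to $(T_\lambda^{(0)}+K_\sigma^{(0)})\xi\in H^{\frac{1}{2}}_{\ell=0}$, i.e. $\xi\in\mathsf{D}_{0,\sigma}$, together with $\zeta_\xi=3W_\lambda^{-1}(T_\lambda^{(0)}+K_\sigma^{(0)})\xi$. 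This yields \eqref{eq:RlambdasigmaOPERATOR}.

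The main delicate point is the boundedness of $K_\sigma^{(0)}$ as an $H^{\frac{1}{2}}_{\ell=0}\to H^{-\frac{1}{2}}_{\ell=0}$ map; rather than arguing directly from the Hardy-type kernel in \eqref{eq:Ksigmaposition}--\eqref{eq:Ksigmamomentum}, I would extract it from Lemma \ref{lem:regularisedlemma1} together with \eqref{eq:mellinnorms}, which turns the question into the trivial statement that $\widehat{\gamma}_\sigma$ is uniformly bounded on $\mathbb{R}$.
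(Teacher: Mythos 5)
Your proposal is correct and follows essentially the same route as the paper's proof: the Friedrichs construction with the graph norm shown equivalent to the $H^{\frac{1}{2}}$-norm via Corollary \ref{cor:equivalentH12norm}, identification of the form domain as $H^{\frac{1}{2}}_{\ell=0}(\mathbb{R}^3)$ through Lemma \ref{lem:DD0properties}(i), and recovery of the operator from the form by the same standard recipe used in Proposition \ref{prop:Alambdaellnot0}. The only cosmetic difference is that you justify the continuity of the limiting pairing via the explicit $H^{\frac{1}{2}}\to H^{-\frac{1}{2}}$ boundedness of $K_\sigma^{(0)}$ read off from Lemma \ref{lem:regularisedlemma1}, whereas the paper invokes the same fact more tersely through Corollary \ref{cor:equivalentH12norm} and polarisation.
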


 \begin{proof}
  Let us temporarily denote by $\mathcal{R}_F$ the Friedrichs extension of $\widetilde{\mathcal{R}}^{(0)}_{\lambda,\sigma}$ with respect to $H^{-\frac{1}{2}}_{W_\lambda,\ell=0}(\mathbb{R}^3)$, and let us set
  \[
   \|\xi\|_{\mathcal{R}}\;:=\; \Big(\big\langle\xi,\widetilde{\mathcal{R}}^{(0)}_{\lambda,\sigma}\xi\big\rangle_{H^{-\frac{1}{2}}_{W_\lambda}}\Big)^{\frac{1}{2}}\;=\;\big(3\big\langle \xi,\big(T_\lambda^{(0)}+K_\sigma^{(0)}\big)\xi\big\rangle_{L^2}\big)^{\frac{1}{2}}\,.
  \]
  Owing to Corollary \ref{cor:equivalentH12norm}, the latter induces an equivalent $H^{\frac{1}{2}}$-norm on $\widetilde{\mathsf{D}}_{0,\sigma} $. As prescribed by the Friedrichs construction, $\mathcal{R}_F$ has form domain
  \[
   \mathcal{D}[\mathcal{R}_F]\;=\;\overline{\mathcal{D}\big(\widetilde{\mathcal{R}}^{(0)}_{\lambda,\sigma}\big)\,}^{\|\,\|_{\mathcal{R}}}\;=\;\overline{\widetilde{\,\mathsf{D}}_{0,\sigma}\,}^{\|\,\|_{H^{\frac{1}{2}}}}\;=\;H^{\frac{1}{2}}_{\ell=0}(\mathbb{R}^3)
  \]
  (the last identity following from Lemma \ref{lem:DD0properties}(i)), and for $\xi,\eta\in H^{\frac{1}{2}}_{\ell=0}(\mathbb{R}^3)$
  \[
   \mathcal{R}_F[\eta,\xi]\;=\;\lim_{n\to\infty}\big\langle\eta_n,\widetilde{\mathcal{R}}^{(0)}_{\lambda,\sigma}\xi_n\big\rangle_{H^{-\frac{1}{2}}_{W_\lambda}}\;=\;3\lim_{n\to\infty}\big\langle \eta_n,\big(T_\lambda^{(0)}+K_\sigma^{(0)}\big)\xi_n\big\rangle_{L^2}
  \]
  for any two sequences $(\xi_n)_n$ and $(\eta_n)_n$ in $\widetilde{\mathsf{D}}_{0,\sigma} $ such that $\xi_n\to \xi$ and $\eta_n\to\eta$ in $H^{\frac{1}{2}}_{\ell=0}(\mathbb{R}^3)$. Since the pairing $(\eta_n,\xi_n)\mapsto \big\langle \eta_n,\big(T_\lambda^{(0)}+K_\sigma^{(0)}\big)\xi_n\big\rangle_{L^2}$ is an $H^{\frac{1}{2}}$-pairing (Corollary \ref{cor:equivalentH12norm}), one finds
  \[
   \mathcal{R}_F[\eta,\xi]\;=\;3\lim_{n\to\infty}\big\langle \eta_n,\big(T_\lambda^{(0)}+K_\sigma^{(0)}\big)\xi_n\big\rangle_{L^2}\;=\;3\,\big\langle \eta,\big(T_\lambda^{(0)}+K_\sigma^{(0)}\big)\xi\big\rangle_{L^2}\,.
  \]
  Formula \eqref{eq:RlambdasigmaFORM} is thus proved. The operator $\mathcal{R}_F$ is derived from its quadratic form in the usual matter: a straightforward adaptation of the analogous argument used in the proof of Proposition \ref{prop:Alambdaellnot0} shows that $\mathcal{R}_F$ is indeed the operator \eqref{eq:RlambdasigmaOPERATOR}. 
 \end{proof}

 With the new Birman parameter \eqref{eq:RlambdasigmaOPERATOR} for the sector $\ell=0$, the construction of the canonical model $\mathscr{H}_{0,\beta}$ is modified as follows (see the discussion of Subsect.~\ref{sec:constructioncanonical}).

 Instead of the self-adjoint extension $\mathring{H}_{\mathcal{A}_{\lambda,\beta}}$ obtained by means of Theorem \ref{thm:generalclassification} with Birman parameter
  \begin{equation*}
   \mathcal{A}_{\lambda,\beta}\;=\; \mathcal{A}_{\lambda,\beta}^{(0)}\:\oplus\: \bigoplus_{\ell=1}^\infty\mathcal{A}_\lambda^{(\ell)}
  \end{equation*}
 (see \eqref{eq:globalAlambda-1}), we consider another operator from the family \eqref{eq:family}, namely the self-adjoint extension $\mathring{H}_{\mathcal{R}_{\lambda,\sigma}}$ with modified Birman parameter
 \begin{equation}\label{eq:globalRlambda-1}
   \mathcal{R}_{\lambda,\sigma}\;:=\;\mathcal{R}_{\lambda,\sigma}^{(0)}\:\oplus\: \bigoplus_{\ell=1}^\infty\mathcal{A}_\lambda^{(\ell)}\,.
 \end{equation}
 By definition, the domain of $\mathcal{R}_{\lambda,\sigma}$ is
 \begin{equation}
  \mathcal{D}(\mathcal{R}_{\lambda,\sigma})\;=\;\left\{ 
   \begin{array}{c}
    \displaystyle\xi=\sum_{\ell=0}^\infty\xi^{(\ell)}\in\;\bigoplus_{\ell=0}^\infty \,H^{-\frac{1}{2}}_{W_\lambda,\ell}(\mathbb{R}^3)\,\cong\, H^{-\frac{1}{2}}_{W_\lambda}(\mathbb{R}^3) \\
    \textrm{such that} \\
    \xi^{(\ell)}\in H^{\frac{1}{2}}_\ell(\mathbb{R}^3)\;\textrm{ and }\; T_\lambda^{(\ell)}\xi^{(\ell)}\in H^{\frac{1}{2}}_\ell(\mathbb{R}^3)\;\textrm{ for }\;\ell\in\mathbb{N}\,, \\
    \xi^{(0)}\in H_{\ell=0}^{\frac{1}{2}}(\mathbb{R}^3)\;\textrm{ and }\;\big(T_\lambda^{(0)}+K_\sigma^{(0)}\big)\xi\in H^{\frac{1}{2}}_{\ell=0}(\mathbb{R}^3)
   \end{array}
   \right\},
 \end{equation}
 to be compared with the previous domain \eqref{eq:Dbetaaltogether}.

 For the sake of a more compact, unified notation, let us write
 \begin{equation}\label{eq:compactTTT}
  (\mathbf{T}_{\lambda,\sigma}\xi)(\yy)\;:=\;\big(\big(T_\lambda^{(0)}+K_\sigma^{(0)}\big)\xi^{(0)})(\yy)+\sum_{\ell=1}^\infty \big(T_\lambda^{(\ell)}\xi^{(\ell)}\big)(\yy)
 \end{equation}
 for $\xi=\sum_{\ell=0}^\infty\xi^{(\ell)}\in\bigoplus_{\ell=0}^\infty \,H^{-\frac{1}{2}}_{W_\lambda,\ell}(\mathbb{R}^3)\cong H^{-\frac{1}{2}}_{W_\lambda}(\mathbb{R}^3)$. Thus,
 \begin{equation}\label{eq:Tcompact1}
  \begin{split}
   \mathcal{D}(\mathcal{R}_{\lambda,\sigma})\;&=\;\big\{\xi\in H^{\frac{1}{2}}(\mathbb{R}^3)\,\big|\,\mathbf{T}_{\lambda,\sigma}\xi\in H^{\frac{1}{2}}(\mathbb{R}^3)\big\} \\
   \mathcal{R}_{\lambda,\sigma}\;&=\;3\,W_\lambda^{-1} \mathbf{T}_{\lambda,\sigma}
  \end{split}
 \end{equation}
 and 
 \begin{equation}\label{eq:Tcompact2}
  \big(\mathbf{T}_{\lambda,\sigma}-T_\lambda\big)\xi\;=\; K_\sigma^{(0)}\xi^{(0)}\,.
 \end{equation}

%
%

 \begin{theorem}\label{thm:regularised-models}
  Let $\sigma>0$ and $\lambda>0$. Define
 \begin{equation}\label{eq:Hsigmadomaction}
     \begin{split}
      \mathcal{D}(\mathscr{H}_\sigma)\;&:=\;\left\{g=\phi^\lambda+u_\xi^\lambda\left|\!
  \begin{array}{c}
   \phi^\lambda\in H^2_\mathrm{b}(\mathbb{R}^3\times\mathbb{R}^3)\,, \\
   \xi\in H^{\frac{1}{2}}(\mathbb{R}^3)\,\textrm{ with }\,\mathbf{T}_{\lambda,\sigma}\xi\in H^{\frac{1}{2}}(\mathbb{R}^3)\,,\\
   \displaystyle\phi^\lambda(\yy,\mathbf{0})\,=\,(2\pi)^{-\frac{3}{2}} (\mathbf{T}_{\lambda,\sigma}\xi)(\yy)
  \end{array}
  \!\!\!\right.\right\} \\
  (\mathscr{H}_\sigma+\lambda\mathbbm{1})g\;&:=\;(-\Delta_{\yy_1}-\Delta_{\yy_2}-\nabla_{\yy_1}\cdot\nabla_{\yy_2}+\lambda\mathbbm{1})\phi^\lambda\,,
     \end{split}
    \end{equation}
    \begin{itemize}
   \item[(i)] The decomposition of $g$ in terms of $\phi^\lambda$ and $\xi$ is unique, at fixed $\lambda$. The subspace $\mathcal{D}(\mathscr{H}_\sigma)$ is $\lambda$-independent.
   \item[(ii)] $\mathscr{H}_\sigma$ is self-adjoint on $L^2_\mathrm{b}(\mathbb{R}^3\times\mathbb{R}^3,\ud\yy_1,\ud\yy_2)$ and extends $\mathring{H}$ given in \eqref{eq:domHring-initial}.
   \item[(iii)] For each $g\in  \mathcal{D}(\mathscr{H}_\sigma)$ one has
   \begin{equation}\label{eq:allBPTMS-regularised}
    \begin{split}
     \phi^\lambda(\yy_1,\mathbf{0})\;&=\;(2\pi)^{-\frac{3}{2}} (\mathbf{T}_{\lambda,\sigma}\xi)(\yy_1)\,, \\
     \int_{\mathbb{R}^3}\widehat{\phi^\lambda}(\pp_1,\pp_2)\,\ud\pp_2\;&=\;(\widehat{\mathbf{T}_{\lambda,\sigma}\xi})(\pp_1)\,, \\
     \int_{\!\substack{ \\ \\ \pp_2\in\mathbb{R}^3 \\ |\pp_2|<R}}\widehat{g}(\pp_1,\pp_2)\,\ud\pp_2\;&=\;4\pi R\,\widehat{\xi}(\pp_1)+(\widehat{K_\sigma^{(0)}\xi^{(0)}})(\pp_1)+o(1)
    \end{split}
   \end{equation}
   (where $R\to +\infty$). All such conditions are equivalent. In particular, the first version of \eqref{eq:allBPTMS-regularised} is an identity in $H^{\frac{1}{2}}(\mathbb{R}^3)$. 
  \item[(iv)] $\mathscr{H}_\sigma$ is non-negative and with quadratic form
  \begin{equation}\label{eq:HsigmaQuadrForm}
   \begin{split}
    \mathcal{D}[\mathscr{H}_\sigma]\;&=\;\left\{g=\phi^\lambda+u_\xi^\lambda\left|\!
  \begin{array}{c}
   \phi^\lambda\in H^1_\mathrm{b}(\mathbb{R}^3\times\mathbb{R}^3)\,,\;\xi\in H^{\frac{1}{2}}(\mathbb{R}^3)
  \end{array}
  \!\!\!\right.\right\} \\
     \mathscr{H}_\sigma[g]\;&=\;\frac{1}{2}\Big(\big\|(\nabla_{\yy_1}+\nabla_{\yy_2})\phi^\lambda\big\|^2_{L^2}+\big\|\nabla_{\yy_1}\phi^\lambda\big\|^2_{L^2}+\big\|\nabla_{\yy_2}\phi^\lambda\big\|^2_{L^2}\Big) \\
     &\qquad+\lambda\Big(\|\phi^\lambda\big\|^2_{L^2}-\big\|\phi^\lambda+u_\xi^\lambda\big\|^2_{L^2}\Big)+3\int_{\mathbb{R}^3} \overline{\,\widehat{\xi}(\pp)}\, \big(\widehat{\mathbf{T}_{\lambda,\sigma}\xi}\big)(\pp)\,\ud\pp\,,
   \end{split}
  \end{equation}
  the $L^2$-norms being norms in $L^2(\mathbb{R}^3\times\mathbb{R}^3)$.
  \end{itemize}
 \end{theorem}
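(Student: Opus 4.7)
The plan is to identify $\mathscr{H}_\sigma$ with the Kre\u{\i}n--Vi\v{s}ik--Birman extension $\mathring{H}_{\mathcal{R}_{\lambda,\sigma}}$ of $\mathring{H}$ that is labelled by the Birman parameter $\mathcal{R}_{\lambda,\sigma}$ constructed in \eqref{eq:globalRlambda-1}, and then to read off the four claims from the general machinery of Theorems \ref{thm:generalclassification} and \ref{thm:globalTMSext}, combined with Propositions \ref{prop:NEW-Birman-param-selfadj} (self-adjointness in the $\ell=0$ sector with regularisation) and \ref{prop:Alambdaellnot0} (self-adjointness in sectors $\ell\geq 1$ at $\alpha=0$, admissible since $\lambda_0=0$). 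The key link is the reformulation \eqref{eq:Tcompact1}, by which the domain of $\mathcal{R}_{\lambda,\sigma}$ is exactly the charge-side translation of the constraint in \eqref{eq:Hsigmadomaction}.

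For part (i), uniqueness of the decomposition $g=\phi^\lambda+u_\xi^\lambda$ at fixed $\lambda$ follows from Lemma \ref{lem:Hstaretc}(iii), and $\lambda$-independence of $\mathcal{D}(\mathscr{H}_\sigma)$ then follows from Remark \ref{rem:samedomains} once the self-adjointness in part (ii) is established. For part (ii), I first check that $\mathcal{R}_{\lambda,\sigma}$ is self-adjoint on $H^{-\frac{1}{2}}_{W_\lambda}(\mathbb{R}^3)$: in the $\ell=0$ sector this is Proposition \ref{prop:NEW-Birman-param-selfadj}, in each $\ell\geq 1$ sector it is Proposition \ref{prop:Alambdaellnot0} with $\alpha=0$, and these assemble into a self-adjoint direct sum with respect to the orthogonal decomposition \eqref{eq:bigdecompW}. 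Theorem \ref{thm:generalclassification} then produces the self-adjoint extension $\mathring{H}_{\mathcal{R}_{\lambda,\sigma}}$ of $\mathring{H}$, and matching the constraint $\eta=\mathcal{R}_{\lambda,\sigma}\xi=3W_\lambda^{-1}\mathbf{T}_{\lambda,\sigma}\xi$ appearing in \eqref{eq:domDHA} against identity \eqref{eq:phi-largep2-star-yversion} of Corollary \ref{cor:largepasympt-star} yields the boundary condition $\phi^\lambda(\yy,\mathbf{0})=(2\pi)^{-3/2}(\mathbf{T}_{\lambda,\sigma}\xi)(\yy)$; the action \eqref{eq:domDHA-actionDHA} then coincides with the second line of \eqref{eq:Hsigmadomaction}, concluding $\mathscr{H}_\sigma=\mathring{H}_{\mathcal{R}_{\lambda,\sigma}}$.

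For part (iii), the three identities in \eqref{eq:allBPTMS-regularised} follow from Lemma \ref{lem:largepasympt-star} and Corollary \ref{cor:largepasympt-star} after substituting $\tfrac{1}{3}W_\lambda\eta=(T_\lambda+K_\sigma^{(0)}P_0)\xi=\mathbf{T}_{\lambda,\sigma}\xi$, where $P_0$ is the orthogonal projection onto the $\ell=0$ sector: the two generic terms $\tfrac{1}{3}W_\lambda\eta$ and $T_\lambda\xi$ in \eqref{eq:g-largep2-star} combine into $K_\sigma^{(0)}\xi^{(0)}$, and the three versions of the condition are Fourier/inverse-Fourier reformulations of each other via \eqref{eq:psiy0}--\eqref{eq:tracemomentum}. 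For part (iv), strict positivity of $\mathcal{R}_{\lambda,\sigma}$ follows from Corollary \ref{cor:equivalentH12norm} in the $\ell=0$ sector and from Lemma \ref{lem:Atildenot0}(ii) (with Friedrichs preserving the lower bound) in each $\ell\geq 1$ sector, so $\mathfrak{m}(\mathcal{R}_{\lambda,\sigma})>0$; the equivalence \eqref{eq:positiveSBiffpositveB-1_Tversion} then gives $\mathscr{H}_\sigma\geq -\lambda\mathbbm{1}$ for \emph{every} $\lambda>0$, whence $\mathscr{H}_\sigma\geq 0$. The explicit form \eqref{eq:HsigmaQuadrForm} is finally obtained from \eqref{eq:decomposition_of_form_domains_Tversion} by inserting the Friedrichs form \eqref{eq:HFform} and the sesquilinear form $\mathcal{R}_{\lambda,\sigma}[\xi]=3\langle\widehat{\xi},\widehat{\mathbf{T}_{\lambda,\sigma}\xi}\rangle$, which is the $\ell$-decomposed union of \eqref{eq:RlambdasigmaFORM} and \eqref{AFform-ellnot0}.

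The main obstacle I expect is the clean identification, in part (ii), of the operator domain of the direct-sum Birman parameter with the single joint condition $\mathbf{T}_{\lambda,\sigma}\xi\in H^{\frac{1}{2}}(\mathbb{R}^3)$: while each $\ell$-summand has a clean Friedrichs-type domain, passing from componentwise control to the global statement requires verifying that the $\ell$-expansion of the $H^{-\frac{1}{2}}_{W_\lambda}$-scalar product, together with the mapping bounds $\|T_\lambda^{(\ell)}\xi\|_{H^{s-1}}\lesssim\|\xi\|_{H^s}$ from Lemma \ref{lem:Tlambdaproperties}(iii), permits a Parseval-type reassembly of the domain; the regularisation $K_\sigma^{(0)}$ affects only the $\ell=0$ component and is harmless in this step thanks to the uniform boundedness away from zero of $\widehat{\gamma}_\sigma$ for $\sigma>0$. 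Once this reassembly is carried out, the four parts follow by the outlined unpacking.
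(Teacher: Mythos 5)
Your proposal is correct and follows essentially the same route as the paper, whose proof is a one-line reduction: plug \eqref{eq:Tcompact1}--\eqref{eq:Tcompact2} (hence \eqref{eq:RlambdasigmaOPERATOR}--\eqref{eq:RlambdasigmaFORM}) into the general classification of Theorem \ref{thm:generalclassification}, with self-adjointness of the Birman parameter supplied by Proposition \ref{prop:NEW-Birman-param-selfadj} in the $\ell=0$ sector and Proposition \ref{prop:Alambdaellnot0} elsewhere. Your unpacking of the four parts (including the substitution $\tfrac{1}{3}W_\lambda\eta=\mathbf{T}_{\lambda,\sigma}\xi$ for part (iii) and the ``$\geqslant-\lambda$ for every $\lambda$'' argument for part (iv)) is exactly what the paper leaves implicit, and the direct-sum reassembly you flag is likewise tacitly absorbed into the paper's definition \eqref{eq:Tcompact1}.
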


 \begin{proof}
  All claims follow from plugging \eqref{eq:Tcompact1}-\eqref{eq:Tcompact2} (hence, in particular, \eqref{eq:RlambdasigmaOPERATOR}-\eqref{eq:RlambdasigmaFORM}) into the general classification formulas of Theorem \ref{thm:generalclassification}, owing to the self-adjointness of the Birman parameter guaranteed by Proposition \ref{prop:NEW-Birman-param-selfadj}.  
 \end{proof}

 Each of the asymptotics \eqref{eq:allBPTMS-regularised} for $g\in \mathcal{D}(\mathscr{H}_\sigma)$ expresses an ultra-violet regularised Bethe-Peierls alias Ter-Martirosyan Skornyakov condition, that in view of Corollary \ref{cor:largepasympt-star} can be thought of as
 \begin{equation}\label{eq:allBPTMS-regularised-2}
 (2\pi)^{\frac{3}{2}} g_{\mathrm{av}}(\yy_1;|\yy_2|)\,\stackrel{|\yy_2|\to 0}{=}\,\frac{4\pi}{|\yy_2|}\xi(\yy_1)+\frac{\,\sigma_0+\sigma\,}{|\yy_1|}\,\xi(\yy_1)+ o(1)\,.
 \end{equation}
 As the simultaneous limit $|\yy_1|\to 0$, $|\yy_2|\to 0$ in the expression above suggests, the regularisation effectively amounts to distorting the ordinary Bethe-Peierls short-scale asymptotics by means of a position-dependent scattering length
 \begin{equation}
  a_{\mathrm{eff}}(\yy)\;:=\;-\frac{4\pi|\yy|}{\,\sigma_0+\sigma\,}
 \end{equation}
 (see \eqref{eq:a-alpha} above) that vanishes when all three bosons come to occupy the same point. At this effective level, a (small) three-body correction prevents the triple collision.

 In fact, this Minlos-Faddeev regularisation is rather radical, because it completely eliminates the negative spectrum (see also Remark \ref{rem:noEV} below). Yet, $\mathscr{H}_\sigma$ is not merely the reduced component of the canonical Hamiltonian $\mathscr{H}_{0,\beta}$ onto the positive spectral subspace: the signature of the physical short-scale behaviour is retained in \eqref{eq:allBPTMS-regularised} and \eqref{eq:allBPTMS-regularised-2}, with the $|\yy_2|^{-1}$ leading singularity as $|\yy_2|\to 0$ at fixed $\yy_1$, only with a distorted subleading singularity driven by an effective scattering length that vanishes as $|\yy_1|\to 0$.

  \begin{remark}\label{rem:noEV}
  In contrast with the computation of the negative eigenvalues of the Hamiltonian $\mathscr{H}_{0,\beta}$ (proof of Theorem \ref{thm:spectralanalysis}), the analogous computation for $\mathscr{H}_\sigma$ would lead to the equation $\big(T_\lambda^{(0)}+K_\sigma^{(0)}\big)\xi=0$ for some $\xi$ in the sector $\ell=0$. Owing to \eqref{eq:Tlxi-theta-withK} and \eqref{eq:ThetaTransformed}, this is the same as $\widehat{\gamma}_\sigma\widehat{\theta}=0$, where $\theta$ is the re-scaled radial function associated with $\xi$. The difference is thus
  \[
   \begin{array}{lcl}
    \;\,\widehat{\gamma}\,\widehat{\theta}\;=\;0 & & \textrm{for the \emph{canonical} eigenvalue problem} \\
    \widehat{\gamma}_\sigma\,\widehat{\theta}\;=\;0 & & \textrm{for the \emph{regularised} eigenvalue problem}\,.
   \end{array}
  \]
  Because of the presence of roots of $\widehat{\gamma}(s)=0$, the first equation turns out to have non-trivial admissible solutions. Instead, $\widehat{\gamma}_\sigma(s)\geqslant\widehat{\gamma}_\sigma(0)>0$ and for the second equation one necessarily has $\theta\equiv 0$ (absence of negative eigenvalues).  
 \end{remark}

 \begin{remark}
 The charge term in the quadratic form expression \eqref{eq:HsigmaQuadrForm} is explicitly given by
 \begin{equation}
  \begin{split}
   \int_{\mathbb{R}^3} \overline{\,\widehat{\xi}(\pp)}\, \big(\widehat{\mathbf{T}_{\lambda,\sigma}\xi}\big)(\pp)\,\ud\pp\;&=\;\sum_{\ell=1}^\infty\int_{\mathbb{R}^3} \overline{\,\widehat{\xi^{(\ell)}}(\pp)}\, \big(\widehat{T_{\lambda}^{(\ell)}\xi^{(\ell)}}\big)(\pp)\,\ud\pp \\
   &\qquad\quad +\int_{\mathbb{R}^3} \overline{\,\widehat{\xi^{(0)}}(\pp)}\, \big(\big(T_\lambda^{(0)}+K_\sigma^{(0)}\big)\xi^{(0)}\big)\,{\textrm{\large $\widehat{\,}$\normalsize}}\,(\pp)\,\ud\pp
  \end{split}
 \end{equation}
 (as follows from \eqref{eq:RlambdasigmaFORM} and \eqref{eq:compactTTT}). As $\mathscr{H}_\sigma$ is self-adjoint and non-negative, its quadratic form \eqref{eq:HsigmaQuadrForm} is obviously closed and non-negative.
  An announcement that the quadratic form \eqref{eq:HsigmaQuadrForm} is closed and lower semi-bounded on $L^2_\mathrm{b}(\mathbb{R}^3\times\mathbb{R}^3)$, and thus induces a self-adjoint Hamiltonian for the regularised three-body interaction in the bosonic trimer, has been recently made in \cite{Figari-Teta-2020}. 
 \end{remark}

 \subsection{High energy cut-off}\label{sec:highenergycutoff}~

 As mentioned already, an alternative, conceptually equivalent way of making the scattering length effectively vanish in the vicinity of the triple coincidence point is to realise this effect at large relative momenta.
 
 An example of this type of high energy cut-off has been recently proposed by Basti, Figari, and Teta \cite[Sect.~3]{Basti-Figari-Teta-Rendiconti2018} by means of quadratic form methods. We shall study this possibility in the general operator-theoretic framework of the present analysis, working out more precisely a modification of \cite{Basti-Figari-Teta-Rendiconti2018} that allows for explicit computations.
 
  For a clearer readability and comparison with Subsect.~\ref{sec:MFregularisation}-\ref{sec:MinFadzero}, we shall keep the same notation used therein for the counterpart regularised quantities, of course re-defined them now in a different way.

 The original proposal of \cite{Basti-Figari-Teta-Rendiconti2018} goes along the following line: the ordinary Birman parameter $3W_\lambda^{-1}(T_\lambda+\alpha\mathbbm{1})$, that selects (via Theorems \ref{thm:generalclassification} and \ref{thm:globalTMSext}) self-adjoint extensions of Ter-Martirosyan Skornyakov type of the minimal operator $\mathring{H}$ defined in \eqref{eq:domHring-initial}, is replaced by 
\begin{equation*}
\begin{split}
 & 3\,W_\lambda^{-1}(T_\lambda+\alpha\mathbbm{1}+K_{\sigma,\rho})\,,\qquad \sigma,\rho\,>\,0\,,  \\
 & (\widehat{K_{\sigma,\rho}\xi})(\pp)\;:=\;\sigma\,\mathbf{1}_{\{|\pp|\geqslant\rho\}}\,\pp^2\,\widehat{\xi}(\pp)\,.
\end{split}
\end{equation*}
 This gives rise, via \eqref{eq:g-largep2-star}, to the modified large momentum asymptotics
 \begin{equation*}
 \begin{split}
  \int_{\!\substack{ \\ \\ \pp_2\in\mathbb{R}^3 \\ |\pp_2|<R}}\widehat{g}(\pp_1,\pp_2)&\,\ud\pp_2\stackrel{R\to+\infty}{=}
  4\pi R\,\widehat{\xi}(\pp_1)+\alpha_{\mathrm{eff}}(\pp)\widehat{\xi}(\pp_1)+o(1) \\
  & \alpha_{\mathrm{eff}}(\pp)\;:=\;\alpha+\sigma\,\mathbf{1}_{\{|\pp|\geqslant\rho\}}\,\pp^2\,,
 \end{split}
 \end{equation*}
 for the elements $g$ in the domain of the corresponding self-adjoint extension, 
 again with the interpretation of an effective parameter $\alpha_{\mathrm{eff}}(\pp)\to +\infty$ as $|\pp|\to +\infty$.

 Reasoning in terms of quadratic forms, it is simple to check (as done in \cite[Sect.~3]{Basti-Figari-Teta-Rendiconti2018}) that $\sigma$ and $\rho$ can be adjusted on $\alpha$ and $\lambda$ so that the map
 \[
  \xi\;\longmapsto\;\bigg(\int_{\mathbb{R}^3}\overline{\widehat{\xi}(\pp)}\,\big((T_\lambda+\alpha\mathbbm{1}+K_{\sigma,\rho})\xi\big)\,{\textrm{\large $\widehat{\,}$\normalsize}}\,(\pp)\bigg)^{\!\frac{1}{2}}
 \]
 is an equivalent $H^1$-norm: in fact, the effect of the additional $K_{\sigma,\rho}$ is to rise the multiplicative part of $T_\lambda$ with an $H^1$-term (added to the original $H^{\frac{1}{2}}$-term), which controls the integral part of $T_\lambda$. This way, a quadratic form on $L^2_{\mathrm{b}}(\mathbb{R}^3_{\yy_1}\times\mathbb{R}^3_{\yy_2})$ of the same type of \eqref{eq:decomposition_of_form_domains_Tversion} can be constructed, namely with regular functions from $H^1_{\mathrm{b}}(\mathbb{R}^3_{\yy_1}\times\mathbb{R}^3_{\yy_2})$ and charges from $H^1(\mathbb{R}^3)$, in which the charge term (the quadratic form of the Birman parameter) is precisely
 \[
  \int_{\mathbb{R}^3}\overline{\widehat{\xi}(\pp)}\,\big((T_\lambda+\alpha\mathbbm{1}+K_{\sigma,\rho})\xi\big)\,{\textrm{\large $\widehat{\,}$\normalsize}}\,(\pp)\,.
 \]
 Standard arguments then show that the form is closed and non-negative, hence it is the energy form of a self-adjoint Hamiltonian for the bosonic trimer with zero-range interactions.

 The Friedrichs construction for the Birman parameter $3\,W_\lambda^{-1}(T_\lambda+\alpha\mathbbm{1}+K_{\sigma,\rho})$ associated to the above form would lead to a somewhat implicit expression for its operator domain.
 Furthermore, for the purpose of having a closed and semi-bounded charge form it suffices to add to $T_\lambda$ an additional  $H^{\frac{1}{2}}$-term, instead of the $H^1$-term proposed in \cite{Basti-Figari-Teta-Rendiconti2018}, which be large enough so as to shift the form up above zero. We shall study this version of the large momentum (high energy) cut-off, as the computations within the operator-theoretic scheme are explicit.

 For $\lambda,\sigma>0$ let us set 
 \begin{equation}\label{eq:heregularisationchoice}
 \begin{split}
  (\widehat{K_{\lambda,\sigma}\,\xi})(\pp)\;&:=\;2\pi^2(\sigma_0+\sigma)\sqrt{\frac{3}{4}\pp^2+\lambda}\,\widehat{\xi}(\pp) \\
  \sigma_0\;&:=\;-\widehat{\gamma}(0)\;=\;\frac{4\pi}{3\sqrt{3}}-1\,,
  \end{split}
 \end{equation}
 as well as (with $x,s\in\mathbb{R}$)
 \begin{equation}\label{eq:newregularisedquantities}
  \begin{split}
   \Theta^{(\theta)}_{\lambda,\sigma}(x)\;&:=\;(1+\sigma_0+\sigma)\theta(x)-\frac{4}{\pi\sqrt{3}}\int_{\mathbb{R}}\ud y\,\theta(y) \log \frac{\,2\cosh(x-y)+1\,}{\,2\cosh(x-y)-1\,} \\
   \widehat{\gamma}_\sigma(s)\;&:=\;\widehat{\gamma}(s)+\sigma_0+\sigma\,,
  \end{split}
 \end{equation}
 for given $\theta$,
 where $ \widehat{\gamma}$ is defined in \eqref{eq:gamma-distribution}. By construction $\widehat{\gamma}_\sigma$ is the shifted version of the $\widehat{\gamma}$-curve from Fig.~\ref{fig:gammagammaplus}, thus a smooth even function on $\mathbb{R}$ that is uniformly bounded and strictly positive, with absolute minimum $\widehat{\gamma}_\sigma(0)=\sigma$.

 Analogously to \eqref{eq:Ksigmarotations}, $K_{\lambda,\sigma}=\bigoplus_{\ell\in\mathbb{N}}K_{\lambda,\sigma}^{(\ell)}$ with respect to the usual decomposition in sectors of definite angular momentum, and arguing as in Subsect.~\ref{sec:variants} and \ref{sec:MFregularisation} let us only implement the regularisation in the meaningful sector $\ell=0$.

 A straightforward modification of the reasoning for Lemmas \ref{lem:xithetaidentities} and \ref{lem:regularisedlemma1} and for Corollary \ref{cor:equivalentH12norm} yields the following.

  \begin{lemma}\label{lem:regularisedlemma2}
   Let $\lambda,\sigma>0$, $s\in\mathbb{R}$, and $\xi$ be as in \eqref{eq:0xi}. One has the identities
         \begin{eqnarray}
         \big(\big(T_\lambda^{(0)}+K_{\lambda,\sigma}^{(0)}\big)\xi\big)\,{\textrm{\large $\widehat{\,}$\normalsize}}\,(\pp)\!\!&=&\!\!\frac{1}{\sqrt{4\pi}\,|\pp|}\,\frac{4\pi^2}{\sqrt{3}\,}\,\Theta^{(\theta)}_{\lambda,\sigma}(x)\,, \label{eq:Tlxi-theta-withK2}\\
      \big\|\big(T_\lambda^{(0)}+K_{\lambda,\sigma}^{(0)}\big)\xi\big\|_{H^{s}(\mathbb{R}^3)}^2\!\!&\approx&\!\!\int_{\mathbb{R}}\ud x\,(\cosh x)^{1+2s}\,\big|\Theta^{(\theta)}_{\lambda,\sigma}(x)\big|^2\,, \label{eq:Tlambda0xi-with-theta-andK2} \\
      \widehat{\Theta^{(\theta)}_{\lambda,\sigma}}(s)\!\!&=&\!\!\widehat{\gamma}_\sigma(s)\,\widehat{\theta}(s)\,, \label{eq:ThetaTransformed2} \\
       \int_{\mathbb{R}^3} \overline{\,\widehat{\xi}(\pp)}\, \big(\big(T_\lambda^{(0)}+K_{\lambda,\sigma}^{(0)}\big)\xi\big)\,{\textrm{\large $\widehat{\,}$\normalsize}}\,(\pp)\,\ud\pp\!\!&=&\!\!\frac{\,8\pi^2}{3\sqrt{3}}\int_{\mathbb{R}}\ud s\, \widehat{\gamma}_\sigma(s)\,|\widehat{\theta}(s)|^2\,, \label{eq:xiTxi-with-theta-andK2}
    \end{eqnarray}
    with $x$ and $\theta$ given by \eqref{ftheta-1}, and $\Theta^{(\theta)}_{\lambda,\sigma}$ and $\widehat{\gamma}_\sigma$ given by \eqref{eq:newregularisedquantities}, where the definition of $\Theta^{(\theta)}_{\lambda,\sigma}$ is now taken with respect to the present $\theta$.
    In \eqref{eq:Tlxi-theta-withK2} it is understood that $x\geqslant 0$, and \eqref{eq:Tlambda0xi-with-theta-andK2} is meant as an equivalence of norms (with $\lambda$-dependent multiplicative constant).
    Moreover, the map
   \begin{equation}
    \xi\;\longmapsto\;\bigg(\int_{\mathbb{R}^3} \overline{\,\widehat{\xi}(\pp)}\, \big(\big(T_\lambda^{(0)}+K_{\lambda,\sigma}^{(0)}\big)\xi\big)\,{\textrm{\large $\widehat{\,}$\normalsize}}\,(\pp)\,\ud\pp\bigg)^{\!\frac{1}{2}}
   \end{equation}
   defines an equivalent norm in $H^{\frac{1}{2}}_{\ell=0}(\mathbb{R}^3)$.
  \end{lemma}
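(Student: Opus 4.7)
The plan is to recognise that this new regulariser $K_{\lambda,\sigma}$ is, up to a multiplicative constant, precisely the \emph{multiplicative part} of $T_\lambda$, i.e.\ the first summand on the right-hand side of the definition \eqref{eq:Tlambda} of $T_\lambda$. Once this is seen, every identity to be proved is a mild perturbation of the corresponding identity already established in Lemma \ref{lem:xithetaidentities} (the $K_\sigma=0$ case), with the parameter $1$ multiplying $\theta(x)$ in $\Theta^{(\theta)}$ replaced by $1+\sigma_0+\sigma$ and, correspondingly, $\widehat{\gamma}(s)$ replaced by $\widehat{\gamma}(s)+\sigma_0+\sigma=\widehat{\gamma}_\sigma(s)$.

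Concretely, I would first verify the key reduction: writing $\widehat{\xi}(\pp)=\frac{1}{\sqrt{4\pi}}f(|\pp|)$ and using \eqref{ftheta-2} and \eqref{ftheta-3-eq:pxchangevar}, one computes
\[
(\widehat{K_{\lambda,\sigma}^{(0)}\xi})(\pp)\;=\;\frac{2\pi^2(\sigma_0+\sigma)}{\sqrt{4\pi}}\sqrt{{\textstyle\frac{3}{4}}\pp^2+\lambda}\,f(|\pp|)\;=\;\frac{1}{\sqrt{4\pi}|\pp|}\,\frac{4\pi^2}{\sqrt{3}}(\sigma_0+\sigma)\,\theta(x).
\]
Adding this to formula \eqref{eq:Tlxi-theta} of Lemma \ref{lem:xithetaidentities} yields \eqref{eq:Tlxi-theta-withK2} at once, since the integral part of $T_\lambda^{(0)}$ is left untouched and the multiplicative coefficient in front of $\theta(x)$ is lifted from $1$ to $1+\sigma_0+\sigma$, matching the definition of $\Theta^{(\theta)}_{\lambda,\sigma}$ in \eqref{eq:newregularisedquantities}. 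From \eqref{eq:Tlxi-theta-withK2}, formula \eqref{eq:Tlambda0xi-with-theta-andK2} is obtained by the very same change-of-variable computation employed in the proof of \eqref{eq:Tlambda0xi-with-theta}, now inserting the updated integrand $|\Theta^{(\theta)}_{\lambda,\sigma}|^2$.

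The identity \eqref{eq:ThetaTransformed2} follows by taking the one-dimensional Fourier transform of $\Theta^{(\theta)}_{\lambda,\sigma}$ summand by summand: the $(1+\sigma_0+\sigma)\theta$ term transforms to $(1+\sigma_0+\sigma)\widehat{\theta}$, whereas the integral part, by the same computation via \eqref{eq:fourierconvolution}--\eqref{eq:gamma-distribution} already carried out in Lemma \ref{lem:xithetaidentities}, contributes $-\tfrac{8}{\sqrt{3}}\frac{\sinh(\pi s/6)}{s\cosh(\pi s/2)}\widehat{\theta}(s)=(\widehat{\gamma}(s)-1)\widehat{\theta}(s)$. Summing, the coefficient of $\widehat{\theta}(s)$ collapses to $\widehat{\gamma}(s)+\sigma_0+\sigma=\widehat{\gamma}_\sigma(s)$, as claimed. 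Then \eqref{eq:xiTxi-with-theta-andK2} is immediate: substituting \eqref{eq:Tlxi-theta-withK2} into the scalar product, passing to radial variables via \eqref{ftheta-2}--\eqref{ftheta-3-eq:pxchangevar}, extending by odd parity, and finally applying Parseval together with \eqref{eq:ThetaTransformed2} delivers the claimed $\widehat{\gamma}_\sigma$-weighted expression, exactly mirroring the derivation of \eqref{eq:xiTxi-with-theta}.

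The final statement about the equivalent $H^{1/2}$-norm is then a direct consequence of \eqref{eq:xiTxi-with-theta-andK2} combined with \eqref{eq:mellinnorms}: the uniform bounds $\sigma=\widehat{\gamma}_\sigma(0)\leqslant\widehat{\gamma}_\sigma(s)\leqslant\lim_{|s|\to\infty}\widehat{\gamma}_\sigma(s)=1+\sigma_0+\sigma$ (which follow from the monotonicity of $\widehat{\gamma}$ on each half-line, already discussed after \eqref{eq:gamma-distribution}) imply that $\int\widehat{\gamma}_\sigma(s)|\widehat{\theta}(s)|^2\ud s\approx\|\widehat{\theta}\|_{L^2(\mathbb{R})}^2=\|\theta\|_{L^2(\mathbb{R})}^2\approx\|\xi\|_{H^{1/2}(\mathbb{R}^3)}^2$. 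Essentially no new obstacle arises; the only mild point of care is to confirm that the strict positivity $\widehat{\gamma}_\sigma(s)\geqslant\sigma>0$ is achieved precisely because the shift $\sigma_0$ was chosen to exactly cancel $\widehat{\gamma}(0)=-\sigma_0$, so that $\sigma$ (and not some smaller quantity) governs the coercivity, a contrast with Lemma \ref{lem:regularisedlemma1} where the more elaborate $\widehat{\gamma}_\sigma$ of \eqref{eq:gamma-s-distribution} required the non-trivial monotonicity check summarised in \eqref{eq:gammasigmazero}.
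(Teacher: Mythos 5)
Your proposal is correct and follows exactly the route the paper intends: the paper itself gives no separate proof of this lemma, merely stating that it is "a straightforward modification of the reasoning for Lemmas \ref{lem:xithetaidentities} and \ref{lem:regularisedlemma1} and for Corollary \ref{cor:equivalentH12norm}", and your key observation that $K_{\lambda,\sigma}^{(0)}$ is just $(\sigma_0+\sigma)$ times the multiplicative part of $T_\lambda^{(0)}$ — so that via \eqref{ftheta-2} it contributes $(\sigma_0+\sigma)\theta(x)$ to the bracket and $\sigma_0+\sigma$ to $\widehat{\gamma}$ — is precisely the intended reduction. The coercivity bound $\sigma=\widehat{\gamma}_\sigma(0)\leqslant\widehat{\gamma}_\sigma(s)<1+\sigma_0+\sigma$ and the final appeal to \eqref{eq:mellinnorms} complete the argument as the paper's Corollary \ref{cor:equivalentH12norm} analogue requires.
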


  We can now define, for $\sigma>0$,
  \begin{equation}\label{eq:newD0regularised2}
   \begin{split}
    \widetilde{\mathsf{D}}_{0,\sigma}\;&:=\;\left\{
   \xi\in H^{-\frac{1}{2}}_{\ell=0}(\mathbb{R}^3)\left|
   \begin{array}{c}
    \textrm{$\xi$ has re-scaled radial component} \\
     \theta=\big(\widehat{\Theta}/\widehat{\gamma}_\sigma)\big)^{\!\vee} \\
     \textrm{for }\;\Theta\in C^\infty_{0,\mathrm{odd}}(\mathbb{R}_x)
   \end{array}
   \!\right.\right\}\,, \\
   \mathsf{D}_{0,\sigma}\;&:=\;\big\{\xi\in H^{\frac{1}{2}}_{\ell=0}(\mathbb{R}^3)\,\big|\,\big(T_\lambda^{(0)}+K_{\lambda,\sigma}^{(0)}\big)\xi\in H^{\frac{1}{2}}_{\ell=0}(\mathbb{R}^3)\big\}\,,
   \end{split}
  \end{equation}
  with the same remarks made for \eqref{eq:newD0regularised}, and prove the following, based on Lemma \ref{lem:regularisedlemma2}, and easily mimicking the reasoning that let to Lemma \ref{lem:DD0properties} and Proposition \ref{prop:NEW-Birman-param-selfadj}.

  \begin{proposition}\label{prop:againandagain} Let $\sigma>0$.  
  \begin{itemize}
    \item[(i)] $\widetilde{\mathsf{D}}_{0,\sigma}$ is dense in $H^{\frac{1}{2}}_{\ell=0}(\mathbb{R}^3)$.
   \item[(ii)] $(T_\lambda^{(0)}+K_{\lambda,\sigma}^{(0)})\widetilde{\mathsf{D}}_{0,\sigma}\subset H^{s}_{\ell=0}(\mathbb{R}^3)$ for every $s\in\mathbb{R}$ and $\lambda>0$.
    \item[(iii)] $\widetilde{\mathsf{D}}_{0,\sigma}\subset \mathsf{D}_{0,\sigma}$.
    \item[(iv)] For every $\lambda>0$ the operator
    \begin{equation}
   \begin{split}
    \mathcal{D}\big(\widetilde{\mathcal{R}}_{\lambda,\sigma}^{(0)}\big)\;&:=\;\widetilde{\mathsf{D}}_{0,\sigma} \\
    \widetilde{\mathcal{R}}^{(0)}_{\lambda,\sigma}\;&:=\; 3W_\lambda^{-1}\big(T_\lambda^{(0)}+K_{\lambda,\sigma}^{(0)}\big)\,.
   \end{split}
  \end{equation}
  is densely defined, symmetric, and coercive on $H^{-\frac{1}{2}}_{W_\lambda,\ell=0}(\mathbb{R}^3)$.
  \item[(v)] For every $\lambda>0$ the operator
    \begin{equation}\label{eq:RlambdasigmaOPERATOR2}
   \begin{split}
    \mathcal{D}\big(\mathcal{R}_{\lambda,\sigma}^{(0)}\big)\;&:=\;\mathsf{D}_{0,\sigma} \\
    \mathcal{R}_{\lambda,\sigma}^{(0)})\;&:=\;3W_\lambda^{-1}\big(T_\lambda^{(0)}+K_{\lambda,\sigma}^{(0)}\big)
   \end{split}
  \end{equation}
 is the Friedrichs extension of $\widetilde{\mathcal{R}}^{(0)}_{\lambda,\sigma}$ with respect to $H^{-\frac{1}{2}}_{W_\lambda,\ell=0}(\mathbb{R}^3)$. Its sesquilinear form is
 \begin{equation}\label{eq:RlambdasigmaFORM2}
   \begin{split}
    \mathcal{D}\big[\mathcal{R}_{\lambda,\sigma}^{(0)}\big]\;&=\;H^{\frac{1}{2}}_{\ell=0}(\mathbb{R}^3) \\
    \mathcal{R}_{\lambda,\sigma}^{(0)}[\eta,\xi]\;&=\;3\int_{\mathbb{R}^3} \overline{\,\widehat{\xi}(\pp)}\, \big(\big(T_\lambda^{(0)}+K_{\lambda,\sigma}^{(0)}\big)\xi\big)\,{\textrm{\large $\widehat{\,}$\normalsize}}\,(\pp)\,.
   \end{split}
\end{equation}
    \end{itemize}
  \end{proposition}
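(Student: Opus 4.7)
The overall plan is to retrace, with the obvious modifications, the argument of Lemma \ref{lem:DD0properties} for parts (i)--(ii) and of Proposition \ref{prop:NEW-Birman-param-selfadj} for parts (iv)--(v). The gain of the present regularisation over that of Subsect.~\ref{sec:MFregularisation} is that the new $\widehat{\gamma}_\sigma$ of \eqref{eq:newregularisedquantities} is just a rigid shift of $\widehat{\gamma}$ (Fig.~\ref{fig:gammagammaplus}) and hence smooth, even, uniformly bounded, and uniformly bounded below by $\widehat{\gamma}_\sigma(0)=\sigma>0$, so multiplication by $\widehat{\gamma}_\sigma$ (and by its reciprocal) is a bi-continuous bijection of every $L^2_{\mathrm{odd}}(\mathbb{R},(\cosh x)^{a}\ud x)$, $a\in\mathbb{R}$, onto itself.

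For (i), given $\xi\in H^{\frac{1}{2}}_{\ell=0}(\mathbb{R}^3)$ with re-scaled radial function $\theta\in L^2_{\mathrm{odd}}(\mathbb{R})$ (by \eqref{eq:mellinnorms}), set $\Theta:=(\widehat{\gamma}_\sigma\widehat{\theta})^{\vee}\in L^2_{\mathrm{odd}}(\mathbb{R})$ and approximate $\Theta$ in $L^2(\mathbb{R})$ by functions $\Theta_n\in C^\infty_{0,\mathrm{odd}}(\mathbb{R}_x)$; the charges $\xi_n$ associated to $\theta_n:=(\widehat{\Theta}_n/\widehat{\gamma}_\sigma)^{\vee}$ belong to $\widetilde{\mathsf{D}}_{0,\sigma}$ by definition \eqref{eq:newD0regularised2} and satisfy $\|\xi_n-\xi\|_{H^{\frac{1}{2}}}\approx\|\theta_n-\theta\|_{L^2}\lesssim\|\widehat{\gamma}_\sigma^{-1}\|_\infty\|\Theta_n-\Theta\|_{L^2}\to 0$. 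For (ii), if $\xi\in\widetilde{\mathsf{D}}_{0,\sigma}$ then by \eqref{eq:ThetaTransformed2} and the definition of $\widetilde{\mathsf{D}}_{0,\sigma}$ one has $\Theta_{\lambda,\sigma}^{(\theta)}=\Theta\in C^\infty_{0,\mathrm{odd}}(\mathbb{R}_x)$; plugging this into \eqref{eq:Tlambda0xi-with-theta-andK2} shows that $(T_\lambda^{(0)}+K_{\lambda,\sigma}^{(0)})\xi$ has finite $H^s_{\ell=0}$-norm for \emph{every} $s\in\mathbb{R}$, because the integrand $(\cosh x)^{1+2s}|\Theta(x)|^2$ is integrable at any $s$. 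Part (iii) is immediate: by (i) the elements of $\widetilde{\mathsf{D}}_{0,\sigma}$ lie in $H^{\frac{1}{2}}_{\ell=0}(\mathbb{R}^3)$, and by (ii) with $s=\tfrac{1}{2}$ the image under $T_\lambda^{(0)}+K_{\lambda,\sigma}^{(0)}$ sits in $H^{\frac{1}{2}}_{\ell=0}(\mathbb{R}^3)$.

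For (iv), density of $\widetilde{\mathsf{D}}_{0,\sigma}$ in $H^{-\frac{1}{2}}_{W_\lambda,\ell=0}(\mathbb{R}^3)$ follows from (i) and from the canonical isomorphism \eqref{eq:RpRpW}-\eqref{eq:bigdecompW}. Symmetry and coercivity come at once from the identity, valid by \eqref{eq:W-scalar-product} and \eqref{eq:xiTxi-with-theta-andK2},
\[
\big\langle\xi,\widetilde{\mathcal{R}}_{\lambda,\sigma}^{(0)}\xi\big\rangle_{H^{-\frac{1}{2}}_{W_\lambda}}\;=\;3\big\langle\xi,(T_\lambda^{(0)}+K_{\lambda,\sigma}^{(0)})\xi\big\rangle_{L^2}\;=\;\frac{8\pi^2}{\sqrt{3}}\!\int_{\mathbb{R}}\widehat{\gamma}_\sigma(s)\,|\widehat{\theta}^{(\xi)}(s)|^2\,\ud s\,,
\]
whose r.h.s.~is real, finite (since $\widehat\theta=\widehat\Theta/\widehat\gamma_\sigma$ with $\Theta\in C^\infty_{0,\mathrm{odd}}$), and pinched between $\sigma\|\widehat\theta\|_{L^2}^2$ and $\|\widehat\gamma_\sigma\|_\infty\|\widehat\theta\|^2_{L^2}$, hence equivalent to $\|\xi\|^2_{H^{\frac{1}{2}}}$ via \eqref{eq:mellinnorms}, and in particular bounded below by a positive multiple of $\|\xi\|^2_{H^{-\frac{1}{2}}_{W_\lambda}}$.

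For (v), the coercivity from (iv) guarantees that $\|\xi\|_{\mathcal{R}}:=\langle\xi,\widetilde{\mathcal{R}}^{(0)}_{\lambda,\sigma}\xi\rangle^{1/2}_{H^{-\frac{1}{2}}_{W_\lambda}}$ is an equivalent $H^{\frac{1}{2}}_{\ell=0}$-norm on $\widetilde{\mathsf{D}}_{0,\sigma}$ (this is the precise content of the last statement of Lemma \ref{lem:regularisedlemma2}). The Friedrichs construction then yields the form domain $\overline{\widetilde{\mathsf{D}}_{0,\sigma}}^{\|\,\|_{\mathcal{R}}}=\overline{\widetilde{\mathsf{D}}_{0,\sigma}}^{\|\,\|_{H^{\frac{1}{2}}}}=H^{\frac{1}{2}}_{\ell=0}(\mathbb{R}^3)$, the last equality being exactly (i). Formula \eqref{eq:RlambdasigmaFORM2} is then obtained by passing to the limit in $3\langle\eta_n,(T_\lambda^{(0)}+K_{\lambda,\sigma}^{(0)})\xi_n\rangle_{L^2}$ for approximating sequences in $\widetilde{\mathsf{D}}_{0,\sigma}$, reading the pairing as an $(H^{\frac{1}{2}},H^{-\frac{1}{2}})$-duality once one checks that the quadratic-form identity \eqref{eq:xiTxi-with-theta-andK2} extends by continuity from $\widetilde{\mathsf{D}}_{0,\sigma}$ to $H^{\frac{1}{2}}_{\ell=0}(\mathbb{R}^3)$ (which it does, both sides being equivalent $H^{\frac{1}{2}}$-quadratic forms). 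Finally, the operator \eqref{eq:RlambdasigmaOPERATOR2} is extracted from its quadratic form exactly as in the proof of Proposition \ref{prop:Alambdaellnot0}: $\xi\in\mathcal{D}(\mathcal{R}_F)$ iff $\xi\in H^{\frac{1}{2}}_{\ell=0}(\mathbb{R}^3)$ and there exists $\zeta_\xi\in H^{-\frac{1}{2}}_{W_\lambda,\ell=0}(\mathbb{R}^3)$ with $W_\lambda^{(0)}\zeta_\xi=3(T_\lambda^{(0)}+K_{\lambda,\sigma}^{(0)})\xi$, which, using the $H^{-\frac{1}{2}}_{\ell=0}\to H^{\frac{1}{2}}_{\ell=0}$ bijectivity of $W_\lambda^{(0)}$ (Lemma \ref{lem:Wlambdaproperties}(ii)), is equivalent to $(T_\lambda^{(0)}+K_{\lambda,\sigma}^{(0)})\xi\in H^{\frac{1}{2}}_{\ell=0}(\mathbb{R}^3)$, i.e.~to $\xi\in\mathsf{D}_{0,\sigma}$, and yields $\mathcal{R}_F\xi=3W_\lambda^{-1}(T_\lambda^{(0)}+K_{\lambda,\sigma}^{(0)})\xi$. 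The only subtlety requiring some care will be to ensure that the form representation \eqref{eq:xiTxi-with-theta-andK2}, established in Lemma \ref{lem:regularisedlemma2} under a tacit integrability assumption, does hold for all $\xi\in H^{\frac{1}{2}}_{\ell=0}(\mathbb{R}^3)$; this is handled by density and the uniform boundedness of $\widehat{\gamma}_\sigma$, so it is not a genuine obstacle.
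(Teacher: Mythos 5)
Your proposal is correct and follows essentially the same route the paper intends, namely transplanting the arguments of Lemma \ref{lem:DD0properties} and Proposition \ref{prop:NEW-Birman-param-selfadj} to the new $\widehat{\gamma}_\sigma$ of \eqref{eq:newregularisedquantities}, whose uniform positivity and boundedness make every step go through; the paper itself only sketches this by reference. The one cosmetic difference is your constructive approximation argument for density in (i) (approximating $\Theta=(\widehat{\gamma}_\sigma\widehat{\theta})^{\vee}$ by test functions) in place of the paper's duality/orthogonality argument, but both are valid and rest on the same bounded invertibility of multiplication by $\widehat{\gamma}_\sigma$.
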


  Proposition \ref{prop:againandagain} finally shows that the operator
  \begin{equation}\label{eq:globalRlambda-2}
   \mathcal{R}_{\lambda,\sigma}\;:=\;\mathcal{R}_{\lambda,\sigma}^{(0)}\:\oplus\: \bigoplus_{\ell=1}^\infty\mathcal{A}_\lambda^{(\ell)}
 \end{equation}
 is an admissible Birman parameter labelling a self-adjoint extension $\mathring{H}_{\mathcal{R}_{\lambda,\sigma}}$ of $\mathring{H}$ according to the general classification and construction of Theorem \ref{thm:generalclassification}. With a more compact notation we can write
  \begin{equation}\label{eq:Tcompact1-2version}
  \begin{split}
   \mathcal{D}(\mathcal{R}_{\lambda,\sigma})\;&=\;\big\{\xi\in H^{\frac{1}{2}}(\mathbb{R}^3)\,\big|\,\mathbf{T}_{\lambda,\sigma}\xi\in H^{\frac{1}{2}}(\mathbb{R}^3)\big\} \\
   \mathcal{R}_{\lambda,\sigma}\;&=\;3\,W_\lambda^{-1} \mathbf{T}_{\lambda,\sigma}
  \end{split}
 \end{equation}
 and 
 \begin{equation}\label{eq:Tcompact2-2version}
  \big(\mathbf{T}_{\lambda,\sigma}-T_\lambda\big)\xi\;=\; K_{\lambda,\sigma}^{(0)}\xi^{(0)}\,,
 \end{equation}
 where
%
%
  \begin{equation}\label{eq:compactTTT-2version}
  (\mathbf{T}_{\lambda,\sigma}\xi)(\yy)\;:=\;\big(\big(T_\lambda^{(0)}+K_{\lambda,\sigma}^{(0)}\big)\xi^{(0)})(\yy)+\sum_{\ell=1}^\infty \big(T_\lambda^{(\ell)}\xi^{(\ell)}\big)(\yy)
 \end{equation}
 for $\xi=\sum_{\ell=0}^\infty\xi^{(\ell)}\in\bigoplus_{\ell=0}^\infty \,H^{-\frac{1}{2}}_{W_\lambda,\ell}(\mathbb{R}^3)\cong H^{-\frac{1}{2}}_{W_\lambda}(\mathbb{R}^3)$.

 All this leads to a new version of Theorem \ref{thm:regularised-models}, \emph{with exactly the same statement}, of course now referred to the present $\mathbf{T}_{\lambda,\sigma}$ defined in \eqref{eq:compactTTT-2version} (and not to its counterpart \eqref{eq:compactTTT} considered in the previous Subsection).

 We have thus identified \emph{two} distinct classes of regularised self-adjoint Hamiltonians of zero-range interaction for the bosonic trimer:
 \begin{itemize}
  \item the operator $\mathscr{H}_\sigma^{\mathrm{MF}}$, $\sigma>0$, namely the Hamiltonian with the \emph{Minlos-Faddeev regularisation}, obtained as $\mathscr{H}_\sigma$ from Theorem \ref{thm:regularised-models} with the modified $\mathbf{T}_{\lambda,\sigma}$ fixed in \eqref{eq:compactTTT};
  \item the operator $\mathscr{H}_\sigma^{\mathrm{he}}$, $\sigma>0$, namely the Hamiltonian with \emph{high energy regularisation}, obtained as $\mathscr{H}_\sigma$ from Theorem \ref{thm:regularised-models} with the modified $\mathbf{T}_{\lambda,\sigma}$ fixed in \eqref{eq:compactTTT-2version}.  
 \end{itemize}

 Both types of Hamiltonians are non-negative, with only essential spectrum given by $[0,+\infty)$, and both retain a physically meaningful short-scale structure. 
 With the Minlov-Faddeev regularisation, each $g\in\mathcal{D}(\mathscr{H}_\sigma^{\mathrm{MF}})$ and the corresponding regular part $\phi^\lambda$ of $g$ (for fixed $\lambda>0$) satisfy
 \begin{equation}\label{eq:allBPTMS-regularised-again1}
    \begin{split}
      \int_{\!\substack{ \\ \\ \pp_2\in\mathbb{R}^3 \\ |\pp_2|<R}}\widehat{g}(\pp_1,\pp_2)\,\ud\pp_2\;&=\;4\pi R\,\widehat{\xi}(\pp_1)+\frac{\sigma_0+\sigma}{2\pi^2}\,\int_{\mathbb{R}^3}\frac{\widehat{\xi^{(0)}}(\qq)}{\,|\pp_1-\qq|^2}\,\ud\qq+o(1)\,, \\
      \phi^\lambda(\yy_1,\mathbf{0})\;&=\;(2\pi)^{-\frac{3}{2}}\Big( (T_\lambda\xi)(\yy_1)+\frac{\,\sigma_0+\sigma\,}{|\yy_1|}\,\xi^{(0)}(\yy_1)\Big)\,,
    \end{split}
   \end{equation}
 where $\xi$ is the singular charge of $g$, $\xi^{(0)}$ is the spherically symmetric component of $\xi$, and $R\to +\infty$. With the high energy regularisation, each $g\in\mathcal{D}(\mathscr{H}_\sigma^{\mathrm{he}})$ and the corresponding regular part $\phi^\lambda$ of $g$ satisfy
 \begin{equation}\label{eq:allBPTMS-regularised-again2}
    \begin{split}
      \int_{\!\substack{ \\ \\ \pp_2\in\mathbb{R}^3 \\ |\pp_2|<R}}\widehat{g}(\pp_1,\pp_2)\,\ud\pp_2\;&=\;4\pi R\,\widehat{\xi}(\pp_1)+2\pi^2(\sigma_0+\sigma)\sqrt{\frac{3}{4}\pp_1^2+\lambda}\,\widehat{\xi^{(0)}}(\pp_1)+o(1)\,,\!\!\!\!\!\!\!\!\!\!\! \\
      \int_{\mathbb{R}^3}\widehat{\phi^\lambda}(\pp_1,\pp_2)\,\ud\pp_2\;&=\;(\widehat{T_{\lambda,\sigma}\xi})(\pp_1)+2\pi^2(\sigma_0+\sigma)\sqrt{\frac{3}{4}\pp_1^2+\lambda}\,\widehat{\xi^{(0)}}(\pp_1)\,.
    \end{split}
   \end{equation}

\def\cprime{$'$}


\begin{thebibliography}{10}

\bibitem{Albe-HK-Wu-1981}
{\sc S.~Albeverio, R.~H{\o}egh-Krohn, and T.~T. Wu}, {\em {A class of exactly
  solvable three-body quantum mechanical problems and the universal low energy
  behavior}}, Physics Letters A, 83 (1981), pp.~105--109.

\bibitem{BastiEtAl2018-1d-resLim}
{\sc G.~Basti, C.~Cacciapuoti, D.~Finco, and A.~Teta}, {\em {The three-body
  problem in dimension one: from short-range to contact interactions}}, J.
  Math. Phys., 59 (2018), pp.~072104, 18.

\bibitem{Basti-Figari-Teta-Rendiconti2018}
{\sc G.~Basti, R.~Figari, and A.~Teta}, {\em {Regularized quadratic forms for a
  three boson system with zero-range interactions}}, Rend. Mat. Appl. (7), 39
  (2018), pp.~205--216.

\bibitem{Basti-Teta-2015}
{\sc G.~Basti and A.~Teta}, {\em {On the quantum mechanical three-body problem
  with zero-range interactions}}, in {Functional analysis and operator theory
  for quantum physics}, {EMS Ser. Congr. Rep.}, Eur. Math. Soc., Z{\"u}rich,
  2017, pp.~71--93.

\bibitem{BMO-2017}
{\sc S.~Becker, A.~Michelangeli, and A.~Ottolini}, {\em {Spectral Analysis of
  the 2+1 Fermionic Trimer with Contact Interactions}}, Mathematical Physics,
  Analysis and Geometry, 21 (2018), p.~35.

\bibitem{Berezin-Faddeev-1961}
{\sc F.~Berezin and L.~Faddeev}, {\em {A Remark on Schrodinger's equation with
  a singular potential}}, Sov.Math.Dokl., 2 (1961), pp.~372--375.

\bibitem{Bethe_Peierls-1935}
{\sc H.~Bethe and R.~Peierls}, {\em {Quantum Theory of the Diplon}},
  Proceedings of the Royal Society of London. Series A, Mathematical and
  Physical Sciences, 148 (1935), pp.~146--156.

\bibitem{Birman-1956}
{\sc M.~v. Birman}, {\em {On the theory of self-adjoint extensions of positive
  definite operators (Russian)}}, Mat. Sb. N.S., 38(80) (1956), pp.~431--450.

\bibitem{Braaten-Hammer-2006}
{\sc E.~Braaten and H.-W. Hammer}, {\em {Universality in few-body systems with
  large scattering length}}, Physics Reports, 428 (2006), pp.~259--390.

\bibitem{Castin-Tignone-2011}
{\sc Y.~Castin and E.~Tignone}, {\em {Trimers in the resonant ($2+1$)-fermion
  problem on a narrow Feshbach resonance: Crossover from Efimovian to
  hydrogenoid spectrum}}, Phys. Rev. A, 84 (2011), p.~062704.

\bibitem{Castin-Werner-2011_-_review}
{\sc Y.~Castin and F.~Werner}, {\em {The Unitary Gas and its Symmetry
  Properties}}, in {The BCS-BEC Crossover and the Unitary Fermi Gas},
  W.~Zwerger, ed., vol.~836 of {Lecture Notes in Physics}, Springer Berlin
  Heidelberg, 2012, pp.~127--191.

\bibitem{CDFMT-2012}
{\sc M.~Correggi, G.~Dell'Antonio, D.~Finco, A.~Michelangeli, and A.~Teta},
  {\em {Stability for a system of {$N$} fermions plus a different particle with
  zero-range interactions}}, Rev. Math. Phys., 24 (2012), pp.~1250017, 32.

\bibitem{CDFMT-2015}
{\sc M.~Correggi, G.~Dell'Antonio, D.~Finco, A.~Michelangeli, and A.~Teta}, {\em {A Class of
  Hamiltonians for a Three-Particle Fermionic System at Unitarity}},
  Mathematical Physics, Analysis and Geometry, 18 (2015).

\bibitem{Correggi-Finco-Teta-2015_N+1}
{\sc {Correggi, M.}, {Finco, D.}, and {Teta, A.}}, {\em {Energy lower bound for
  the unitary N~+~1 fermionic model}}, EPL, 111 (2015), p.~10003.

\bibitem{Danilov-1961}
{\sc G.~S. Danilov}, {\em {On the three-body problem with short-range forces}},
  Sov. Phys. JETP, 13 (1961), pp.~349--355.

\bibitem{dft-Nparticles-delta}
{\sc G.~F. Dell'Antonio, R.~Figari, and A.~Teta}, {\em {Hamiltonians for
  systems of {$N$} particles interacting through point interactions}}, Ann.
  Inst. H. Poincar{\'e} Phys. Th{\'e}or., 60 (1994), pp.~253--290.

\bibitem{DFT-proc1995}
{\sc G.~F. Dell'Antonio, R.~Figari, and A.~Teta}, {\em {$N$-particle
  Systems with Zero-Range Interactions}}, in {Proc. Locarno Conf. ``Stochastic
  Processes, Physics and Geometry II''}, S.~Albeverio, U.~Cattaneo, and
  D.~Merlini, eds., World Scientific, Singapore, 1995, pp.~138--145.

\bibitem{Demkov-Ostrovskii-book}
{\sc Y.~N. Demkov and V.~N. Ostrovskii}, {\em {Zero-Range Potentials and Their
  Applications in Atomic Physics}}, {Physics of Atoms and Molecules}, Springer
  US, 1988.

\bibitem{Efimov-1971}
{\sc V.~Efimov}, {\em {Weakly-bound states of three resonantly-interacting
  particles}}, Sov. J. Nucl. Phys., 12 (1971), p.~589.

\bibitem{Efimov-1973}
{\sc V.~Efimov}, {\em {Energy levels of three resonantly interacting
  particles}}, Nuclear Physics A, 210 (1973), pp.~157--188.

\bibitem{Erdelyi-Tables1}
{\sc A.~Erd{\'e}lyi, W.~Magnus, F.~Oberhettinger, and F.~G. Tricomi}, {\em
  {Tables of integral transforms. {V}ol. {I}}}, McGraw-Hill Book Company, Inc.,
  New York-Toronto-London, 1954.
\newblock Based, in part, on notes left by Harry Bateman.

\bibitem{Faddeev-scattering-1960}
{\sc L.~D. Faddeev}, {\em {Scattering theory for a three-particle system}},
  Sov. Phys. JETP, 12 (1961), pp.~1014--1019.

\bibitem{Faddeev-1963-eng-1965-3body}
{\sc L.~D. Faddeev}, {\em {Mathematical
  aspects of the three-body problem in the quantum scattering theory}},
  {Translated from the Russian by Ch. Gutfreund. Translation edited by L.
  Meroz}, Israel Program for Scientific Translations Jerusalem; Daniel Davey
  \&\ Co., Inc., New York, 1965.

\bibitem{Fedorov-Jensen-1993}
{\sc D.~V. Fedorov and A.~S. Jensen}, {\em {Efimov effect in coordinate space
  Faddeev equations}}, Phys. Rev. Lett., 71 (1993), pp.~4103--4106.

\bibitem{Figari-Teta-2020}
{\sc R.~Figari and A.~Teta}, {\em {On the Hamiltonian for three bosons with
  point interactions}}, arXiv.org:2001.10462 (2020).

\bibitem{Finco-Teta-2012}
{\sc D.~Finco and A.~Teta}, {\em {Quadratic forms for the fermionic unitary gas
  model}}, Rep. Math. Phys., 69 (2012), pp.~131--159.

\bibitem{Flamand-Cargese1965}
{\sc G.~Flamand}, {\em {Mathematical theory of non-relativistic two- and
  three-particle systems with point interactions}}, in {Carg{\`e}se {L}ectures
  in {T}heoretical {P}hysics: {A}pplication of {M}athematics to {P}roblems in
  {T}heoretical {P}hysics ({C}arg{\`e}se, 1965)}, Gordon and Breach Science
  Publ., New York, 1967, pp.~247--287.

\bibitem{GMO-KVB2017}
{\sc M.~Gallone, A.~Michelangeli, and A.~Ottolini}, {\em
  {Kre{\u\i}n-Vi\v{s}ik-Birman self-adjoint extension theory revisited}}, in
  {Mathematical Challenges of Zero Range Physics}, A.~Michelangeli, ed.,
  {INdAM-Springer series, Vol.~42}, Springer International Publishing, 2020, pp.~239--304.

\bibitem{Gradshteyn-tables-of-integrals-etc}
{\sc I.~S. Gradshteyn and I.~M. Ryzhik}, {\em {Table of integrals, series, and
  products}}, Elsevier/Academic Press, Amsterdam, eighth~ed., 2015.
\newblock Translated from the Russian, Translation edited and with a preface by
  Daniel Zwillinger and Victor Moll, Revised from the seventh edition
  [MR2360010].

\bibitem{Gribov-1959}
{\sc V.~N. Gribov}, {\em {The Possibility of the Determination of the
  Scattering Amplitudes of Unstable Particles at Zero Energy}}, Sov. Phys.
  JETP, 11 (1960), pp.~400--410.

\bibitem{Griesemer-Hofacker-Linden-2019}
{\sc M.~Griesemer, M.~Hofacker, and U.~Linden}, {\em {From {S}hort-{R}ange to
  {C}ontact {I}nteractions in the 1d {B}ose {G}as}}, Math. Phys. Anal. Geom.,
  23 (2020), pp.~Paper No. 19, 28.

\bibitem{Kartavtsev-Malykh-2007}
{\sc O.~I. Kartavtsev and A.~V. Malykh}, {\em {Low-energy three-body dynamics
  in binary quantum gases}}, Journal of Physics B: Atomic, Molecular and
  Optical Physics, 40 (2007), p.~1429.

\bibitem{KM-2015-Birman}
{\sc M.~Khotyakov and A.~Michelangeli}, {\em {Translation and adaptation of
  Birman's paper ``On the theory of self-adjoint extensions of positive
  definite operators''d}}, in {Mathematical Challenges of Zero Range Physics},
  A.~Michelangeli, ed., {INdAM-Springer series, Vol.~42}, Springer
  International Publishing, 2020, pp.~305--329.

\bibitem{Kuperin-Makarov-Merk-Motovilov-Pavlov-1989-JMP1990}
{\sc Y.~A. Kuperin, K.~A. Makarov, S.~P. Merkuriev, A.~K. Motovilov, and B.~S.
  Pavlov}, {\em {Extended {H}ilbert space approach to few-body problems}}, J.
  Math. Phys., 31 (1990), pp.~1681--1690.

\bibitem{Macek-1968}
{\sc J.~Macek}, {\em {Properties of autoionizing states of He}}, Journal of
  Physics B: Atomic and Molecular Physics, 1 (1968), pp.~831--843.

\bibitem{Makarov-Melezhik-Motovilov-1995}
{\sc K.~A. Makarov, V.~V. Melezhik, and A.~K. Motovilov}, {\em {Point
  interactions in the problem of three quantum particles with internal
  structure}}, Teoret. Mat. Fiz., 102 (1995), pp.~258--282.

\bibitem{Menlikov-Minlos-1991}
{\sc A.~M. Mel{\cprime}nikov and R.~A. Minlos}, {\em {On the pointlike
  interaction of three different particles}}, in {Many-particle {H}amiltonians:
  spectra and scattering}, vol.~5 of {Adv. Soviet Math.}, Amer. Math. Soc.,
  Providence, RI, 1991, pp.~99--112.

\bibitem{Menlikov-Minlos-1991-bis}
{\sc A.~M. Mel{\cprime}nikov and R.~A. Minlos}, {\em {Point interaction
  of three different particles}}, Vestnik Moskov. Univ. Ser. I Mat. Mekh., 3
  (1991), pp.~3--6, 110.

\bibitem{MO-2016}
{\sc A.~Michelangeli and A.~Ottolini}, {\em {On point interactions realised as
  {T}er-{M}artirosyan-{S}kornyakov {H}amiltonians}}, Rep. Math. Phys., 79
  (2017), pp.~215--260.

\bibitem{MO-2017}
{\sc A.~Michelangeli and A.~Ottolini}, {\em {Multiplicity of
  self-adjoint realisations of the (2+1)-fermionic model of
  {T}er-{M}artirosyan---{S}kornyakov type}}, Rep. Math. Phys., 81 (2018),
  pp.~1--38.

\bibitem{MP-2015-2p2}
{\sc A.~Michelangeli and P.~Pfeiffer}, {\em {Stability of the (2+2)-fermionic
  system with zero-range interaction}}, Journal of Physics A: Mathematical and
  Theoretical, 49 (2016), p.~105301.

\bibitem{michelangeli-schmidbauer-2013}
{\sc A.~Michelangeli and C.~Schmidbauer}, {\em {Binding properties of the
  (2+1)-fermion system with zero-range interspecies interaction}}, Phys. Rev.
  A, 87 (2013), p.~053601.

\bibitem{Minlos-1987}
{\sc R.~A. Minlos}, {\em {On the point interaction of three particles}}, in
  {Applications of selfadjoint extensions in quantum physics ({D}ubna, 1987)},
  vol.~324 of {Lecture Notes in Phys.}, Springer, Berlin, 1989, pp.~138--145.

\bibitem{Minlos-TS-1994}
{\sc R.~A. Minlos}, {\em {On pointlike
  interaction between {$N$} fermions and another particle}}, in {Proceedings of
  the Workshop on Singular Schr{\"o}dinger Operators, Trieste 29 September - 1
  October 1994}, A.~Dell'Antonio, R.~Figari, and A.~Teta, eds., {ILAS/FM-16},
  1995.

\bibitem{Minlos-2011-preprint_May_2010}
{\sc R.~A. Minlos}, {\em {On point-like
  interaction between {$n$} fermions and another particle}}, Mosc. Math. J., 11
  (2011), pp.~113--127, 182.

\bibitem{Minlos-2010-bis}
{\sc R.~A. Minlos}, {\em {Remark on my paper
  ``{O}n point-like interaction between {$n$} fermions and another
  particle''}}, Mosc. Math. J., 11 (2011), pp.~815--817, 822.

\bibitem{Minlos-2012-preprint_30sett2011}
{\sc R.~A. Minlos}, {\em {On Point-like
  Interaction between Three Particles: Two Fermions and Another Particle}},
  ISRN Mathematical Physics, 2012 (2012), p.~230245.

\bibitem{Minlos-2014-I_RusMathSurv}
{\sc R.~A. Minlos}, {\em {A system of three
  quantum particles with point-like interactions}}, Uspekhi Mat. Nauk, 69
  (2014), pp.~145--172.

\bibitem{Minlos-2014-II_preprint-2012}
{\sc R.~A. Minlos}, {\em {On point-like
  interaction of three particles: two fermions and another particle. {II}}},
  Mosc. Math. J., 14 (2014), pp.~617--637, 642--643.

%
  
\bibitem{Minlos-Faddeev-1961-1}
{\sc R.~A. Minlos and L.~D. Faddeev}, {\em {On the point
  interaction for a three-particle system in quantum mechanics}}, Soviet
  Physics JETP, 6 (1962), pp.~1072--1074.
  
\bibitem{Minlos-Faddeev-1961-2}
{\sc R.~A. Minlos and L.~D. Faddeev}, {\em {Comment on the problem of three
  particles with point interactions}}, Soviet Physics JETP, 14 (1962),
  pp.~1315--1316.

\bibitem{Minlos-Shermatov-1989}
{\sc R.~A. Minlos and M.~K. Shermatov}, {\em {Point interaction of three
  particles}}, Vestnik Moskov. Univ. Ser. I Mat. Mekh.,  (1989), pp.~7--14, 97.

\bibitem{mogilner-shermatov-PLA-1990}
{\sc A.~Mogilner and M.~Shermatov}, {\em {Binding of two fermions with a third
  different particle by a point interaction}}, Physics Letters A, 149 (1990),
  pp.~398--400.

\bibitem{Moser-Seiringer-2017}
{\sc T.~Moser and R.~Seiringer}, {\em {Stability of a fermionic {$N+1$}
  particle system with point interactions}}, Comm. Math. Phys., 356 (2017),
  pp.~329--355.

\bibitem{Moser-Seiringer-2018-2p2}
{\sc T.~Moser and R.~Seiringer}, {\em {Stability of the
  {$2+2$} fermionic system with point interactions}}, Math. Phys. Anal. Geom.,
  21 (2018), pp.~Paper No. 19, 13.

\bibitem{Naidon-Endo-Review_Efimov_Physics-2017}
{\sc P.~Naidon and S.~Endo}, {\em {Efimov physics: a review}}, Reports on
  Progress in Physics, 80 (2017), p.~056001.

\bibitem{rs2}
{\sc M.~Reed and B.~Simon}, {\em {Methods of modern mathematical physics. {II}.
  {F}ourier analysis, self-adjointness}}, Academic Press [Harcourt Brace
  Jovanovich, Publishers], New York-London, 1975.

\bibitem{Shermatov-2003}
{\sc M.~K. Shermatov}, {\em {On the point interaction of two fermions and one
  particle of a different nature}}, Teoret. Mat. Fiz., 136 (2003),
  pp.~257--270.

\bibitem{Skornyakov-1959}
{\sc G.~V. Skornyakov}, {\em {On the scattering of neutrons with deuterons}},
  in {Nuclear Forces and the Few-Nucleon Problem. Volume II.}, T.~C. Griffith
  and E.~A. Power, eds., {SEE International Conference on Nuclear Forces \& the
  Few-Nucleon Problem, University College London 1959}, New York Pergamon
  Press, 1960, pp.~389--393.

\bibitem{TMS-1956}
{\sc G.~V. Skornyakov and K.~A. Ter-Martirosyan}, {\em {Three Body Problem for
  Short Range Forces. I. Scattering of Low Energy Neutrons by Deuterons}}, Sov.
  Phys. JETP, 4 (1956), pp.~648--661.

\bibitem{Trtar-SobSpaces_interp}
{\sc L.~Tartar}, {\em {An introduction to {S}obolev spaces and interpolation
  spaces}}, vol.~3 of {Lecture Notes of the Unione Matematica Italiana},
  Springer, Berlin; UMI, Bologna, 2007.

\bibitem{Teta-1989}
{\sc A.~Teta}, {\em {Quadratic forms for singular perturbations of the
  {L}aplacian}}, Publ. Res. Inst. Math. Sci., 26 (1990), pp.~803--817.

\bibitem{Thomas1935}
{\sc L.~H. Thomas}, {\em {The Interaction Between a Neutron and a Proton and
  the Structure of ${\mathrm{H}}^{3}$}}, Phys. Rev., 47 (1935), pp.~903--909.

\bibitem{Vishik-1952}
{\sc M.~I. Vi\v{s}ik}, {\em {On general boundary problems for elliptic
  differential equations (Russian)}}, Trudy Moskov. Mat. Ob\v s\v c., 1 (1952),
  pp.~187--246.

\bibitem{Werner-Castin-2006-PRA}
{\sc F.~Werner and Y.~Castin}, {\em {Unitary gas in an isotropic harmonic trap:
  Symmetry properties and applications}}, Phys. Rev. A, 74 (2006), p.~053604.

\bibitem{Yoshitomi_MathSlov2017}
{\sc K.~Yoshitomi}, {\em {Finiteness of the discrete spectrum in a three-body
  system with point interaction}}, Math. Slovaca, 67 (2017), pp.~1031--1042.

\end{thebibliography}
\end{document}